\providecommand{\U}[1]{\protect\rule{.1in}{.1in}}
\newtheorem{theorem}{Theorem}
\newtheorem{lemma}[theorem]{Lemma}
\newtheorem{proposition}[theorem]{Proposition}
\newcommand{\e}{\mathrm{e}}
\newcommand{\I}{\text{i}}
\newcommand{\ket}[1]{| #1 \rangle}
\newcommand{\bra}[1]{\langle #1 |}
\def\U{\mathrm{U}}
\def\A{\mathcal{A}}
\def\E{\mathcal{E}}
\def\H{\mathcal{H}}
\def\U{\mathcal{U}}
\def\L{\mathcal{L}}
\def\M{\mathcal{M}}
\def\N{\mathcal{N}}
\newcommand{\Tr}{\operatorname{Tr}}
\newcommand{\id}{\operatorname{id}}
\newcommand{\norm}[1]{\lVert#1\rVert}
\newcommand{\Norm}[1]{\left\lVert#1\right\rVert}
\renewcommand{\t}{{\scriptscriptstyle\mathsf{T}}}
\begin{document}
%\preprint{ }
\title{Information-theoretic aspects of the generalized amplitude damping channel}
\author{Sumeet Khatri}\email{skhatr5@lsu.edu}
\affiliation{Hearne Institute for Theoretical Physics, Department of Physics and Astronomy,
	Louisiana State University, Baton Rouge, Louisiana 70803, USA}
\author{Kunal Sharma}\email{ksharm7@lsu.edu}
\affiliation{Hearne Institute for Theoretical Physics, Department of Physics and Astronomy,
Louisiana State University, Baton Rouge, Louisiana 70803, USA}
\author{Mark M. Wilde}
\affiliation{Hearne Institute for Theoretical Physics, Department of Physics and Astronomy,
Louisiana State University, Baton Rouge, Louisiana 70803, USA}
\affiliation{Center for Computation and Technology, Louisiana State University, Baton
Rouge, Louisiana 70803, USA}
\pacs{}

\begin{abstract}

The generalized amplitude damping channel (GADC) is one of the sources of noise in superconducting-circuit-based quantum computing. It can be viewed as the qubit analogue of the bosonic thermal channel, and it thus can be used to model lossy processes in the presence of background noise for low-temperature systems. In this work, we provide an information-theoretic study of the GADC. We first determine the parameter range for which the GADC is entanglement breaking and the range for which it is anti-degradable. We then establish several upper bounds on its classical, quantum, and private capacities. These bounds are based on data-processing inequalities and the uniform continuity of information-theoretic quantities, as well as other techniques. Our upper bounds on the quantum capacity of the GADC are tighter than the known upper bound reported recently in [Rosati \textit{et al}., Nat.~Commun.~9, 4339 (2018)] for the entire parameter range of the GADC, thus reducing the gap between the lower and upper bounds. We also establish upper bounds on the two-way assisted quantum and private capacities of the GADC. These bounds are based on the squashed entanglement, and they are established by constructing particular squashing channels. We compare these bounds with the max-Rains information bound, the mutual information bound, and another bound based on approximate covariance. For all capacities considered, we find that a large variety of techniques are useful in establishing bounds.

\end{abstract}

%\volumeyear{ }
%\volumenumber{ }
%\issuenumber{ }
%\eid{ }
\date{\today}
%\startpage{1}
%\endpage{10}
\maketitle

\section{Introduction}

	One of the main goals of quantum information theory is to determine the optimal rate of sending information (classical or quantum) through quantum channels \cite{H13book,MH06,W17,Wat18}. Quantum channels model the noisy evolution that quantum states undergo when they are transmitted via some physical medium.

	Depending on the message and the availability of resources, communication protocols over quantum channels can be divided into different categories. In particular, classical communication, entanglement-assisted classical communication, private classical communication, and quantum communication are some of the communication protocols that have been studied in the last few decades (see \cite{H13book,MH06,W17,Wat18} for reviews). The notion of the capacity of a channel defined by Shannon \cite{Shannon1948} can be extended to the quantum domain for these different communication protocols (see Sec.~\ref{sec:capacity-of-qc} for formal definitions). 

	The optimal rate (capacity) of any communication protocol depends on the properties of the quantum channel. In general, the best characterization of the capacities of a quantum channel is given by an optimization over regularized information quantities over an unbounded number of copies of the channel. Hence, it appears to be generally difficult to calculate the quantum and private capacities of quantum channels \cite{CEMOGS15,ES15} except for a special class of quantum channels that are degradable (see definitions in Sec.~\ref{sec:prelim}), in which case the regularized quantities reduce to simpler formulas that are functions of only one copy of the channel \cite{DS04,GS08}. Recently, however, it was shown that one can calculate quantum capacity for some channels that are not degradable \cite{FW07,GJL18}. Furthermore,  recent progress in estimating and understanding the quantum capacity of low-noise and some other channels has been reported in~\cite{LDS18, LLS18,LLSb18,BL18}. 
	
	Remarkably, even in the qubit case, very little is known when it comes to exact, computable expressions for the communication capacities of quantum channels. For example, two of the most widely considered noise models in quantum information and communication are the depolarizing channel and the amplitude damping channel. The classical capacity of the qubit depolarizing channel is known \cite{King02,K03}, but its quantum capacity (for its entire parameter range) is not. Similarly, the quantum capacity of the amplitude damping channel is known \cite{GF05}, but its classical capacity (for its entire parameter range) is not. These are two of the most significant open problems in quantum Shannon theory.
	
	In general, the difficulty in obtaining exact expressions for the communication capacities of quantum channels has led to a wide body of work on obtaining lower and upper bounds on these quantities. With the recent developments in quantum communication technologies, it is important to study different physically motivated noisy communication processes (quantum channels) and to establish lower and upper bounds on their communication capacities in terms of the channel parameters. Moreover, these communication rates also play a critical role in the context of distributed quantum computing between remote locations and in benchmarking the performance of quantum key distribution and quantum networks.

	In this work, we provide an information-theoretic study of the generalized amplitude damping channel (GADC). As the name suggests, the GADC is indeed a generalization of the amplitude damping channel. Specifically, the GADC is a qubit-to-qubit channel, and it models the dynamics of a two-level system in contact with a thermal bath at non-zero temperature. It can be used to describe the $T_1$ relaxation process due to the coupling of spins to a system that is in thermal equilibrium at a temperature higher than the spin temperature \cite{NC10,MKTSKIMW00,PhysRevA.62.053807}. The GADC is also one of the sources of noise in superconducting-circuit-based quantum computing \cite{CB08}. It can additionally be used to characterize losses in linear optical systems in the presence of low-temperature background noise \cite{Pan17}. In the case that the thermal bath is at zero temperature, the GADC reduces to the amplitude damping channel, which arises naturally as a noise model in spin chains \cite{Bose03,GF05}.
	
	The GADC can be thought of as the qubit analogue of the bosonic thermal channel, which is used to model loss in quantum optical systems and is particularly relevant in the context of communication through optical fibers or free space  \cite{YS78,S09,RGRKVRHWE17}. Moreover, in the context of private communication, tampering by an eavesdropper can be modeled as the excess noise realized by a thermal channel \cite{NH04,LDTG05}. A lower bound on the quantum capacity of a bosonic Gaussian thermal channel was proposed in \cite{HW01}. Recently, several upper bounds on the energy-constrained quantum and private capacities of a thermal channel have been established in \cite{SWAT} (see also \cite{NAJ18} in the context of lower and upper bounds on the energy-constrained quantum capacity). Moreover, the unconstrained quantum capacity of a thermal channel has been studied in \cite{PLOB17, NAJ18, RMG18, WTB17, SWAT}. However, the communication capacities of a qubit thermal channel, i.e., the GADC, have not been studied extensively. 
	
	Some prior works have established bounds on the various capacities of the GADC. Since it is not a degradable channel for nearly all parameter values, determining its quantum capacity exactly appears to be a difficult task. It is worth noting, however, that it is degradable in the special case that it reduces to the amplitude damping channel, and thus the quantum and private capacities of the amplitude damping channel are  simply given by its coherent information \cite{GF05},  due to the additivity of the coherent and private information for degradable channels \cite{DS04,GS08}. An upper bound on the quantum capacity of the GADC in general was established in \cite{RMG18} by using the notion of weak degradability. Furthermore, lower and upper bounds on the classical capacity of the GADC have been established in \cite{F18} (see also \cite{FFK18}). In \cite{LM07}, the mutual information of the GADC was calculated, thus establishing its entanglement-assisted classical capacity \cite{BSST99,ieee2002bennett,Hol01a}, which is in turn an upper bound on its unassisted classical capacity. In general,  half the mutual information of a quantum channel is  an upper bound on its two-way assisted quantum and private capacities \cite{TGW14a, TGW14b, GEW16}. Thus, one can infer from \cite{LM07} and \cite{TGW14a, TGW14b, GEW16} an upper bound on the two-way assisted quantum and private capacities of the GADC.

\section{Summary of Results}

	In this paper, we study the GADC in detail by first deriving its intrinsic information-theoretic properties, such as necessary and sufficient conditions for entanglement breakability \cite{HSR03} and anti-degradability \cite{CG06}. We then consider several upper bounds on the classical, quantum, and private capacities of the GADC; see Table~\ref{table-summary} for a summary.
	
	We start with the classical capacity of the GADC. A first upper bound, known as $C_\beta$, is based on the no-signalling and PPT-preserving codes for classical communication over a quantum channel \cite{WXD18}. In particular, we find an analytical expression for $C_\beta$ of the GADC that depends only on the channel parameters. Another upper bound from \cite{WXD18} on the classical capacity of any quantum channel is the quantity $C_{\zeta}$. We prove that $C_{\zeta}=C_{\beta}$ for the GADC. Two other upper bounds on the classical capacity of the GADC are established by using the notion of $\varepsilon$-entanglement-breakability and $\varepsilon$-covariance \cite{LKDW18}. We also compare these upper bounds with the entanglement-assisted classical capacity upper bound for the GADC \cite{LM07}. 
	
	\begin{table*}
		\renewcommand{\arraystretch}{2.1}
		\centering
		\begin{tabular}{|c || c | c | c |}
			\hline
			\multirow{2}{*}{Capacity} & \multirow{2}{*}{Lower Bounds} & \multicolumn{2}{c|}{Upper Bounds} \\ \cline{3-4}
			& & Quantity & Technique \\ \hline\hline
			\multirow{5}{*}{Classical} & \multirow{5}{*}{\shortstack{$\chi$\\Holevo Information\\(Eq. \eqref{eq-GADC_Hol_inf})}} & $C_{\beta}$ (Eq. \eqref{eq-C_beta_GADC}) & No-signalling and PPT-preserving codes \cite{WXD18} \\ \cline{3-4}
			& & $C_{\text{cov}}^{\text{UB}}$ (Eq. \eqref{eq-CCap_UB2_cov}) & Approximate covariance \cite{LKDW18} \\ \cline{3-4}
			& & $C_{\text{EB}}^{\text{UB}}$ (Eq. \eqref{eq-CCap_UB1_EB}) & Approximate entanglement-breakability \cite{LKDW18} \\ \cline{3-4}
			& & $C_{\text{Fil}}^{\text{UB}}$ (Eq. \eqref{eq-GADC_CCap_UB_Fil}) & Approximate unitality \cite{F18,FFK18}\\ \cline{3-4}
			& & $C_E$ (Eq. \eqref{eq-GADC_mut_inf}) & Entanglement-assisted classical capacity \cite{ieee2002bennett,BSST99,Hol01a} \\ \hline
			\multirow{5}{*}{Quantum} & \multirow{5}{*}{\shortstack{$I_{\text{c}}$\\Coherent Information\\(Eq. \eqref{eq-GADC_LB1})}} & $Q_{\text{DP},1-4}^{\text{UB}}$ (Eq. \eqref{eq-GACD_Qcap_UB_DP_1}--\eqref{eq-GACD_Qcap_UB_DP_4}) & Data processing \cite{WP07,SS08} \\ \cline{3-4}
			& & $Q_{\text{deg},1-2}^{\text{UB}}$ (Eq. \eqref{eq-GACD_Qcap_UB_DP_5}, \eqref{eq-GACD_Qcap_UB_DP_6}) & Approximate degradability \cite{SSWR17} \\ \cline{3-4}
			& & $Q_{\text{a-deg}}^{\text{UB}}$ (Eq. \eqref{eq-GADC_QCap_UB7_Adeg}) & Approximate anti-degradability \cite{SSWR17} \\ \cline{3-4}
			& & $Q_{\text{Rains}}^{\text{UB}}$ (Eq. \eqref{eq-GADC_QCap_UB8}) & PPT-preserving codes \cite{TWW17} \\ \cline{3-4}
			& & $Q_{\text{RMG}}^{\text{UB}}$ (Eq.~\eqref{eq-QTN_QCap_UB_RMG}) & Degradability and data processing \cite{RMG18} \\ \hline
			\multirow{4}{*}{\shortstack{Two-Way\\Assisted\\Quantum}} & \multirow{2}{*}{\shortstack{$I_{\text{c}}$\\Coherent Information\\(Eq. \eqref{eq-GADC_LB1})}} & $Q_{\text{MI}}^{\leftrightarrow,\text{UB}}$ (Eq. \eqref{eq-GADC_Q2cap_UB1}) & One-half mutual information \cite{TGW14a,TGW14b,GEW16} \\\cline{3-4}
			& & $Q_{\text{sq},1-2}^{\leftrightarrow,\text{UB}}$ (Eq. \eqref{eq-GADC_Esq_UB1}, \eqref{eq-GADC_Esq_UB2}) & Squashed entanglement \cite{GEW16,DSW18} \\ \cline{2-4}
			& \multirow{2}{*}{\shortstack{$I_{\text{rc}}$\\Reverse Coherent\\Information\\(Eq. \eqref{eq-GADC_RCI})}} & $Q_{\text{max-Rains}}^{\leftrightarrow,\text{UB}}$ (Eq. \eqref{eq-GADC_Emax_Rmax_UB}) &  PPT-preserving assisted codes \cite{BW18}\\ \cline{3-4}
			& & $Q_{\text{cov}}^{\leftrightarrow,\text{UB}}$ (Eq. \eqref{eq:approx-tele-sim-bnd}) & Approximate covariance \cite{KW17}\\\hline
		\end{tabular}
		\caption{Summary of the lower and upper bounds on the classical, quantum, and two-way assisted quantum capacities of the GADC that we consider in this work. The classical capacity upper bounds are established in Sec.~\ref{sec:c-cap-bounds}. The quantum and private capacity upper bounds are established in Sec.~\ref{sec:q-cap-bounds}. The two-way assisted quantum and private capacities are established in Sec.~\ref{sec:two-way-q-cap-bounds}. We obtain analytic expressions for the quantities $C_{\beta}$ (Proposition~\ref{prop-C_beta}), $Q_{\text{max-Rains}}^{\leftrightarrow,\text{UB}}$ (Proposition~\ref{prop-GADC_Emax}), and $Q_{\text{cov}}^{\leftrightarrow,\text{UB}}$ in this work.}\label{table-summary}
	\end{table*}

	We employ a variety of techniques to establish upper bounds on the quantum and private capacities of the GADC. The first four upper bounds are established, related to the approach of \cite{WP07,SS08} (see \cite{SWAT,RMG18,NAJ18} for bosonic channels), by decomposing any GADC into a serial concatenation of two amplitude damping channels. Since the quantum capacity of an amplitude damping channel is known \cite{GF05}, upper bounds on the quantum capacity of the GADC follow from the data processing property \cite{SN96} of the coherent information of a quantum channel. We call these bounds the ``data-processing bounds.'' We also consider three other upper bounds by using the notion of approximate degradability and anti-degradability, recently developed in \cite{SSWR17}. We call these bounds the ``$\varepsilon$-degradable bound'', ``$\varepsilon$-close-degradable bound,'' and   ``$\varepsilon$-anti-degradable bound.'' We finally employ the Rains information strong converse upper bound from \cite{TWW17} and the relative entropy of entanglement strong converse upper bound from \cite{WTB17} in order to bound the quantum and private capacities of the GADC, respectively.
	
	%In particular, we find that a qubit thermal  channel is anti degradable for $\eta \leq 1/2$ and for all $N\in [0, 1]$. Therefore, quantum and private capacities of a qubit thermal channel are  zero for this parameter regime. b Therefore, we establish 

	We compare these upper bounds on the quantum capacity of the GADC with the known coherent information lower bound, and we find that for certain parameter values, the gap between the lower bound and the upper bounds is relatively small. Moreover, we compare these upper bounds with the upper bound established in \cite{RMG18}, and we find that two of our data-processing upper bounds are tighter than the bound in \cite{RMG18} for all parameter values of the channel. Furthermore, the strong converse bounds from \cite{TWW17,WTB17} can be even tighter for certain parameter values.

	We also consider four different upper bounds on the two-way assisted (i.e., feedback-assisted) quantum and private capacities of the GADC. The first two upper bounds are based on the fact that the squashed entanglement of a quantum channel is an upper bound on the two-way assisted quantum and private capacities of any channel \cite{TGW14a,TGW14b,MMW16}. For the third upper bound, we employ the max-Rains information \cite{WD16, WFD18} and the max-relative entropy of entanglement \cite{CMH17}, which are known to be upper bounds on the two-way assisted quantum  \cite{BW18} and private \cite{CMH17} capacities, respectively, for any quantum channel. In fact, for this third upper bound, we have found an analytical expression that establishes that the max-Rains information and max-relative entropy of entanglement are equal for the GADC. We found this analytical expression by analytically solving the semi-definite programs associated to max-Rains information and max-relative entropy of entanglement. The fourth upper bound is based on the notion of approximate covariance. A comparison of these four upper bounds with the mutual information upper bound leads to the conclusion that all four upper bounds are significantly tighter than the mutual information upper bound. %Moreover, the squashed entanglement upper bounds are tighter than all other upper bounds for the most parameter regimes, which indicates that the squashing channels constructed in our work are good choices. Furthermore, the max-Rains upper bound is tighter than all other upper bounds for the parameter regimes when the GADC channel is almost entanglement breaking. 

%We note that the GADC channel and the qubit thermal channel are related to each other by a simple change of variables (see \eqref{eq-GADC_to_thermalnoise}). Therefore, we use the terminology "the GADC channel" and "the qubit thermal channel" interchangeably in our paper. 

	The rest of the paper is structured as follows. We begin by summarizing relevant definitions and prior results in Sec.~\ref{sec:prelim}. We derive necessary and sufficient conditions for entanglement breakability and anti-degradability of the GADC in Sec.~\ref{sec:ent-break} and Sec.~\ref{sec:anti-deg}, respectively. We then establish several upper bounds on the classical capacity and the quantum capacity of the GADC in Sec.~\ref{sec:c-cap-bounds} and Sec.~\ref{sec:q-cap-bounds}, respectively. In Sec.~\ref{sec:two-way-q-cap-bounds}, we establish several upper bounds on the two-way assisted quantum and private capacities of the GADC. Finally, we summarize our results and conclude in Sec.~\ref{sec:conclusion}. 
	
	All codes in Mathematica, Matlab, and Python used to assist with the analytical derivations, numerical computations, and the creation of plots are available as ancillary files with the arXiv posting of this paper \footnote{Some of the Matlab code makes use of the Quantinf package \cite{cubitt_matlab} as well as the package QETLAB \cite{qetlab}.}. The Mathematica files contain the code used in the proofs of \eqref{eq-GADC_ent_break}, Proposition \ref{prop-C_beta}, Proposition \ref{prop-GADC_Emax}, and \eqref{eq-GADC_REE}. The Matlab and Python files have been used to compute all the bounds stated in the paper, and the plots have been generated in the included Jupyter notebooks using Python.

\section{Preliminaries}\label{sec:prelim}

	In this section, we review some definitions and prior results relevant for the rest of the paper. We point readers to \cite{MH06, H13book, W17, Wat18} for details and further background. 

Let $\H$ denote a finite-dimensional Hilbert space. The tensor product of two Hilbert spaces $\H_A$ and $\H_B$ corresponding to the quantum systems $A$ and $B$ is denoted by $\H_{AB}\equiv\H_A \otimes \H_B$. We let $d_A$ denote the dimension of $\mathcal{H}_A$. Let $D(\H)$ denote the set of density operators (positive semi-definite operators with unit trace) acting on a Hilbert space $\H$. An extension of a state $\rho_A \in D(\H_A)$ is some state $\rho_{RA} \in D(\H_R \otimes \H_A)$  such that $\Tr_R[\rho_{RA}] = \rho_A$.  Similarly, a purification of a state $\rho_A\in D(\H_A)$ is some pure state $\ket{\phi}_{RA} \in \H_R \otimes \H_A$ such that $\Tr_R[\ket{\phi} \bra{\phi}_{RA}] = \rho_A$. 

	The quantum entropy of a quantum state $\rho \in D(\H)$ is defined as $H(\rho)\equiv -\Tr[\rho \log_2\rho]$. The binary entropy $h_2(x)$ is defined for $x \in [0, 1]$ as 
	\begin{align}
		h_2(x) \equiv - x \log_2(x) - (1-x)\log_2(1-x). 
	\end{align}
	Moreover, throughout the paper we use the bosonic entropy $g(x)$ for $x\geq0$:
	\begin{align}
		g(x) &\equiv (1+x)\log_2(1+x) - x\log_2x\\
		&= (1+x)h_2\left(\frac{x}{1+x}\right).
	\end{align}
	The quantum mutual information of a bipartite state $\rho_{AB} \in D(\H_A \otimes \H_B)$ is defined as
	\begin{align}
		I(A;B)_{\rho} \equiv H(\rho_A) + H(\rho_B) - H(\rho_{AB}). 
	\end{align}

	Let $L(\H)$ denote the space of linear operators acting on $\H$. Quantum channels are completely positive and trace preserving maps from $L(\H_A)$ to $L(\H_B)$ and denoted by $\N_{A\to B}$. An isometric extension or Stinespring dilation $U: \H_A \to \H_B \otimes \H_E$ of a quantum channel $\N_{A\to B}$ is a linear isometry such that for all $\rho_A \in L(\H_A)$, the following holds: $\Tr_E[U \rho_A U^{\dagger}]= \N(\rho_A)$. A complementary channel $\N^c_{A\to E}$ of $\N_{A\to B}$ is defined as $\N^c_{A\to E}(\rho_A) = \Tr_B[U \rho_A U^{\dagger}]$. The Choi state of a quantum channel $\N_{A\to B}$ is given by
	\begin{align}\label{eq:choi-state}
		\rho_{AB}^{\N} \equiv (\id_A \otimes \N_{A'\to B}) \left( \Phi_{AA'}^+ \right),
	\end{align}
	where $\Phi_{AA'}^+$ denotes the maximally entangled state, i.e.,
	\begin{equation}
		\Phi_{AA'}^+ \equiv \frac{1}{d_A}\sum_{i,i'=1}^{d_A} \ket{i}\bra{i^{\prime}}_A \otimes \ket{i}\bra{i^{\prime}}_{A'}.
	\end{equation}
	We let
	\begin{equation}
		\Gamma_{AB}^{\mathcal{N}}\equiv d_A\rho_{AB}^{\mathcal{N}}
	\end{equation}
	denote the Choi matrix of the channel $\mathcal{N}$.
	
	According to the Choi-Kraus theorem, the action of a quantum channel $\N_{A\to B}$ on any $X_A \in L(\H_A)$ can be represented in the following way:
	\begin{align}
		\N_{A\to B}(X_A) = \sum_{i=1}^r V_i X_A V_i^{\dagger}~,
	\end{align}
	where the so-called Kraus operators $V_i:\H_A\to\H_B$, $i \in \{1, \dots, r   \}$, satisfy $\sum_{i=1}^r V^{\dagger}_i V_i = \mathbbm{1}_A$, and $r$ need not exceed $d_Ad_B$, with a minimal choice being $r = \text{rank}(\Gamma_{AB}^{\mathcal{N}})$.

A quantum channel $\N_{A\to B}$ is entanglement breaking if the Choi state as in \eqref{eq:choi-state} of the channel is separable \cite{HSR03}. 

	A quantum channel $\mathcal{N}_{A\to B}$ is called \textit{degradable} if there exists a channel $\mathcal{D}_{B\to E}$ such that 
	\begin{equation}
		(\mathcal{D}_{B\to E}\circ\mathcal{N}_{A\to B})(X_A ) =\mathcal{N}^c_{A \to E}(X_A),
	\end{equation}
for all $X_A \in L(\H_A)$ \cite{DS04}. A channel $\mathcal{N}_{A\to B}$ is called \textit{anti-degradable} if its complementary channel $\mathcal{N}^c_{A\to E}$ is degradable, i.e., if there exists a channel $\mathcal{E}_{E \to B}$ such that
	\begin{equation}\label{eq-anti_degrade}
		(\mathcal{E}_{E\to B}\circ\mathcal{N}^c_{A\to E})(X_A)=\mathcal{N}_{A\to B}(X_A)
	\end{equation}
	for all $X_A \in L(\H_A)$ \cite{CG06}.
	
	For any Hermiticity-preserving map $\mathcal{M}_{A\to B}$, its diamond norm $\norm{\mathcal{M}}_{\diamond}$ is defined as \cite{Kit97}
	\begin{equation}\label{eq-diamond_norm}
		\norm{\mathcal{M}}_{\diamond}=\max_{\psi_{RA}}\norm{\mathcal{M}_{A\to B}(\psi_{RA})}_1,
	\end{equation}
	where the optimization is over all pure states $\psi_{RA}$, with the dimension of the reference system $R$ equal to the dimension of $A$, and $\norm{X}_1$ denotes the trace norm of the matrix $X$, which is defined as the sum of the singular values of $X$.

\subsection{Capacities of quantum channels}\label{sec:capacity-of-qc}

	For any quantum channel $\mathcal{N}$, its classical capacity $C(\mathcal{N})$ is defined to be the highest rate at which classical information can be sent over many uses of the channel with an error probability that converges to zero as the number of channel uses increases. It holds that \cite{Hol73,SW97,Hol98}
	\begin{equation}\label{eq-classical_capacity}
		C(\mathcal{N})=\lim_{n\to\infty}\frac{1}{n}\chi(\mathcal{N}^{\otimes n}),
	\end{equation}
	where $\chi(\mathcal{N})$ is the Holevo information of the channel $\mathcal{N}$, which is defined as
	\begin{equation}
		\chi(\mathcal{N})=\max_{\rho_{XA}}I(X;B)_{\omega},
		\label{eq:Holevo-info-channel}
	\end{equation}
	where $\omega_{XB}=\mathcal{N}_{A\to B}(\rho_{XA})$, and the maximization is with respect to all classical-quantum states, i.e., states of the form
	\begin{equation}\label{eq-cq_state}
		\rho_{XA}\equiv\sum_xp_X(x)\ket{x}\bra{x}_X\otimes\rho_A^x.
	\end{equation}

	For any quantum channel $\mathcal{N}$, its quantum capacity $Q(\mathcal{N})$ is defined to be the highest rate at which quantum information can be sent over many uses of the channel with a fidelity that converges to one as the number of channel uses increases. It has been shown \cite{BS96,SN96,BNS98,BKN20,Llo97,capacity2002shor,Dev05} that
	\begin{equation}
		Q(\mathcal{N})=\lim_{n\to\infty}\frac{1}{n}I_{\text{c}}(\mathcal{N}^{\otimes n}),
	\end{equation}
	where the function $I_{\text{c}}$ is the channel coherent information, which is defined for any quantum channel $\mathcal{N}$ as
	\begin{equation}\label{eq:coherent-info-q-chan}
		I_{\text{c}}(\mathcal{N})\equiv\max_{\rho}I_{\text{c}}(\rho,\mathcal{N}),
	\end{equation}
	where $\rho \in D(\H)$, and 
	\begin{equation}
		I_{\text{c}}(\rho,\mathcal{N})\equiv H(\mathcal{N}(\rho))-H(\mathcal{N}^c(\rho)).
	\end{equation}
	If the channel $\mathcal{N}$ is anti-degradable \cite{CG06}, then its coherent information in \eqref{eq:coherent-info-q-chan} vanishes, which means that anti-degradable channels have zero quantum capacity.

	The private capacity $P(\N)$ of a quantum channel $\N$ is defined to be the maximum rate at which a sender can reliably communicate classical messages to a receiver by using the channel many times, such that the environment of the channel obtains negligible information about the transmitted message. The private capacity $P(\N)$ is equal to the regularized private information of the channel $\N$ \cite{Dev05, CWY04}, i.e.,
	\begin{align}
		P(\N) = \lim_{n\to\infty}\frac{1}{n} P^{(1)}(\N^{\otimes n})~,
	\end{align}
	where the private information of the channel is defined as
	\begin{align}\label{eq-priv_inf}
		P^{(1)}(\N) \equiv \max_{\rho_{XA}}\bigg[ I(X;B)_{\omega} - I(X;E)_{\omega}\bigg].
	\end{align}
	The maximization here is with respect to all states $\rho_{XA}$ as in \eqref{eq-cq_state}, and $\omega_{XABE}=\mathcal{U}^{\mathcal{N}}_{A\to BE}(\rho_{XA})$, with $\mathcal{U}^{\mathcal{N}}_{A\to BE}$ being an isometric channel extending $\mathcal{N}$.
	
	In general, the quantum and private capacities of a channel $\mathcal{N}$ are related as follows \cite{Dev05}:
	\begin{equation}\label{eq-quant_priv_cap_ineq}
		Q(\mathcal{N})\leq P(\mathcal{N}).
	\end{equation}
	For degradable channels $\N$ and $\M$, the coherent information is known to be additive \cite{DS04} in the following sense:
	\begin{align}
	I_{\text{c}}(\N \otimes \M) = I_{\text{c}} (\N)+I_{\text{c}} (\M).
	\end{align}
	Moreover, the private information of a degradable channel is equal to its coherent information \cite{GS08}. Therefore, both the quantum and private capacities of a degradable channel are given by its coherent information.
	
	Two-way assisted communication capacities are defined as the highest achievable rate of communication for protocols involving local operations by the sender and receiver and classical communication in both directions between the sender and receiver \cite{BGPSSW96,BDSW} (see also \cite{TGW14a}). We denote the two-way assisted quantum and private capacities of a quantum channel $\mathcal{N}$ by $Q^{\leftrightarrow}(\mathcal{N})$ and $P^{\leftrightarrow}(\mathcal{N})$, respectively. As in the unassisted case, we have that $Q^{\leftrightarrow}(\mathcal{N})\leq P^{\leftrightarrow}(\mathcal{N})$ for all quantum channels $\mathcal{N}$.
	
	Since any one-way, or unassisted, communication protocol is a special case of a two-way assisted communication protocol, we immediately have the coherent information lower bound $Q^{\leftrightarrow}(\mathcal{N})\geq I_{\text{c}}(\mathcal{N})$. Another known lower bound is the reverse coherent information \cite{HHH00,DW05,DJKR06}, which is defined as
	\begin{equation}
		I_{\text{rc}}(\mathcal{N})\equiv \max_{\rho} I_{\text{rc}}(\rho,\mathcal{N}),\label{eq:reverse-coh-inf}
	\end{equation}
	where
	\begin{equation}
		I_{\text{rc}}(\rho,\mathcal{N})\equiv H(\rho)-H(\mathcal{N}^c(\rho)).
	\end{equation}

	The reverse coherent information as in \eqref{eq:reverse-coh-inf} was defined in \cite{HHH00} and shown in \cite{HHH00,DW05} to be a lower bound on the two-way assisted quantum capacity. It was proven to be additive in \cite{DJKR06}, and concavity in the input state $\rho$ was shown in \cite[Eq.~(8.48)]{MH06}.

\subsection{Bounds on the capacities of quantum channels}\label{sec-bounds_QCap}

	In this section, we recall several different techniques for placing upper bounds on the communication capacities of a quantum channel that we use throughout the rest of the paper.

\subsubsection{Data-processing upper bounds}
	
	Let $\N \circ \M$ denote the serial concatenation of two quantum channels $\N$ and $\M$. Upper bounds on the quantum capacity of the channel $\N\circ \M$ can be established as follows~\cite{WP07,SS08}:
	\begin{align}
		Q(\mathcal{N}\circ\mathcal{M})&\leq Q(\mathcal{M}),\label{eq-QCap_data_proc_1}\\
		Q(\mathcal{N}\circ\mathcal{M})&\leq Q(\mathcal{N}).\label{eq-QCap_data_proc_2}
	\end{align}
	The first inequality follows from definitions and the quantum data processing inequality. The second inequality is a consequence of the following argument: consider an arbitrary encoding and decoding scheme for quantum communication over the channel $\N\circ \M$. Then this encoding, followed by many uses of the channel $\M$, can be considered as an encoding for the channel $\N$. Since the quantum capacity of the channel $\N$ involves an optimization over all such encodings, the desired inequality follows.
	
	By similar reasoning as above, we can conclude analogous data-processing upper bounds for the private capacity and the classical capacity:
	\begin{align}
		P(\mathcal{N}\circ\mathcal{M})&\leq P(\mathcal{M}),\label{eq-PCap_data_proc_1}\\
		P(\mathcal{N}\circ\mathcal{M})&\leq P(\mathcal{N})\label{eq-PCap_data_proc_2},\\
		C(\mathcal{N}\circ\mathcal{M})&\leq C(\mathcal{M}),\\
		C(\mathcal{N}\circ\mathcal{M})&\leq C(\mathcal{N}).
	\end{align}

\subsubsection{Classical capacity upper bounds via approximate entanglement breakability and approximate covariance}\label{subsubsec-CCap_UB_EB_cov}

	Upper bounds on the classical capacity of any quantum channel have been obtained using the notions of approximate entanglement-breakability and approximate covariance of channels  \cite{LKDW18}. We now summarize these results. All of these results, as well as their proofs, can be found in \cite{LKDW18}.
	
	A quantum channel $\mathcal{N}$ is called $\varepsilon$-entanglement-breaking if there exists an entanglement-breaking channel $\mathcal{M}$ such that $\frac{1}{2}\norm{\mathcal{N}-\mathcal{M}}_{\diamond}\leq\varepsilon$. We let
	\begin{equation}\label{eq-approx_EB}
		\varepsilon_{\text{EB}}(\mathcal{N})\equiv\min_{\mathcal{M}}\left\{\frac{1}{2}\norm{\mathcal{N}-\mathcal{M}}_{\diamond}:\mathcal{M}\text{ entanglement breaking}\right\}
	\end{equation}
	denote the smallest $\varepsilon$ such that $\mathcal{N}$ is $\varepsilon$-entanglement-breaking. For qubit-to-qubit channels, the entanglement-breaking parameter $\varepsilon_{\text{EB}}(\mathcal{N})$ can be calculated by means of a semi-definite program \cite[Lemma~III.8]{LKDW18}. We suppress the channel dependence on $\varepsilon_{\text{EB}}$ if the channel is understood from the context.
	
	For any $\varepsilon$-entanglement-breaking channel $\mathcal{N}$, the following upper bound on the classical capacity $C(\mathcal{N})$ holds \cite[Corollary~III.7]{LKDW18}:
	\begin{equation}\label{eq-CCap_UB_approx_EB}
		C(\mathcal{N})\leq\chi(\mathcal{M})+2\varepsilon\log_2d_B+g(\varepsilon),
	\end{equation}
	where $\mathcal{M}$ is the entanglement-breaking channel such that $\varepsilon=\frac{1}{2}\norm{\mathcal{N}-\mathcal{M}}_{\diamond}$.
	
	We now define the notion of approximate covariance of a quantum channel $\mathcal{N}_{A\to B}$. Let $G$ be a finite group with a unitary representation $\{U_A(g)\}_{g\in G}$ on the input system $A$ and a unitary representation $\{V_B(g)\}_{g\in G}$ on the output system $B$. The so-called twirled channel $\mathcal{N}_{A\to B}^G$ is defined as
	\begin{equation}
		\mathcal{N}_{A\to B}^G(\cdot)\equiv \frac{1}{|G|}\sum_{g\in G} V_B(g)^\dagger\mathcal{N}_{A\to B}(U_A(g)(\cdot)U_A(g)^\dagger)V_B(g). \label{eq:twirled-channel}
	\end{equation}
	Note that the twirled channel $\mathcal{N}_{A\to B}^G$ can be realized by means of a generalized teleportation protocol \cite[Appendix~B]{KW17}.
	By construction, this channel is covariant with respect to the representations $\{U_A(g)\}_{g\in G}$ and $\{V_B(g)\}_{g\in G}$, meaning that
	\begin{equation}
		\mathcal{N}_{A\to B}^G(U_A(g)\rho_AU_A(g)^\dagger)=V_B(g)\mathcal{N}_{A\to B}^G(\rho_A)V_B(g)^\dagger
	\end{equation}
	for all states $\rho_A$ and all $g\in G$. We call $\mathcal{N}$ $\varepsilon$-covariant with respect to the representations $\{U_A(g)\}_{g\in G}$, $\{V_B(g)\}_{g\in G}$ if $\frac{1}{2}\norm{\mathcal{N}-\mathcal{N}^G}_{\diamond}\leq\varepsilon$. We let
	\begin{equation}
		\varepsilon_{\text{cov}}(\mathcal{N})\equiv\frac{1}{2}\norm{\mathcal{N}-\mathcal{N}^G}_{\diamond}
	\end{equation}
	denote the smallest $\varepsilon$ such that $\mathcal{N}$ is $\varepsilon$-covariant. The covariance parameter $\varepsilon_{\text{cov}}(\mathcal{N})$ can be computed by means of a semi-definite program, as observed in \cite{LKDW18}, due to the fact that the diamond norm can be computed by a semi-definite program \cite{W13}. We suppress the dependence of the covariance parameter on both the group and its representations for simplicity, and if it is clear from the context, we also suppress the dependence on the channel.
	
	Let $\mathcal{N}$ be a qubit-to-qubit channel, and let $G=\mathbb{Z}_2\times\mathbb{Z}_2$, with $\mathbb{Z}_2$ the group consisting of the set $\{0,1\}$ with addition modulo two. This group has the (projective) unitary representation consisting of the Pauli operators $\{\mathbbm{1},\sigma_x,\sigma_y,\sigma_z\}$. With this group and this representation, if $\mathcal{N}$ is $\varepsilon$-covariant, then \cite[Corollary III.5]{LKDW18}
	\begin{equation}\label{eq-CCap_UB_approx_cov}
		C(\mathcal{N})\leq\chi(\mathcal{N}^G)+2\varepsilon+g(\varepsilon).
	\end{equation}

\subsubsection{Quantum and private capacity upper bounds via approximate degradability and approximate anti-degradability}

	We now recall techniques to obtain upper bounds on the quantum and private capacities of a quantum channel using the concepts of approximate degradability and approximate anti-degradability. These concepts were developed in \cite{SSWR17}. All of the results stated in this subsection, as well as their proofs, can be found in \cite{SSWR17}.

	A channel $\mathcal{N}$ is called \textit{$\varepsilon$-degradable} if there exists a channel $\mathcal{D}$ such that $\frac{1}{2}\norm{\mathcal{N}^c-\mathcal{D}\circ\mathcal{N}}_{\diamond}\leq\varepsilon$. We let
	\begin{equation}\label{eq-approx_deg_parameter}
		\varepsilon_{\text{deg}}(\mathcal{N})\coloneqq\min_{\mathcal{D}}\left\{\frac{1}{2}\norm{\mathcal{N}^c-\mathcal{D}\circ\mathcal{N}}_{\diamond}:\mathcal{D}\text{ is a channel}\right\}
	\end{equation}
	denote the smallest $\varepsilon$ such that $\mathcal{N}$ is $\varepsilon$-degradable. We suppress the dependence of this quantity on the channel if it is clear from the context. Note that $\varepsilon_{\text{deg}}(\mathcal{N})$ can be calculated via a semi-definite program.

	For an $\varepsilon$-degradable channel $\mathcal{N}$ with corresponding (approximate) degrading channel $\mathcal{D}$, it holds that \cite[Theorem~7]{SSWR17}
	\begin{align}
		Q(\mathcal{N})&\leq U_{\mathcal{D}}(\mathcal{N})+2\varepsilon\log_2d_E+g(\varepsilon)\label{eq-approx_deg_UB},
	\end{align}
	where the quantity $U_{\mathcal{D}}(\mathcal{N})$ is defined as
	\begin{equation}
	U_{\mathcal{D}}(\mathcal{N})\equiv \max_{\rho}\{H(F|\tilde{E})_{\omega}:\omega_{\tilde{E}FE}=(W\otimes\mathbbm{1}_E)V\rho_A V^\dagger(W\otimes\mathbbm{1}_E)^\dagger\},
	\end{equation}
	with $V:\mathcal{H}_A\to\mathcal{H}_B\otimes\mathcal{H}_E$ and $W:\mathcal{H}_B\to\mathcal{H}_{\tilde{E}}\otimes\mathcal{H}_F$ being isometric extensions of channels $\mathcal{N}$ and $\mathcal{D}$, respectively. Moreover,  the following bound was established on the private capacity of an $\varepsilon$-degradable channel $\N$ in \cite[Theorem~13]{SWAT}:
	\begin{align}
	P(\mathcal{N})&\leq U_{\mathcal{D}}(\mathcal{N})+6\varepsilon\log_2d_E+3g(\varepsilon).\label{eq-approx_deg_UB_PCap}
	\end{align}

	Another upper bound on the quantum capacity of a quantum channel $\N$ can be established using the notion of $\varepsilon$-close degradability. A channel $\mathcal{N}$ is called \textit{$\varepsilon$-close-degradable} if there exists a degradable channel $\mathcal{M}$ such that $\frac{1}{2}\norm{\mathcal{N}-\mathcal{M}}_{\diamond}\leq\varepsilon$. If $\mathcal{N}$ is an $\varepsilon$-close-degradable channel, then the following bounds hold \cite[Proposition A2]{SSWR17}:
	\begin{align}
		Q(\mathcal{N})&\leq I_{\text{c}}(\mathcal{M})+2\varepsilon\log_2d_B+2g(\varepsilon),\label{eq-eps_approx_deg_UB}\\
		P(\mathcal{N})&\leq I_{\text{c}}(\mathcal{N})+4\varepsilon\log_2d_B+4g(\varepsilon).\label{eq-eps_approx_deg_UB_PCap}
	\end{align}

	A channel $\mathcal{N}$ is called an \textit{$\varepsilon$-anti-degradable} channel if there exists a channel $\mathcal{E}$ such that $\frac{1}{2}\norm{\mathcal{N}-\mathcal{E}\circ\mathcal{N}^c}_{\diamond}\leq\varepsilon$. We let
	\begin{equation}\label{eq-approx_adeg_parameter}
		\varepsilon_{\text{a-deg}}(\mathcal{N})\equiv\min_{\mathcal{E}}\left\{\frac{1}{2}\norm{\mathcal{N}-\mathcal{E}\circ\mathcal{N}^c}_{\diamond}:\mathcal{E}\text{ is a channel}\right\}
	\end{equation}
	denote the smallest $\varepsilon$ such that $\mathcal{N}$ is $\varepsilon$-anti-degradable. We suppress the dependence of this quantity on the channel if it is clear from the context. Note that $\varepsilon_{\text{a-deg}}(\mathcal{N})$ can be calculated via a semi-definite program.
	
	For any $\varepsilon$-anti-degradable channel $\mathcal{N}$, it holds that \cite[Theorem 11]{SSWR17}
	\begin{align}
		Q(\mathcal{N})&\leq P(\mathcal{N})\leq \varepsilon\log_2(d_B-1)+2\varepsilon\log_2d_B\nonumber\\
		&\qquad\qquad\qquad+h_2(\varepsilon)+g(\varepsilon).\label{eq-eps_anti_degrade_upper_bound}
	\end{align}

\subsubsection{Rains information upper bound on quantum capacity and relative entropy of entanglement upper bound on private capacity}

The Rains information of a quantum channel is an upper bound on its quantum capacity \cite{TWW17}, and a channel's relative entropy of entanglement is an upper bound on its private capacity \cite{WTB17}. Here we briefly recall these results.

The Rains relative entropy $R(A;B)_{\rho}$ \cite{R01,AMVW02} and the relative entropy of entanglement $E_R(A;B)_{\rho}$ \cite{VP98} of a bipartite state $\rho_{AB}$ are defined as
	\begin{align}
		R(A;B)_{\rho}&\equiv\min_{\sigma_{AB}\in\text{PPT}'(A:B)}D(\rho_{AB}\Vert\sigma_{AB}), \label{eq:Rains-state}\\
		E_R(A;B)_{\rho}&\equiv\min_{\sigma_{AB}\in\text{SEP}(A:B)}D(\rho_{AB}\Vert\sigma_{AB}), \label{eq:REE-state}
	\end{align}
	where $D(\rho_{AB}\Vert\sigma_{AB})$ is the quantum relative entropy of $\rho_{AB}$ and $\sigma_{AB}$ \cite{Ume62}. We have $D(\rho_{AB}\Vert\sigma_{AB})=\Tr[\rho(\log_2\rho-\log_2\sigma)]$ if $\text{supp}(\rho_{AB})\subset\text{supp}(\sigma_{AB})$, and $D(\rho_{AB}\Vert\sigma_{AB})=+\infty$ otherwise. Also, $\text{PPT}'(A\!:\!B)$ denotes the set $\{ \sigma_{AB} : \sigma_{AB} \geq 0, \Vert \sigma_{AB}^{\t_B} \Vert_1 \leq 1\}$ \cite{AMVW02}, and $\text{SEP}(A\!:\!B)$ denotes the set of separable states acting on $\mathcal{H}_A\otimes\mathcal{H}_B$ \cite{W89}. Note that one can efficiently calculate the Rains relative entropy by employing convex programming methods \cite{FF18,FWTD17,Wilde2018}, due to the fact that the constraints $\sigma_{AB} \geq 0$ and $ \Vert \sigma_{AB}^{\t_B} \Vert_1 \leq 1$ are semi-definite constraints.
	
	For any channel $\mathcal{N}_{A'\to B}$, we define its Rains relative entropy $R(\mathcal{N})$ and its relative entropy of entanglement $E_R(\mathcal{N})$ as follows:
\begin{align}
R(\N)& \equiv \max_{\phi_{AA'}} R(A;B)_{\rho}, \label{eq:Rains-channel}\\
E_R(\N)& \equiv \max_{\phi_{AA'}} E_R(A;B)_{\rho}, \label{eq:REE-channel}
\end{align}
where $\rho_{AB} \equiv \N_{A'\to B}(\phi_{AA'})$ and the optimization is with respect to all pure bipartite input states $\phi_{AA'}$, with the dimension of $A$ equal to the dimension of the input system $A'$ of the channel $\mathcal{N}$. As stated above, the information measures $R(\mathcal{N})$ and $E_R(\mathcal{N})$ are useful because they bound the quantum and private capacities, respectively, of the channel $\mathcal{N}$:
\begin{align}
Q(\N) & \leq R(\N), \label{eq:Rains-q-cap-bound}\\
P(\N) & \leq E_R(\N). \label{eq:REE-p-cap-bound}
\end{align}

By following an approach similar to that given in \cite[Proposition~2]{TWW17}, it follows that the maximizations in \eqref{eq:Rains-channel} and \eqref{eq:REE-channel} are concave in the reduced density operator $\Tr_A[\phi_{AA'}]$:

\begin{proposition}\label{prop:concavity-Rains}
Let $\mathcal{N}_{A^{\prime}\rightarrow B}$ be a quantum channel,
$\rho_{A^{\prime}}$ a state, $\phi_{AA^{\prime}}^{\rho}$ a purification of
$\rho_{A^{\prime}}$, and $\omega_{AB}\equiv\mathcal{N}_{A^{\prime}\rightarrow
B}(\phi_{AA^{\prime}}^{\rho})$. Then, the functions $\rho_{A^{\prime}%
}\mapsto R(A;B)_{\omega}$ and $\rho_{A^{\prime}}\mapsto E_{R}%
(A;B)_{\omega}$ are concave in the reduced state $\operatorname{Tr}_{A}%
[\phi_{AA^{\prime}}^{\rho}]=\rho_{A^{\prime}}$, regardless of which
purification $\phi_{AA^{\prime}}^{\rho}$ of $\rho_{A^{\prime}}$ is chosen.
\end{proposition}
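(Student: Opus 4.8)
The plan is to adapt the argument of \cite[Proposition~2]{TWW17}. Two ingredients are needed: (a) that $R(A;B)_{\omega}$ and $E_{R}(A;B)_{\omega}$ depend only on $\rho_{A'}$ and not on the particular purification, so that the maps in the statement are well defined; and (b) that, for a purification of a convex combination $\rho^{\lambda}_{A'}=\lambda\rho^{0}_{A'}+(1-\lambda)\rho^{1}_{A'}$ built from a classical flag register, dephasing the flag yields the concavity inequality. Note that a direct appeal to convexity of the relative entropy in its first argument fails, because $\N_{A'\to B}(\phi^{\rho^{\lambda}}_{AA'})$ is \emph{not} the convex combination $\lambda\,\N_{A'\to B}(\phi^{\rho^{0}}_{AA'})+(1-\lambda)\,\N_{A'\to B}(\phi^{\rho^{1}}_{AA'})$; the flag register is what repairs this.

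For ingredient (a) I would use two properties of the state functions $\rho_{AB}\mapsto R(A;B)_{\rho}$ and $\rho_{AB}\mapsto E_{R}(A;B)_{\rho}$. First, they are invariant under an isometric channel applied to system $A$: conjugating a feasible $\sigma_{AB}$ by the isometry gives a feasible operator on the enlarged system with the same relative-entropy value (conjugation by an isometry preserves both $\Vert\sigma^{\t_B}\Vert_{1}\le 1$ and separability, and the relative entropy is isometrically invariant), while the minimization over the enlarged system effectively takes place on the image of the isometry, since $\rho_{AB}$ is supported there. As any two purifications of $\rho_{A'}$ are related by an isometry (possibly after adjoining a pure ancilla) on the purifying system, $R(A;B)_{\omega}$ and $E_{R}(A;B)_{\omega}$ indeed depend only on $\rho_{A'}$. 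Second, both functions are nonincreasing under an arbitrary channel $\mathcal{E}_{A}$ applied to system $A$: applying $\mathcal{E}_{A}$ to a feasible $\sigma_{AB}$ keeps it feasible (for $\mathrm{PPT}'$, because $\mathcal{E}_{A}$ commutes with the partial transpose on $B$ and is trace-norm nonincreasing; for $\mathrm{SEP}$, trivially), and data processing of the relative entropy completes the argument.

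Now fix $\rho^{0}_{A'},\rho^{1}_{A'}$ and $\lambda\in(0,1)$, choose purifications $\ket{\phi^{j}}_{\tilde{A}A'}$ sharing a common purifying system $\tilde{A}$, and form the flagged purification $\ket{\psi}_{X\tilde{A}A'}\equiv\sqrt{\lambda}\,\ket{0}_{X}\ket{\phi^{0}}_{\tilde{A}A'}+\sqrt{1-\lambda}\,\ket{1}_{X}\ket{\phi^{1}}_{\tilde{A}A'}$ of $\rho^{\lambda}_{A'}$, with purifying system $A=X\tilde{A}$. Let $\omega^{\lambda}\equiv\N_{A'\to B}(\psi_{X\tilde{A}A'})$ and $\omega^{j}\equiv\N_{A'\to B}(\phi^{j}_{\tilde{A}A'})$. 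By (a), the value of the first map at $\rho^{\lambda}_{A'}$ is $R(X\tilde{A};B)_{\omega^{\lambda}}$ and at $\rho^{j}_{A'}$ it is $R(\tilde{A};B)_{\omega^{j}}$. Dephasing the flag $X$ is a channel on the $A$-side, so by the monotonicity from (a), $R(X\tilde{A};B)_{\omega^{\lambda}}\ge R(X\tilde{A};B)_{\tau}$ for the classically correlated state $\tau_{X\tilde{A}B}\equiv\lambda\,\ket{0}\bra{0}_{X}\otimes\omega^{0}_{\tilde{A}B}+(1-\lambda)\,\ket{1}\bra{1}_{X}\otimes\omega^{1}_{\tilde{A}B}$, and likewise with $E_{R}$. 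It then remains to establish the ``direct-sum'' bound $R(X\tilde{A};B)_{\tau}\ge\lambda R(\tilde{A};B)_{\omega^{0}}+(1-\lambda)R(\tilde{A};B)_{\omega^{1}}$: given any feasible $\sigma_{X\tilde{A}B}$, replacing it by its $X$-dephasing does not increase $D(\tau\Vert\sigma)$ (because $\tau$ is $X$-invariant, by data processing) and keeps it feasible, so I may take $\sigma=\sum_{i}\ket{i}\bra{i}_{X}\otimes\tilde{\sigma}^{i}_{\tilde{A}B}$; then $D(\tau\Vert\sigma)=\sum_{i}p_{i}\big[\log p_{i}+D(\omega^{i}\Vert\tilde{\sigma}^{i})\big]$ with $(p_{0},p_{1})=(\lambda,1-\lambda)$, and renormalizing each block $\tilde{\sigma}^{i}$ to a feasible state converts the leftover into a nonnegative classical relative entropy. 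Assembling the three inequalities, and invoking (a) once more on the right-hand side, gives concavity; the $E_{R}$ case is identical.

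I expect the normalization bookkeeping in the direct-sum step to be the only delicate point, and only for $R$. Since $\mathrm{PPT}'$ is not trace-normalized, each block must be written $\tilde{\sigma}^{i}=r_{i}\widehat{\sigma}^{i}$ with $\widehat{\sigma}^{i}\in\mathrm{PPT}'(\tilde{A}\!:\!B)$ and $r_{i}\equiv\Vert(\tilde{\sigma}^{i})^{\t_B}\Vert_{1}$, where $\sum_{i}r_{i}=\Vert\sigma^{\t_B}\Vert_{1}\le 1$; one then needs $\sum_{i}p_{i}\log(p_{i}/r_{i})\ge 0$, which holds because this equals $D\big((p_{i})\big\Vert(r_{i}/\sum_{j}r_{j})\big)-\log\sum_{j}r_{j}$, a sum of a nonnegative classical relative entropy and a nonnegative term. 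For $E_{R}$ the blocks are genuinely normalized, $\sum_{i}\Tr\tilde{\sigma}^{i}=1$, and this subtlety disappears. The remaining ingredients — isometric invariance, feasibility under local channels, data processing of the quantum relative entropy, and nonnegativity of relative entropy — are routine.
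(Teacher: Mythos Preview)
Your proposal is correct and follows essentially the same route as the paper's proof: build a flagged purification of the convex combination, invoke isometric invariance so that the value at $\rho^{\lambda}_{A'}$ is computed on the flagged state, dephase the flag and use data processing, then split the relative entropy into a block sum plus a nonnegative classical relative entropy and minimize blockwise. Your treatment of the $\mathrm{PPT}'$ normalization---rescaling each block by $r_i=\Vert(\tilde{\sigma}^i)^{\t_B}\Vert_1$ so that $\widehat{\sigma}^i\in\mathrm{PPT}'(\tilde{A}\!:\!B)$ and $\sum_i r_i\le 1$---is in fact more careful than the paper's, which normalizes blocks by their trace and asserts the resulting operators lie in $\mathrm{PPT}'$ without justification.
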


We give a proof of Proposition~\ref{prop:concavity-Rains} in Appendix~\ref{proof-prop:concavity-Rains}. Proposition~\ref{prop:concavity-Rains}, combined with the results of \cite{FF18,FWTD17,Wilde2018}, implies that $R(\N)$ can be computed efficiently by convex programming techniques. One can effectively use convex programming techniques to calculate $E_R(\N)$, but it will not be efficient to do so in general since it is well known that optimizing over the set of separable states is difficult \cite{G03,L07,G10}.

For qubit-qubit systems $AB$, it is known that $R(A;B)_{\rho} = E_R(A;B)_{\rho}$ \cite{AS08}, which is related to the fact that the positive partial transposition criterion is necessary and sufficient for separability for such low-dimensional systems \cite{Peres96,HHH96}. (However, note that the analysis in \cite{AS08} goes well beyond this observation in order to establish the aforementioned equality.) This equality in turn implies that $R(\N)= E_R(\N)$ for qubit-to-qubit channels, which is useful for our purposes here since our focus is the qubit-to-qubit generalized amplitude damping channel.

\subsubsection{Upper bounds on two-way assisted quantum and private capacities}

	The squashed entanglement \cite{CW04} (see also \cite{RT99, RT02}) of a bipartite state $\rho_{AB}$ is defined as
	\begin{equation}
		E_{\textnormal{sq}}(A;B)_\rho=\frac{1}{2}\inf\{I(A;B|E)_\omega:\Tr_E[\omega_{ABE}]=\rho_{AB}\},
		\label{eq:squashed-E}
	\end{equation}
	where
	\begin{equation}\label{eq-QCMI}
		\begin{aligned}
		I(A;B|E)&\equiv H(A|E)+H(B|E)-H(AB|E)\\
		&=H(AE)+H(BE)-H(E)-H(ABE)
		\end{aligned}
	\end{equation}
	is the quantum conditional mutual information. Whether the infimum in \eqref{eq:squashed-E} can be replaced with a minimum is one of the outstanding challenges in quantum information theory.
	
	An alternative way of writing the squashed entanglement is to use the fact that for any extension $\omega_{ABE}$ of a state $\rho_{AB}$ there exists a channel $\mathcal{S}$ acting on a purification $\ket{\psi}_{ABE'}$ such that $\mathcal{S}_{E'\to E}(\ket{\psi}\bra{\psi}_{ABE'})=\omega_{ABE}$. This leads to the following alternative expression for $E_{\text{sq}}(A;B)_\rho$:
	\begin{multline}
		E_{\text{sq}}(A;B)_\rho\\=\frac{1}{2}\inf_{\mathcal{S}}\{I(A;B|E)_\omega:\omega_{ABE}=\mathcal{S}_{E'\to E}(\ket{\psi}\bra{\psi}_{ABE'})\},
	\end{multline}
	where $\ket{\psi}_{ABE'}$ is a purification of $\rho_{AB}$. The channels $\mathcal{S}$ over which we optimize are called \textit{squashing channels}.

	The squashed entanglement of a channel $\mathcal{N}$ \cite{TGW14a,TGW14b} is defined as
	\begin{equation}
		E_{\textnormal{sq}}(\mathcal{N})\equiv\max_{\phi_{AA'}}E_{\textnormal{sq}}(A;B)_{\rho},
	\end{equation}
	where $\rho_{AB}=\mathcal{N}_{A'\to B}(\phi_{AA'})$ and where the optimization is over all pure states $\phi_{AA'}$, with $A$ having the same dimension as the dimension of the input system $A'$ of the channel $\mathcal{N}$.
	
	For any channel $\mathcal{N}$, the following bounds hold \cite{TGW14a,TGW14b} (see also \cite{MMW16} for \eqref{eq:2-way-bound-private-cap}):
	\begin{align}
		Q^{\leftrightarrow}(\mathcal{N})&\leq E_{\text{sq}}(\mathcal{N}),\\
		P^{\leftrightarrow}(\mathcal{N})&\leq E_{\text{sq}}(\mathcal{N}). \label{eq:2-way-bound-private-cap}
	\end{align}
	
	By taking the identity squashing channel, and using the fact that $I(A;B|E)_\psi=I(A;B)_\rho$ for any pure state $\psi_{ABE}$, where $\rho_{AB}=\Tr_E[\ket{\psi}\bra{\psi}_{ABE}]$, we get that $E_{\text{sq}}(A;B)_\rho\leq\frac{1}{2}I(A;B)_\rho$ for all states $\rho_{AB}$. This implies that $E_{\text{sq}}(\mathcal{N})\leq\frac{1}{2}\max_{\phi_{AA'}}I(A;B)_{\rho}=\frac{1}{2}I(\mathcal{N})$, where $\rho_{AB}=\mathcal{N}_{A'\to B}(\phi_{AA'})$. In other words, the squashed entanglement of any channel is always bounded from above by half the mutual information of the channel. Therefore, we have
	\begin{equation}\label{eq-Q_cap_two_way_MI}
		Q^{\leftrightarrow}(\mathcal{N})\leq \frac{1}{2}I(\mathcal{N})
	\end{equation}
	for all channels $\mathcal{N}$ \cite{TGW14a, TGW14b, GEW16}.
	
	The max-Rains relative entropy of a bipartite state $\rho_{AB}$ is defined as \cite{WD16} (see also \cite{TWW17})
	\begin{equation}
		R_{\max}(A;B)_\rho\equiv\min_{\sigma_{AB}\in\text{PPT}'(A:B)}D_{\max}(\rho_{AB}\Vert\sigma_{AB}),
	\end{equation}
	where, as stated before, the set $\text{PPT}'(A\!:\!B)$ is defined as \cite{AMVW02}
	\begin{equation}
		\text{PPT}'(A\!:\!B)\equiv \{\sigma_{AB}:\sigma_{AB}\geq 0,~\norm{\sigma_{AB}^{\t_B}}_1\leq 1\},
	\end{equation}
	and the max-relative entropy $D_{\max}(\rho_{AB}\Vert\sigma_{AB})$ is defined as \cite{Datta08}
	\begin{equation}\label{eq-D_max}
		D_{\max}(\rho_{AB}\Vert\sigma_{AB})=\log_2\min_t\{t:\rho_{AB}\leq t\sigma_{AB}\}.
	\end{equation}
	The max-Rains information $R_{\max}(\mathcal{N})$ of a channel $\mathcal{N}$ is defined as \cite{WFD18} (see also \cite{TWW17})
	\begin{equation}\label{eq-max_Rains_channel}
		R_{\max}(\mathcal{N})\equiv \max_{\phi_{AA'}}R_{\max}(A;B)_{\rho},
	\end{equation}
	where $\rho_{AB}=\mathcal{N}_{A'\to B}(\phi_{AA'})$, and the optimization is over pure states $\phi_{AA'}$, with the dimension of $A$ the same as that of the input system $A'$ of the channel $\mathcal{N}$. It satisfies \cite{BW18}
	\begin{equation}\label{eq-max_Rains_bound}
		Q^{\leftrightarrow}(\mathcal{N})\leq R_{\max}(\mathcal{N}).
	\end{equation}
	Furthermore, it is a strong converse rate. As shown in \cite{WFD18}, it holds that
	\begin{equation}\label{eq-R_max}
		\begin{aligned}
		R_{\max}(\mathcal{N})&=\log_2\Delta(\mathcal{N}),\\
		\Delta(\mathcal{N})&=\left\{\begin{array}{l l} \text{min.} & \norm{\Tr_B[V_{AB}+Y_{AB}]}_\infty \\ \text{subject to} & Y_{AB}\geq 0, V_{AB}\geq 0,\\ & (V_{AB}-Y_{AB})^{\t_B}\geq \Gamma_{AB}^{\mathcal{N}}, \end{array}\right.
		\end{aligned}
	\end{equation}
	where $\Gamma_{AB}^{\mathcal{N}}$ is the Choi matrix of the channel $\mathcal{N}$, and $\norm{X}_{\infty}$ denotes the spectral norm of the matrix $X$, which is defined as the largest singular value of $X$. In particular, the quantity $\Delta(\mathcal{N})$ is given by an SDP.

	For the two-way assisted private capacity, we consider the following general strong converse upper bound~\cite{CMH17}:
	\begin{equation}
		P^{\leftrightarrow}(\mathcal{N})\leq E_{\text{max}}(\mathcal{N}),
	\end{equation}
	which holds for any channel $\mathcal{N}$. The quantity $E_{\max}(\mathcal{N})$ is the max-relative entropy of entanglement of $\mathcal{N}$, which is defined as \cite{CMH17}
	\begin{equation}
		E_{\max}(\mathcal{N})\equiv \max_{\phi_{AA'}}E_{\max}(A;B)_{\rho},
	\end{equation}
	where $\rho_{AB}=\mathcal{N}_{A'\to B}(\phi_{AA'})$, and the optimization is over pure states $\phi_{AA'}$, with the dimension of $A$ equal to the dimension of the input system $A'$ of the channel $\mathcal{N}$. The max-relative entropy of entanglement $E_{\max}(A;B)_{\rho}$ of any bipartite state $\rho_{AB}$ is defined as \cite{Datta08}
	\begin{equation}
		E_{\max}(A;B)_{\rho}\equiv \min_{\sigma_{AB}\in\text{SEP}(A:B)}D_{\max}(\rho_{AB}\Vert\sigma_{AB}),
	\end{equation}
	where $\text{SEP}(A\!:\!B)$ is the set of separable states acting on the space $\mathcal{H}_A\otimes\mathcal{H}_B$. It has been shown in \cite{BW18} that, for qubit-to-qubit channels, the quantity $E_{\max}(\mathcal{N})$ can be written as the solution to an SDP as follows:
	\begin{equation}\label{eq-E_max_SDP_primal}
		\begin{aligned}
		E_{\max}(\mathcal{N})&=\log_2 \Sigma(\mathcal{N}),\\
		\Sigma(\mathcal{N})&=\left\{\begin{array}{l l}\text{min}. & \norm{\Tr_B[Y_{AB}]}_{\infty} \\ 
		\text{subject to} & \Gamma_{AB}^{\mathcal{N}}\leq Y_{AB},\\[0.1cm]
		& Y_{AB}^{\t_B}\geq 0. \end{array}\right.
		\end{aligned}
	\end{equation}
	Using the fact that $\text{PPT}\subset\text{PPT}'$, we obtain $R_{\max}(A;B)_{\rho}\leq E_{\max}(A;B)_{\rho}$ for all states $\rho_{AB}$, which implies that
	\begin{equation}\label{eq-Rmax_Emax_ineq}
		R_{\max}(\mathcal{N})\leq E_{\max}(\mathcal{N})
	\end{equation}
	for any quantum channel $\mathcal{N}$.
	
	In \cite{KW17}, the following bounds on the two-way assisted capacities were established for a channel $\mathcal{N}$\ that is $\varepsilon$-approximately covariant (see Sec.~\ref{subsubsec-CCap_UB_EB_cov} for the definition):%
	\begin{align}
		Q^{\leftrightarrow}(\mathcal{N})  & \leq R(A;B)_{\rho}+2\varepsilon\log_{2}d_{B}+g(\varepsilon),\label{eq-Q_cap_two_way_cov}\\
		P^{\leftrightarrow}(\mathcal{N})  & \leq E_{R}(A;B)_{\rho}+2\varepsilon\log_{2}d_{B}+g(\varepsilon),\label{eq-P_cap_two_way_cov}
	\end{align}
	where $\rho_{AB}=\mathcal{N}_{A'\rightarrow B}^{G}(\Phi_{AA'}^+)$ and the twirled channel $\mathcal{N}_{A'\rightarrow B}^{G}$ is defined in \eqref{eq:twirled-channel}.

\subsection{The generalized amplitude damping channel}

	The generalized amplitude damping channel (GADC) $\mathcal{A}_{\gamma,N}$ is a qubit-to-qubit channel with the following four Kraus operators (in the standard basis) \cite{NC10}:
	\begin{align}
		A_1&=\sqrt{1-N}\left(\ket{0}\bra{0}+\sqrt{1-\gamma}\ket{1}\bra{1}\right),\\
		A_2&=\sqrt{\gamma(1-N)}\ket{0}\bra{1},\\
		A_3&=\sqrt{N}\left(\sqrt{1-\gamma}\ket{0}\bra{0}+\ket{1}\bra{1}\right),\\
		A_4&=\sqrt{\gamma N}\ket{1}\bra{0}.
	\end{align}
	It is completely positive and trace preserving for all $\gamma, N \in [0, 1]$. If we set $N=0$, then the GADC reduces to the ordinary amplitude damping channel $\mathcal{A}_\gamma$ with two Kraus operators. The GADC also has only two Kraus operators for $N=1$, in which case the channel behaves as an amplification process, driving the signal toward the state $\ket{1}\bra{1}$. 

	Let $\rho$ denote a single-qubit density operator:
	\begin{equation}
		\rho=\frac{1}{2}(\mathbbm{1}+r_x\sigma_x+r_y\sigma_y+r_z\sigma_z), 
	\end{equation}
	where $\vec{r}\equiv(r_x,r_y,r_z)\in\mathbb{R}^3$ is the Bloch vector, which satisfies $r_x^2+r_y^2+r_z^2\leq 1$. The action of the GADC $\A_{\gamma, N}$ on $\rho$ is given by the action of $\A_{\gamma, N}$ on the Pauli operators $\sigma_x,\sigma_y,\sigma_z$. We have that
	\begin{align}
		\mathcal{A}_{\gamma,N}(\sigma_x)&=\sqrt{1-\gamma}\sigma_x,\label{eq-GADC_Pauli_action_1}\\
		\mathcal{A}_{\gamma,N}(\sigma_y)&=\sqrt{1-\gamma}\sigma_y,\label{eq-GADC_Pauli_action_2}\\
		\mathcal{A}_{\gamma,N}(\sigma_z)&=(1-\gamma)\sigma_z,\label{eq-GADC_Pauli_action_3}\\
		\mathcal{A}_{\gamma,N}(\mathbbm{1})&=\mathbbm{1}+\gamma(1-2N)\sigma_z\label{eq-GADC_Pauli_action_4}
	\end{align}
	for all $\gamma,N\in[0,1]$. This implies that the vector $\vec{r}$ of the initial state $\rho$ gets transformed as
	\begin{equation}\label{eq-transformed_vec}
		\vec{r}\mapsto(r_x\sqrt{1-\gamma},r_y\sqrt{1-\gamma},r_z(1-\gamma)+\gamma(1-2N))\equiv \vec{R},\nonumber 
	\end{equation}
	where $\vec{R}\equiv (R_x, R_y, R_z)$.
	In particular, for any state $\rho$, we get
	\begin{multline}
		\left(\frac{R_x}{\sqrt{1-\gamma}}\right)^2+\left(\frac{R_y}{\sqrt{1-\gamma}}\right)^2+\left(\frac{R_z-\gamma(1-2N)}{1-\gamma}\right)^2 \\
		=r_x^2+r_y^2+r_z^2\leq 1,
	\end{multline}
	which implies that the initial Bloch sphere gets transformed to an ellipsoid centered at $(0,0,\gamma(1-2N))$ with $x$-, $y$- and $z$-axes $\sqrt{1-\gamma}$, $\sqrt{1-\gamma}$, $1-\gamma$, respectively. Note that all pure initial states, which satisfy $r_x^2+r_y^2+r_z^2=1$, get mapped to the surface of the ellipsoid.

	The relations \eqref{eq-GADC_Pauli_action_1}--\eqref{eq-GADC_Pauli_action_4} also imply that the GADC is covariant with respect to the Pauli-$z$ operator, i.e.,
	\begin{equation}\label{eq-GADC_Z_covariant}
		\mathcal{A}_{\gamma,N}(\sigma_z\rho\sigma_z)=\sigma_z\mathcal{A}_{\gamma,N}(\rho)\sigma_z
	\end{equation}
	for all states $\rho$ and all $\gamma,N\in[0,1]$. More generally, the GADC is covariant with respect to the operator $\e^{\mathrm{i}  a\hat{n}}$, where
	\begin{equation}\label{eq-num_operator}
		\hat{n}\equiv\ket{1}\bra{1}
	\end{equation}
	is the number operator, i.e.,
	\begin{equation}
		\mathcal{A}_{\gamma,N}(\e^{\mathrm{i} a\hat{n}}\rho\e^{-\mathrm{i} a\hat{n}})=\e^{\mathrm{i} a\hat{n}}\mathcal{A}_{\gamma,N}(\rho)\e^{-\mathrm{i} a\hat{n}}
	\end{equation}
	for all states $\rho$, all $a\in\mathbb{R}$, and all $\gamma,N\in[0,1]$.
	
	We also have that
	\begin{equation}\label{eq-GADC_N_symmetry}
		\mathcal{A}_{\gamma,N}(\rho)=\sigma_x\mathcal{A}_{\gamma,1-N}(\sigma_x\rho\sigma_x)\sigma_x
	\end{equation}
	for all states $\rho$ and all $\gamma,N\in[0,1]$. In other words, the GADC $\mathcal{A}_{\gamma,N}$ is related to the GADC $\mathcal{A}_{\gamma,1-N}$ via a simple pre- and post-processing by the unitary $\sigma_x$. The information-theoretic aspects of the GADC are thus invariant under the interchange $N\leftrightarrow 1-N$, which means that we can, without loss of generality, restrict the parameter $N$ to the interval $\left[0,1/2\right]$.

	We now recall the following well-known decomposition theorems for an arbitrary generalized amplitude damping channel $\A_{\gamma, N}$:
	\begin{enumerate}
		\item Let $\gamma\in[0,1]$ and $N\in[0,1]$. Then any generalized amplitude damping channel 	$\A_{\gamma, N}$ can be decomposed as a convex combination of $\A_{\gamma, 0}$ and $\A_{\gamma, 1}$, i.e., 	
			\begin{align}\label{eq-GADC_decomp_convex}
				\mathcal{A}_{\gamma,N}=(1-N)\mathcal{A}_{\gamma,0}+N\mathcal{A}_{\gamma,1}.
			\end{align}
			
		\item Let $\gamma_1,\gamma_2\in[0,1]$ and $N_1,N_2\in[0,1]$. Then, any generalized amplitude damping channel $\A_{\gamma, N}$ can be decomposed as the concatenation of two generalized amplitude damping channels $\A_{\gamma_1, N_1}$ and $\A_{\gamma_2, N_2}$ \cite{LG15}:
			\begin{equation}\label{eq-GADC_decomp_gen}
				\mathcal{A}_{\gamma,N}=\mathcal{A}_{\gamma_2,N_2}\circ\mathcal{A}_{\gamma_1,N_1}
			\end{equation}
			where $\gamma = \gamma_1+\gamma_2-\gamma_1\gamma_2$ and $N= \frac{\gamma_1(1-\gamma_2)N_1+\gamma_2N_2}{\gamma_1+\gamma_2-\gamma_1\gamma_2}$. 
		
	\end{enumerate}

	A consequence of \eqref{eq-GADC_decomp_gen} is that, for all $\gamma,N\in[0,1]$,
	\begin{align}
		\mathcal{A}_{\gamma,N}&=\mathcal{A}_{\gamma N,1}\circ\mathcal{A}_{\frac{\gamma(1-N)}{1-\gamma N},0},\label{eq-GADC_decomp_spec_1} \\
		\mathcal{A}_{\gamma,N}&=\mathcal{A}_{\gamma(1-N),0}\circ\mathcal{A}_{\frac{\gamma N}{1-\gamma(1-N)},1}.\label{eq-GADC_decomp_spec_2}
	\end{align}

	We define
	\begin{equation}\label{eq-GADC_comp}
		\mathcal{A}_{\gamma,N}^c(\rho_A)\equiv\Tr_B[V_{A\to BE}^{\gamma,N}\rho_A(V_{A\to BE}^{\gamma,N})^\dagger]
	\end{equation}
	to be a channel complementary to $\mathcal{A}_{\gamma,N}$, where $V_{A\to BE}^{\gamma,N}$ is an isometric extension of $\mathcal{A}_{\gamma,N}$, which we take to be
	\begin{equation}\label{eq-GADC_iso_ext}
		V_{A\to BE}^{\gamma,N}\equiv A_1\otimes\ket{0}_E+A_2\otimes\ket{1}_E+A_3\otimes\ket{2}_E+A_4\otimes\ket{3}_E.
	\end{equation}

\subsection{The qubit thermal channel}\label{subsec-qubit_thermal_chan}

	The GADC is presented in a different form in \cite{RMG18} and is called the ``qubit thermal attenuator channel''. In this section, we show explicitly that the qubit thermal attenuator channel is equal to the GADC up to a reparameterization.
	
	A qubit thermal attenuator channel, which we refer to here as a ``qubit thermal channel'', is defined by analogy with the bosonic thermal channel \cite{AS17} as the interaction of two qubit systems $A$ and $E$ via a unitary channel, given by the unitary $U^\eta$, followed by discarding the system $E$ \cite{GF05}. See Fig.~\ref{fig-qubit_thermal_noise} for an illustration. The unitary $U^\eta$ is defined as
	\begin{align}\label{eq-BS_unitary}
		U^\eta=\begin{pmatrix} 1&0&0&0\\0&\sqrt{\eta}&\sqrt{1-\eta}&0\\0&-\sqrt{1-\eta}&\sqrt{\eta}&0\\0&0&0&1\end{pmatrix}.
	\end{align}
	This unitary is analogous to the unitary transformation induced by an optical beamsplitter with transmissivity $\eta\in[0,1]$. Such an optical beamsplitter is defined such that if one of the input arms contains no light, then the fraction $\eta$ of the light is transmitted unaltered, while the remaining fraction is reflected into the other output arm. The unitary transformation for the optical beamsplitter can be written as $\e^{\I \theta H_{\text{BS}}}$, where $H_{\text{BS}}=\I(\hat{a}^\dagger \hat{b}-\hat{b}^\dagger \hat{a})$ and $\theta=\arccos(\sqrt{\eta})$ (see, e.g., \cite{KMN+07}). Here, $\hat{a}$ and $\hat{b}$ are the bosonic annihilation operators corresponding to the two input arms of the beamsplitter. The unitary $U^\eta$ for the qubit thermal channel can be written in the same form $\e^{\I\theta H_{\text{BS}}}$ by replacing the bosonic annihilation operator $\hat{a}$ in $H_{\text{BS}}$ with $\sigma_-\otimes\mathbbm{1}$ and the operator $\hat{b}$ with $\mathbbm{1}\otimes\sigma_-$, where $\sigma_-\equiv \ket{0}\bra{1}$ can be thought of as the qubit analogue of the annihilation operator.
	
	\begin{figure}
		\centering
		\includegraphics[scale=1]{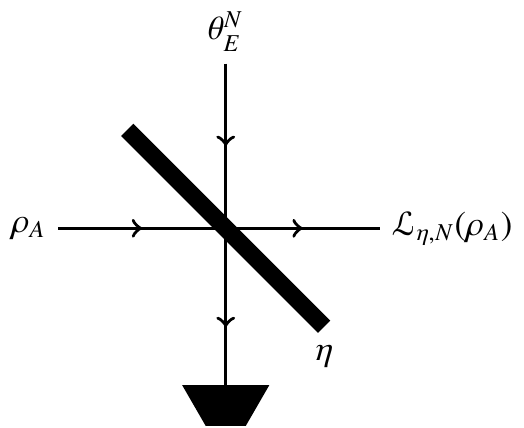}
		\caption{The qubit thermal  channel is defined by analogy with the bosonic thermal channel as the interaction of a system $A$ in the state $\rho_A$ with an environment in the state $\theta_E^N$ (see \eqref{eq-thermal_state}) at a ``beamsplitter'' of transmissivity $\eta$, which is a unitary channel defined by the unitary $U^\eta$ in \eqref{eq-BS_unitary}. The state of the environment is then discarded to obtain the output $\mathcal{L}_{\eta,N}(\rho_A)$.}\label{fig-qubit_thermal_noise}
	\end{figure}
	
	Let $\rho_A$ denote the state of the input system $A$, and let the initial state of the system $E$ be
	\begin{align}\label{eq-thermal_state}
		\theta^N_E \equiv (1-N)\ket{0}\bra{0}_E+N\ket{1}\bra{1}_E.
	\end{align}
	Then, the qubit thermal  channel $\L_{\eta, N}$ is defined as
	\begin{align}
		\mathcal{L}_{\eta,N}(\rho_A)&\equiv\Tr_E[U_{AE\to BE}^\eta(\rho_A\otimes\theta_E^N)(U_{AB\to AE}^\eta)^\dagger]\label{eq-GADC_unitary_rep}\\
		&=\Tr_{EE'}[(U_{AE\to BE}^\eta\otimes\mathbbm{1}_{E'})(\rho_A\otimes\ket{\theta^N}\bra{\theta^N}_{EE'})\nonumber\\
		&\qquad\qquad\qquad\times (U_{AE\to BE}^\eta\otimes\mathbbm{1}_{E'})^\dagger]\label{eq-GADC_unitary_rep_2},
	\end{align}
	where
	\begin{equation}
		\ket{\theta^N}_{EE'}\equiv \sqrt{1-N}\ket{0,0}_{EE'}+\sqrt{N}\ket{1,1}_{EE'}.
	\end{equation}
	When $N=0$, we call the qubit thermal  channel $\mathcal{L}_{\eta,0}$ the qubit pure-loss channel.
	
	The qubit thermal channel as defined in \eqref{eq-GADC_unitary_rep} has exactly the same form as the bosonic thermal channel, the latter having the unitary $U^{\eta}$ defined in \eqref{eq-BS_unitary} replaced by $\e^{\I\theta H_{\text{BS}}}$. In particular, the initial state $\theta_E^N$ of the system $E$ can be thought of as the qubit analogue of the bosonic thermal state $\e^{-\beta\hat{a}^\dagger\hat{a}}/{\Tr[\e^{-\beta\hat{a}^\dagger\hat{a}}]}$ \cite{AS17}, and the parameter $N\in[0,1]$ can be thought of as the mean number of photons. Indeed, if we replace $\hat{a}$ with $\sigma_-$ in the definition of the bosonic thermal state, observe using the definition of the number operator $\hat{n}$ in \eqref{eq-num_operator} that $\sigma_-^\dagger\sigma_-=\hat{n}$, and let $\beta=\ln\left(\frac{1-N}{N}\right)$, then we obtain
	\begin{align}
		\frac{\e^{-\beta\sigma_-^\dagger\sigma_-}}{\Tr[\e^{-\beta\sigma_-^\dagger\sigma_-}]}&=\frac{1}{1+\e^{-\beta}}\ket{0}\bra{0}+\frac{\e^{-\beta}}{1+\e^{-\beta}}\ket{1}\bra{1}\\
		&=(1-N)\ket{0}\bra{0}+N\ket{1}\bra{1}\\
		&=\theta^N.
	\end{align} 

	There is a simple connection between the qubit thermal  channel and the generalized amplitude damping channel that is straightforward to prove: for all $\gamma\in[0,1]$ and $N \in [0, 1]$,
	\begin{equation}\label{eq-GADC_to_thermalnoise}
		\mathcal{A}_{\gamma,N}=\mathcal{L}_{1-\gamma,N}.
	\end{equation}
	Using this, along with \eqref{eq-GADC_decomp_spec_1} and \eqref{eq-GADC_decomp_spec_2}, we obtain the following serial decompositions of the qubit thermal channel:
	\begin{align}
		\mathcal{L}_{\eta,N}&=\mathcal{L}_{1-(1-\eta)N,1}\circ\mathcal{L}_{\frac{\eta}{1-(1-\eta)N},0},\label{eq-qubitThermal_decomp_spec_1}\\
		\mathcal{L}_{\eta,N}&=\mathcal{L}_{\eta+(1-\eta)N,0}\circ\mathcal{L}_{\frac{\eta}{\eta+(1-\eta)N},1}.\label{eq-qubitThermal_decomp_spec_2}
	\end{align}
	These decompositions are depicted in Fig. \ref{fig-qubitThermal_decomp}.
	
	\begin{figure}
		\centering
		\includegraphics[width=\columnwidth]{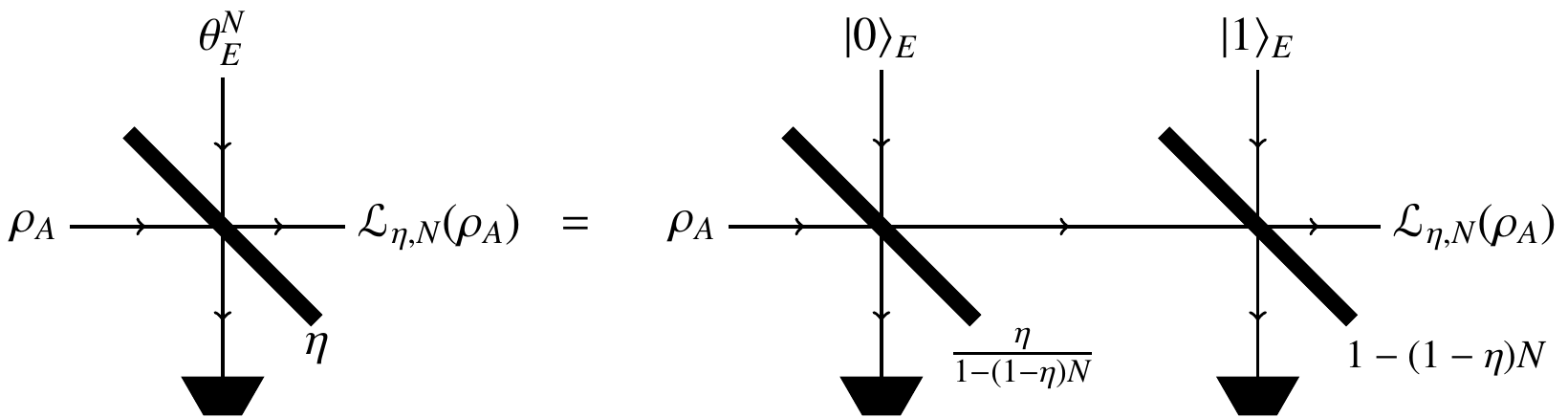}\\[0.5cm]
		\includegraphics[width=\columnwidth]{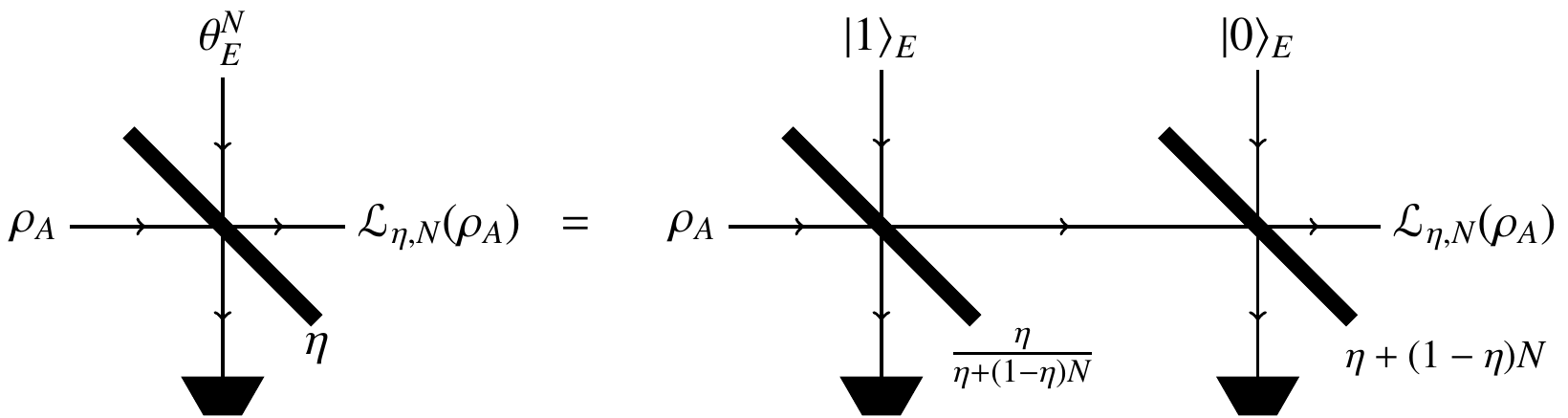}
		\caption{Serial decompositions of the qubit thermal channel, as given in \eqref{eq-qubitThermal_decomp_spec_1} and \eqref{eq-qubitThermal_decomp_spec_2}.}\label{fig-qubitThermal_decomp}
	\end{figure}
	
	We take a channel complementary to the qubit thermal  channel to be
	\begin{equation}
		\begin{aligned}
		\mathcal{L}_{\eta,N}^c(\rho_A)&\equiv \Tr_B[(U_{AE\to BE}^{\eta}\otimes\mathbbm{1}_{E'})(\rho_A\otimes\ket{\theta^N}\bra{\theta^N}_{EE'})\\
		&\qquad\qquad\qquad\times(U_{AE\to BE}^{\eta}\otimes\mathbbm{1}_{E'})^\dagger],
		\end{aligned}
	\end{equation}
	and we define a \textit{weakly complementary channel} \cite{CG06} to be
	\begin{equation}\label{eq-QTN_weak_complement}
		\widetilde{\mathcal{L}}_{\eta,N}^c(\rho_A)\equiv\Tr_B[U_{AE\to BE}^{\eta}(\rho_A\otimes\theta^N)(U_{AE\to BE}^\eta)^\dagger].
	\end{equation}

\section{Entanglement breakability of the GADC}\label{sec:ent-break}

	Having defined the GADC, we now proceed to examine its properties. We start by determining when the channel is entanglement breaking. Necessary and sufficient conditions for entanglement-breakability of the GADC have been previously determined in \cite{FRTZ12,LG15}. For completeness, we provide the derivation here, following the same approach given in \cite{FRTZ12,LG15}.

	For any two-qubit quantum state $\rho_{AB}$, the condition
	\begin{equation}\label{eq-two_qubit_sep}
		\det(\rho_{AB}^{\t_B})\geq 0
	\end{equation}
	is necessary and sufficient for the separability of $\rho_{AB}$ \cite{ADH08}.
	
	\begin{figure}
		\centering
		\includegraphics[scale=1]{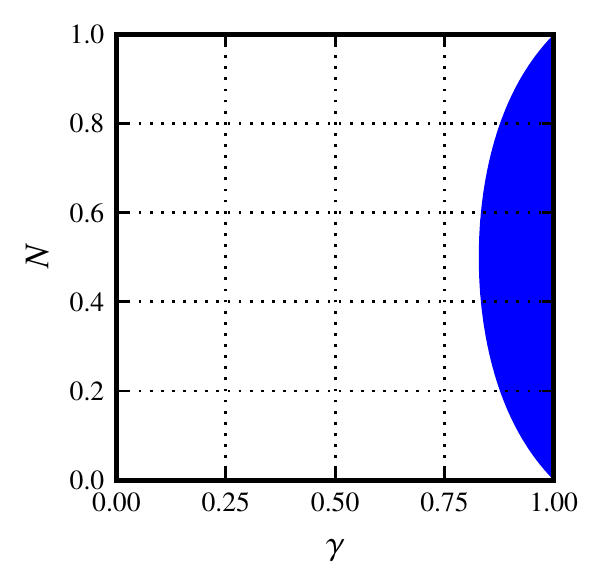}
		\caption{Region of parameters, indicated in blue as per \eqref{eq-GADC_ent_break}, for which the GADC is entanglement breaking. See also \cite{FRTZ12,LG15}.}\label{eq-ent_break_region}
	\end{figure}

	Since a channel is entanglement breaking if and only if its Choi state is separable \cite{HSR03}, to determine when the GADC $\mathcal{A}_{\gamma,N}$ is entanglement breaking, we can apply the condition in \eqref{eq-two_qubit_sep} to its Choi state $\rho_{AB}^{\gamma,N}\equiv\rho_{AB}^{\mathcal{A}_{\gamma,N}}$ as defined by \eqref{eq:choi-state}. We have
	\begin{align}
		\rho_{AB}^{\gamma,N}&=\frac{1}{2}\bigg((1-\gamma N)\ket{0,0}\bra{0,0}_{AB}+\sqrt{1-\gamma}\ket{0,0}\bra{1,1}_{AB}\nonumber\\
		&\qquad+\gamma N\ket{0,1}\bra{0,1}_{AB} +\gamma(1-N)\ket{1,0}\bra{1,0}_{AB}\nonumber\\
		&\qquad+\sqrt{1-\gamma}\ket{1,1}\bra{0,0}_{AB}\nonumber\\
		&\qquad+(1-\gamma(1-N))\ket{1,1}\bra{1,1}_{AB}\bigg)\label{eq-GADC_Choi_state}\\
		& = 		\frac{1}{2}\begin{pmatrix}
			1-\gamma N & 0 & 0 & \sqrt{1-\gamma}\\
			0 & \gamma N & 0 & 0\\
			0 & 0 & \gamma\left(  1-N\right)   & 0\\
			\sqrt{1-\gamma} & 0 & 0 & 1-\gamma\left(  1-N\right)
		\end{pmatrix}.
	\end{align}
	Then,
	\begin{equation}
	\det\left(\left(\rho_{AB}^{\gamma,N}\right)^{\t_B}\right)=\frac{-1+2\gamma-\gamma^2+\gamma^4(1-N)^2N^2}{16},
	\end{equation}
	so that $\det\left(\left(\rho_{AB}^{\gamma,N}\right)^{\t_B}\right)\geq 0$ leads to the following necessary and sufficient condition for the GADC to be entanglement breaking (see also \cite{FRTZ12,LG15}):
	\begin{equation}\label{eq-GADC_ent_break}
		\begin{aligned}
		2(\!\!\sqrt{2}-1)&\leq\gamma\leq 1,\\
\frac{1}{2}\left(1-\sqrt{\frac{\gamma^2+4\gamma-4}{\gamma^2}}\right)&\leq N\leq\frac{1}{2}\left(1+\sqrt{\frac{\gamma^2+4\gamma-4}{\gamma^2}}\right).
		\end{aligned}
	\end{equation}
	Note that $2(\!\!\sqrt{2}-1)\approx 0.8284$. See Fig.~\ref{eq-ent_break_region} for a plot of this region of parameters. It is worth remarking that while the GADC has many parallels with the bosonic thermal channel, as outlined in Section \ref{subsec-qubit_thermal_chan}, the entanglement-breakability condition obtained here is starkly different from the corresponding condition in the bosonic case. In particular, entanglement breakability of the bosonic thermal channel is given by the relatively simple condition $\eta\leq \frac{N}{N+1}$ \cite{Hol08}.

\section{Degradability and anti-degradability of the GADC}\label{sec:anti-deg}
	
	%We now consider the degradability and anti-degradability of the GADC. 

\subsection{Degradability of the GADC}

	It is known that the GADC is degradable for all $\gamma\in [0,1/2]$ when $N=0$ or $N=1$ \cite{GF05}. For $N\in(0,1)$ and $\gamma\in(0,1]$, it follows from \cite[Theorem 4]{CRS08} that the GADC is not degradable.
	
	In the case $N=0$, it can be shown that \cite{GF05}
	\begin{equation}
		\mathcal{A}_{\gamma,0}^c=\mathcal{A}_{1-\gamma,0}.
	\end{equation}
	Then, using \eqref{eq-GADC_decomp_gen}, it follows from the condition $\mathcal{D}_{\gamma,0}\circ\mathcal{A}_{\gamma,0}=\mathcal{A}_{\gamma,0}^c=\mathcal{A}_{1-\gamma,0}$ that a degrading channel $\mathcal{D}_{\gamma,0}$ is simply
	\begin{equation}
		\mathcal{D}_{\gamma,0}=\mathcal{A}_{\frac{1-2\gamma}{1-\gamma},0}.
	\end{equation}
	In other words,
	\begin{equation}\label{eq-GADC_deg}
		\mathcal{A}_{\frac{1-2\gamma}{1-\gamma},0}\circ\mathcal{A}_{\gamma,0}=\mathcal{A}_{\gamma,0}^c
	\end{equation}
	for all $\gamma\in[0,1/2)$. In terms of the qubit thermal  channel, we use the correspondence in \eqref{eq-GADC_to_thermalnoise} to write the condition \eqref{eq-GADC_deg} as
	\begin{equation}
		\mathcal{L}_{\frac{1-\eta}{\eta},0}\circ\mathcal{L}_{\eta,0}=\mathcal{L}_{\eta,0}^c.
	\end{equation}
	for all $\eta\in(1/2,1]$.
	
	Although the qubit thermal channel is not degradable for $N>0$, it is \textit{weakly degradable}, meaning that there exists a channel $\widetilde{\mathcal{D}}_{\eta,N}$ such that
	\begin{equation}
		\widetilde{\mathcal{D}}_{\eta,N}\circ\mathcal{L}_{\eta,N}=\widetilde{\mathcal{L}}_{\eta,N}^c.
	\end{equation}
	In particular, one possible weakly degrading channel $\widetilde{\mathcal{D}}_{\eta,N}$ is \cite{RMG18}
	\begin{equation}\label{eq-QTN_weak_deg_channel}
		\widetilde{\mathcal{D}}_{\eta,N}=\mathcal{P}_{1-2N}\circ\mathcal{L}_{\frac{1-\eta}{\eta},N},
	\end{equation}
	where $\mathcal{P}_{\mu}$ denotes the phase damping channel, which is defined via its Kraus operators
	\begin{equation}
		\begin{pmatrix} 1 & 0 \\ 0 & \sqrt{\mu} \end{pmatrix}\quad\text{and}\quad \begin{pmatrix} 0 & 0 \\ 0 & \sqrt{1-\mu}\end{pmatrix}.
	\end{equation}

\subsection{Anti-degradability of the GADC}
	
	To determine the anti-degradability of the GADC, we use the fact that a channel is anti-degradable if and only if its Choi state is two-extendable \cite{MyhrThesis}. 
	
	\begin{proposition}[Anti-degradability of the GADC]
		For all $N\in[0,1]$, the condition
		\begin{equation}
			\gamma\geq\frac{1}{2}
		\end{equation}
		is necessary and sufficient for the anti-degradability of the GADC $\mathcal{A}_{\gamma,N}$.
	\end{proposition}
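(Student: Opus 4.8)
The plan is to invoke the criterion recalled just above the statement — $\mathcal{A}_{\gamma,N}$ is anti-degradable if and only if its Choi state $\rho_{AB}^{\gamma,N}$ from \eqref{eq-GADC_Choi_state} is two-extendable — together with the known closed-form characterization of two-extendability for two-qubit states. Since any two-extension can be symmetrized (averaging over the swap of the two copies of $B$ preserves positivity, normalization, and both marginals), two-extendability is equivalent to the existence of a permutation-symmetric extension on the $B$ system, and for a two-qubit state $\rho_{AB}$ this holds exactly when
\begin{equation}
\Tr[\rho_B^2]\geq\Tr[\rho_{AB}^2]-4\sqrt{\det\rho_{AB}},
\end{equation}
where $\rho_B=\Tr_A[\rho_{AB}^{\gamma,N}]$ is the reduced state of the party being copied. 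The entire proof then reduces to evaluating the three scalars $\Tr[\rho_B^2]$, $\Tr[(\rho_{AB}^{\gamma,N})^2]$, and $\det\rho_{AB}^{\gamma,N}$ from the explicit $4\times 4$ matrix in \eqref{eq-GADC_Choi_state}, and determining when the inequality is satisfied.

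First I would use \eqref{eq-GADC_Pauli_action_4} to get $\rho_B=\tfrac{1}{2}(\mathbbm{1}+\gamma(1-2N)\sigma_z)$, hence $\Tr[\rho_B^2]=\tfrac{1}{2}(1+\gamma^2(1-2N)^2)$. Next, exploiting the block structure of $\rho_{AB}^{\gamma,N}$ — a $2\times 2$ block on $\mathrm{span}\{\ket{00},\ket{11}\}$, with $\ket{01}$ and $\ket{10}$ eigenvectors — one computes $\det\rho_{AB}^{\gamma,N}=\gamma^4 N^2(1-N)^2/16$ (using the identity $(1-\gamma N)(1-\gamma(1-N))-(1-\gamma)=\gamma^2 N(1-N)$), and a similarly short computation gives $\Tr[(\rho_{AB}^{\gamma,N})^2]=\tfrac{1}{4}(4-4\gamma+2\gamma^2(N^2+(1-N)^2))$. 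Substituting into the extendibility inequality, the $N$-dependent terms cancel identically and the difference of the two sides collapses to $\gamma-\tfrac{1}{2}$. Therefore the Choi state is two-extendable precisely when $\gamma\geq 1/2$, which by the criterion is equivalent — in both directions — to anti-degradability of $\mathcal{A}_{\gamma,N}$, for every $N\in[0,1]$.

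The main obstacle is not the arithmetic, which is routine once set up, but correctly invoking the two-qubit symmetric-extendibility criterion: one must confirm that the reduced state appearing in the inequality is that of the extended party $B$ rather than $A$, note that $\det\rho_{AB}^{\gamma,N}\geq 0$ automatically since $\rho_{AB}^{\gamma,N}\geq 0$, and use that two-extendability (equal reduced states) is genuinely equivalent to the existence of a symmetric extension. As a consistency check I would verify that the result reproduces the known amplitude-damping case $N\in\{0,1\}$, where $\mathcal{A}_{\gamma,0}$ is anti-degradable iff $\gamma\geq 1/2$, consistent with $\mathcal{A}_{\gamma,0}^c=\mathcal{A}_{1-\gamma,0}$ and degradability of $\mathcal{A}_{1-\gamma,0}$ for $1-\gamma\leq 1/2$. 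An alternative route to sufficiency — writing $\mathcal{A}_{\gamma,N}=\mathcal{A}_{2\gamma-1,N}\circ\mathcal{A}_{1/2,N}$ via \eqref{eq-GADC_decomp_gen} for $\gamma\geq 1/2$, establishing anti-degradability of $\mathcal{A}_{1/2,N}$, and using that post-composition by a channel preserves anti-degradability (since $\mathcal{N}^c$ is recovered from $(\mathcal{M}\circ\mathcal{N})^c$ by a partial trace over $\mathcal{M}$'s environment) — could be noted, but the direct Choi-matrix computation is cleaner and yields necessity as well.
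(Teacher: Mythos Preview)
Your proposal is correct and follows essentially the same route as the paper: both invoke the equivalence between anti-degradability and two-extendability of the Choi state, apply the closed-form two-qubit symmetric-extendibility criterion $\Tr[\rho_{AB}^2]-\Tr[\rho_B^2]\leq 4\sqrt{\det\rho_{AB}}$, and compute the three scalars from \eqref{eq-GADC_Choi_state} to reduce the inequality to $\gamma\geq\tfrac{1}{2}$. Your added remarks (the role of $\rho_B$ versus $\rho_A$, the $N\in\{0,1\}$ consistency check, and the alternative composition argument for sufficiency) go somewhat beyond the paper's presentation but do not change the underlying strategy.
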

	
	\begin{proof}
		Since the GADC is a qubit-to-qubit channel, its Choi state $\rho_{AB}^{\gamma,N}$ is a two-qubit state. For any two-qubit state $\rho_{AB}$, the inequality 
		\begin{equation}\label{eq-2ext_cond}
			\Tr[\rho_{AB}^2]-\Tr[\rho_B^2]\leq 4\sqrt{\det(\rho_{AB})}
		\end{equation}
		is necessary and sufficient for $\rho_{AB}$ to be two-extendable \cite{ML09,CJK+14}. For the Choi state $\rho_{AB}^{\gamma,N}$, we find that
		\begin{align}
			\Tr\left[\left(\rho_{AB}^{\gamma,N}\right)^2\right]&=\gamma^2N^2-\gamma^2N+\frac{1}{2}\gamma^2-\gamma+1,\\
			\Tr\left[\left(\rho_B^{\gamma,N}\right)^2\right]&=2\gamma^2N^2-2\gamma^2N+\frac{1}{2}\gamma^2+\frac{1}{2},\\
			\det\left(\rho_{AB}^{\gamma,N}\right)&=\frac{\gamma^4N^2(1-N)^2}{16}.
		\end{align}
		Substituting these quantities into the inequality in \eqref{eq-2ext_cond} and simplifying leads to $\gamma\geq\frac{1}{2}$ as the necessary and sufficient condition for two-extendability of the Choi state of the GADC, and hence for anti-degradability of the GADC.
	\end{proof}

	It is interesting to note that the condition for anti-degradability of the GADC has no dependence on $N$, even though, intuitively, the noise of the channel increases with $N$. This is another way in which the GADC is in contrast with the bosonic thermal channel, since for the bosonic thermal channel the anti-degradability condition depends on $N$ and is given by $\eta\leq\frac{N+1/2}{N+1}$ \cite[Eq.~(4.6)]{dp2006}.
	
	When the GADC is anti-degradable, there exists a simple anti-degrading channel $\mathcal{E}$ satisfying \eqref{eq-anti_degrade}, the form of which follows immediately from the following lemma.
	
	\begin{lemma}\label{lem-anti_degrad_chan_bd}
		Define the channel $\mathcal{E}_N^*$ by the Kraus operators
		\begin{align}
			E_0&=\ket{0}_B\bra{0}_E+\ket{1}_B\bra{1}_E,\\
			E_1&=\ket{0}_B\bra{3}_E+\ket{1}_B\bra{2}_E,
		\end{align}
		which acts on the four-dimensional output space of the complementary channel $\mathcal{A}_{\gamma,N}^c$ defined in \eqref{eq-GADC_comp}. Then,
		\begin{equation}\label{eq-anti_degrad_chan_bd}
			\mathcal{E}_N^*\circ\mathcal{A}_{\gamma,N}^c=\mathcal{A}_{1-\gamma,N}
		\end{equation}
		for all $N\in[0,1]$ and all $\gamma\in[0,1]$.
	\end{lemma}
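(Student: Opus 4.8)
The plan is a direct Kraus-operator computation, since equality of channels follows once we exhibit a common set of Kraus operators. First I would read off the Kraus operators of the complementary channel $\mathcal{A}_{\gamma,N}^c$ from the isometric extension in \eqref{eq-GADC_iso_ext}. Writing $V_{A\to BE}^{\gamma,N}=\sum_{i=1}^4 A_i\otimes\ket{i-1}_E$ and tracing out $B$ as in \eqref{eq-GADC_comp}, the channel $\mathcal{A}_{\gamma,N}^c$ has Kraus operators $K_b\equiv(\bra{b}_B\otimes\mathbbm{1}_E)\,V_{A\to BE}^{\gamma,N}=\sum_{i=1}^4(\bra{b}_B A_i)\otimes\ket{i-1}_E$ for $b\in\{0,1\}$, each a map from $\mathcal{H}_A$ into the four-dimensional space $\mathcal{H}_E$. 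Substituting the explicit forms of $A_1,\dots,A_4$ gives
\begin{align}
K_0&=\sqrt{1-N}\,\ket{0}_E\bra{0}+\sqrt{\gamma(1-N)}\,\ket{1}_E\bra{1}+\sqrt{N(1-\gamma)}\,\ket{2}_E\bra{0},\\
K_1&=\sqrt{(1-N)(1-\gamma)}\,\ket{0}_E\bra{1}+\sqrt{N}\,\ket{2}_E\bra{1}+\sqrt{\gamma N}\,\ket{3}_E\bra{0},
\end{align}
where the bras act on $\mathcal{H}_A$.

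Next I would compose with $\mathcal{E}_N^*$: the channel $\mathcal{E}_N^*\circ\mathcal{A}_{\gamma,N}^c$ has the four Kraus operators $\{E_k K_b\}_{k,b\in\{0,1\}}$ mapping $\mathcal{H}_A$ to $\mathcal{H}_B$. Using $E_0\ket{0}_E=\ket{0}_B$, $E_0\ket{1}_E=\ket{1}_B$, $E_0\ket{2}_E=E_0\ket{3}_E=0$, $E_1\ket{3}_E=\ket{0}_B$, $E_1\ket{2}_E=\ket{1}_B$, and $E_1\ket{0}_E=E_1\ket{1}_E=0$, one obtains
\begin{align}
E_0K_0&=\sqrt{1-N}\left(\ket{0}_B\bra{0}+\sqrt{\gamma}\,\ket{1}_B\bra{1}\right),\\
E_0K_1&=\sqrt{(1-N)(1-\gamma)}\,\ket{0}_B\bra{1},\\
E_1K_1&=\sqrt{N}\left(\sqrt{\gamma}\,\ket{0}_B\bra{0}+\ket{1}_B\bra{1}\right),\\
E_1K_0&=\sqrt{N(1-\gamma)}\,\ket{1}_B\bra{0}.
\end{align}
These are precisely the Kraus operators $A_1,A_2,A_3,A_4$ of $\mathcal{A}_{1-\gamma,N}$ (obtained by the substitution $\gamma\mapsto 1-\gamma$ in the defining Kraus operators of the GADC), so $\mathcal{E}_N^*\circ\mathcal{A}_{\gamma,N}^c$ and $\mathcal{A}_{1-\gamma,N}$ have the same Kraus representation and hence coincide, establishing \eqref{eq-anti_degrad_chan_bd}. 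I would also note at the outset, for completeness, that $\mathcal{E}_N^*$ is a legitimate channel, since $E_0^\dagger E_0+E_1^\dagger E_1=\sum_{j=0}^3\ket{j}_E\bra{j}_E=\mathbbm{1}_E$.

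There is no real obstacle here: the argument is entirely mechanical once the Kraus operators of $\mathcal{A}_{\gamma,N}^c$ are written down. The only point requiring care is bookkeeping — namely that the ordering $A_i\otimes\ket{i-1}_E$ fixed in \eqref{eq-GADC_iso_ext} is exactly what makes the specific $\mathcal{E}_N^*$ in the statement work; a different labeling of the environment basis would require a correspondingly permuted $\mathcal{E}_N^*$, so I would keep the indexing tied to \eqref{eq-GADC_iso_ext} throughout.
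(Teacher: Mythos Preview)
Your proof is correct. It takes a genuinely different route from the paper's own argument, though both are direct verifications. The paper proceeds via Choi states: it writes down the purification $\ket{\psi}_{ABE}^{\gamma,N}=(\mathbbm{1}_A\otimes V^{\gamma,N})\ket{\Phi^+}_{AA'}$, applies an isometric extension of $\mathcal{E}_N^*$ to the $E$ register, traces out the appropriate systems, and checks that the resulting two-qubit state equals the Choi state $\rho_{AB'}^{1-\gamma,N}$. You instead work at the level of Kraus operators: you extract the two Kraus operators $K_0,K_1$ of $\mathcal{A}_{\gamma,N}^c$ from the isometry, form the four products $E_kK_b$, and observe that they are literally the defining Kraus operators of $\mathcal{A}_{1-\gamma,N}$. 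Your route is somewhat cleaner---it avoids the intermediate purifications and four-qubit states, and it explains structurally \emph{why} the particular $E_0,E_1$ work (they pick out the $\{0,1\}$ and $\{3,2\}$ environment components, which correspond to the $N\!=\!0$ and $N\!=\!1$ blocks of the isometry with the roles of $B$ and $E$ swapped). The paper's Choi-state computation, on the other hand, is perhaps easier to mechanically double-check entry by entry. Either way the content is the same finite computation; your version is a valid and slightly more transparent alternative.
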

	
	\begin{proof}
		See Appendix \ref{app-GADC_anti_degrade}.
	\end{proof}
	
	It follows that the channel $\mathcal{E}_N^*$ defined in Lemma \ref{lem-anti_degrad_chan_bd} is an anti-degrading channel at the boundary $\gamma=\frac{1}{2}$ for all $N\in[0,1]$. To find an anti-degrading channel for $\gamma>\frac{1}{2}$, we use \eqref{eq-GADC_decomp_gen} to obtain the following.
	
	\begin{proposition}
		For all $N\in[0,1]$ and all $\gamma\geq\frac{1}{2}$, the channel
		\begin{equation}
			\mathcal{E}_{\gamma,N}\equiv\mathcal{A}_{\frac{2\gamma-1}{\gamma},N}\circ\mathcal{E}_N^*
		\end{equation}
		is an anti-degrading channel for the GADC, meaning that $\mathcal{E}_{\gamma,N}\circ\mathcal{A}_{\gamma,N}^c=\mathcal{A}_{\gamma,N}$.
	\end{proposition}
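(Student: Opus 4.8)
The plan is to reduce the claim entirely to two facts already in hand: Lemma~\ref{lem-anti_degrad_chan_bd} and the general concatenation identity \eqref{eq-GADC_decomp_gen}. First I would apply Lemma~\ref{lem-anti_degrad_chan_bd} to peel off $\mathcal{E}_N^*$:
\begin{equation}
\mathcal{E}_{\gamma,N}\circ\mathcal{A}_{\gamma,N}^c=\mathcal{A}_{\frac{2\gamma-1}{\gamma},N}\circ\left(\mathcal{E}_N^*\circ\mathcal{A}_{\gamma,N}^c\right)=\mathcal{A}_{\frac{2\gamma-1}{\gamma},N}\circ\mathcal{A}_{1-\gamma,N},
\end{equation}
so the entire problem collapses to showing that the serial concatenation $\mathcal{A}_{\frac{2\gamma-1}{\gamma},N}\circ\mathcal{A}_{1-\gamma,N}$ equals $\mathcal{A}_{\gamma,N}$.

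Next I would invoke \eqref{eq-GADC_decomp_gen} with $\gamma_1=1-\gamma$, $\gamma_2=\frac{2\gamma-1}{\gamma}$, and $N_1=N_2=N$. The effective damping parameter is $\gamma_1+\gamma_2-\gamma_1\gamma_2=\gamma_1+\gamma_2(1-\gamma_1)=(1-\gamma)+\gamma_2\gamma=(1-\gamma)+(2\gamma-1)=\gamma$, which matches the target. For the effective noise parameter, the numerator is $\gamma_1(1-\gamma_2)N_1+\gamma_2N_2=N\left[\gamma_1(1-\gamma_2)+\gamma_2\right]=N\left[\gamma_1+\gamma_2-\gamma_1\gamma_2\right]=N\gamma$, so dividing by the effective damping parameter $\gamma$ yields $N$. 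Hence $\mathcal{A}_{\frac{2\gamma-1}{\gamma},N}\circ\mathcal{A}_{1-\gamma,N}=\mathcal{A}_{\gamma,N}$, which is exactly the asserted relation $\mathcal{E}_{\gamma,N}\circ\mathcal{A}_{\gamma,N}^c=\mathcal{A}_{\gamma,N}$, establishing that $\mathcal{E}_{\gamma,N}$ is an anti-degrading channel.

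The only point requiring care — and it is minor — is verifying that the constituent maps are legitimate generalized amplitude damping channels, i.e.\ that their parameters lie in $[0,1]$. One has $\gamma_1=1-\gamma\in[0,1]$ for any $\gamma\in[0,1]$, and $\gamma_2=\frac{2\gamma-1}{\gamma}\in[0,1]$ precisely because $\gamma\geq\frac{1}{2}$ (giving $2\gamma-1\geq 0$) and $\gamma\leq 1$ (giving $2\gamma-1\leq\gamma$). This is exactly where the hypothesis $\gamma\geq\frac{1}{2}$ enters, consistent with the anti-degradability threshold $\gamma\geq\frac{1}{2}$ established above. I do not anticipate any genuine obstacle: the argument is a two-line composition of previously proved decompositions followed by a one-line parameter check.
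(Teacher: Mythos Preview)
Your proof is correct and follows essentially the same approach as the paper: apply Lemma~\ref{lem-anti_degrad_chan_bd} to reduce to showing $\mathcal{A}_{\frac{2\gamma-1}{\gamma},N}\circ\mathcal{A}_{1-\gamma,N}=\mathcal{A}_{\gamma,N}$, then invoke the concatenation identity \eqref{eq-GADC_decomp_gen}. You in fact give more detail than the paper does, explicitly verifying the parameter arithmetic and noting that the hypothesis $\gamma\geq\tfrac{1}{2}$ is precisely what ensures $\frac{2\gamma-1}{\gamma}\in[0,1]$.
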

	
	\begin{proof}
		The decomposition in \eqref{eq-GADC_decomp_gen} implies that
		\begin{equation}
			\mathcal{A}_{\frac{2\gamma-1}{\gamma},N}\circ\mathcal{A}_{1-\gamma,N}=\mathcal{A}_{\gamma,N}.
		\end{equation}
		Combining this with \eqref{eq-anti_degrad_chan_bd}, we find that 
		\begin{equation}
			\mathcal{A}_{\frac{2\gamma-1}{\gamma},N}\circ\mathcal{E}_N^*\circ\mathcal{A}_{\gamma,N}^c=\mathcal{A}_{\gamma,N}
		\end{equation}
		for all $N\in[0,1]$ and all $\gamma\geq\frac{1}{2}$. The result then follows.
	\end{proof}

\section{Bounds on the classical capacity of the GADC}\label{sec:c-cap-bounds}

	We now consider the communication capacities of the GADC, starting with the classical capacity. In general, the Holevo information recalled in \eqref{eq:Holevo-info-channel} is a lower bound on the classical capacity of any channel. Then, as implied by the formula in \eqref{eq-classical_capacity}, determining the classical capacity of a quantum channel essentially reduces to determining the additivity of the Holevo information, as the capacity of additive channels can be calculated without any regularization. Remarkably, even in the case $N=0$, in which case the GADC reduces to the amplitude damping channel, determining the additivity of the Holevo information remains an important open problem. In the case $N=\frac{1}{2}$, however, we observe from \eqref{eq-GADC_Pauli_action_4} that the GADC is unital, i.e., $\mathcal{A}_{\gamma,\frac{1}{2}}(\mathbbm{1})=\mathbbm{1}$ for all $\gamma\in[0,1]$. The Holevo information is additive for unital qubit channels \cite{King02}, i.e.,
	\begin{equation}
		\chi(\mathcal{N}\otimes\mathcal{M})=\chi(\mathcal{N})+\chi(\mathcal{M})
	\end{equation}
	for any unital qubit channel $\mathcal{N}$ and for any channel $\mathcal{M}$. This implies that the classical capacity of any unital qubit channel is equal to its Holevo information. In particular, for the GADC, we obtain
	\begin{equation}\label{eq-GADC_CCap_N05}
		C(\mathcal{A}_{\gamma,\frac{1}{2}})=\chi(\mathcal{A}_{\gamma,\frac{1}{2}}).
	\end{equation}
	Furthermore, the Holevo information for unital qubit channels is directly related to its minimum output entropy \cite{KR01,Cort02,Cort04} (see also \cite[Example~8.10]{H13book}), such that for the GADC with $N=\frac{1}{2}$ we obtain
	\begin{equation}
		\chi(\mathcal{A}_{\gamma,\frac{1}{2}})=1-h_2\left(\frac{1-\sqrt{1-\gamma}}{2}\right).
	\end{equation}	
	
	The Holevo information is also known to be additive for entanglement breaking channels \cite{Shor02}. Therefore, using the result in \eqref{eq-GADC_ent_break}, we obtain
	\begin{equation}
				C(\mathcal{A}_{\gamma,N})=\chi(\mathcal{A}_{\gamma,N}),
	\end{equation}
	for all $\gamma$ and $N$ satisfying
	\begin{equation}
		\begin{aligned}
			2(\sqrt{2}-1)&\leq\gamma\leq 1,\\
			\frac{1}{2}\left(1-\sqrt{\frac{\gamma^2+4\gamma-4}{\gamma^2}}\right)&\leq N\leq \frac{1}{2}\left(1+\sqrt{\frac{\gamma^2+4\gamma-4}{\gamma^2}}\right).
		\end{aligned}
	\end{equation}
	
	Using the techniques from \cite{Cort02,Ber05}, it has been shown in \cite{LM07b} that the Holevo information of the GADC for its entire parameter range is given by
	\begin{equation}\label{eq-GADC_Hol_inf}
		\chi(\mathcal{A}_{\gamma,N})=\frac{1}{2}(f(r^*)-\log_2(1-q^2)-qf'(q)),
	\end{equation}
	where
	\begin{align}
		f(x)&\equiv(1+x)\log_2(1+x)+(1-x)\log_2(1-x),\\
		f'(x)&=\frac{\text{d}}{\text{d}x}f(x)=\log_2\left(\frac{1+x}{1-x}\right),\\
		r^*&\equiv\sqrt{1-\gamma-\frac{(q-\gamma(1-2N))^2}{1-\gamma}+q^2},
	\end{align}
	and $q$ is determined as the solution to the equation
	\begin{multline}
		(\gamma q-\gamma^2(1-2N)-\gamma(1-\gamma)(1-2N))f'(r^*)\\=-r^*(1-\gamma)f'(q).
	\end{multline}
	
	Let us now compare the Holevo information lower bound with two upper bounds based on the concepts of $\varepsilon$-entanglement-breakability and $\varepsilon$-covariance.
	
	\begin{proposition}[Classical capacity upper bounds via $\varepsilon$-entanglement-breakability and $\varepsilon$-covariance]\label{prop-GADC_CCap_UB_EB_cov}
		For all $\gamma,N\in(0,1)$ it holds that
		\begin{align}
			C(\mathcal{A}_{\gamma,N})&\leq \chi(\mathcal{M}_{\gamma,N})+2\varepsilon_1+g(\varepsilon_1)\equiv C_{\operatorname{EB}}^{\operatorname{UB}}(\gamma,N),\label{eq-CCap_UB1_EB}\\
			C(\mathcal{A}_{\gamma,N})&\leq \chi(\mathcal{A}_{\gamma,\frac{1}{2}})+2\varepsilon_2+g(\varepsilon_2)\equiv C_{\operatorname{cov}}^{\operatorname{UB}}(\gamma,N),\label{eq-CCap_UB2_cov}
		\end{align}
		where $\varepsilon_1=\varepsilon_{\operatorname{EB}}(\mathcal{A}_{\gamma,N})=\frac{1}{2}\norm{\mathcal{A}_{\gamma,N}-\mathcal{M}_{\gamma,N}}_{\diamond}$ and $\varepsilon_2=\varepsilon_{\operatorname{cov}}(\mathcal{A}_{\gamma,N})=\gamma\left|N-\frac{1}{2}\right|$.
	\end{proposition}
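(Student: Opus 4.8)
\emph{Proof plan.} Both inequalities are instances of the general approximate-structure bounds of \cite{LKDW18} recalled in \eqref{eq-CCap_UB_approx_EB} and \eqref{eq-CCap_UB_approx_cov}, specialized to the qubit-to-qubit case where $d_B=2$, so that $\log_2 d_B=1$. For \eqref{eq-CCap_UB1_EB} there is essentially nothing to do: by definition $\mathcal{A}_{\gamma,N}$ is $\varepsilon_1$-entanglement-breaking for $\varepsilon_1=\varepsilon_{\operatorname{EB}}(\mathcal{A}_{\gamma,N})$, witnessed by an entanglement-breaking channel $\mathcal{M}_{\gamma,N}$ attaining the minimum in \eqref{eq-approx_EB} (computable via the semi-definite program of \cite[Lemma~III.8]{LKDW18}), and I would simply invoke \eqref{eq-CCap_UB_approx_EB} with $\mathcal{N}=\mathcal{A}_{\gamma,N}$, $\mathcal{M}=\mathcal{M}_{\gamma,N}$, $\varepsilon=\varepsilon_1$, and $d_B=2$.

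For \eqref{eq-CCap_UB2_cov} the content is in identifying the Pauli-twirled channel $\mathcal{A}_{\gamma,N}^G$ from \eqref{eq:twirled-channel} (with $G=\mathbb{Z}_2\times\mathbb{Z}_2$ represented by $\{\mathbbm{1},\sigma_x,\sigma_y,\sigma_z\}$ on input and output) and then computing $\varepsilon_2$. Using the Pauli action \eqref{eq-GADC_Pauli_action_1}--\eqref{eq-GADC_Pauli_action_4}, and the facts that $g\,\sigma_z\,g^\dagger$ equals $+\sigma_z$ for $g\in\{\mathbbm{1},\sigma_z\}$ and $-\sigma_z$ for $g\in\{\sigma_x,\sigma_y\}$ (with the analogous sign patterns for $\sigma_x$ and $\sigma_y$), averaging over the four group elements leaves the images of $\sigma_x,\sigma_y,\sigma_z$ unchanged while sending $\mathbbm{1}\mapsto\mathbbm{1}+\gamma(1-2N)\cdot\tfrac14\sum_{g}g\,\sigma_z\,g^\dagger=\mathbbm{1}$, since $\tfrac14\sum_{g}g\,\sigma_z\,g^\dagger=0$. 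Hence $\mathcal{A}_{\gamma,N}^G$ has the Pauli action $\sigma_x\mapsto\sqrt{1-\gamma}\,\sigma_x$, $\sigma_y\mapsto\sqrt{1-\gamma}\,\sigma_y$, $\sigma_z\mapsto(1-\gamma)\sigma_z$, $\mathbbm{1}\mapsto\mathbbm{1}$, which is precisely the action of $\mathcal{A}_{\gamma,1/2}$ (for which the offset $\gamma(1-2N)$ in \eqref{eq-GADC_Pauli_action_4} vanishes); since a qubit channel is determined by its action on the operator basis $\{\mathbbm{1},\sigma_x,\sigma_y,\sigma_z\}$, I conclude $\mathcal{A}_{\gamma,N}^G=\mathcal{A}_{\gamma,1/2}$, so in particular $\chi(\mathcal{A}_{\gamma,N}^G)=\chi(\mathcal{A}_{\gamma,1/2})$, the quantity appearing in \eqref{eq-CCap_UB2_cov}.

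It then remains to compute $\varepsilon_2=\varepsilon_{\operatorname{cov}}(\mathcal{A}_{\gamma,N})=\tfrac12\Vert\mathcal{A}_{\gamma,N}-\mathcal{A}_{\gamma,1/2}\Vert_{\diamond}$. The difference map $\Delta\equiv\mathcal{A}_{\gamma,N}-\mathcal{A}_{\gamma,1/2}$ annihilates $\sigma_x,\sigma_y,\sigma_z$ and sends $\mathbbm{1}\mapsto\gamma(1-2N)\sigma_z$; since the coefficient of $\mathbbm{1}$ in the Pauli expansion of a qubit operator $X_A$ is $\tfrac12\Tr[X_A]$, this gives $\Delta(X_A)=\tfrac12\gamma(1-2N)\,\sigma_z\,\Tr[X_A]$, i.e.\ $\Delta$ is $\gamma(1-2N)$ times the trace functional followed by preparation of the operator $\tfrac12\sigma_z$. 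Therefore, for any pure $\psi_{RA}$, $(\operatorname{id}_R\otimes\Delta)(\psi_{RA})=\tfrac12\gamma(1-2N)\,\psi_R\otimes\sigma_z$, whose trace norm equals $\tfrac12|\gamma(1-2N)|\,\Vert\psi_R\Vert_1\,\Vert\sigma_z\Vert_1=|\gamma(1-2N)|$ independently of $\psi_{RA}$; by \eqref{eq-diamond_norm} we get $\Vert\Delta\Vert_{\diamond}=\gamma\,|1-2N|=2\gamma\,|N-\tfrac12|$, and so $\varepsilon_2=\gamma\,|N-\tfrac12|$. Substituting $\mathcal{N}=\mathcal{A}_{\gamma,N}$, $\mathcal{N}^G=\mathcal{A}_{\gamma,1/2}$, and $\varepsilon=\varepsilon_2$ into \eqref{eq-CCap_UB_approx_cov} yields \eqref{eq-CCap_UB2_cov}.

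The whole argument is elementary. The only points that need care are confirming that the Pauli twirl removes exactly the $\sigma_z$ offset from \eqref{eq-GADC_Pauli_action_4} and nothing else, so that $\mathcal{A}_{\gamma,N}^G$ is genuinely $\mathcal{A}_{\gamma,1/2}$, and tracking the factor $\tfrac12$ in the definition of $\varepsilon_{\operatorname{cov}}$ so that the final answer is $\gamma|N-\tfrac12|$ rather than twice that. Computing $\Vert\Delta\Vert_{\diamond}$ requires no optimization because $\Delta$, up to a scalar, is a replacement channel composed with the trace, so its output on any bipartite input has the product form $\psi_R\otimes\sigma_z$ and the maximization over inputs is trivial.
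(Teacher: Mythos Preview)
Your proof is correct. For \eqref{eq-CCap_UB1_EB} you do exactly what the paper does: invoke \eqref{eq-CCap_UB_approx_EB} with $d_B=2$ and note that $\varepsilon_{\operatorname{EB}}$ is computable by the SDP of \cite[Lemma~III.8]{LKDW18}.

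For \eqref{eq-CCap_UB2_cov} your argument is genuinely different from the paper's and, in fact, more direct. To identify $\mathcal{A}_{\gamma,N}^G=\mathcal{A}_{\gamma,1/2}$, the paper uses the convex decomposition $\mathcal{A}_{\gamma,N}=(1-N)\mathcal{A}_{\gamma,0}+N\mathcal{A}_{\gamma,1}$ together with the $\sigma_z$-covariance and the $\sigma_x$-conjugation symmetry \eqref{eq-GADC_N_symmetry} to reduce to twirling $\mathcal{A}_{\gamma,0}$ and $\mathcal{A}_{\gamma,1}$ separately; you instead read the twirl off directly from the Pauli action \eqref{eq-GADC_Pauli_action_1}--\eqref{eq-GADC_Pauli_action_4}, which is equally valid and arguably quicker. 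The substantive difference is in computing $\varepsilon_{\operatorname{cov}}$. The paper (Appendix~\ref{app-GADC_cov_parameter}) first invokes \cite[Lemma~II.3]{LKDW18} to argue that the diamond-norm maximization is achieved at the maximally entangled state, then uses several algebraic manipulations to relate $\varepsilon_{\operatorname{cov}}(\mathcal{A}_{\gamma,N})$ to $\varepsilon_{\operatorname{cov}}(\mathcal{A}_{\gamma,0})$, and finally quotes the value $\varepsilon_{\operatorname{cov}}(\mathcal{A}_{\gamma,0})=\gamma/2$ from \cite{LKDW18}. Your observation that the difference map $\Delta$ factors as $X\mapsto\tfrac12\gamma(1-2N)\,\Tr[X]\,\sigma_z$, hence $(\id_R\otimes\Delta)(\psi_{RA})=\tfrac12\gamma(1-2N)\,\psi_R\otimes\sigma_z$, bypasses all of this: the trace norm is $|\gamma(1-2N)|$ for every input, so no optimization is needed and no external result has to be cited. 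This is a cleaner route to the same answer.
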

	
	\begin{proof}
		To obtain \eqref{eq-CCap_UB1_EB}, we use \eqref{eq-CCap_UB_approx_EB} and the fact that $d_B=2$ for the GADC. Furthermore, we note here again that since the GADC is a qubit-to-qubit channel, the entanglement-breaking parameter $\varepsilon_{\text{EB}}(\mathcal{A}_{\gamma,N})$ defined in \eqref{eq-approx_EB} can be calculated via an SDP \cite[Lemma~III.8]{LKDW18} due to the fact that, for two-qubit states, the set of separable states is equal to the set of states with positive partial transpose \cite{Peres96,HHH96}.
		%Then, using the numerical approach from \cite{LKDW18} for calculating the Holevo information, we arrive at \eqref{eq-CCap_UB1_EB}.
		
	\begin{figure}
		\centering
		\includegraphics[width=\columnwidth]{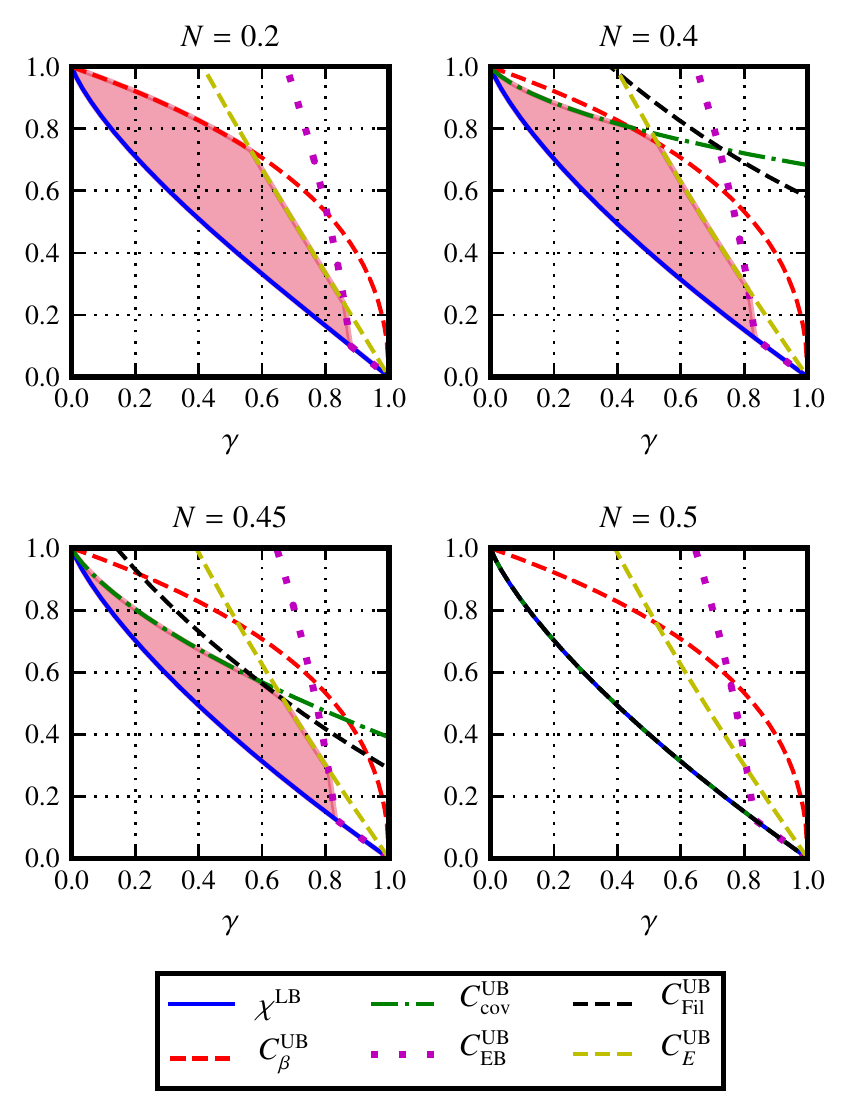}
		\caption{Bounds on the classical capacity of the GADC. Shown is the Holevo information lower bound given by \eqref{eq-GADC_Hol_inf}, as well as the $C_{\beta}$ upper bound given by \eqref{eq-C_beta_GADC}. We also plot the upper bound in \eqref{eq-CCap_UB1_EB} based on approximate entanglement breakability, the upper bound in \eqref{eq-CCap_UB2_cov} based on approximate covariance, the upper bound in \eqref{eq-GADC_CCap_UB_Fil} from \cite{F18}, and the entanglement-assisted classical capacity $C_E$ given by \eqref{eq-GADC_mut_inf}. The classical capacity lies within the shaded region. For $N=\frac{1}{2}$, the classical capacity is equal to the Holevo information, and this coincides with the approximate covariance upper bound and the upper bound from \cite{F18}.}\label{fig-CCap_bounds}
	\end{figure}
		
		For the bound in \eqref{eq-CCap_UB2_cov}, we make use of \eqref{eq-CCap_UB_approx_cov}. Let us first show that the channel $\mathcal{A}_{\gamma,N}^G$ obtained by twirling with the Pauli operators $\{\mathbbm{1},\sigma_x,\sigma_y,\sigma_z\}$ is equal to $\mathcal{A}_{\gamma,\frac{1}{2}}$. We start by recalling the convex decomposition of the GADC as stated in \eqref{eq-GADC_decomp_convex}:
		\begin{equation}\label{eq-GADC_decomp_convex_2}
			\mathcal{A}_{\gamma,N}=(1-N)\mathcal{A}_{\gamma,0}+N\mathcal{A}_{\gamma,1}.
		\end{equation}
		Thus, by linearity of the twirling channel, we have that $\mathcal{A}_{\gamma,N}^G=(1-N)\mathcal{A}_{\gamma,0}^{G}+N\mathcal{A}_{\gamma,1}^G$. Next, we recall \eqref{eq-GADC_Z_covariant} and \eqref{eq-GADC_N_symmetry}, respectively:
		\begin{align}
			\mathcal{A}_{\gamma,0}(\cdot)&=\sigma_z\mathcal{A}_{\gamma,0}(\sigma_z(\cdot)\sigma_z)\sigma_z,\\
			\mathcal{A}_{\gamma,1}(\cdot)&=\sigma_x\mathcal{A}_{\gamma,0}(\sigma_x(\cdot)\sigma_x)\sigma_x.
		\end{align}
		Using these relations, and the fact that $\sigma_y=\I\sigma_x\sigma_z$, we obtain
		\begin{align}
			\mathcal{A}_{\gamma,0}^G&=\frac{1}{2}\mathcal{A}_{\gamma,0}+\frac{1}{2}\mathcal{A}_{\gamma,1}=\mathcal{A}_{\gamma,\frac{1}{2}},\\
			\mathcal{A}_{\gamma,1}^G&=\frac{1}{2}\mathcal{A}_{\gamma,0}+\frac{1}{2}\mathcal{A}_{\gamma,1}=\mathcal{A}_{\gamma,\frac{1}{2}},
		\end{align}
		where to obtain the last equality in both equations we used \eqref{eq-GADC_decomp_convex_2}. Therefore,
		\begin{equation}\label{eq-GADC_twirl}
			\mathcal{A}_{\gamma,N}^G=\mathcal{A}_{\gamma,\frac{1}{2}}
		\end{equation}
		for all $\gamma,N\in[0,1]$. The final step is to show that $\varepsilon_{\text{cov}}(\mathcal{A}_{\gamma,N})=\frac{1}{2}\norm{\mathcal{A}_{\gamma,N}-\mathcal{A}_{\gamma,N}^G}_{\diamond}=\frac{1}{2}\norm{\mathcal{A}_{\gamma,N}-\mathcal{A}_{\gamma,\frac{1}{2}}}_{\diamond}=\gamma\left|N-\frac{1}{2}\right|$, which we do in Appendix \ref{app-GADC_cov_parameter}.
	\end{proof}

	We now compare the upper bounds obtained above with two strong converse upper bounds on the classical capacity that hold for any quantum channel $\mathcal{N}$ \cite{WXD18}. The first upper bound is
	\begin{equation}\label{eq-C_beta}
		C(\mathcal{N})\leq C_\beta(\mathcal{N})\equiv \log_2\beta(\mathcal{N}),
	\end{equation}
	where
	\begin{equation}\label{eq-C_beta_primal}
		\beta(\mathcal{N})\equiv\left\{\begin{array}{l l}\text{min.} & \Tr[S_{B}]\\
			\text{subject to} & -R_{AB}\leq \left(\Gamma_{AB}^{\mathcal{N}}\right)^{\t_B}\leq R_{AB},\\
			& -\mathbbm{1}_A\otimes S_B\leq R_{AB}^{\t_B}\leq\mathbbm{1}_A\otimes S_B.
			\end{array}\right.
	\end{equation}
	Note that the optimization is with respect to the operators $S_B$ and $R_{AB}$. We also observe that the optimization problem is a semi-definite program (SDP). 
	
	The second upper bound from \cite{WXD18}, which is also given by an SDP, is the following:
	\begin{equation}\label{eq-C_zeta}
		C(\mathcal{N})\leq C_{\zeta}(\mathcal{N})\equiv \log_2\zeta(\mathcal{N}),
	\end{equation}
	where
	\begin{equation}\label{eq-C_zeta_primal}
		\zeta(\mathcal{N})=\left\{\begin{array}{l l} \text{min.} & \Tr[S_B] \\ 
			\text{subject to} & V_{AB}\geq\Gamma_{AB}^{\mathcal{N}},\\
			& -\mathbbm{1}_A\otimes S_B\leq V_{AB}^{\t_B}\leq \mathbbm{1}_A\otimes S_B. \end{array}\right.
	\end{equation}
	
	By considering the dual of the SDPs in \eqref{eq-C_beta_primal} and \eqref{eq-C_zeta_primal}, we obtain analytic expressions for $C_\beta(\mathcal{A}_{\gamma,N})$ and $C_{\zeta}(\mathcal{A}_{\gamma,N})$ for all values of $\gamma$ and $N$, and we find that $C_{\zeta}(\mathcal{A}_{\gamma,N})=C_{\beta}(\mathcal{A}_{\gamma,N})$ for all values of $\gamma$ and $N$.
	
	\begin{proposition}\label{prop-C_beta}
		For all $\gamma,N\in[0,1]$,
		\begin{equation}\label{eq-C_beta_GADC}
			C_\beta(\mathcal{A}_{\gamma,N})=C_{\zeta}(\mathcal{A}_{\gamma,N})=\log_2(1+\sqrt{1-\gamma}).
		\end{equation}
	\end{proposition}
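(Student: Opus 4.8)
The plan is to compute the two semidefinite programs \eqref{eq-C_beta_primal} and \eqref{eq-C_zeta_primal} exactly for the GADC. For each one I would exhibit an explicit primal-feasible triple whose objective value $\Tr[S_B]$ equals $1+\sqrt{1-\gamma}$, together with an explicit feasible point of the corresponding dual SDP achieving the same value; weak SDP duality then pins the optimal value to $1+\sqrt{1-\gamma}$, and taking $\log_2$ and using $C_\beta=\log_2\beta$, $C_\zeta=\log_2\zeta$ gives the claim, the equality $C_\beta=C_\zeta$ being an output of the computation rather than an input. (Strong duality in fact holds here, since strictly feasible points exist --- e.g. take $S_B$ a large multiple of $\mathbbm{1}_B$ --- but only weak duality with matching objective values is logically needed.)

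The candidate optimal points are found by exploiting the structure of the Choi matrix. Writing $\Gamma_{AB}^{\gamma,N}$ out as in \eqref{eq-GADC_Choi_state}, and likewise its partial transpose $(\Gamma_{AB}^{\gamma,N})^{\t_B}$, one sees that each is supported on a single $2\times2$ block --- on $\operatorname{span}\{\ket{00},\ket{11}\}$ for $\Gamma_{AB}^{\gamma,N}$ and on $\operatorname{span}\{\ket{01},\ket{10}\}$ for $(\Gamma_{AB}^{\gamma,N})^{\t_B}$ --- plus two scalar diagonal entries. This is a consequence of the covariance \eqref{eq-GADC_Z_covariant} and of the phase covariance $\mathcal{A}_{\gamma,N}(\e^{\I a\hat n}(\cdot)\e^{-\I a\hat n})=\e^{\I a\hat n}\mathcal{A}_{\gamma,N}(\cdot)\e^{-\I a\hat n}$. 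Since the objective $\Tr[S_B]$ and all constraints of \eqref{eq-C_beta_primal}, \eqref{eq-C_zeta_primal} are invariant under conjugating the SDP variables by $\sigma_z\otimes\sigma_z$ (and, for the $\zeta$-SDP, additionally by the phase group acting on $B$), averaging an optimal point over these groups shows that the minimum is attained on variables with the matching block form: $S_B$ diagonal, $R_{AB}$ (respectively $V_{AB}$) built from a single $2\times2$ block plus scalars, with the partial-transposed operators $R_{AB}^{\t_B}$, $V_{AB}^{\t_B}$ carrying the block on the complementary two-dimensional subspace. Each SDP thereby collapses to an explicit minimization of $s_0+s_1=\Tr[S_B]$ over a handful of real parameters, subject to a few $2\times2$ positive-semidefiniteness conditions; after a substitution these take the form of conditions like $\sqrt{ab}+\sqrt{(a+1-\gamma)(b+1-\gamma)}\geq\sqrt{1-\gamma}$, and a short argument symmetric in the two parameters gives minimum $1+\sqrt{1-\gamma}$ in both cases. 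Reading off the extremizers produces the primal points; the matching dual points are then guessed from complementary slackness and verified directly (this verification being what the accompanying Mathematica files do symbolically).

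The main difficulty is bookkeeping rather than conceptual: the constraints interleave partial transposition with operator inequalities, so after the symmetry reduction one must track carefully which two-dimensional subspace each constraint acts on and convert the $2\times2$ positivity conditions into the correct scalar inequalities. A concrete pitfall is that the naive choice $V_{AB}=\Gamma_{AB}^{\gamma,N}$ is \emph{not} optimal for the $\zeta$-SDP --- it yields only the weaker value $2\sqrt{1-\gamma}+\gamma$ --- so the correct interpolating block (with modified diagonal entries and an off-diagonal element strictly between $0$ and $\sqrt{1-\gamma}$) must be identified. One should also handle the degenerate parameter values $\gamma\in\{0,1\}$ and $N\in\{0,1\}$, where $\Gamma_{AB}^{\gamma,N}$ drops rank, as easy special cases, and confirm strict feasibility so the extremal value is genuinely attained.
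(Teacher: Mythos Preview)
Your approach is correct and would yield the result, but it differs from the paper's in the treatment of the \emph{upper} bound. The paper does not perform a symmetry reduction of the SDPs for general $N$; instead it observes that $\beta(\cdot)$ and $\zeta(\cdot)$ are convex in the channel and invariant under unitary pre- and post-processing, so the decomposition $\mathcal{A}_{\gamma,N}=(1-N)\mathcal{A}_{\gamma,0}+N\mathcal{A}_{\gamma,1}$ together with \eqref{eq-GADC_N_symmetry} immediately gives $\beta(\mathcal{A}_{\gamma,N})\leq\beta(\mathcal{A}_{\gamma,0})$ (and likewise for $\zeta$). An explicit primal-feasible pair $(R_{AB},S_B)$ is then exhibited only for the amplitude-damping case $N=0$ (for $\zeta$ the $N=0$ primal is simply cited from \cite{WXD18}). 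The \emph{lower} bound is handled as you propose: an explicit dual-feasible point, valid for all $N$, is written down and checked. Your route---reducing the full-$N$ SDP by the $\sigma_z\otimes\sigma_z$ symmetry and solving the resulting few-parameter problem directly---is more uniform and arguably more mechanical, at the cost of a slightly larger scalar optimization (note that the $\sigma_z\otimes\sigma_z$ averaging alone leaves $R_{AB}$ with \emph{two} $2\times2$ blocks, not one block plus scalars as you state; the further reduction to diagonal on $\{\ket{00},\ket{11}\}$ requires an extra argument, though it does hold at the optimum). The paper's convexity trick buys a shorter path to the primal upper bound and makes the $N$-independence of the answer transparent from the outset.
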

	
	\begin{proof}
		See Appendix \ref{app-C_beta}.
	\end{proof}
	
	Let us now compare the Holevo information lower bound and the upper bounds in Proposition \ref{prop-GADC_CCap_UB_EB_cov}, \eqref{eq-C_beta}, and \eqref{eq-C_zeta} to the upper bound obtained  in \cite{F18}. This bound is obtained using a technique developed in \cite{FFK18}, which is based on a decomposition of the channel of interest in terms of a unital channel (for which we know the classical capacity, as mentioned above). When applied to the GADC, the technique leads to the following upper bound \cite[Eq.~(35)]{F18}:
	\begin{align}
		C(\mathcal{A}_{\gamma,N})&\leq C_{\text{Fil}}^{\text{UB}}(\gamma,N)\nonumber\\
		&\equiv  1-h_2\left(\frac{1}{2}\left(1-\frac{\sqrt{1-\gamma}}{f(\gamma,N)}\right)\right)+\log_2f(\gamma,N)\nonumber\\
		&\qquad +\frac{1}{2}\log_2\frac{N}{1-N},\label{eq-GADC_CCap_UB_Fil}
	\end{align}
	where
	\begin{multline}
		f(\gamma,N)\equiv \gamma\sqrt{N(1-N)}\\
		+\sqrt{N+(1-N)(1-\gamma)}\sqrt{1-N+N(1-\gamma)}.
	\end{multline}
	
	Finally, we consider the entanglement-assisted classical capacity as another upper bound on the classical capacity of the GADC. The entanglement-assisted classical capacity of a quantum channel $\mathcal{N}$, denoted by $C_E(\mathcal{N})$, is defined as the maximum rate at which classical information can be sent over the channel in the asymptotic limit, with the assistance of entanglement between the sender and the receiver. It is known \cite{BSST99,Hol01a,ieee2002bennett} that $C_E(\mathcal{N})$ is given simply by the mutual information $I(\mathcal{N})$ of the channel, i.e.,
	\begin{equation}
		C_E(\mathcal{N})=I(\mathcal{N})\equiv\max_{\phi_{AA'}}I(A;B)_{\rho},
	\end{equation}
	where $\rho_{AB}=\mathcal{N}_{A'\to B}(\phi_{AA'})$ and the dimension of $A$ is equal to the dimension of the input system $A'$ of the channel $\mathcal{N}$. For the GADC, by using its Pauli-$z$ covariance, as well as the concavity of the function $\rho_{A'}\mapsto I(A;B)_{\omega}$, where $\omega_{AA'}=\mathcal{N}_{A'\to B}(\phi_{AA'}^{\rho})$ and $\phi_{AA'}^{\rho}$ is any purification of $\rho_{A'}$, it has been shown \cite{LM07} that
	\begin{equation}\label{eq-GADC_mut_inf}
		I(\mathcal{A}_{\gamma,N})=\max_{z\in[-1,1]}F(\gamma,N,z)
	\end{equation}
	for all $\gamma,N\in(0,1)$, where
	\begin{multline}
		F(\gamma,N,z)\\\equiv -\sum_{i=1}^2\lambda_i\log_2\lambda_i-\sum_{i=1}^2\lambda_i'\log_2\lambda_i'+\sum_{i=1}^4\lambda_i''\log_2\lambda_i''
	\end{multline}
	and
	\begin{align}
		\lambda_1&=\frac{1}{2}(1+z),\\
		\lambda_2&=\frac{1}{2}(1-z),\\
		\lambda_1'&=\frac{1}{2}\left(1+((2N-1)\gamma-(1-\gamma)z)\right),\\
		\lambda_2'&=\frac{1}{2}\left(1-((2N-1)\gamma-(1-\gamma)z)\right),\\
		\lambda_1''&=\frac{1}{2}(1-N)\gamma(1-z),\\
		\lambda_2''&=\frac{1}{2}N\gamma(1+z),\\
		\lambda_3''&=\frac{1}{4}\left(2-(1+(2N-1)z)\gamma\right.\nonumber\\
		&\quad\left.+\sqrt{4-4(1+z(2N-1))\gamma+(2N-1+z)^2\gamma^2}\right),\\
		\lambda_4''&=\frac{1}{4}\left(2-(1+(2N-1)z)\gamma\right.\nonumber\\
		&\quad-\left.\sqrt{4-4(1+z(2N-1))\gamma+(1-2N+z)^2\gamma^2}\right).
	\end{align}
	
	In Fig~\ref{fig-CCap_bounds}, we plot the Holevo information lower bound as well as the $C_{\beta}$ upper bound, the upper bound $C_{\text{EB}}^{\text{UB}}$ based on approximate entanglement breakability, the upper bound $C_{\text{cov}}^{\text{UB}}$ based on approximate covariance, the bound $C_{\text{Fil}}^{\text{UB}}$ defined in \eqref{eq-GADC_CCap_UB_Fil}, and the entanglement-assisted classical capacity $C_E$. We find that the $C_{\beta}$ upper bound is close to the Holevo information lower bound for low values of $\gamma$ and $N$. For higher values of $\gamma$, the entanglement-assisted classical capacity provides a tighter upper bound than $C_{\beta}$. For values of $N$ close to $\frac{1}{2}$, as one might expect, the approximate covariance upper bound $C_{\text{cov}}^{\text{UB}}$ is tighter than both $C_{\beta}$ and $C_E$, at least for low to intermediate values of $\gamma$. In this same regime for $N$, the bound $C_{\text{Fil}}^{\text{UB}}$ is the tightest for small intervals of $\gamma$ close to $\gamma=0.6$. For $N=\frac{1}{2}$, we know from \eqref{eq-GADC_CCap_N05} that the classical capacity of the GADC is given by the Holevo information. Accordingly, the Holevo information and the upper bounds $C_{\text{cov}}^{\text{UB}}$ and $C_{\text{Fil}}^{\text{UB}}$ coincide. Also, as expected, the approximate entanglement-breaking bound $C_{\text{EB}}^{\text{UB}}$ is tight, matching the lower bound, whenever the GADC is entanglement breaking. For values of $\gamma$ and $N$ close to the entanglement breaking region, this upper bound is also the tightest among all of the other upper bounds. 
	
%	The $\varepsilon$-entanglement-breakability bound  $C_{\text{EB}}^{\text{UB}}$ is tighter than all other upper bounds for parameter regimes when the GADC is entanglement breaking. The entanglement-assisted classical capacity bound is tighter than all other upper bounds for medium damping noise and low $N$. Moreover, $\varepsilon$-covariance bound is close to the lower bound for high $N$ and low damping noise $\gamma$. Interestingly, it overlaps with the lower bound for all values of $\gamma$ and $N=1/2$, which is the parameter regime for the GADC to be a unital channel and also be covariant with respect to the Pauli group.

\section{Bounds on the quantum and private capacities of the GADC}\label{sec:q-cap-bounds}

	We now consider the quantum and private capacities of the GADC and provide upper bounds using the data-processing bounds, the approximate degradability and approximate anti-degradability bounds, and the Rains information and relative entropy of entanglement bounds defined in Sec.~\ref{sec-bounds_QCap}.
	
	We start with the decompositions of the GADC in \eqref{eq-GADC_decomp_spec_1} and~\eqref{eq-GADC_decomp_spec_2}:
	\begin{align}
		\mathcal{A}_{\gamma,N}&=\mathcal{A}_{\gamma N,1}\circ\mathcal{A}_{\frac{\gamma(1-N)}{1-\gamma N},0},\label{eq-GADC_decomp_spec_1a}\\
		\mathcal{A}_{\gamma,N}&=\mathcal{A}_{\gamma(1-N),0}\circ\mathcal{A}_{\frac{\gamma N}{1-\gamma(1-N)},1}.\label{eq-GADC_decomp_spec_2a}
	\end{align}
	These decompositions of the GADC involve the amplitude damping channels $\mathcal{A}_{\frac{\gamma(1-N)}{1-\gamma N},0}$ and $\mathcal{A}_{\gamma(1-N),0}$. Moreover, these decompositions are similar in spirit to the ones used in  \cite{dp2006,dp2012,SWAT,NAJ18, RMG18} in the context of bosonic Gaussian thermal channels.
	
	Unlike the classical capacity, the quantum capacity of the amplitude damping channel has a known closed-form expression and is given by \cite{GF05}
	\begin{equation}\label{eq-AD_QCap}
		Q(\mathcal{A}_{\gamma,0})=\max_{p\in[0,1]}\bigg[h_2((1-\gamma)p)-h_2(\gamma p)\bigg].
	\end{equation}
	for $\gamma \in [0,1/2)$, and $Q(\mathcal{A}_{\gamma,0}) = 0$ for $\gamma \in [1/2,1]$. The quantum capacity can be determined easily in this case since the amplitude damping channel is degradable for all $\gamma\in[0,1/2]$, which implies that the coherent information of the channel is additive. The relation between $\mathcal{A}_{\gamma,0}$ and $\mathcal{A}_{\gamma,1}$ given by \eqref{eq-GADC_N_symmetry} implies that the quantum capacity of the channel $\mathcal{A}_{\gamma,1}$ is equal to the quantum capacity of the amplitude damping channel, i.e., $Q(\mathcal{A}_{\gamma,1})=Q(\mathcal{A}_{\gamma,0})$. Furthermore, since the private and quantum capacities are equal to each other for degradable channels, we have that $P(\mathcal{A}_{\gamma,0})=Q(\mathcal{A}_{\gamma,0})$.

	\begin{figure*}
		\centering
		\includegraphics[scale=1]{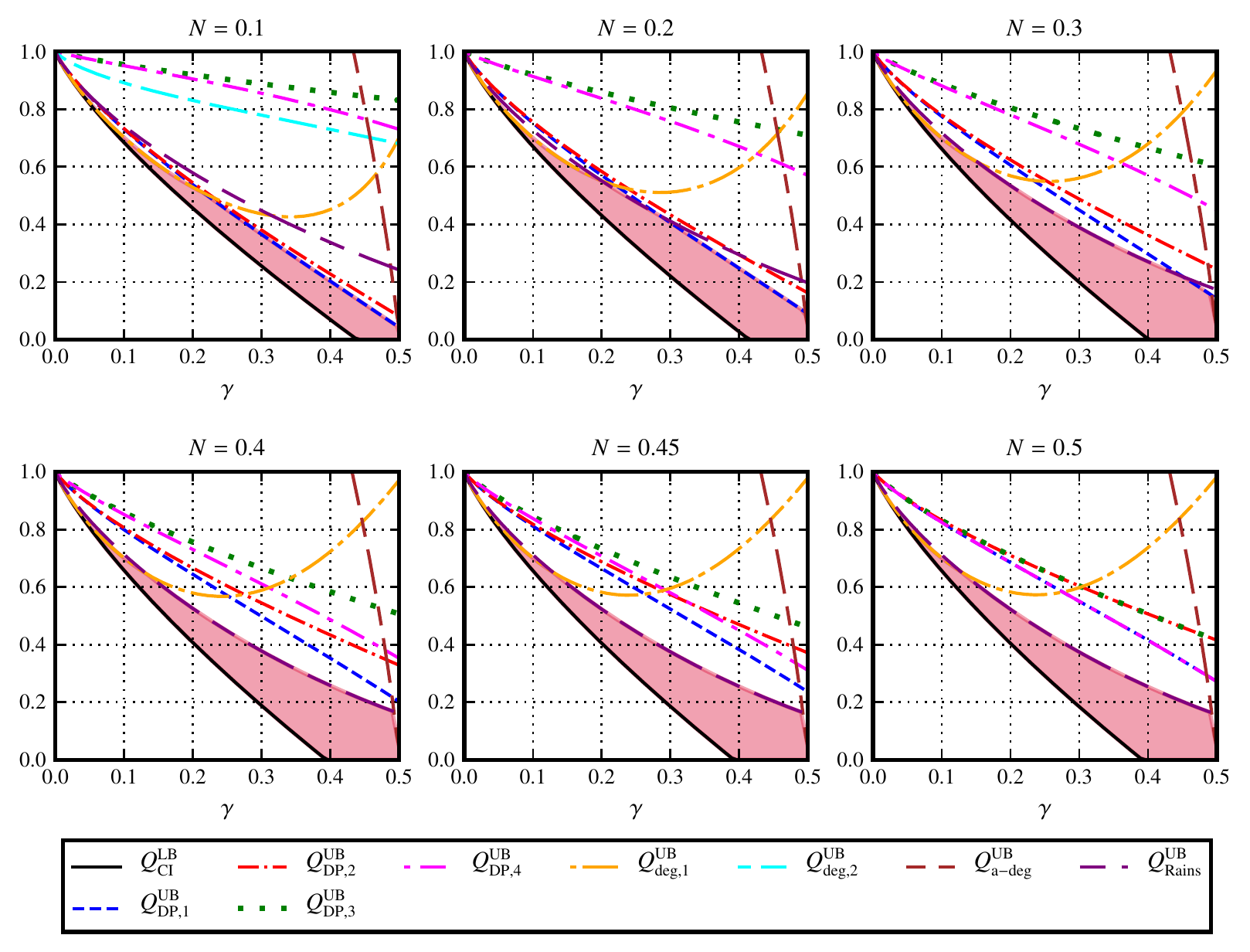}
		\caption{Bounds on the quantum capacity of the GADC. Shown is the coherent information lower bound $Q_{\text{CI}}^{\text{LB}}$ given in \eqref{eq-GADC_LB1}, the data-processing upper bounds $Q_{\text{DP},1}^{\text{UB}}$ to $Q_{\text{DP},4}^{\text{UB}}$ from Proposition \ref{prop-QCap_data_processing}, the upper bounds $Q_{\text{deg},1}^{\text{UB}}$, $Q_{\text{deg},2}^{\text{UB}}$ and $Q_{\text{a-deg}}^{\text{UB}}$ from Proposition \ref{prop-approx_deg_adeg}, and the upper bound $Q_{\text{Rains}}^{\text{UB}}$ defined in \eqref{eq-GADC_QCap_UB8}. The quantum capacity lies within the shaded region.}\label{fig-GADC_QCap_bounds}
	\end{figure*}
	
	\begingroup
	\allowdisplaybreaks[0]
	\begin{proposition}[Data-processing upper bounds]\label{prop-QCap_data_processing}
		For all $\gamma,N\in(0,1)$, it holds that
		\begin{align}
			Q(\mathcal{A}_{\gamma,N})&\leq P(\mathcal{A}_{\gamma,N})\leq Q\left(\mathcal{A}_{\frac{\gamma(1-N)}{1-\gamma N},0}\right)\equiv Q_{\textnormal{DP},1}^{\textnormal{UB}}(\gamma,N),\label{eq-GACD_Qcap_UB_DP_1}\\
			Q(\mathcal{A}_{\gamma,N})&\leq P(\mathcal{A}_{\gamma,N})\leq Q(\mathcal{A}_{\gamma(1-N),0})\equiv Q_{\textnormal{DP},2}^{\textnormal{UB}}(\gamma,N),\label{eq-GACD_Qcap_UB_DP_2}\\
			Q(\mathcal{A}_{\gamma,N})&\leq P(\mathcal{A}_{\gamma,N})\leq Q(\mathcal{A}_{\gamma N,1})\equiv Q_{\textnormal{DP},3}^{\textnormal{UB}}(\gamma,N), \label{eq-GACD_Qcap_UB_DP_3}\\
			Q(\mathcal{A}_{\gamma,N})&\leq P(\mathcal{A}_{\gamma,N})\leq Q\left(\mathcal{A}_{\frac{\gamma N}{1-\gamma(1-N)},1}\right)\equiv Q_{\textnormal{DP},4}^{\textnormal{UB}}(\gamma,N).\label{eq-GACD_Qcap_UB_DP_4}
		\end{align}
	\end{proposition}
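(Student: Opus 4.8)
The plan is to read off the four inequalities directly from the two serial decompositions \eqref{eq-GADC_decomp_spec_1a} and \eqref{eq-GADC_decomp_spec_2a}, combined with the private-capacity data-processing bounds \eqref{eq-PCap_data_proc_1}--\eqref{eq-PCap_data_proc_2}, the elementary inequality $Q(\mathcal{N})\leq P(\mathcal{N})$ from \eqref{eq-quant_priv_cap_ineq}, the closed form \eqref{eq-AD_QCap} for the quantum capacity of the amplitude damping channel, and the unitary equivalence \eqref{eq-GADC_N_symmetry} between $\mathcal{A}_{\gamma,1}$ and $\mathcal{A}_{\gamma,0}$. No new machinery is needed; the work is in assembling these facts in the right order.

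First I would record two reductions used repeatedly. (i) For every $\gamma'\in[0,1]$ one has $P(\mathcal{A}_{\gamma',0})=Q(\mathcal{A}_{\gamma',0})$, where the right-hand side is the function in \eqref{eq-AD_QCap} (understood to be $0$ for $\gamma'\geq 1/2$): for $\gamma'\in[0,1/2]$ the channel $\mathcal{A}_{\gamma',0}$ is degradable, so its private information equals its coherent information; for $\gamma'\in(1/2,1]$ it is anti-degradable, so applying \eqref{eq-eps_anti_degrade_upper_bound} with $\varepsilon=0$ forces $P(\mathcal{A}_{\gamma',0})=Q(\mathcal{A}_{\gamma',0})=0$. (ii) Since \eqref{eq-GADC_N_symmetry} with $N=1$ writes $\mathcal{A}_{\gamma',1}$ as $\mathcal{A}_{\gamma',0}$ conjugated at input and output by the unitary $\sigma_x$, and capacities are invariant under pre- and post-composition with unitary channels, $P(\mathcal{A}_{\gamma',1})=P(\mathcal{A}_{\gamma',0})=Q(\mathcal{A}_{\gamma',0})=Q(\mathcal{A}_{\gamma',1})$.

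Next I would apply the decompositions. From \eqref{eq-GADC_decomp_spec_1a}, $\mathcal{A}_{\gamma,N}=\mathcal{A}_{\gamma N,1}\circ\mathcal{A}_{\frac{\gamma(1-N)}{1-\gamma N},0}$: bound \eqref{eq-PCap_data_proc_1} with $\mathcal{M}=\mathcal{A}_{\frac{\gamma(1-N)}{1-\gamma N},0}$ gives $P(\mathcal{A}_{\gamma,N})\leq P(\mathcal{A}_{\frac{\gamma(1-N)}{1-\gamma N},0})=Q(\mathcal{A}_{\frac{\gamma(1-N)}{1-\gamma N},0})$, which with $Q\leq P$ yields \eqref{eq-GACD_Qcap_UB_DP_1}; bound \eqref{eq-PCap_data_proc_2} with $\mathcal{N}=\mathcal{A}_{\gamma N,1}$ gives $P(\mathcal{A}_{\gamma,N})\leq P(\mathcal{A}_{\gamma N,1})=Q(\mathcal{A}_{\gamma N,1})$, which is \eqref{eq-GACD_Qcap_UB_DP_3}. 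From \eqref{eq-GADC_decomp_spec_2a}, $\mathcal{A}_{\gamma,N}=\mathcal{A}_{\gamma(1-N),0}\circ\mathcal{A}_{\frac{\gamma N}{1-\gamma(1-N)},1}$: bound \eqref{eq-PCap_data_proc_2} gives $P(\mathcal{A}_{\gamma,N})\leq P(\mathcal{A}_{\gamma(1-N),0})=Q(\mathcal{A}_{\gamma(1-N),0})$, i.e. \eqref{eq-GACD_Qcap_UB_DP_2}, and bound \eqref{eq-PCap_data_proc_1} gives $P(\mathcal{A}_{\gamma,N})\leq P(\mathcal{A}_{\frac{\gamma N}{1-\gamma(1-N)},1})=Q(\mathcal{A}_{\frac{\gamma N}{1-\gamma(1-N)},1})$, i.e. \eqref{eq-GACD_Qcap_UB_DP_4}. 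I would also check that for $\gamma,N\in(0,1)$ the four induced parameters $\frac{\gamma(1-N)}{1-\gamma N}$, $\gamma N$, $\gamma(1-N)$, $\frac{\gamma N}{1-\gamma(1-N)}$ all lie in $(0,1)$, each being a consequence of $\gamma\leq 1$, so that \eqref{eq-AD_QCap} applies.

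There is no serious obstacle here: once the decompositions and the data-processing property are available, the argument is pure bookkeeping. The only point deserving explicit attention is the identity $P(\mathcal{A}_{\gamma',0})=Q(\mathcal{A}_{\gamma',0})$ in the regime $\gamma'>1/2$, which the degradability argument does not cover and which must instead be handled via anti-degradability (both capacities vanishing there); everything else is immediate from the cited results.
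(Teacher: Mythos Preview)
Your proposal is correct and follows essentially the same route as the paper: both use the serial decompositions \eqref{eq-GADC_decomp_spec_1a}--\eqref{eq-GADC_decomp_spec_2a}, the private-capacity data-processing bounds \eqref{eq-PCap_data_proc_1}--\eqref{eq-PCap_data_proc_2}, the inequality $Q\leq P$, and the equality $P(\mathcal{A}_{\gamma',0})=Q(\mathcal{A}_{\gamma',0})$ for the amplitude damping channel together with the unitary equivalence \eqref{eq-GADC_N_symmetry}. Your treatment is slightly more careful than the paper's in making explicit why $P=Q$ also holds in the anti-degradable regime $\gamma'>1/2$ (both capacities vanish there), a point the paper glosses over by citing only degradability.
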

	\endgroup
	
	\begin{proof}
		All of these inequalities follow from the relation between the quantum and private capacities in \eqref{eq-quant_priv_cap_ineq}, the decompositions of the GADC in \eqref{eq-GADC_decomp_spec_1a} and \eqref{eq-GADC_decomp_spec_2a}, and the general data processing upper bounds given in \eqref{eq-QCap_data_proc_1} and \eqref{eq-QCap_data_proc_2} for the quantum capacity and \eqref{eq-PCap_data_proc_1} and \eqref{eq-PCap_data_proc_2} for the private capacity. In particular, for the bounds on the private capacity, we make use of the fact that the amplitude damping channel is degradable, which means that its private capacity is equal to its quantum capacity, as given in \eqref{eq-AD_QCap}.
	\end{proof}
	
	We obtain more upper bounds using the concepts of $\varepsilon$-degradability, $\varepsilon$-close-degradability, and $\varepsilon$-anti-degradability. 
	
	\begingroup
	\allowdisplaybreaks[0]
	\begin{proposition}[Approximate degradability and anti-degradability upper bounds]\label{prop-approx_deg_adeg}
		For all $\gamma\in(0,1/2)$ and all $N\in(0,1)$, we have the following $\varepsilon$-degradable upper bounds:
		\begin{align}
			Q(\mathcal{A}_{\gamma,N})&\leq Q_{\textnormal{deg},1}^{\operatorname{UB}}(\gamma,N)\equiv U_{\mathcal{D}}(\mathcal{A}_{\gamma,N})+4\varepsilon_1+g(\varepsilon_1),\label{eq-GACD_Qcap_UB_DP_5}\\
			P(\mathcal{A}_{\gamma,N})&\leq U_{\mathcal{D}}(\mathcal{A}_{\gamma,N})+12\varepsilon_1+3g(\varepsilon_1)\label{eq-GACD_Pcap_UB_DP_5},
		\end{align}
		where $\varepsilon_1=\varepsilon_{\operatorname{deg}}(\mathcal{A}_{\gamma,N})$. The $\varepsilon$-close-degradable upper bounds are
		\begin{align}
			Q(\mathcal{A}_{\gamma,N})&\leq Q_{\textnormal{deg},2}^{\operatorname{UB}}(\gamma,N) \equiv Q(\mathcal{A}_{\gamma,0})+2\varepsilon_2+2g(\varepsilon_2),\label{eq-GACD_Qcap_UB_DP_6}\\
			P(\mathcal{A}_{\gamma,N})&\leq Q(\mathcal{A}_{\gamma,0})+4\varepsilon_2+4g(\varepsilon_2)\label{eq-GACD_Pcap_UB_DP_6},
		\end{align}
		where $\varepsilon_2=\frac{1}{2}\norm{\mathcal{A}_{\gamma,N}-\mathcal{A}_{\gamma,0}}_{\diamond}$. Finally, the $\varepsilon$-anti-degradable upper bounds are
		\begin{align}
			Q(\mathcal{A}_{\gamma,N})&\leq P(\mathcal{A}_{\gamma,N})\\
			&\leq  Q_{\textnormal{a-deg}}^{\operatorname{UB}}(\gamma,N)\equiv 2\varepsilon_3+h_2(\varepsilon_3)+g(\varepsilon_3) ,\label{eq-GADC_QCap_UB7_Adeg}
			%P(\mathcal{A}_{\gamma,N})&\leq 2\varepsilon_3+h_2(\varepsilon_3)+g(\varepsilon_3),\label{eq-GADC_PCap_UB7_Adeg}
		\end{align}
		where $\varepsilon_3=\varepsilon_{\textnormal{a-deg}}(\mathcal{A}_{\gamma,N})$.
	\end{proposition}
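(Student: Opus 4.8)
The plan is to obtain each of the three families of bounds as a direct specialization of the general approximate-degradability and approximate-anti-degradability estimates recalled in Sec.~\ref{sec-bounds_QCap}, so that the only substantive step is to insert the correct output and environment dimensions of the GADC. From the isometric extension in \eqref{eq-GADC_iso_ext}, the environment $E$ of $\mathcal{A}_{\gamma,N}$ is four-dimensional, so $d_E=4$ and $\log_2 d_E=2$, whereas the output $B$ is a single qubit, so $d_B=2$ and $\log_2 d_B=1$; these two facts pin down every numerical coefficient appearing in the statement.

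For the $\varepsilon$-degradable bounds \eqref{eq-GACD_Qcap_UB_DP_5}--\eqref{eq-GACD_Pcap_UB_DP_5} I would fix a degrading channel $\mathcal{D}$ attaining the minimum defining $\varepsilon_{\text{deg}}(\mathcal{A}_{\gamma,N})$ in \eqref{eq-approx_deg_parameter}, with $\varepsilon_1=\varepsilon_{\text{deg}}(\mathcal{A}_{\gamma,N})$ obtained from the SDP of \cite{SSWR17}. Substituting $d_E=4$, so that $2\varepsilon\log_2 d_E=4\varepsilon$, into \eqref{eq-approx_deg_UB} yields \eqref{eq-GACD_Qcap_UB_DP_5}, and substituting into \eqref{eq-approx_deg_UB_PCap}, where $6\varepsilon\log_2 d_E=12\varepsilon$, yields \eqref{eq-GACD_Pcap_UB_DP_5}. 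The only genuine computation here is the evaluation of $U_{\mathcal{D}}(\mathcal{A}_{\gamma,N})$: I would use the Pauli-$z$ covariance \eqref{eq-GADC_Z_covariant} of the GADC together with concavity of the relevant conditional entropy in the input state to reduce the maximization over inputs to a one-parameter optimization, exactly as is done for the Holevo and mutual information formulas elsewhere in the paper.

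For the $\varepsilon$-close-degradable bounds \eqref{eq-GACD_Qcap_UB_DP_6}--\eqref{eq-GACD_Pcap_UB_DP_6} the key point is that the amplitude damping channel $\mathcal{A}_{\gamma,0}$ is degradable for $\gamma\in[0,1/2]$ \cite{GF05} --- this is precisely where the hypothesis $\gamma<1/2$ enters --- so $\mathcal{A}_{\gamma,N}$ is $\varepsilon_2$-close-degradable with $\varepsilon_2=\frac{1}{2}\norm{\mathcal{A}_{\gamma,N}-\mathcal{A}_{\gamma,0}}_{\diamond}$, which by the convex decomposition \eqref{eq-GADC_decomp_convex} equals $\frac{N}{2}\norm{\mathcal{A}_{\gamma,1}-\mathcal{A}_{\gamma,0}}_{\diamond}$ and is computable by a diamond-norm SDP. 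Since $\mathcal{A}_{\gamma,0}$ is degradable, $I_{\text{c}}(\mathcal{A}_{\gamma,0})=Q(\mathcal{A}_{\gamma,0})$ \cite{DS04}, with the closed form \eqref{eq-AD_QCap}; taking $\mathcal{M}=\mathcal{A}_{\gamma,0}$ and $d_B=2$ in \eqref{eq-eps_approx_deg_UB} then gives \eqref{eq-GACD_Qcap_UB_DP_6}. For the private-capacity version, \eqref{eq-eps_approx_deg_UB_PCap} with $d_B=2$ gives the same inequality with $I_{\text{c}}(\mathcal{A}_{\gamma,N})$ in the leading term, and the form \eqref{eq-GACD_Pcap_UB_DP_6} then follows from $I_{\text{c}}(\mathcal{A}_{\gamma,N})\le Q(\mathcal{A}_{\gamma,0})$.

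For the $\varepsilon$-anti-degradable bound \eqref{eq-GADC_QCap_UB7_Adeg}, take $\varepsilon_3=\varepsilon_{\text{a-deg}}(\mathcal{A}_{\gamma,N})$ from \eqref{eq-approx_adeg_parameter}, computable by SDP; the channel $\mathcal{E}_N^*$ of Lemma~\ref{lem-anti_degrad_chan_bd} already furnishes the analytic upper bound $\varepsilon_3\le\frac{1}{2}\norm{\mathcal{A}_{\gamma,N}-\mathcal{A}_{1-\gamma,N}}_{\diamond}$ via \eqref{eq-anti_degrad_chan_bd}. Substituting $d_B=2$ into \eqref{eq-eps_anti_degrade_upper_bound} and using $\log_2(d_B-1)=\log_2 1=0$ together with $2\varepsilon_3\log_2 d_B=2\varepsilon_3$ reproduces \eqref{eq-GADC_QCap_UB7_Adeg} verbatim; this is consistent with the earlier result that the GADC is anti-degradable exactly when $\gamma\ge\frac{1}{2}$, since then $\varepsilon_3=0$ and the bound collapses to $Q=P=0$. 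Overall I expect no conceptual obstacle: each inequality is a one-line application of \cite{SSWR17,SWAT}, and the real work lies in the SDP computations for $\varepsilon_{\text{deg}}$, $\varepsilon_{\text{a-deg}}$, and $\frac{1}{2}\norm{\mathcal{A}_{\gamma,N}-\mathcal{A}_{\gamma,0}}_{\diamond}$, together with the symmetry-reduced evaluation of $U_{\mathcal{D}}(\mathcal{A}_{\gamma,N})$; the only place where analytic care is required is the dimension bookkeeping that fixes the coefficients $4$, $12$, $2$, $2$ and makes the leading $\log_2(d_B-1)$ term of the anti-degradable bound vanish.
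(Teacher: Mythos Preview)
Your proposal is correct and follows essentially the same route as the paper: specialize the general $\varepsilon$-degradable, $\varepsilon$-close-degradable, and $\varepsilon$-anti-degradable bounds from \cite{SSWR17,SWAT} by inserting $d_E=4$ and $d_B=2$, and take $\mathcal{A}_{\gamma,0}$ as the nearby degradable channel. Your extra observations---the formula $\varepsilon_2=\tfrac{N}{2}\norm{\mathcal{A}_{\gamma,1}-\mathcal{A}_{\gamma,0}}_{\diamond}$, the analytic bound on $\varepsilon_3$ via Lemma~\ref{lem-anti_degrad_chan_bd}, and the symmetry reduction for $U_{\mathcal D}$---go a bit beyond the paper's terse proof but are in the same spirit; the one place to be careful is the private $\varepsilon$-close-degradable bound, where the paper simply invokes \eqref{eq-eps_approx_deg_UB_PCap} directly, whereas you route through the additional inequality $I_{\text c}(\mathcal{A}_{\gamma,N})\le Q(\mathcal{A}_{\gamma,0})$, which you have not justified.
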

	\endgroup
	
	\begin{proof}
		We start with the bounds in \eqref{eq-approx_deg_UB} and \eqref{eq-approx_deg_UB_PCap}. For the GADC, we have $d_E=4$, since the channel has four Kraus operators (assuming $N\neq 0$ and $N\neq 1$). Therefore, by determining the approximate-degradability parameter $\varepsilon_{\text{deg}}(\mathcal{A}_{\gamma,N})$, we immediately obtain the bounds in \eqref{eq-GACD_Qcap_UB_DP_5} and \eqref{eq-GACD_Pcap_UB_DP_5}.
	
		Similarly, we obtain the bounds in \eqref{eq-GACD_Qcap_UB_DP_6} and \eqref{eq-GACD_Pcap_UB_DP_6} using \eqref{eq-eps_approx_deg_UB} and \eqref{eq-eps_approx_deg_UB_PCap}, respectively, as follows. Since the channel $\mathcal{A}_{\gamma,0}$ is degradable for all $\gamma\in[0,1/2]$, we can take that to be our $\varepsilon$-close-degradable channel to $\mathcal{A}_{\gamma,N}$. Then, since $I_{\text{c}}(\mathcal{A}_{\gamma,0})$ is simply the quantum capacity of $\mathcal{A}_{\gamma,0}$ (as given by \eqref{eq-AD_QCap}), we obtain \eqref{eq-GACD_Qcap_UB_DP_6}.
	
		Finally, we use the bounds in \eqref{eq-eps_anti_degrade_upper_bound} arising from $\varepsilon$-anti-degradability. Since $d_B=2$, after calculating the anti-degradability parameter $\varepsilon_{\text{a-deg}}(\mathcal{A}_{\gamma,N})$, we obtain \eqref{eq-GADC_QCap_UB7_Adeg}.
	\end{proof}
	
	We obtain another upper bound on the private and quantum capacities of the GADC by employing the Rains information of the GADC, as given in \eqref{eq:Rains-channel}, \eqref{eq:Rains-q-cap-bound}, and \eqref{eq:REE-p-cap-bound}:
	\begin{equation}\label{eq-GADC_QCap_UB8}
	Q(\mathcal{A}_{\gamma,N}) \leq P(\mathcal{A}_{\gamma,N}) \leq R(\mathcal{A}_{\gamma,N}) \equiv Q_{\text{Rains}}^{\text{UB}}(\gamma,N),
	\end{equation}
	which follows from the fact that, as stated previously, the Rains information $R(\mathcal{A}_{\gamma,N})$ is equal to the channel's relative entropy of entanglement $E_R(\mathcal{A}_{\gamma,N})$ for qubit-to-qubit channels, due to \cite{AS08}. To compute the latter, we can perform the minimization over PPT states, due to \cite{Peres96,HHH96}. Furthermore, due to the $\sigma_z$ covariance of the GADC, we can make several simplifications to the task of computing the Rains information $R(\mathcal{A}_{\gamma,N})$, which speed it up significantly. First, due to the $\sigma_z$ covariance and concavity of Rains information in the input state, as presented in Proposition~\ref{prop:concavity-Rains}, it suffices to perform the maximization over input states with respect to the one-parameter family of states $\ket{\theta^p}_{AA'} = \sqrt{1-p} \ket{0,0}_{AA'} + \sqrt{p} \ket{1,1}_{AA'}$ (see Appendix~\ref{app-GADC_Esq_UB_pf} for details on how to show this). Second, the minimization in the Rains relative entropy in the definition in \eqref{eq:Rains-channel} can be performed over PPT states having the following form:
	\begin{equation}
	\sigma_{AB}  = \frac{1}{2}
	\begin{pmatrix}
	\alpha & 0 & 0 & \xi \e^{i \phi} \\
	0 & \beta & 0 & 0 \\
	0 & 0 & \gamma & 0 \\
	\xi \e^{-i \phi} & 0 & 0 & \delta
	\end{pmatrix},
	\end{equation}
	where $\alpha, \beta, \gamma, \delta \geq 0$, $\alpha+ \beta+ \gamma+ \delta =2$, $0 \leq \xi \leq \min\{\sqrt{\alpha \delta},\sqrt{\beta \gamma}\}$, $\phi\in[0,2\pi)$. This latter simplification follows from the same argument given in \cite[Appendix~B]{RKB+18}.
	
	See Fig.~\ref{fig-GADC_QCap_bounds} for a plot of the upper bounds $Q_{\text{DP},1}^{\text{UB}}$ to $Q_{\text{Rains}}^{\text{UB}}$. %Not shown is the upper bound $Q_7^{\text{UB}}$ in \eqref{eq-GADC_QCap_UB7_Adeg} based on approximate anti-degradability, as we observe it to be significantly higher than the other six upper bounds.
	To get a sense for how good these upper bounds are, it is worth comparing them to a lower bound. The coherent information $I_{\text{c}}(\mathcal{A}_{\gamma,N})$ provides a lower bound on the quantum capacity of the GADC. It can be shown that \cite{GPLS09}
	\begin{equation}\label{eq-GADC_LB1}
		I_{\text{c}}(\mathcal{A}_{\gamma,N})=\max_{p\in[0,1]}I_{\text{c}}\left(\begin{pmatrix} 1-p&0\\0&p\end{pmatrix},\mathcal{A}_{\gamma,N}\right)\equiv Q^{\text{LB}}_{\text{CI}}(\gamma,N).
	\end{equation}
	By plotting in Fig.~\ref{fig-GADC_QCap_bounds} the coherent information lower bound alongside the upper bounds $Q_{\text{DP},1}^{\text{UB}}$ to $Q_{\text{Rains}}^{\text{UB}}$, we find that the gap between the upper bounds and the lower bound is smallest when both $\gamma$ and $N$ are small. We also find that, as expected, the upper bound $Q_{\text{deg},1}^{\text{UB}}$ based on $\varepsilon$-degradability is a tighter bound for $\gamma$ close to zero, since $\gamma=0$ is the point at which the GADC is close to an identity channel. We note here that the generic behavior of the $\varepsilon$-degradable bound being tangent to the lower bound for low noise quantum channels was studied in detail in \cite{LLS18}. On the other hand, the upper bound $Q_{\text{deg},2}^{\text{UB}}$ based on $\varepsilon$-close-degradability is relatively poor for large values of $N$. Similarly, we observe that the upper bound $Q_{\text{a-deg}}^{\text{UB}}$ based on $\varepsilon$-anti-degradability is relatively poor except for values of $\gamma$ close to $\gamma=\frac{1}{2}$, where, as expected, the bound is tighter, since $\gamma=\frac{1}{2}$ is the point beyond which the GADC is anti-degradable. From Fig.~\ref{fig-GADC_QCap_bounds}, it is also evident that the upper bound $Q^{\text{UB}}_{\text{DP},1}$ is tighter than all other data-processing upper bounds for all values of $\gamma$ and for $N<0.5$. Moreover, for $N=0.5$, the upper bounds $Q^{\text{UB}}_{\text{DP},1}$ and $Q^{\text{UB}}_{\text{DP}, 2}$ coincide with the upper bounds $Q^{\text{UB}}_{\text{DP},4}$ and $Q^{\text{UB}}_{\text{DP},3}$, respectively. 
	Furthermore, the upper bound $Q_{\text{a-deg}}^{\text{UB}}$ is tighter than all other upper bounds for both $\gamma$ and $N$ close to $\frac{1}{2}$. While the Rains information upper bound $Q_{\text{Rains}}^{\text{UB}}$ is worse than two of the data-processing upper bounds for all values of $\gamma$ when $N$ is close to zero, it is tighter than all four data-processing upper bounds for all values of $\gamma$ when $N$ is close to $\frac{1}{2}$. In this region of $N$ close to $\frac{1}{2}$, it is also tighter than the bounds $Q_{\text{deg},1}^{\text{UB}}$ and $Q_{\text{a-deg}}^{\text{UB}}$ for values of $\gamma$ roughly between $0.15$ and $0.49$.
	%\textcolor{red}{[remark somewhere about DP1 bound always doing better than the other DP bounds for $N<0.5$ (at $N=0.5$, DP1 coincides with DP4 and DP2 coincides with DP3).]}

\subsection{Comparison with prior work}

	Let us now compare the bounds obtained here with those from prior work.
	
	In \cite{RMG18}, in order to obtain an upper bound on the quantum capacity of the qubit thermal  channel, the authors consider the ``extended'' channel 
	\begin{equation}
		\begin{aligned}
		\widehat{\mathcal{L}}_{\eta,N}(\rho_A)&\equiv \Tr_E[(U_{AE\to BE}^\eta\otimes\mathbbm{1}_{E'})(\rho_A\otimes\ket{\theta^N}\bra{\theta^N}_{EE'})\\
		&\qquad\qquad\qquad\times (U_{AE\to BE}^{\eta}\otimes\mathbbm{1}_{E'})^\dagger].
		\end{aligned}
	\end{equation}
	Note that
	\begin{equation}\label{eq-QTN_ext_2}
		\mathcal{L}_{\eta,N}=\Tr_{E'}\circ\widehat{\mathcal{L}}_{\eta,N},
	\end{equation}
	which implies, via \eqref{eq-QCap_data_proc_1}, that
	\begin{equation}\label{eq-QTN_QCap_UB_RMG}
		Q(\mathcal{L}_{\eta,N})\leq Q(\widehat{\mathcal{L}}_{\eta,N})\equiv Q_{\text{RMG}}^{\text{UB}}(\eta,N).
	\end{equation}
	
	\begin{figure}
		\centering
		\includegraphics[width=\columnwidth]{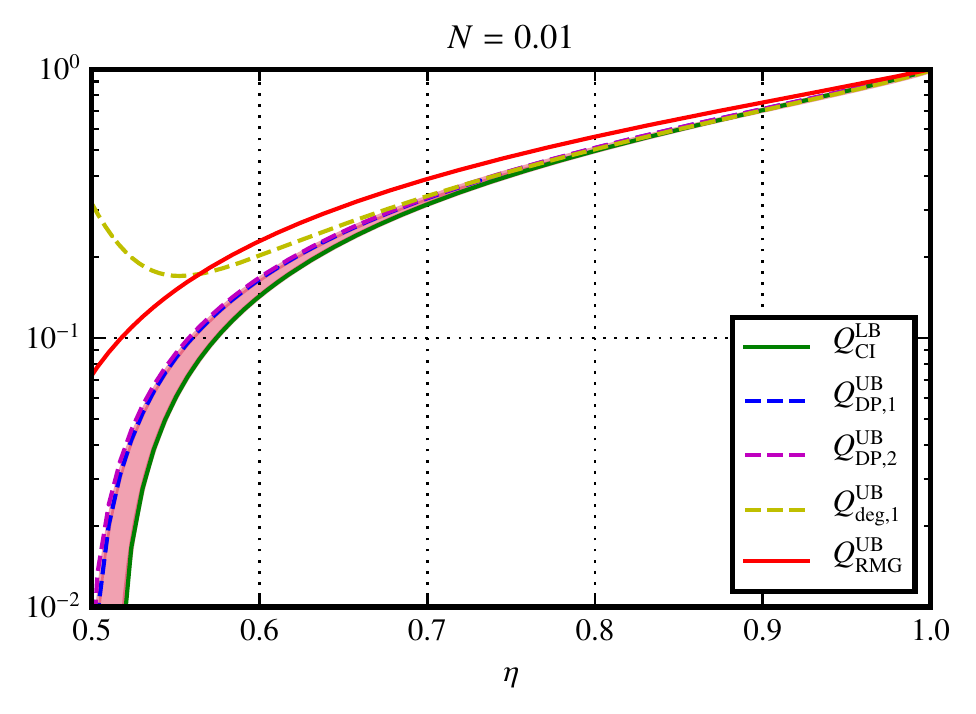}\\
		\includegraphics[width=\columnwidth]{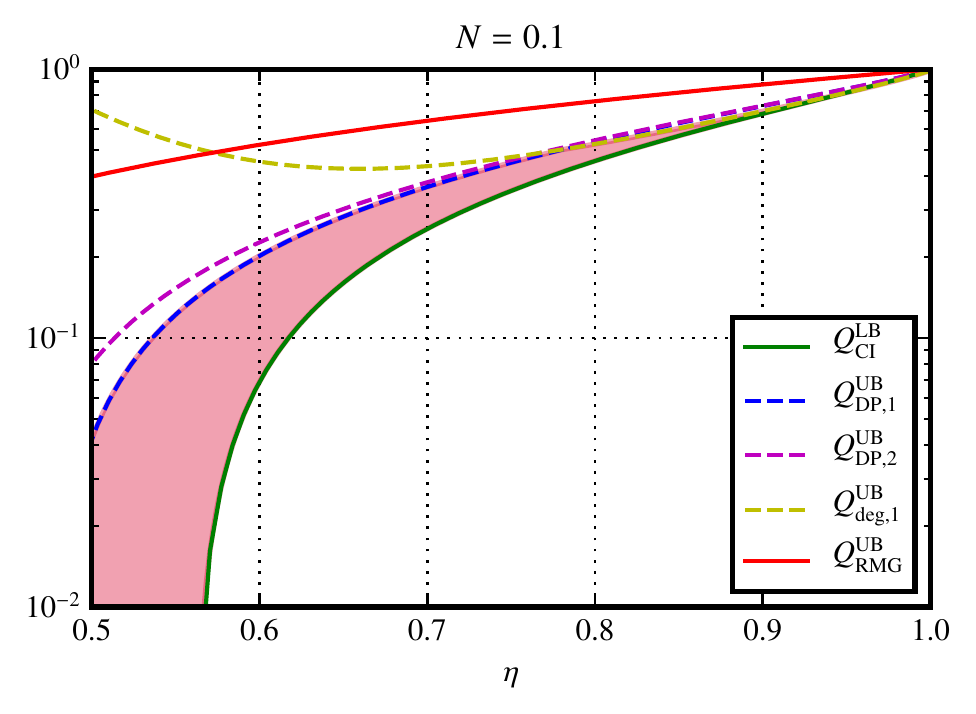}
		\caption{Comparison between the data-processing upper bounds $Q_{\text{DP},1}^{\text{UB}}$ and $Q_{\text{DP},2}^{\text{UB}}$ defined in \eqref{eq-GACD_Qcap_UB_DP_1} and \eqref{eq-GACD_Qcap_UB_DP_2}, respectively, the $\varepsilon$-degradable upper bound $Q_{\text{deg},1}^{\text{UB}}$ defined in \eqref{eq-GACD_Qcap_UB_DP_5}, and the upper bound $Q_{\text{RMG}}^{\text{UB}}$ obtained in \cite{RMG18} and defined in \eqref{eq-QTN_QCap_UB_RMG}. Also shown is the coherent information lower bound $Q_{\text{CI}}^{\text{LB}}$ defined in \eqref{eq-GADC_LB1}. The quantum capacity lies within the shaded region.}\label{fig-compRMG}
	\end{figure}
	
	As explained in \cite{RMG18}, to compute the upper bound $Q(\widehat{\mathcal{L}}_{\eta,N})$, we observe that by defining a channel complementary to $\widehat{\mathcal{L}}_{\eta,N}$ as
	\begin{equation}
		\begin{aligned}
		\widehat{\mathcal{L}}_{\eta,N}^{c}(\rho_A)&\equiv \Tr_{BE'}[(U_{AE\to BE}^\eta\otimes\mathbbm{1}_{E'})(\rho_A\otimes\ket{\theta^N}\bra{\theta^N}_{EE'})\\
		&\qquad\qquad\qquad\times (U_{AE\to BE}^{\eta}\otimes\mathbbm{1}_{E'})^\dagger],
		\end{aligned}
	\end{equation}
	we get
	\begin{equation}
		\widehat{\mathcal{L}}_{\eta,N}^{c}=\widetilde{\mathcal{L}}_{\eta,N}^c
	\end{equation}
	for all $\eta,N\in[0,1]$, where $\widetilde{\mathcal{L}}_{\eta,N}$ is the channel weakly complementary  to $\mathcal{L}_{\eta,N}$ defined in \eqref{eq-QTN_weak_complement}. This implies that whenever the qubit thermal  channel is weakly degradable, the extended channel is degradable. Indeed, for all $N>0$ and all $\eta\in[0,1]$, the channel $\widehat{\mathcal{D}}_{\eta,N}\equiv \mathcal{P}_{1-2N}\circ\mathcal{L}_{\frac{1-\eta}{\eta},N}\circ\Tr_{E'}$ satisfies
	\begin{align}
		\widehat{\mathcal{D}}_{\eta,N}\circ\widehat{\mathcal{L}}_{\eta,N}&=\mathcal{P}_{1-2N}\circ\mathcal{L}_{\frac{1-\eta}{\eta},N}\circ\Tr_{E'}\circ\widehat{\mathcal{L}}_{\eta,N}\\
		&=\mathcal{P}_{1-2N}\circ\mathcal{L}_{\frac{1-\eta}{\eta},N}\circ\mathcal{L}_{\eta,N}\\
		&=\widetilde{\mathcal{L}}_{\eta,N}^c\\
		&=\widehat{\mathcal{L}}_{\eta,N}^c,
	\end{align}
	where to obtain the second equality we used \eqref{eq-QTN_ext_2} and to obtain the third equality we used \eqref{eq-QTN_weak_deg_channel}. The quantum capacity of the extended channel is therefore given by its coherent information. In other words,
	\begin{align}
		Q(\widehat{\mathcal{L}}_{\eta,N})&=\max_{\rho}\left(H(\widehat{\mathcal{L}}_{\eta,N}(\rho))-H(\widehat{\mathcal{L}}_{\eta,N}^c(\rho))\right)\\
		&=\max_{p\in[0,1]} I_{\text{c}}\left(\begin{pmatrix} 1-p & 0 \\ 0 & p\end{pmatrix},\widehat{\mathcal{L}}_{\eta,N}\right)
	\end{align}
	for all $N>0$ and $\eta\in[0,1]$, where the last equality holds due to the fact $\widehat{\mathcal{L}}_{\eta,N}(\sigma_z\rho_A\sigma_z)=(\sigma_z\otimes\mathbbm{1}_{E'})\widehat{\mathcal{L}}_{\eta,N}(\rho)(\sigma_z\otimes\mathbbm{1}_{E'})$ and the fact that the coherent information is concave in the input state of the channel whenever the channel is degradable \cite{YHD08}.
	
	\begin{figure}
		\centering
		\includegraphics[width=\columnwidth]{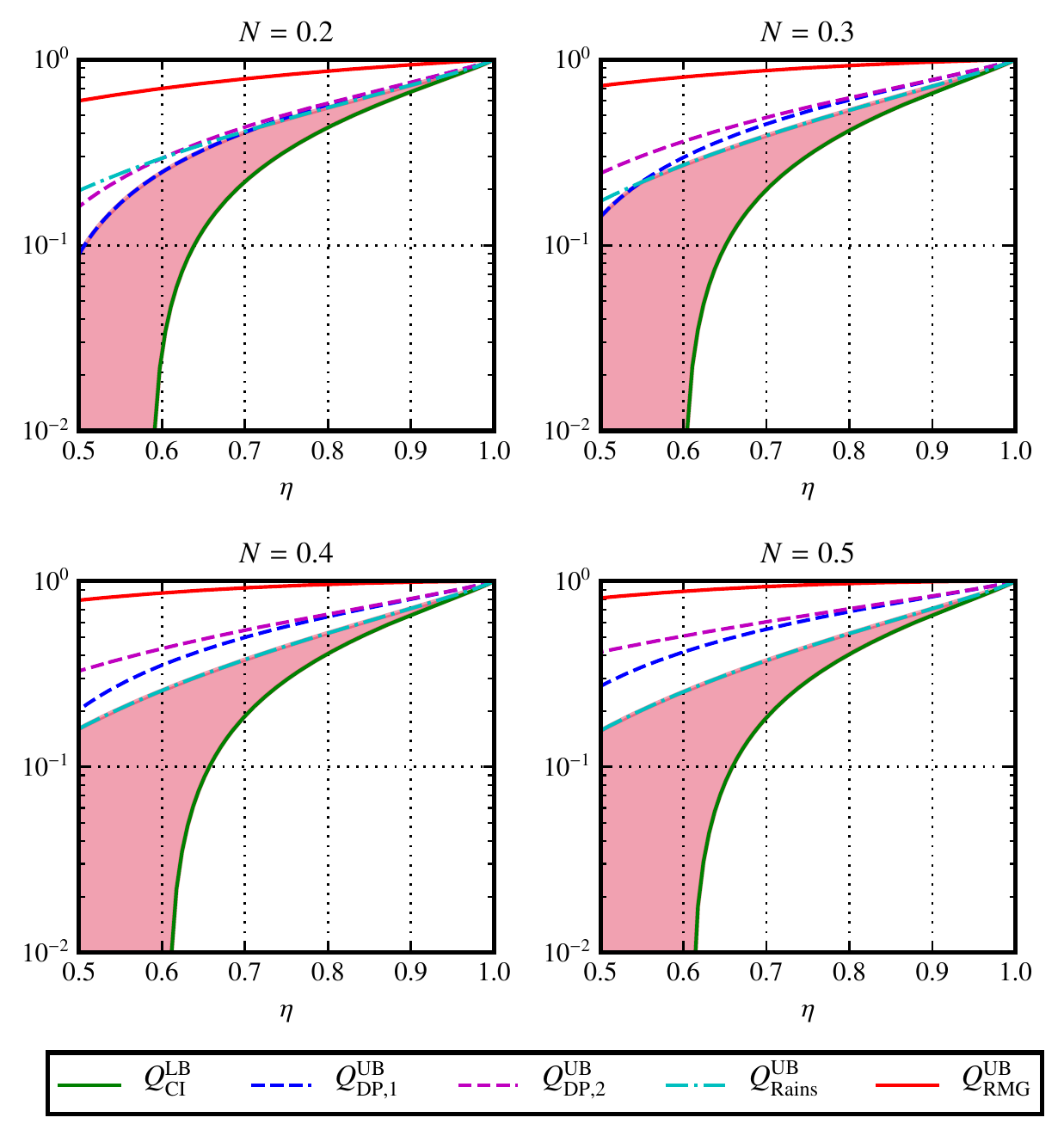}
		\caption{Comparison between the data-processing upper bounds $Q_{\text{DP},1}^{\text{UB}}$ and $Q_{\text{DP},2}^{\text{UB}}$ defined in \eqref{eq-GACD_Qcap_UB_DP_1} and \eqref{eq-GACD_Qcap_UB_DP_2}, respectively, the Rains information upper bound $Q_{\text{Rains}}^{\text{UB}}$ in \eqref{eq-GADC_QCap_UB8}, and the upper bound $Q_{\text{RMG}}^{\text{UB}}$ obtained in \cite{RMG18}. Also shown is the coherent information lower bound $Q_{\text{CI}}^{\text{LB}}$ defined in \eqref{eq-GADC_LB1}. The quantum capacity lies within the shaded region.}\label{fig-compRMG_all}
	\end{figure}
	
	See Fig.~\ref{fig-compRMG} for a comparison of the upper bounds obtained in this paper and the upper bound obtained in \cite{RMG18} for $N=0.01$ and $N=0.1$. We find that the upper bound $Q_5^{\text{UB}}$ based on approximate degradability is tighter than $Q_{\text{RMG}}^{\text{UB}}$ beyond roughly $\eta=0.56$ for both $N=0.01$ and $N=0.1$, while the data-processing upper bounds $Q_{\text{DP},1}^{\text{UB}}$ and $Q_{\text{DP},2}^{\text{UB}}$ are tighter than $Q_{\text{RMG}}^{\text{UB}}$ for all values of $\eta$. In fact, as shown in Fig.~\ref{fig-compRMG_all}, these data-processing bounds are tighter for all values of $N$. The data-processing upper bounds are thus tighter than the bound in \cite{RMG18} for the entire parameter range of the qubit thermal channel/GADC. For values of $N$ close to $\frac{1}{2}$, the Rains information upper bound $Q_{\text{Rains}}^{\text{UB}}$ is tighter than both data-processing upper bounds for all values of $\eta$.

\section{Bounds on the two-way-assisted quantum and private capacities}\label{sec:two-way-q-cap-bounds}

	In this section, we consider the two-way assisted quantum and private capacities $Q^{\leftrightarrow}(\mathcal{A}_{\gamma,N})$ and $P^{\leftrightarrow}(\mathcal{A}_{\gamma,N})$, respectively, of the GADC.

\subsection{Squashed entanglement upper bounds}

	Recalling from \eqref{eq-Q_cap_two_way_MI} that one-half of the mutual information of a channel is an upper bound on its two-way assisted quantum capacity, and using the expression for the mutual information of the GADC in \eqref{eq-GADC_mut_inf}, we get
	\begin{equation}\label{eq-GADC_Q2cap_UB1}
		Q^{\leftrightarrow}(\mathcal{A}_{\gamma,N})\leq \frac{1}{2}\max_{z\in[-1,1]}F(\gamma,N,z)\equiv Q_{\text{MI}}^{\leftrightarrow,\text{UB}}(\gamma,N)
	\end{equation}
	for all $\gamma,N\in(0,1)$.
	
	A potentially better upper bound on the two-way quantum capacity of the GADC than the one in \eqref{eq-GADC_Q2cap_UB1} can be obtained by a different choice of squashing channel. In particular, we make use of the decompositions in \eqref{eq-GADC_decomp_spec_1} and \eqref{eq-GADC_decomp_spec_2} to obtain the following result. Our approach is related to the constructions in \cite{GEW16,DSW18}.
	
	\begin{proposition}[Squashed entanglement upper bounds]\label{prop-sq_UB}
		For all $\gamma,N\in(0,1)$, it holds that
		\begin{align}
			Q^{\leftrightarrow}(\mathcal{A}_{\gamma,N})&\leq P^{\leftrightarrow}(\mathcal{A}_{\gamma,N})\nonumber\\
			&\leq \frac{1}{2}\max_{p\in[0,1]}I(A;B|E_1E_2)_{\tau^p}\equiv Q_{\textnormal{sq},1}^{\leftrightarrow,\textnormal{UB}}(\gamma,N),\label{eq-GADC_Esq_UB1}
		\end{align}
		where the state $\tau^p$ on which we evaluate the conditional mutual information is
		\begin{equation}\label{eq-GADC_Esq_UB_pf3}
			\tau_{ABE_1E_2}^p=(\id_{AB}\otimes\mathcal{A}_{\frac{1}{2},0}\otimes\mathcal{A}_{\frac{1}{2},0})(\ket{\psi_p}\bra{\psi_p}_{ABE_1'E_2'}),
		\end{equation}
		with $\ket{\psi_p}_{ABE_1'E_2'}=V_{B'\to BE_2'}^{\gamma N,1}V_{A'\to B'E_1'}^{\frac{\gamma(1-N)}{1-\gamma N},0}\ket{\theta^p}_{AA'}$ and $\ket{\theta^p}_{AA'}=\sqrt{1-p}\ket{0,0}_{AA'}+\sqrt{p}\ket{1,1}_{AA'}$.

		Also,
		\begin{align}
			Q^{\leftrightarrow}(\mathcal{A}_{\gamma,N})&\leq P^{\leftrightarrow}(\mathcal{A}_{\gamma,N})\nonumber\\
			&\leq \frac{1}{2}\max_{p\in[0,1]}I(A;B|E_1E_2)_{\tilde{\tau}^p}\equiv Q_{\textnormal{sq},2}^{\leftrightarrow,\textnormal{UB}}(\gamma,N),\label{eq-GADC_Esq_UB2}
		\end{align}
		where the state $\tilde{\tau}^p$ on which we evaluate the conditional mutual information is
		\begin{equation}
			\tilde{\tau}_{ABE_1E_2}^p=(\id_{AB}\otimes\mathcal{A}_{\frac{1}{2},0}\otimes\mathcal{A}_{\frac{1}{2},0})(\ket{\tilde{\psi}_p}\bra{\tilde{\psi}_p}_{ABE_1'E_2'}),
		\end{equation}
		with $\ket{\tilde{\psi}_p}_{ABE_1'E_2'}=V_{B'\to BE_2'}^{\gamma(1-N),0}V_{A'\to B'E_1'}^{\frac{\gamma N}{1-\gamma(1-N)},1}\ket{\theta^p}_{AA'}$.
	\end{proposition}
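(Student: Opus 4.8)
The plan is to use the general bound $Q^{\leftrightarrow}(\mathcal{A}_{\gamma,N})\le P^{\leftrightarrow}(\mathcal{A}_{\gamma,N})\le E_{\text{sq}}(\mathcal{A}_{\gamma,N})$ from \eqref{eq:2-way-bound-private-cap}, together with the fact that any channel applied to one share of a purification of $\rho_{AB}$ is a feasible squashing channel, so that $E_{\text{sq}}(A;B)_\rho\le\tfrac12 I(A;B|E)$ on the resulting extension. I would read the squashing channel off the serial decomposition \eqref{eq-GADC_decomp_spec_1}: writing $\gamma_1\equiv\gamma(1-N)/(1-\gamma N)$ and $\gamma_2\equiv\gamma N$, the composite isometry $W_{A'\to BE_1'E_2'}\equiv V^{\gamma_2,1}_{B'\to BE_2'}V^{\gamma_1,0}_{A'\to B'E_1'}$, built from the isometric extensions in \eqref{eq-GADC_iso_ext}, dilates $\mathcal{A}_{\gamma,N}$, so $(\id_A\otimes W)\ket{\phi}_{AA'}$ purifies $\mathcal{A}_{\gamma,N}(\phi_{AA'})$ with the environment split as $E_1'E_2'$. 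Acting on this bipartite environment with $\mathcal{A}_{1/2,0}\otimes\mathcal{A}_{1/2,0}$ produces an extension $\tau_\phi$ of $\mathcal{A}_{\gamma,N}(\phi_{AA'})$, hence a legitimate squashing; taking the maximum over inputs gives $E_{\text{sq}}(\mathcal{A}_{\gamma,N})\le\tfrac12\max_{\phi}I(A;B|E_1E_2)_{\tau_\phi}$. Feasibility does not require $\gamma=1/2$ — any $\mathcal{A}_{\lambda,0}\otimes\mathcal{A}_{\mu,0}$ works — but $\gamma=1/2$ is the self-complementary point of the amplitude-damping family and numerically gives the tightest bound, which is why it is singled out.

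The remaining work is to show that this maximum is attained on the one-parameter family $\ket{\theta^p}_{AA'}$, which would turn the above into \eqref{eq-GADC_Esq_UB1} with $\tau^p$ as in \eqref{eq-GADC_Esq_UB_pf3}. The whole pipeline $\phi_{AA'}\mapsto\tau_\phi$ is a fixed channel $\mathcal{G}_{A'\to BE_1E_2}$ applied to $\phi$, so $I(A;B|E_1E_2)_{\tau_\phi}$ depends only on $\rho_{A'}\equiv\Tr_A[\phi]$ (different purifications differ by an isometry on $A$, under which $I(A;B|E_1E_2)$ is invariant); write $f(\rho_{A'})$ for this value. Next, the GADC, its complementary channel, and $\mathcal{A}_{1/2,0}$ are all covariant with respect to the phase unitaries $\e^{\I a\hat{n}}$ (cf.\ the covariance stated just after \eqref{eq-num_operator}); tracking these phases through $W$ and through $\mathcal{A}_{1/2,0}\otimes\mathcal{A}_{1/2,0}$ — each environment qubit carries a $U(1)$-representation diagonal in the Kraus-index basis, with the $N=1$ leg carrying the conjugate phase on $E_2'$ — shows that $\mathcal{G}$ intertwines $\e^{\I a\hat{n}}$ on $A'$ with local unitaries on $B$ and on $E_1E_2$, so $f(\e^{\I a\hat{n}}\rho_{A'}\e^{-\I a\hat{n}})=f(\rho_{A'})$; in particular $f(\sigma_z\rho_{A'}\sigma_z)=f(\rho_{A'})$. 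Finally, $f$ is concave, by the same mechanism behind Proposition~\ref{prop:concavity-Rains}: purify $\rho_{A'}=\sum_x p_x\rho^x_{A'}$ with a classical flag $X$ on the $A$-side, dephase the flag (a local operation, under which the conditional mutual information cannot increase, by data processing), after which the state is block-diagonal in $X$ and strong subadditivity plus ``conditioning on a classical register is an average'' give $I(XA_1;B|E_1E_2)\ge\sum_x p_x I(A_1;B|E_1E_2)_{\mathcal{G}(\phi^{\rho^x})}$. Combining the last two facts, $f(\tfrac12(\rho_{A'}+\sigma_z\rho_{A'}\sigma_z))\ge f(\rho_{A'})$, so the maximum is attained on a diagonal $\rho_{A'}=\mathrm{diag}(1-p,p)$, whose canonical purification is $\ket{\theta^p}_{AA'}$ up to a unitary on $A$; plugging in gives \eqref{eq-GADC_Esq_UB1}. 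The bound \eqref{eq-GADC_Esq_UB2} follows identically, using \eqref{eq-GADC_decomp_spec_2} in place of \eqref{eq-GADC_decomp_spec_1} and the dilation $V^{\gamma(1-N),0}_{B'\to BE_2'}V^{\frac{\gamma N}{1-\gamma(1-N)},1}_{A'\to B'E_1'}$.

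I expect the main obstacle to be the two non-automatic ingredients above rather than the squashing-channel construction, which is immediate. The covariance bookkeeping is routine but must be done carefully — one has to identify the precise $U(1)$ representation induced on each environment qubit ($\e^{\I a\hat{n}}$ on $E_1'$ from the $N=0$ leg, the conjugate phase on $E_2'$ from the $N=1$ leg) and check that $\mathcal{A}_{1/2,0}$ is covariant for each, which holds because it is $\e^{\I a\hat{n}}$-covariant for every $a\in\R$. And the concavity step is genuinely needed, not cosmetic: a direct twirl of the input only ever decreases $E_{\text{sq}}$, so without concavity one cannot conclude that diagonal inputs suffice. Both are the kind of argument one carries out in the appendix referenced for this proposition.
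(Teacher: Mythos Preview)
Your proposal is correct and follows essentially the same route as the paper: the same serial decomposition, the same composite isometry as a dilation, the same $\mathcal{A}_{1/2,0}\otimes\mathcal{A}_{1/2,0}$ squashing (with the same numerical justification for the choice of $1/2$), and the same reduction to diagonal inputs via $\sigma_z$-covariance plus concavity. The only notable technical difference is in how concavity of $f(\rho_{A'})$ is established: the paper purifies the squashing channels to systems $F_1F_2$ and rewrites $I(A;B|E_1E_2)_\tau=H(B|E_1E_2)_\varphi+H(B|F_1F_2)_\varphi$, which removes the $A$ system altogether and reduces the claim to concavity of conditional entropy, whereas you argue directly via data processing of the conditional mutual information under dephasing a classical flag on the reference side (in the style of Proposition~\ref{prop:concavity-Rains}); both are valid and of comparable length.
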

	
	\begin{proof}
		We use the fact that $Q^{\leftrightarrow}(\mathcal{A}_{\gamma,N})\leq E_{\text{sq}}(\mathcal{A}_{\gamma,N})$, where
		\begin{equation}\label{eq-GADC_Esq_UB_pf1}
			E_{\text{sq}}(\mathcal{A}_{\gamma,N})=\frac{1}{2}\max_{\phi_{AA'}}\inf_{\mathcal{S}_{E'\to E}}I(A;B|E)_{\omega},
		\end{equation}
		where $\omega_{ABE}=\mathcal{S}_{E'\to E}(\ket{\psi}\bra{\psi}_{ABE'})$ and $\ket{\psi}_{ABE'}$ is a purification of the state $(\id_A\otimes\mathcal{A}_{\gamma,N})(\ket{\phi}\bra{\phi}_{AA'})$.
		
		To obtain the first upper bound in \eqref{eq-GADC_Esq_UB1}, we use the fact that $\mathcal{A}_{\gamma,N}$ can be decomposed as $\mathcal{A}_{\gamma,N}=\mathcal{A}_{\gamma N,1}\circ\mathcal{A}_{\frac{\gamma(1-N)}{1-\gamma N},0}$. This means that, for any pure state $\ket{\phi}_{AA'}$, a purification of the state $\rho_{AB}\equiv(\id_A\otimes\mathcal{A}_{\gamma,N})(\ket{\phi}\bra{\phi}_{AA'})$ can be written as
		\begin{equation}\label{eq-Esq_rho_AB_purif}
			\ket{\psi}_{ABE_1'E_2'}\equiv V_{B'\to BE_2'}^{\gamma N,1}V_{A'\to B'E_1'}^{\frac{\gamma(1-N)}{1-\gamma N},0}\ket{\phi}_{AA'}
		\end{equation}
		As the squashing channels, which act on $E_1'$ and $E_2'$, we take the channels $\mathcal{A}_{\gamma_1,N_1}$ and $\mathcal{A}_{\gamma_2,N_2}$, respectively. The state $\omega_{ABE_1E_2}$ on which the quantum conditional mutual information in \eqref{eq-GADC_Esq_UB_pf1} is evaluated is then
		\begin{multline}\label{eq-GADC_Esq_ext}
			\omega_{ABE_1E_2}(\gamma_1,N_1,\gamma_2,N_2)\\\equiv (\id_{AB}\otimes\mathcal{A}_{\gamma_1,N_1}\otimes\mathcal{A}_{\gamma_2,N_2})(\ket{\psi}\bra{\psi}_{ABE_1'E_2'}).
		\end{multline}
		We can optimize over the open parameters $\gamma_1,N_1,\gamma_2,N_2\in[0,1]$ such that the squashed entanglement of $\rho_{AB}$ can be bounded from above as
		\begin{equation}\label{eq-Esq_UB1_opt}
			E_{\text{sq}}(A;B)_\rho\leq \frac{1}{2}\min_{\gamma_1,\gamma_2,N_1,N_1}I(A;B|E_1E_2)_\omega,
		\end{equation}
		where the state $\omega_{ABE_1E_2}$ is given in \eqref{eq-GADC_Esq_ext}. This means that
		\begin{equation}
			E_{\text{sq}}(\mathcal{A}_{\gamma,N})\leq\frac{1}{2}\max_{\phi_{AA'}}\min_{\gamma_1,\gamma_2,N_1,N_2}I(A;B|E_1E_2)_{\omega}.
		\end{equation}
		Now, numerical evidence suggests that $\gamma_1=\frac{1}{2}=\gamma_2$ and $N_1=0=N_2$ is optimal. The corresponding squashing channel can be viewed as qubit pure-loss channels with beamsplitters of transmissivity $\frac{1}{2}$, analogous to the construction in \cite{GEW16,DSW18}; see Fig. \ref{fig-squashing_channel_1}. So we have
		\begin{equation}
			E_{\text{sq}}(\mathcal{A}_{\gamma,N})\leq\frac{1}{2}\max_{\phi_{AA'}}I(A;B|E_1E_2)_{\tau},
		\end{equation}
		where $\tau_{ABE_1E_2}=\omega_{ABE_1E_2}(\frac{1}{2},0,\frac{1}{2},0)$. Finally, due to the covariance of the GADC with respect to the Pauli-$z$ operator, it suffices to optimize over pure states $\ket{\phi}_{AA'}=\ket{\theta^p}_{AA'}=\sqrt{1-p}\ket{0,0}_{AA'}+\sqrt{p}\ket{1,1}_{AA'}$, where $p\in[0,1]$. In other words, the following equality holds:
		\begin{equation}\label{eq-GADC_Esq_UB_pf2}
			\frac{1}{2}\max_{\phi_{AA'}}I(A;B|E_1E_2)_{\tau}=\frac{1}{2}\max_{p\in[0,1]}I(A;B|E_1E_2)_{\tau^p},
		\end{equation}
		where $\tau_{ABE_1E_2}^p$ is defined in \eqref{eq-GADC_Esq_UB_pf3}. See Appendix \ref{app-GADC_Esq_UB_pf} for a proof. We thus obtain the bound in \eqref{eq-GADC_Esq_UB1}.
		
		\begin{figure}
		\centering
		\includegraphics[width=\columnwidth]{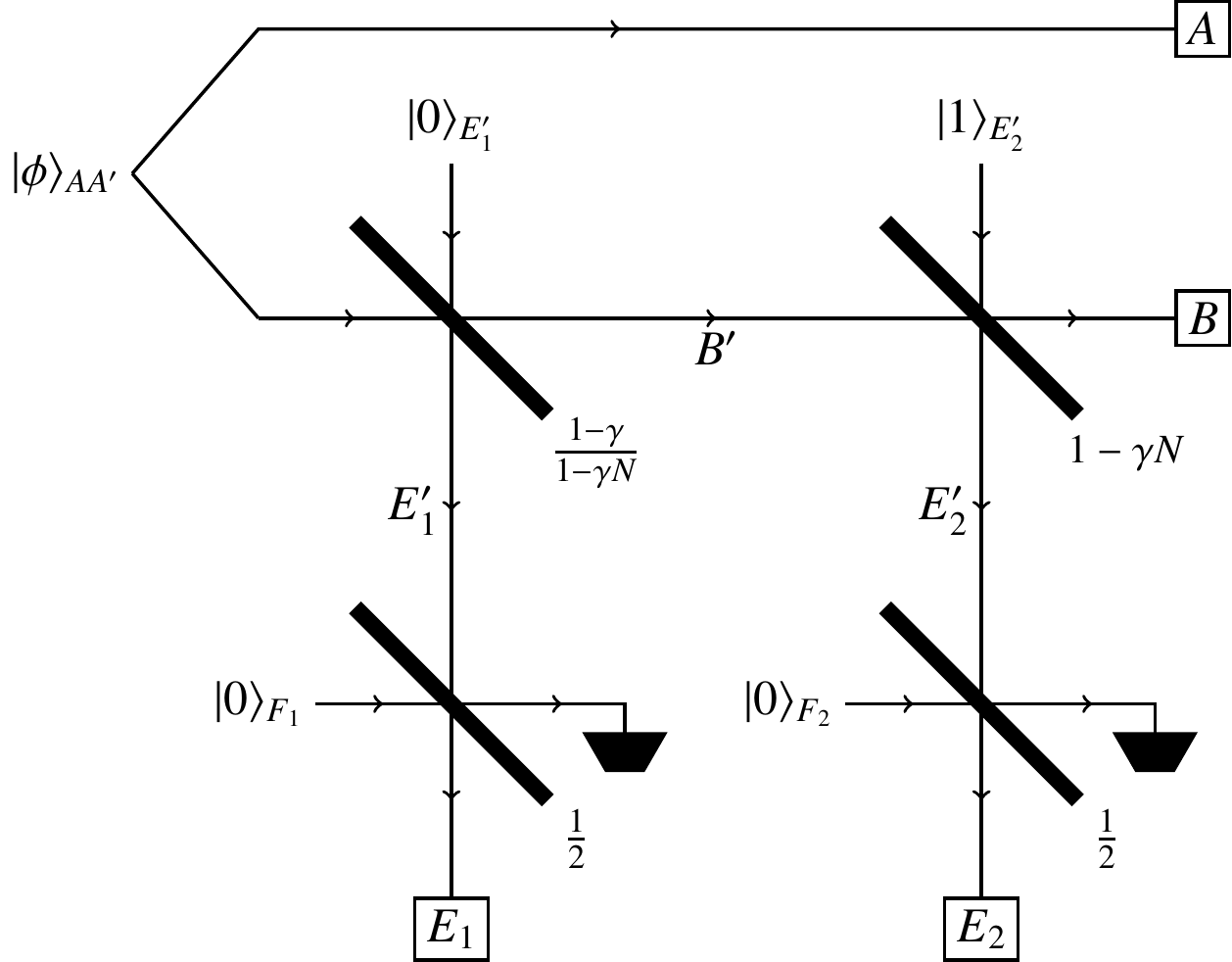}
		\caption{Strategy for the first part of the proof of Proposition \ref{prop-sq_UB}, in which we decompose the GADC as $\mathcal{A}_{\gamma,N}=\mathcal{A}_{\gamma N,1}\circ\mathcal{A}_{\frac{\gamma(1-N)}{1-\gamma N},0}$, as per \eqref{eq-GADC_decomp_spec_1}. Using \eqref{eq-GADC_to_thermalnoise}, we can write this decomposition using the qubit thermal channel as $\mathcal{A}_{\gamma,N}=\mathcal{L}_{1-\gamma N,1}\circ\mathcal{L}_{\frac{1-\gamma}{1-\gamma N},0}$. To place an upper bound on the squashed entanglement of the GADC, we use a squashing channel consisting of a 50/50 ``qubit beamsplitter'' (i.e., the unitary transformation $U^{\eta}$ defined in \eqref{eq-BS_unitary} with $\eta=\frac{1}{2}$) acting on the environment of each of the two qubit thermal channels in the decomposition of the GADC.}\label{fig-squashing_channel_1}
	\end{figure}
	
		We obtain the second upper bound in \eqref{eq-GADC_Esq_UB2} using the decomposition $\mathcal{A}_{\gamma,N}=\mathcal{A}_{\gamma(1-N),0}\circ\mathcal{A}_{\frac{\gamma N}{1-\gamma(1-N)},1}$. In this case, we take a purification of the state $\rho_{AB}=(\id_A\otimes\mathcal{A}_{\gamma,N})(\ket{\phi}\bra{\phi}_{AA'})$ to be
		\begin{equation}
			\ket{\tilde{\psi}}_{ABE_1'E_2'}\equiv V_{B'\to BE_2}^{\gamma(1-N),0}V_{A'\to B'E_1}^{\frac{\gamma N}{1-\gamma(1-N)},1}\ket{\phi}_{AA'}.
		\end{equation}
		Then, letting
		\begin{multline}
			\tilde{\omega}_{ABE_1E_2}(\gamma_1,N_1,\gamma_2,N_2)\\\equiv (\id_{AB}\otimes\mathcal{A}_{\gamma_1,N_1}\otimes\mathcal{A}_{\gamma_2,N_2})(\ket{\tilde{\psi}}\bra{\tilde{\psi}}_{ABE_1'E_2'})
		\end{multline}
		and performing the optimization $\min_{\gamma_1,\gamma_2,N_1,N_2}I(A;B|E_1E_2)_{\tilde{\omega}}$ analogous to the one in \eqref{eq-Esq_UB1_opt}, we find numerically that $\gamma_1=\frac{1}{2}=\gamma_2$ and $N_1=0=N_2$ gives the optimal value. Therefore, we get
		\begin{equation}
			E_{\text{sq}}(\mathcal{A}_{\gamma,N})\leq\frac{1}{2}\max_{p\in[0,1]} I(A;B|E_1E_2)_{\tilde{\tau}^p},
		\end{equation}
		as required. As with the first upper bound, it suffices to optimize over pure states $\ket{\theta^p}_{AA'}$ due to the covariance of the GADC with respect to the Pauli-$z$ operator, and the proof is analogous to the one presented in Appendix \ref{app-GADC_Esq_UB_pf} for the first upper bound.
	\end{proof}

	\begin{figure*}
		\centering
		\includegraphics[scale=1]{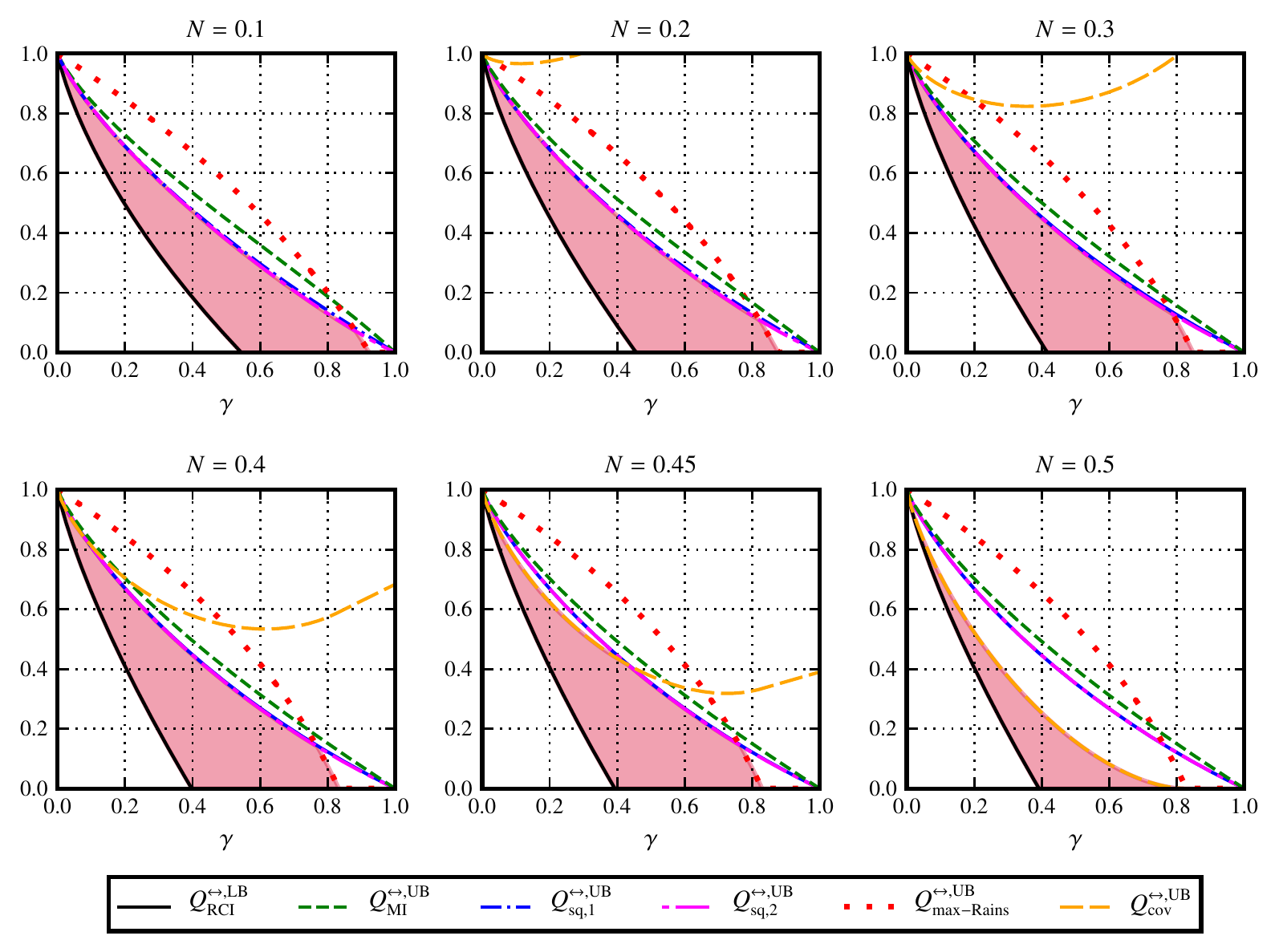}
		\caption{Bounds on the two-way assisted quantum capacity of the GADC. Shown is the reverse coherent information $Q_{\text{RCI}}^{\leftrightarrow,\text{LB}}$ lower bound, given by the expression in \eqref{eq-GADC_RCI}. We also plot the mutual information upper bound $Q_{\text{MI}}^{\leftrightarrow,\text{UB}}$ defined in \eqref{eq-GADC_Q2cap_UB1} and obtained by employing the identity squashing channel in the definition of the squashed entanglement, along with the squashed entanglement upper bounds $Q_{\text{sq},1}^{\leftrightarrow,\text{UB}}$ and $Q_{\text{sq},2}^{\leftrightarrow,\text{UB}}$ defined in \eqref{eq-GADC_Esq_UB1} and \eqref{eq-GADC_Esq_UB2}, respectively. The bounds $Q_{\text{sq},1}^{\leftrightarrow,\text{UB}}$ and $Q_{\text{sq},2}^{\leftrightarrow,\text{UB}}$ are obtained by employing the squashing channel as shown in \eqref{eq-GADC_Esq_UB_pf3}. The max-Rains upper bound $Q_{\text{max-Rains}}^{\leftrightarrow,\text{UB}}$ is given by the SDP in \eqref{eq-R_max}. (See also the analytic expression in \eqref{eq-GADC_Emax_Rmax}.) The upper bound $Q_{\text{cov}}^{\leftrightarrow,\text{UB}}$ is given in \eqref{eq:approx-tele-sim-bnd} and is based on the notion of approximate covariance. The two-way assisted quantum capacity lies within the shaded region.}\label{fig-2Way_QCap_bounds_all}
	\end{figure*}
	
	See Fig.~\ref{fig-2Way_QCap_bounds_all} for a plot of the squashed entanglement upper bounds in \eqref{eq-GADC_Esq_UB1} and \eqref{eq-GADC_Esq_UB2} along with the mutual information upper bound $E_{\text{sq}}(\mathcal{A}_{\gamma,N})\leq\frac{1}{2}I(\mathcal{A}_{\gamma,N})$, with $I(\mathcal{A}_{\gamma,N})$ given in \eqref{eq-GADC_mut_inf}. We also plot the reverse coherent information $I_{\text{rc}}(\mathcal{A}_{\gamma,N})$ lower bound. Due to Pauli-$z$ covariance and concavity of the reverse coherent information, $I_{\text{rc}}(\mathcal{A}_{\gamma, N})$ can be obtained by optimizing over diagonal input states, i.e.,
	\begin{equation}\label{eq-GADC_RCI}
		I_{\text{rc}}(\mathcal{A}_{\gamma,N})=\max_{p\in[0,1]}I_{\text{rc}}\left(\begin{pmatrix}1-p&0\\0&p\end{pmatrix},\mathcal{A}_{\gamma,N}\right)\equiv Q_{\text{RCI}}^{\leftrightarrow,\text{LB}}(\gamma,N).
	\end{equation}
	We note that the coherent information lower bound is not plotted in Fig.~\ref{fig-2Way_QCap_bounds_all} because it is smaller than the RCI lower bound for all values of $\gamma$ and $N$.
	%\textcolor{red}{[mention that the RCI lower bound is always better than the coherent information lower bound]}

\subsection{Max-Rains and max-relative entropy of entanglement upper bounds}
	
	For the amplitude damping channel $\mathcal{A}_{\gamma,0}$, it has been shown in \cite[Proposition 2]{RKB+18} that
	\begin{equation}
		E_{\max}(\mathcal{A}_{\gamma,0})=\log_2(2-\gamma). \label{eq:E_max-GADC}
	\end{equation}
	We now generalize this formula to all values of $\gamma,N$ for the GADC. We also prove that the inequality opposite to the one in \eqref{eq-Rmax_Emax_ineq} holds for the GADC. As stated, this result generalizes the equality in \eqref{eq:E_max-GADC}, and the proof that we give is arguably simpler than that given for \cite[Proposition 2]{RKB+18}.
	
	\begingroup
	\allowdisplaybreaks[0] %don't split the proposition statement over columns
	\begin{proposition}\label{prop-GADC_Emax}
		For all $\gamma,N$ such that the GADC $\mathcal{A}_{\gamma,N}$ is not entanglement breaking, it holds that
		\begin{multline}\label{eq-GADC_Emax_Rmax}
			E_{\max}(\mathcal{A}_{\gamma,N})=R_{\max}(\mathcal{A}_{\gamma,N})\\=\log_2\left(1-\frac{\gamma}{2}+\frac{1}{2}\sqrt{(\gamma(2N-1))^2+4(1-\gamma)}\right).
		\end{multline}
		If the GADC is entanglement breaking, as given by \eqref{eq-GADC_ent_break}, then $E_{\max}(\mathcal{A}_{\gamma,N})=R_{\max}(\mathcal{A}_{\gamma,N})=0$.
	\end{proposition}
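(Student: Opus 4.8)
The plan is to evaluate the two semi-definite programs directly --- \eqref{eq-E_max_SDP_primal} for $E_{\max}(\mathcal{A}_{\gamma,N})$ and \eqref{eq-R_max} for $R_{\max}(\mathcal{A}_{\gamma,N})$ --- exploiting the $X$-shaped structure of the Choi matrix $\Gamma_{AB}^{\mathcal{A}_{\gamma,N}}=2\rho_{AB}^{\gamma,N}$ (see \eqref{eq-GADC_Choi_state}), which in the basis $\{\ket{00},\ket{01},\ket{10},\ket{11}\}$ has off-diagonal entries only in the $\{\ket{00},\ket{11}\}$ block. Writing $\lambda(\gamma,N)\equiv 1-\tfrac{\gamma}{2}+\tfrac12\sqrt{\gamma^2(2N-1)^2+4(1-\gamma)}$ for the claimed value of $2^{E_{\max}}$, the target is the chain $\log_2\lambda(\gamma,N)\leq R_{\max}(\mathcal{A}_{\gamma,N})\leq E_{\max}(\mathcal{A}_{\gamma,N})\leq\log_2\lambda(\gamma,N)$, the middle inequality being \eqref{eq-Rmax_Emax_ineq}; this forces all three quantities to coincide. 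The entanglement-breaking case is handled separately at the end.

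For the upper bound $E_{\max}(\mathcal{A}_{\gamma,N})\leq\log_2\lambda(\gamma,N)$, I would substitute into the primal SDP \eqref{eq-E_max_SDP_primal} an ansatz $Y_{AB}$ of the same $X$-shape, with nonnegative diagonal entries and a single real off-diagonal parameter $c$ coupling $\ket{00}$ and $\ket{11}$. The operator inequality $\Gamma_{AB}^{\mathcal{A}_{\gamma,N}}\leq Y_{AB}$ and the constraint $Y_{AB}^{\t_B}\geq 0$ then each decouple into scalar conditions: positivity of two $2\times 2$ blocks (the $\{\ket{00},\ket{11}\}$ block of $Y_{AB}-\Gamma_{AB}^{\mathcal{A}_{\gamma,N}}$ and the $\{\ket{01},\ket{10}\}$ block of $Y_{AB}^{\t_B}$) together with nonnegativity of the remaining diagonal entries, while $\norm{\Tr_B[Y_{AB}]}_\infty$ reduces to the maximum of two explicit linear forms in the parameters. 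Minimizing that maximum is an elementary calculation: one saturates the evident constraints, chooses $c$ so that the two trace-out entries balance, and finds the minimal value to be $\lambda(\gamma,N)$, valid precisely when the admissible interval for $c$ is nonempty, which turns out to be exactly the complement of the entanglement-breaking region \eqref{eq-GADC_ent_break}.

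For the matching lower bound I would use the reformulation $2^{R_{\max}(A;B)_\rho}=\min\{\,\norm{\widetilde\sigma_{AB}^{\t_B}}_1:\widetilde\sigma_{AB}\geq\rho_{AB}\,\}$, which follows from the definition of $R_{\max}(A;B)_\rho$ and \eqref{eq-D_max} by absorbing the scaling parameter $t$ into $\sigma_{AB}$. Consequently, for any Hermitian $W_{AB}$ with $\norm{W_{AB}}_\infty\leq 1$ and $W_{AB}^{\t_B}\geq 0$, and any $\widetilde\sigma_{AB}\geq\rho_{AB}$, one has $\norm{\widetilde\sigma_{AB}^{\t_B}}_1\geq\Tr[W_{AB}\widetilde\sigma_{AB}^{\t_B}]=\Tr[W_{AB}^{\t_B}\widetilde\sigma_{AB}]\geq\Tr[W_{AB}^{\t_B}\rho_{AB}]$, so $R_{\max}(\mathcal{A}_{\gamma,N})\geq\log_2\Tr[W_{AB}^{\t_B}\rho_{AB}]$ for $\rho_{AB}=(\id_A\otimes\mathcal{A}_{\gamma,N})(\phi_{AA'})$ and any pure input $\phi_{AA'}$. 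By Pauli-$z$ covariance of the GADC it suffices to take $\phi_{AA'}$ in the family $\ket{\theta^p}_{AA'}=\sqrt{1-p}\ket{00}_{AA'}+\sqrt p\ket{11}_{AA'}$; choosing the optimal $p$ and an $X$-shaped witness $W_{AB}$ adapted to it, I would verify $\Tr[W_{AB}^{\t_B}\rho_{AB}]=\lambda(\gamma,N)$. (This witness is just the dual-optimal point of the SDP \eqref{eq-R_max}; one may equivalently dualize \eqref{eq-R_max} and exhibit it there.) Finally, when $\mathcal{A}_{\gamma,N}$ is entanglement breaking, $(\id_A\otimes\mathcal{A}_{\gamma,N})(\phi_{AA'})$ is separable for every $\phi_{AA'}$, so taking $\sigma_{AB}=\rho_{AB}$ in the definitions of $E_{\max}$ and $R_{\max}$ gives $E_{\max}(\mathcal{A}_{\gamma,N})=R_{\max}(\mathcal{A}_{\gamma,N})=0$; this is consistent with the formula, which equals $0$ on the boundary in \eqref{eq-GADC_ent_break} and is negative in the interior of that region.

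The step I expect to be the main obstacle is guessing the correct optimal input $\ket{\theta^p}$ together with the witness $W_{AB}$ (equivalently, the optimal dual variables of \eqref{eq-R_max}) so that both feasibility conditions $\norm{W_{AB}}_\infty\leq 1$ and $W_{AB}^{\t_B}\geq 0$ hold throughout the non-entanglement-breaking region and the value lands exactly on $\lambda(\gamma,N)$, matched against the primal $X$-ansatz of the $\Sigma$-SDP. The remaining algebra --- expanding the $2\times 2$ determinant conditions, carrying out the case split in $(\gamma,N)$, and checking that the $N=0$ specialization recovers \eqref{eq:E_max-GADC} --- is routine and is where computer-algebra assistance is convenient.
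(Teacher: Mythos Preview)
Your proposal is correct and follows essentially the same approach as the paper: an explicit $X$-shaped primal ansatz $Y_{AB}$ for the $\Sigma$-SDP gives the upper bound on $E_{\max}$, an explicit dual feasible point for the $\Delta$-SDP gives the matching lower bound on $R_{\max}$, and \eqref{eq-Rmax_Emax_ineq} closes the sandwich. Your sandwiching is slightly more economical than the paper's (the paper additionally exhibits a dual feasible point for the $\Sigma$-SDP to pin down $E_{\max}$ before turning to $R_{\max}$, so it runs three SDP evaluations where two suffice), and your state-level witness reformulation $2^{R_{\max}(A;B)_\rho}=\min\{\norm{\widetilde\sigma_{AB}^{\t_B}}_1:\widetilde\sigma_{AB}\geq\rho_{AB}\}$ is equivalent to the paper's direct use of the channel-level dual \eqref{eq:R-max-dual-SDP}; the paper's $Y_{AB}$ is obtained by shifting only the middle diagonal entries of $\Gamma_{AB}^{\gamma,N}$ by the negative eigenvalue $\lambda_-$ of the inner $2\times 2$ block of $(\Gamma_{AB}^{\gamma,N})^{\t_B}$, which is exactly the saturation you describe.
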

	\endgroup
	
	\begin{proof}
		See Appendix \ref{app-GADC_Emax}.
	\end{proof}
	
	By \eqref{eq-max_Rains_bound}, and using Proposition~\ref{prop-GADC_Emax}, we have that
	\begin{multline}\label{eq-GADC_Emax_Rmax_UB}
		Q^{\leftrightarrow}(\mathcal{A}_{\gamma,N}),P^{\leftrightarrow}(\mathcal{A}_{\gamma,N})\leq Q_{\text{max-Rains}}^{\leftrightarrow,\text{UB}}\\
		\equiv \log_2\left(1-\frac{\gamma}{2}+\frac{1}{2}\sqrt{(\gamma(2N-1))^2+4(1-\gamma)}\right).
	\end{multline}		
%	\begin{equation}
%		Q^{\leftrightarrow}(\mathcal{A}_{\gamma,N})\leq R_{\max}(\mathcal{A}_{\gamma,N})\equiv Q_4^{\leftrightarrow,\text{UB}}(\gamma,N)
%	\end{equation}
	for all $\gamma,N\in[0,1]$. In Fig.~\ref{fig-2Way_QCap_bounds_all}, we compare this max-Rains upper bound with the squashed entanglement upper bounds from the previous subsection. We observe that the max-Rains upper bound is tight when the channel is entanglement breaking. This is due to the fact that the state $\rho_{AB}$ for which $R_{\text{max}}(A;B)_{\rho}$ is evaluated in \eqref{eq-max_Rains_channel} is separable whenever the channel is entanglement breaking, and the fact that any separable state is in the set $\text{PPT}'$.
	
\begin{comment}
	To determine $E_{\max}(\mathcal{A}_{\gamma,N})$ for $N>0$, we can make use of the fact that the GADC is a qubit-to-qubit channel. In particular, we can make use of the fact that, for any two qubit systems $A$ and $B$, the set of separable states coincides with the set of positive partial transpose (PPT) states \cite{Peres96,HHH96}, i.e.,
	\begin{equation}
		\begin{aligned}
		\text{SEP}(A\!:\!B)&=\text{PPT}(A\!:\!B)\\
		&\equiv\{\sigma_{AB}:\sigma_{AB}\geq 0,\Tr[\sigma_{AB}]=1,\sigma_{AB}^{\t_B}\geq 0\}\\
		&\text{for qubit systems $A$ and $B$}.
		\end{aligned}
	\end{equation}
	Using this, along with the definition of $D_{\max}$ in \eqref{eq-D_max}, we can write $E_{\max}(A;B)_{\rho}$ for any two-qubit state $\rho_{AB}$ as the following SDP:
	\begin{align}
		&E_{\max}(A;B)_{\rho}\\
		&=\log_2\inf_{\sigma_{AB},t}\{t:\rho_{AB}\leq t\sigma_{AB},~\sigma_{AB}^{\t_B}\geq 0,~\Tr[\sigma_{AB}]=1\}\\
		&=\log_2\inf_{X_{AB}}\{\Tr[X_{AB}]:\rho_{AB}\leq X_{AB},~X_{AB}^{\t_B}\geq 0\},
	\end{align}
	where to obtain the last line we made the change of variable $X_{AB}=t\sigma_{AB}$, which implies that $t=\Tr[X_{AB}]$. Then, since we necessarily have $t\geq 0$, the constraint $\sigma_{AB}^{\t_B}\geq 0$ implies that $X_{AB}^{\t_B}\geq 0$.
	
	Now, since $\text{PPT}\subset\text{PPT}'$, we immediately obtain
	\begin{equation}
		R_{\max}(A;B)_{\rho}\leq E_{\max}(A;B)_{\rho}
	\end{equation}
	for any state $\rho_{AB}$. 
\end{comment}

\subsection{Approximate covariance upper bounds}

	Applying the bounds in Eq.~\eqref{eq-Q_cap_two_way_cov} and Eq.~\eqref{eq-P_cap_two_way_cov} to the GADC, recalling from \eqref{eq-GADC_twirl} that $\mathcal{A}_{\gamma,N}^G=\mathcal{A}_{\gamma,\frac{1}{2}}$, and using the fact that the quantity $R(A;B)_{\rho}$ coincides with $E_R(A;B)_{\rho}$ for qubit-qubit states $\rho_{AB}$ \cite[Section III]{AS08}, these bounds reduce to the following:%
	\begin{align}
		Q^{\leftrightarrow}(\mathcal{A}_{\gamma,N})&, P^{\leftrightarrow}(\mathcal{A}_{\gamma,N})\leq Q_{\text{cov}}^{\leftrightarrow,\text{UB}}(\gamma,N)\nonumber\\
		&\equiv E_{R}(A;B)_{\rho}+2\varepsilon_{\text{cov}}+g(\varepsilon_{\text{cov}})\label{eq:approx-tele-sim-bnd},
	\end{align}
	where $\varepsilon_{\text{cov}}\equiv\varepsilon_{\text{cov}}(\mathcal{A}_{\gamma,N})=\gamma\left|N-\frac{1}{2}\right|$ and%
	\begin{align}
		\rho_{AB}^{\gamma}&\equiv\mathcal{A}_{\gamma,\frac{1}{2}}(\Phi_{AA'}^+)\\
		& =\frac{1}{2}%
		\begin{pmatrix}
			1-\frac{\gamma}{2} & 0 & 0 & \sqrt{1-\gamma}\\
			0 & \frac{\gamma}{2} & 0 & 0\\
			0 & 0 & \frac{\gamma}{2} & 0\\
			\sqrt{1-\gamma} & 0 & 0 & 1-\frac{\gamma}{2}%
		\end{pmatrix}.
	\end{align}
	Note that, due to \eqref{eq-GADC_ent_break}, $\rho_{AB}^{\gamma}$ is entangled only when $0\leq\gamma<2(\!\sqrt{2}-1)$. In this case, it is a Bell-diagonal state of the form:%
	\begin{equation}
		\rho_{AB}^{\gamma}=\sum_{i,j=0}^1 r_{i,j}\ket{\Phi_{i,j}}\bra{\Phi
_{i,j}}_{AB},
	\end{equation}
	with $\ket{\Phi_{i,j}}_{AB}\equiv\left(\mathbbm{1}_{A}\otimes \sigma_x^{i}\sigma_z^{j}\right)\ket{\Phi^+}_{AB}$ and%
	\begin{align}
		r_{0,0}  & =\frac{1}{4}\left(  2+2\sqrt{1-\gamma}-\gamma\right)  \\
		r_{0,1}  & =\frac{1}{4}\left(  2-2\sqrt{1-\gamma}-\gamma\right)  ,\\
		r_{1,0}  & =r_{11}=\frac{\gamma}{4}.
	\end{align}
	The closest separable state for such a Bell-diagonal state with $r_{0,0}\geq\frac{1}{2}$ is well known to have the form \cite{VPRK97} (see also \cite{AS08})
	\begin{align}
		\sigma_{AB}& =\frac{1}{2}\ket{\Phi_{0,0}}\bra{\Phi_{0,0}}_{AB}\nonumber\\
		&\qquad +\frac
{1}{2(1-r_{0,0})}\sum_{i,j\neq\left(0,0\right)}r_{i,j}%
\ket{\Phi_{i,j}}\bra{\Phi_{i,j}}_{AB}\\
		& =%
		\begin{pmatrix}
			\frac{1}{2}-x & 0 & 0 & x\\
			0 & x & 0 & 0\\
			0 & 0 & x & 0\\
			x & 0 & 0 & \frac{1}{2}-x
		\end{pmatrix},
	\end{align}
	where%
	\begin{equation}
		x=\frac{\gamma}{2\left(  2-2\sqrt{1-\gamma}+\gamma\right)  }.
	\end{equation}
	We then find that%
	\begin{multline}\label{eq-GADC_REE}
		E_{R}(A;B)_{\rho}\\
		=\sum_{i,j=0}^1 r_{i,j}\log_{2}r_{i,j}+1-\frac{\gamma}{2}\log_{2}\left(\frac{\gamma}{2-2\sqrt{1-\gamma}+\gamma}\right)  \\
		+\frac{\gamma-2+2\sqrt{1-\gamma}}{4}\log_{2}\left(  \frac{4-\gamma-4\sqrt{1-\gamma}}{8+\gamma}\right)  ,
	\end{multline}
	which completes the analytic form of the bound in \eqref{eq:approx-tele-sim-bnd}. Note that this formula for $E_R(A;B)_{\rho}$ holds only for $\gamma\in [0, 2(\!\sqrt{2}-1))$; otherwise, $\rho_{AB}^{\gamma}$ is separable, which means that $E_R(A;B)_{\rho}=0$. We also note that for $N=\frac{1}{2}$, which is when the GADC is covariant with respect to the Pauli group and thus $\varepsilon_{\text{cov}}=0$, the bound in \eqref{eq:approx-tele-sim-bnd} reduces to $Q^{\leftrightarrow}(\mathcal{A}_{\gamma,N}),P^{\leftrightarrow}(\mathcal{A}_{\gamma,N})\leq E_R(A;B)_{\rho}$, which is precisely the bound determined in \cite[Theorem~5]{PLOB17} and in \cite[Theorem~12]{WTB17} for the class of teleportation-simulable channels. (Any channel that is covariant with respect to the Pauli group is teleportation-simulable; see, e.g., \cite[Appendix~A]{WTB17}.)
	
	In Fig.~\ref{fig-2Way_QCap_bounds_all}, we plot the bound $Q_{\text{cov}}^{\leftrightarrow,\text{UB}}$ in \eqref{eq:approx-tele-sim-bnd}. While the bound is relatively poor for small values of $N$, for values of $N$ close to $\frac{1}{2}$ we find that it is tighter than the other upper bounds for some values of $\gamma$. Notably, at $N=\frac{1}{2}$, this upper bound is the tightest among the other upper bounds, and by a significant margin as well.

\section{Conclusion}\label{sec:conclusion}

	In this work, we provided an information-theoretic study of the generalized amplitude damping channel (GADC), which is a generalized form of the well-known amplitude damping channel and can be thought of as the qubit analogue of the bosonic thermal channel. We first determined the range of parameters for which the channel is entanglement breaking, as well as the range of parameters for which it is anti-degradable. 
	
	%Determining when a channel is anti-degradable and entanglement breaking is important in order to determine whether the channel can at all be used to communicate reliably in the unassisted and feedback-assisted cases, respectively.
	
	We then established several upper bounds on the classical capacity of the GADC. We used the concepts of approximate covariance and approximate entanglement-breakability \cite{LKDW18} to obtain upper bounds. We compared these upper bounds with known SDP-based upper bounds \cite{WXD18}, for which we proved an analytical formula for the GADC, as well as the known entanglement-assisted classical capacity upper bound \cite{LM07}. 
	
	We also provided several upper bounds on the quantum and private capacities of the GADC. We exploited the two decompositions of the GADC in \eqref{eq-GADC_decomp_spec_1} and \eqref{eq-GADC_decomp_spec_2} in terms of amplitude damping channels in order to obtain data-processing upper bounds, and we used the concepts of approximate degradability and approximate anti-degradability \cite{SSWR17} to obtain further upper bounds. We found that one of the data-processing upper bounds is tighter than the recently obtained upper bound from \cite{RMG18} for all parameter values of the GADC, and that the Rains information upper bound is tighter than the upper bound from \cite{RMG18} for certain parameter regimes.
	
	We also considered the two-way assisted quantum and private capacities of the GADC. We determined upper bounds on these capacities using the squashed entanglement \cite{TGW14a,TGW14b}, the max-Rains information \cite{BW18}, and the max-relative entropy of entanglement \cite{CMH17}. The squashed entanglement upper bounds exploited the decompositions of the GADC in \eqref{eq-GADC_decomp_spec_1} and \eqref{eq-GADC_decomp_spec_2}, as well as a particular choice of squashing channel. This allowed us to obtain upper bounds that are better than the mutual information bound that can be obtained via the identity squashing channel. We also obtained upper bounds using the concept of approximate covariance. Along the way, we also determined an analytic form for both the max-Rains information $R_{\max}$ and the max-relative entropy of entanglement $E_{\max}$ of the GADC, and we found that for the GADC both quantities are equal to each other. In light of the latter result, it is worth exploring whether the equality $R_{\max}(\mathcal{N})=E_{\max}(\mathcal{N})$ holds for all qubit-to-qubit channels $\mathcal{N}$.
	
	Obtaining the communication capacities of the GADC for its entire parameter range remains a challenging open problem. This work has applied many state-of-the-art techniques to obtain upper bounds, and it is clear that obtaining tighter upper bounds, or even to obtain an exact expression for the capacity, will require new techniques. To this end, some directions for future work include: employing a different squashing channel than the one used here to obtain a better upper bound on the two-way assisted quantum and private capacities of the GADC. Another method to reduce the gap between lower and upper bounds for any communication scenario is to look at improving current lower bounds rather than upper bounds, via potential superadditivity effects.

\begin{acknowledgments}

	All authors acknowledge support from the National Science Foundation. Also, SK acknowledges the NSERC PGS-D, and MMW the Office of Naval Research.

\end{acknowledgments}

\appendix

\section{Proof of Proposition~\ref{prop:concavity-Rains}}

\label{proof-prop:concavity-Rains}

The proof is similar in spirit to \cite[Proposition~2]{TWW17}, and in fact implies it for
the relative entropy. Let $\psi_{AA^{\prime}}^{0}$ and $\psi_{AA^{\prime}}%
^{1}$ be pure states and define%
\begin{equation}
\psi_{A^{\prime}}^{\lambda}\equiv\left(  1-\lambda\right)  \psi_{A^{\prime}%
}^{0}+\lambda\psi_{A^{\prime}}^{1},
\end{equation}
for $\lambda\in\left[  0,1\right]  $. A purification of $\psi_{A^{\prime}%
}^{\lambda}$ is given by%
\begin{equation}
\ket{\psi}_{PAA'}^{\lambda}\equiv\sqrt{1-\lambda}|0\rangle_{P}%
|\psi^{0}\rangle_{AA^{\prime}}+\sqrt{\lambda}|1\rangle_{P}|\psi^{1}%
\rangle_{AA^{\prime}}.
\end{equation}
This purification is related to another purification $\phi_{AA^{\prime}%
}^{\lambda}$ by an isometric channel $\mathcal{U}_{A\rightarrow PA}$:
$\psi_{PAA^{\prime}}^{\lambda}=\mathcal{U}_{A\rightarrow PA}(\phi_{AA^{\prime
}}^{\lambda})$. Let $\sigma_{AB}^{\lambda}\in\text{PPT}'(A\!:\!B)$ be the operator such that $R(\mathcal{N}_{A'\to B}(\phi_{AA'}^{\lambda}))\equiv R(A;B)_{\rho^{\lambda}}=D(\mathcal{N}_{A'\to B}(\phi_{AA'}^{\lambda})\Vert\sigma_{AB}^{\lambda})$, where $\rho_{AB}^{\lambda}=\mathcal{N}_{A'\to B}(\phi_{AA'}^{\lambda})$, and define $\xi_{PAB}^{\lambda}=\mathcal{U}_{A\rightarrow PA}(\sigma_{AB}^{\lambda})$. Observe that $\xi_{PAB}^{\lambda}\in\text{PPT}'(PA\!:\!B)$. Let
\begin{equation}
\overline{\Delta}_{P}(\xi_{PAB}^{\lambda})=q|0\rangle\langle0|_{P}\otimes\tau_{AB}%
^{0}+\left(  1-q\right)  |1\rangle\langle1|_{P}\otimes\tau_{AB}^{1},
\end{equation}
where $\overline{\Delta}_{P}$ is a completely dephasing channel, defined as%
\begin{align}
\overline{\Delta}_{P}(\cdot)  & \equiv|0\rangle\langle0|_{P}(\cdot)|0\rangle\langle
0|_{P}+|1\rangle\langle1|_{P}(\cdot)|1\rangle\langle1|_{P},\\
q  & \equiv\operatorname{Tr}[\left(  |0\rangle\langle0|_{P}\otimes
\mathbbm{1}_{AB}\right)  \xi_{PAB}^{\lambda}],\\
\tau_{AB}^{0}  & \equiv\frac{1}{q}\operatorname{Tr}_{P}[\left(  |0\rangle
\langle0|_{P}\otimes \mathbbm{1}_{AB}\right)  \xi_{PAB}^{\lambda}],\\
\tau_{AB}^{1}  & \equiv\frac{1}{1-q}\operatorname{Tr}_{P}[\left(
|1\rangle\langle1|_{P}\otimes \mathbbm{1}_{AB}\right)  \xi_{PAB}^{\lambda}].
\end{align}
Note that the states $\tau_{AB}^{0}$ and $\tau_{AB}^{1}$ are in the set
PPT$^{\prime}(A\!:\!B)$ since $\xi_{PAB}^{\lambda}$ is in $\text{PPT}'(PA\!:\!B)$. Then we
have that%
\begin{align}
& R(\mathcal{N}_{A^{\prime}\rightarrow B}(\phi_{AA^{\prime}}^{\lambda
}))\nonumber\\
& =D(\mathcal{N}_{A^{\prime}\rightarrow B}(\phi_{AA^{\prime}}^{\lambda}%
)\Vert\sigma_{AB}^{\lambda})\\
& =D(\mathcal{N}_{A^{\prime}\rightarrow B}(\psi_{PAA^{\prime}}^{\lambda}%
)\Vert\xi_{PAB}^{\lambda})\\
& \geq D(\overline{\Delta}_{P}(\mathcal{N}_{A^{\prime}\rightarrow B}(\psi_{PAA^{\prime}%
}^{\lambda}))\Vert\overline{\Delta}_P(\xi_{PAB}^{\lambda}))\\
& =D(\mathcal{N}_{A^{\prime}\rightarrow B}(\overline{\Delta}_{P}(\psi_{PAA^{\prime}%
}^{\lambda}))\Vert\overline{\Delta}_{P}(\xi_{PAB}^{\lambda}))\\
& =\left(  1-\lambda\right)  D(\mathcal{N}_{A^{\prime}\rightarrow B}%
(\psi_{AA^{\prime}}^{0})\Vert\tau_{AB}^{0}) +\lambda D(\mathcal{N}_{A^{\prime}\rightarrow B}(\psi_{AA^{\prime}}%
^{1})\Vert\tau_{AB}^{1})\notag \\
& \qquad +D(\left\{  1-\lambda,\lambda\right\}  \Vert\left\{
q,1-q\right\}  )\\
& \geq\left(  1-\lambda\right)  D(\mathcal{N}_{A^{\prime}\rightarrow B}%
(\psi_{AA^{\prime}}^{0})\Vert\tau_{AB}^{0})\nonumber\\
& \qquad +\lambda D(\mathcal{N}_{A^{\prime}\rightarrow B}(\psi_{AA^{\prime}}%
^{1})\Vert\tau_{AB}^{1})\\
& \geq\left(  1-\lambda\right)  R(\mathcal{N}_{A^{\prime}\rightarrow B}%
(\psi_{AA^{\prime}}^{0}))+\lambda R(\mathcal{N}_{A^{\prime}\rightarrow B}%
(\psi_{AA^{\prime}}^{1})).
\end{align}
The second equality follows from the isometric invariance of the relative
entropy. The first inequality follows from the data processing property of
relative entropy. The fourth equality follows from the identity
\cite[Exercise~11.8.8]{W17}%
\begin{equation}
D(\rho_{XB}\Vert\sigma_{XB})=\sum_{x}p(x)D(\rho_{B}^{x}\Vert\sigma_{B}%
^{x})+D(p\Vert r),
\end{equation}
holding for classical-quantum states%
\begin{align}
\rho_{XB}  & =\sum_{x}p(x)|x\rangle\langle x|_{X}\otimes\rho_{B}^{x},\\
\sigma_{XB}  & =\sum_{x}r(x)|x\rangle\langle x|_{X}\otimes\sigma_{B}^{x}.
\end{align}
Note that $D(p\Vert r)$ denotes the classical relative entropy of the probability distributions $p$ and $r$. For binary probability distributions such that $p(0)=1-\lambda$, $p(1)=\lambda$, $r(0)=1-q$, $r(1)=q$, we let $D(\left\{1-\lambda,\lambda\right\}\Vert\left\{1-q,q\right\})\equiv D(p\Vert r)$. The second inequality follows from the non-negativity of the relative entropy. The final inequality follows because the Rains relative entropy involves a
minimization over all states in PPT$^{\prime}(A:B)$.

A proof for the concavity statement for the relative entropy of entanglement
$E_{R}(A;B)_{\omega}$ is identical, except replacing PPT$^{\prime}(A\!:\!B)$ with
SEP$(A\!:\!B)$.

\section{Proof of Lemma~\ref{lem-anti_degrad_chan_bd}}\label{app-GADC_anti_degrade}

Let $\E^*$, $E_0$, and $E_1$ be as defined in the statement of Lemma~\ref{lem-anti_degrad_chan_bd}.
	Let $V_{A\to BE}^{\gamma,N}$ be the isometric extension of the GADC defined in \eqref{eq-GADC_iso_ext}, and define the pure state
	\begin{align}
		\ket{\psi}_{ABE}^{\gamma,N}&\equiv (\mathbbm{1}_A\otimes V_{A'\to BE}^{\gamma,N})\ket{\Phi^+}_{AA'}\\
		&=\frac{1}{\sqrt{2}}\left(\sqrt{1-N}\ket{0,0,0}_{ABE}+\sqrt{N(1-\gamma)}\ket{0,0,2}_{ABE}\right.\nonumber\\
		&\quad \left.+\sqrt{N\gamma}\ket{0,1,3}_{ABE}+\sqrt{(1-\gamma)(1-N)}\ket{1,1,0}_{ABE}\right.\nonumber\\
		&\quad \left.+\sqrt{N}\ket{1,1,2}_{ABE}+\sqrt{\gamma(1-N)}\ket{1,0,1}_{ABE}\right)
	\end{align}
	Then, $\rho_{AB}^{\gamma,N}\equiv\Tr_E[\ket{\psi}\bra{\psi}_{ABE}^{\gamma,N}]$ is the Choi state of the GADC $\mathcal{A}_{\gamma,N}$, while $\rho_{AE}^{\gamma,N}\equiv \Tr_B[\ket{\psi}\bra{\psi}_{ABE}^{\gamma,N}]$ is the Choi state of the complementary channel $\mathcal{A}_{\gamma,N}^c$ as defined in \eqref{eq-GADC_comp}. In order to prove that $\mathcal{E}_N^*\circ\mathcal{A}_{\gamma,N}^c=\mathcal{A}_{1-\gamma,N}$, it suffices to show that $(\mathcal{E}_N^*)_{E\to B'}(\rho_{AE}^{\gamma,N})=\rho_{AB}^{1-\gamma,N}$. In other words, it suffices to show that the Choi state of the complementary channel $\mathcal{A}_{\gamma,N}^c$ is mapped to the Choi state of the channel $\mathcal{A}_{1-\gamma,N}$.
	
	We have
	\begin{equation}
		\begin{aligned}
		\rho_{AB}^{\gamma,N}&=\frac{1}{2}\left((1-\gamma N)\ket{0,0}\bra{0,0}_{AB}+\sqrt{1-\gamma}\ket{0,0}\bra{1,1}_{AB}\right.\\
		&\qquad\left.+\sqrt{1-\gamma}\ket{1,1}\bra{0,0}_{AB}+\gamma N\ket{0,1}\bra{0,1}_{AB}\right.\\
		&\qquad\left.+\gamma(1-N)\ket{1,0}\bra{1,0}_{AB}\right.\\
		&\qquad\left.+(1-\gamma(1-N))\ket{1,1}\bra{1,1}_{AB}\right).
		\end{aligned}
	\end{equation}
	Let an isometric extension of the channel $\mathcal{E}_N^*$ be
	\begin{equation}
		W^{\mathcal{E}_N^*}_{E\to B'E'}=E_0\otimes\ket{0}_{E'}+E_1\otimes\ket{1}_{E'}.
	\end{equation}
	Then,
	\begin{equation}
		\begin{aligned}
		\ket{\phi}_{ABB'E'}^{\gamma,N}&\equiv W^{\mathcal{E}_N^*}_{E\to B'E'}\ket{\psi}_{ABE}^{\gamma,N}\\
		&=\frac{1}{\sqrt{2}}\left(\sqrt{1-N}\ket{0,0,0,0}_{ABB'E'}\right.\\
		&\qquad\left.+\sqrt{N(1-\gamma)}\ket{0,0,1,1}_{ABB'E}\right.\\
		&\qquad\left.+\sqrt{N\gamma}\ket{0,1,0,1}_{ABB'E'}\right.\\
		&\qquad\left.+\sqrt{(1-\gamma)(1-N)}\ket{1,1,0,0}_{ABB'E'}\right.\\
		&\qquad\left.+\sqrt{N}\ket{1,1,1,1}_{ABB'E'}\right.\\
		&\qquad\left.+\sqrt{\gamma(1-N)}\ket{1,0,1,0}_{ABB'E'}\right).
		\end{aligned}
	\end{equation}
	Then,
	\begin{equation}
		\begin{aligned}
		&\Tr_{BE'}[\ket{\phi}\bra{\phi}_{ABB'E'}^{\gamma,N}]=(\mathcal{E}_N^*)_{E\to B'}(\rho_{AE}^{\gamma,N})\\
		&=\frac{1}{2}\left((1-(1-\gamma)N)\ket{0,0}\bra{0,0}_{AB'}+\sqrt{\gamma}\ket{0,0}\bra{1,1}_{AB'}\right.\\
		&\qquad\left.+\sqrt{\gamma}\ket{1,1}\bra{0,0}_{AB'}+N(1-\gamma)\ket{0,1}\bra{0,1}_{AB'}\right.\\
		&\qquad\left.+(1-\gamma)(1-N)\ket{1,0}\bra{1,0}_{AB'}\right.\\
		&\qquad\left.+(N+\gamma(1-N))\ket{1,1}\bra{1,1}_{AB'}\right)\\
		&=\rho_{AB'}^{1-\gamma,N},
		\end{aligned}
	\end{equation}
	as required.

\section{Proof of Proposition \ref{prop-C_beta}}\label{app-C_beta}

	We start by recalling the convex decomposition of the GADC as stated in \eqref{eq-GADC_decomp_convex}:
	\begin{equation}
		\mathcal{A}_{\gamma,N}=(1-N)\mathcal{A}_{\gamma,0}+N\mathcal{A}_{\gamma,1}
	\end{equation}
	for all $\gamma,N\in[0,1]$. We also recall from \eqref{eq-GADC_N_symmetry} that
	\begin{equation}\label{eq-GADC_N_symmetry_2}
		\mathcal{A}_{\gamma,1}(\rho)=\sigma_x\mathcal{A}_{\gamma,0}(\sigma_x\rho\sigma_x)\sigma_x
	\end{equation}
	for all $\gamma\in[0,1]$. Next, note that it follows from \eqref{eq-C_beta_primal} that the quantity $\beta(\mathcal{N})$ in the definition of $C_\beta(\mathcal{N})$ is convex in the channel $\N$: for any two channels $\mathcal{N}_1$ and $\mathcal{N}_2$ and any $\lambda\in[0,1]$,
	\begin{equation}
		\beta(\lambda\mathcal{N}_1+(1-\lambda)\mathcal{N}_2)\leq \lambda\beta(\mathcal{N}_1)+(1-\lambda)\beta(\mathcal{N}_2).
	\end{equation}
	Furthermore, $\beta(\mathcal{N})$ is invariant under pre- and post-processing of the channel $\mathcal{N}$ by unitaries. Therefore,
	\begin{align}
		C_\beta(\mathcal{A}_{\gamma,N})&=C_{\beta}((1-N)\mathcal{A}_{\gamma,0}+N\mathcal{A}_{\gamma,1})\\
		&\leq (1-N)C_{\beta}(\mathcal{A}_{\gamma,0})+NC_{\beta}(\mathcal{A}_{\gamma,1})\\
		&=C_{\beta}(\mathcal{A}_{\gamma,0}),
	\end{align}
	where to obtain the last line we used \eqref{eq-GADC_N_symmetry_2} and the invariance of $C_{\beta}$ under pre- and post-processing of the given channel by unitaries to find that $C_{\beta}(\mathcal{A}_{\gamma,1}) = C_{\beta}(\mathcal{A}_{\gamma,0})$.
	
	Given the facts above, our proof strategy is as follows. First, we provide an upper bound of $1+\sqrt{1-\gamma}$ for the SDP in \eqref{eq-C_beta_primal} in the case $N=0$, i.e., for the amplitude damping channel, which establishes that $C_{\beta}(\mathcal{A}_{\gamma,N})\leq\log_2(1+\sqrt{1-\gamma})$. Next, we consider the SDP dual to the one in \eqref{eq-C_beta_primal} and prove that $1+\sqrt{1-\gamma}$ is a lower bound on it. By strong duality, it follows that $C_{\beta}(\mathcal{A}_{\gamma,N})=\log_2(1+\sqrt{1-\gamma})$ for all $\gamma,N\in\left[0,1\right]$.
	
	We first recall from \eqref{eq-C_beta_primal} that
	\begin{equation}\label{eq:c_beta_quantity}
		\beta(\mathcal{N})=\left\{\begin{array}{l l}\text{min.} & \Tr[S_B]\\
			\text{subject to} & -R_{AB}\leq \left(\Gamma_{AB}^{\mathcal{N}}\right)^{\t_B}\leq R_{AB},\\
			& -\mathbbm{1}_A\otimes S_B\leq R_{AB}^{\t_B}\leq\mathbbm{1}_A\otimes S_B,
		\end{array}\right.
	\end{equation}
	where the optimization is with respect to the Hermitian operators $S_{B}$ and $R_{AB}$. Note that it follows from the above constraints that $S_{B},R_{AB}\geq 0$.

	As a matrix in the standard basis, the Choi matrix for the amplitude damping channel is (see \eqref{eq-GADC_Choi_state}) 
	\begin{equation}
		\Gamma_{AB}^{\gamma,0}=2\rho_{AB}^{\gamma,0}=%
		\begin{pmatrix}
			1 & 0 & 0 & \sqrt{1-\gamma}\\
			0 & 0 & 0 & 0\\
			0 & 0 & \gamma & 0\\
			\sqrt{1-\gamma} & 0 & 0 & 1-\gamma
		\end{pmatrix},
	\end{equation}
	so that the partial transpose is given by%
	\begin{equation}
		(\Gamma_{AB}^{\gamma,0})^{\t_B}=%
		\begin{pmatrix}
			1 & 0 & 0 & 0\\
			0 & 0 & \sqrt{1-\gamma} & 0\\
			0 & \sqrt{1-\gamma} & \gamma & 0\\
			0 & 0 & 0 & 1-\gamma
		\end{pmatrix}.
	\end{equation}
	Let us choose the operators $R_{AB}$ and $S_B$ to be
	\begin{align}
		R_{AB} &  =%
		\begin{pmatrix}
			1 & 0 & 0 & 0\\
			0 & 1-\gamma+a & a & 0\\
			0 & a & 1+a & 0\\
			0 & 0 & 0 & 1-\gamma
		\end{pmatrix},\\
		S_{B} &  =%
		\begin{pmatrix}
			1+a & 0\\
			0 & 1-\gamma+a
		\end{pmatrix},
	\end{align}
	where $a=\frac{1}{2}(\sqrt{1-\gamma}-\left(  1-\gamma\right))$. We first check that the constraint $-R_{AB}\leq \left(\Gamma_{AB}^{\gamma,0}\right)^{\t_B}\leq R_{AB}$ is satisfied. Consider that%
	\begin{equation}
		R_{AB}-\left(\Gamma_{AB}^{\gamma,0}\right)^{\t_B}=%
		\begin{pmatrix}
			0 & 0 & 0 & 0\\
			0 & b & -b & 0\\
			0 & -b & b & 0\\
			0 & 0 & 0 & 0
		\end{pmatrix},
	\end{equation}
	where%
	\begin{equation}
		b=\frac{1}{2}\left(\sqrt{1-\gamma}+\left(  1-\gamma\right)\right).
	\end{equation}
	Due to the inequality $b\geq 0$ for all $\gamma\in\left[0,1\right]  $ and the fact that $\begin{pmatrix}1 & -1\\-1 & 1 \end{pmatrix} \geq 0$, it follows that $R_{AB}-\left(\Gamma_{AB}^{\gamma,0}\right)^{\t_B}\geq 0$. We also have that%
	\begin{equation}\label{eq-C_beta_pf}
		R_{AB}+\left(\Gamma_{AB}^{\gamma,0}\right)^{\t_B}=%
		\begin{pmatrix}
			2 & 0 & 0 & 0\\
			0 & b & \sqrt{1-\gamma}+a & 0\\
			0 & \sqrt{1-\gamma}+a & 1+\gamma+a & 0\\
			0 & 0 & 0 & 2\left(  1-\gamma\right)
		\end{pmatrix}.
	\end{equation}
	To determine whether $R_{AB}+\left(\Gamma_{AB}^{\gamma,0}\right)^{\t_B}\geq0$, it is clear that we can focus on the inner $2\times 2$ matrix. For the cases $\gamma=0$ or $\gamma=1$, one can directly confirm the condition $R_{AB}+\left(\Gamma_{AB}^{\gamma,0}\right)^{\t_B}\geq 0$. A general $2\times 2$ matrix is positive definite if and only its trace and determinant are strictly positive. The trace of the inner $2\times 2$ matrix in \eqref{eq-C_beta_pf} is%
	\begin{equation}
		\sqrt{1-\gamma}+1+\gamma>0
	\end{equation}
	for all $\gamma\in(0,1)$, and its determinant is%
	\begin{equation}
		\left(  2-\gamma\right)  \left(  \sqrt{1-\gamma}-(1-\gamma)\right)>0
	\end{equation}
	for all $\gamma\in(0,1)$. It thus follows that $R_{AB}+\left(\Gamma_{AB}^{\gamma,0}\right)^{\t_B}>0$ for all $\gamma \in\left(  0,1\right)  $.

	We now check the conditions $-\mathbbm{1}_{A}\otimes S_{B}\leq R_{AB}^{\t_B}\leq \mathbbm{1}_{A}\otimes S_{B}$. Consider that%
	\begin{align}
		\mathbbm{1}_{A}\otimes S_{B} &  =%
		\begin{pmatrix}
			1+a & 0 & 0 & 0\\
			0 & 1-\gamma+a & 0 & 0\\
			0 & 0 & 1+a & 0\\
			0 & 0 & 0 & 1-\gamma+a
		\end{pmatrix},\\
		R_{AB}^{\t_B} &  =%
		\begin{pmatrix}
			1 & 0 & 0 & a\\
			0 & 1-\gamma+a & 0 & 0\\
			0 & 0 & 1+a & 0\\
			a & 0 & 0 & 1-\gamma
		\end{pmatrix}.
	\end{align}
	Then%
	\begin{equation}
		\mathbbm{1}_{R}\otimes S_{B}-R_{AB}^{\t_B}=%
		\begin{pmatrix}
			a & 0 & 0 & -a\\
			0 & 0 & 0 & 0\\
			0 & 0 & 0 & 0\\
			-a & 0 & 0 & a
		\end{pmatrix}.
	\end{equation}
	Due to the fact that $a\geq 0$ for all $\gamma\in\left[0,1\right]  $, it follows that $\mathbbm{1}_{A}\otimes S_{B}-R_{AB}^{\t_B}\geq0$. We also need to consider%
	\begin{multline}
		\mathbbm{1}_{R}\otimes S_{B}+R_{AB}^{\t_B}\\=%
		\begin{pmatrix}
			a+2 & 0 & 0 & a\\
			0 & 2a+2\left(  1-\gamma\right)   & 0 & 0\\
			0 & 0 & 2a+2 & 0\\
			a & 0 & 0 & a+2\left(  1-\gamma\right)
		\end{pmatrix}.
	\end{multline}
	We have that $2a+2\left(  1-\gamma\right)  \geq0$ and $2a+2\geq0$ for all $\gamma\in\left[  0,1\right]  $. Thus, to determine whether $\mathbbm{1}_{A}\otimes S_{B}+R_{AB}^{\t_B}\geq 0$, it is clear that we can focus on the ``corners'' $2\times 2$ submatrix:%
	\begin{equation}%
		\begin{pmatrix}
			a+2 & a\\
			a & a+2\left(  1-\gamma\right)
		\end{pmatrix}.
	\end{equation}
	For $\gamma=0$ or $\gamma=1$, one can directly confirm that this corners submatrix is positive semi-definite. For $\gamma\in\left(0,1\right)$, the trace of the corners submatrix is%
	\begin{equation}
		3-\gamma+\sqrt{1-\gamma}>0,
	\end{equation}
	and its determinant is given by%
	\begin{equation}
		\left(  1-\gamma\right)  \left(  2+\gamma\right)  +\left(  2-\gamma\right) \sqrt{1-\gamma}>0
	\end{equation}
	for all $\gamma\in(0,1)$. It thus follows that $\mathbbm{1}_{A}\otimes S_{B}+R_{AB}^{\t_B}>0$ for all $\gamma \in\left(  0,1\right)  $. Thus, the proposed operators $R_{AB}$ and $S_{B}$ satisfy the given constraints in \eqref{eq:c_beta_quantity}, and we
conclude that%
	\begin{align}
		\beta(\mathcal{A}_{\gamma,0}) &  \leq\Tr[S_{B}]\\
		&  =1+a+1-\gamma+a\\
		&  =2+2a-\gamma\\
		&  =2+2\frac{1}{2}\left(\sqrt{1-\gamma}-\left(  1-\gamma\right)\right)-\gamma\\
		&  =1+\sqrt{1-\gamma}.
	\end{align}
	By the arguments presented at the beginning of the proof, we thus conclude that%
	\begin{equation}\label{eq:c_beta-GADC-lower-bnd}
		C_{\beta}(\mathcal{A}_{\gamma,N})\leq\log_{2}(1+\sqrt{1-\gamma})
	\end{equation}
	for all $\gamma,N\in\left[  0,1\right]  $.

	The SDP dual to the one in \eqref{eq:c_beta_quantity} is given by%
	\begin{equation}\label{eq:c-beta-dual}
		\hat{\beta}(\mathcal{N})\equiv\left\{\begin{array}{l l} \text{max.} & \Tr[\Gamma_{AB}^{\mathcal{N}}(K_{AB}-M_{AB})^{\t_B}]\\
		\text{subject to} & K_{AB}+M_{AB}\leq (E_{AB}-F_{AB})^{\t_B},\\
		& E_B+F_B\leq\mathbbm{1}_B,\\
		& K_{AB},M_{AB},E_{AB},F_{AB}\geq 0. \end{array}\right.
	\end{equation}
	From \eqref{eq-GADC_Choi_state} we have that the Choi matrix for the GADC is
	\begin{equation}
		\Gamma_{AB}^{\gamma,N}=%
		\begin{pmatrix}
			1-\gamma N & 0 & 0 & \sqrt{1-\gamma}\\
			0 & \gamma N & 0 & 0\\
			0 & 0 & \gamma\left(  1-N\right)   & 0\\
			\sqrt{1-\gamma} & 0 & 0 & 1-\gamma\left(  1-N\right)
		\end{pmatrix}.
	\end{equation}
	Let us now make the following choice for the operators $K_{AB},M_{AB},E_{AB},F_{AB}$:
	\begin{align}
		K_{AB} &  =\frac{1}{2}%
		\begin{pmatrix}
			1 & 0 & 0 & 0\\
			0 & 1 & 1 & 0\\
			0 & 1 & 1 & 0\\
			0 & 0 & 0 & 1
		\end{pmatrix},\qquad M_{AB}=0,\\
		E_{AB} &  =\frac{1}{2}%
		\begin{pmatrix}
			1 & 0 & 0 & 1\\
			0 & 1 & 0 & 0\\
			0 & 0 & 1 & 0\\
			1 & 0 & 0 & 1
		\end{pmatrix},\qquad F_{AB}=0.
	\end{align}
	We find that $K_{AB}=E_{AB}^{\t_B}$ and $E_{B}=\mathbbm{1}_{B}$, so that the constraints in \eqref{eq:c-beta-dual} are satisfied and%
	\begin{align}
		\Tr[\Gamma_{AB}^{\gamma,N}(K_{AB}-M_{AB})^{\t_B}] & =\Tr[\Gamma_{AB}^{\gamma,N}K_{AB}^{\t_B}]\\
		&  =\Tr[\Gamma_{AB}^{\gamma,N}E_{AB}].
	\end{align}
	We find that%
	\begin{widetext}
	\begin{align}
		\Gamma_{AB}^{\gamma,N}E_{AB}=\frac{1}{2}%
		\begin{pmatrix}
			\sqrt{1-\gamma}-N\gamma+1 & 0 & 0 & \sqrt{1-\gamma}-N\gamma+1\\
			0 & N\gamma & 0 & 0\\
			0 & 0 & -\gamma\left(  N-1\right)   & 0\\
			\sqrt{1-\gamma}+\gamma\left(  N-1\right)  +1 & 0 & 0 & \sqrt{1-\gamma}+\gamma\left(  N-1\right)  +1
		\end{pmatrix},
	\end{align}
	\end{widetext}
	so that%
	\begin{equation}
		\Tr[\Gamma_{AB}^{\gamma,N}E_{AB}]=1+\sqrt{1-\gamma}.
	\end{equation}
	This implies that $C_{\hat{\beta}}(\mathcal{A}_{\gamma,N})\equiv\log_2\hat{\beta}(\mathcal{A}_{\gamma,N})\geq\log_{2}(1+\sqrt{1-\gamma})$. By strong duality, it holds that $C_{\beta}(\mathcal{A}_{\gamma,N})=C_{\hat{\beta}}(\mathcal{A}_{\gamma,N})$. Therefore,
	\begin{equation}\label{eq:c_beta-GADC-upper-bnd}
		C_{\beta}(\mathcal{A}_{\gamma,N})\geq\log_{2}(1+\sqrt{1-\gamma}),
	\end{equation}
	for all $\gamma, N \in [0,1]$. Putting together \eqref{eq:c_beta-GADC-lower-bnd} and \eqref{eq:c_beta-GADC-upper-bnd}, we obtain $C_{\beta}(\mathcal{A}_{\gamma,N})=\log_2(1+\sqrt{1-\gamma})$, as required.
	
	Let us now show that $C_{\zeta}(\mathcal{A}_{\gamma,N})=\log_2(1+\sqrt{1-\gamma})$ for all $\gamma,N\in[0,1]$. Recall from \eqref{eq-C_zeta_primal} that
	\begin{equation}\label{eq-C_zeta_primal_2}
		\zeta(\mathcal{N})=\left\{\begin{array}{l l} \text{min.} & \Tr[S_B] \\ 
			\text{subject to} & V_{AB}\geq\Gamma_{AB}^{\mathcal{N}},\\
			& -\mathbbm{1}_A\otimes S_B\leq V_{AB}^{\t_B}\leq \mathbbm{1}_A\otimes S_B. \end{array}\right.
	\end{equation}
	
	The inequality $C_{\zeta}(\mathcal{A}_{\gamma,0})\leq\log_2(1+\sqrt{1-\gamma})$ has been proven in \cite[Theorem~14]{WXD18}. By inspecting the SDP in \eqref{eq-C_zeta_primal_2}, it is clear that the quantity $\zeta(\mathcal{N})$ is convex in the channel $\mathcal{N}$. Furthermore, it is invariant under unitary pre- and post-processing. Thus, proceeding in a way similar to the proof of the upper bound $C_{\beta}(\mathcal{A}_{\gamma,N})\leq\log_2(1+\sqrt{1-\gamma})$ above, we find that
	\begin{align}
		\zeta(\mathcal{A}_{\gamma,N})&=\zeta((1-N)\mathcal{A}_{\gamma,0}+N\mathcal{A}_{\gamma,1})\\
		&\leq (1-N)\zeta(\mathcal{A}_{\gamma,0})+N\zeta(\mathcal{A}_{\gamma,1})\\
		&=(1-N)\zeta(\mathcal{A}_{\gamma,0})+N\zeta(\mathcal{A}_{\gamma,0})\\
		&=\zeta(\mathcal{A}_{\gamma,0})\\
		&\leq 1+\sqrt{1-\gamma},
	\end{align}
	from which we conclude that
	\begin{equation}\label{eq-C_zeta_pf}
		C_{\zeta}(\mathcal{A}_{\gamma,N})\leq\log_2(1+\sqrt{1-\gamma})
	\end{equation}
	for all $\gamma,N\in[0,1]$.
	
	To arrive at the opposite inequality, consider that the SDP dual to the one in \eqref{eq-C_zeta_primal_2} is given by
	\begin{equation}
		\hat{\zeta}(\mathcal{N})\equiv\left\{\begin{array}{l l} \text{max.} & \Tr[K_{AB}\Gamma_{AB}^{\mathcal{N}}] \\
		\text{subject to} & \Tr_A[E_{AB}+F_{AB}]\leq\mathbbm{1}_B,\\
		& K_{AB}\leq (E_{AB}-F_{AB})^{\t_B},\\
		& K_{AB},E_{AB},F_{AB}\geq 0,\end{array}\right.
	\end{equation}
	where the optimization is with respect to the operators $K_{AB},E_{AB},F_{AB}$. Now, for the GADC, let us make the following choice for $K_{AB}$, $E_{AB}$, $F_{AB}$:
	\begin{align}
		K_{AB}&=\frac{1}{2}\begin{pmatrix} 1 & 0 & 0 & 1\\ 0 & 1 & 0 & 0 \\ 0 & 0 & 1 & 0 \\ 1 & 0 & 0 & 1 \end{pmatrix},\\
		E_{AB}&=\frac{1}{2}\begin{pmatrix} 1 & 0 & 0 & 0 \\ 0 & 1 & 1 & 0 \\ 0 & 1 & 1 & 0 \\ 0 & 0 & 0 & 1 \end{pmatrix},\\
		F_{AB}&=0.
	\end{align}
	Then, we find that the conditions $\Tr_A[E_{AB}+F_{AB}]\leq\mathbbm{1}_B$ and $K_{AB}\leq (E_{AB}-F_{AB})^{\t_B}$ are satisfied with equality. Now,
	\begin{widetext}
	\begin{equation}
		K_{AB}\Gamma_{AB}^{\gamma,N}=\frac{1}{2}\begin{pmatrix} \sqrt{1-\gamma}-N\gamma+1 & 0 & 0 & \sqrt{1-\gamma}-\gamma(1-N)+1 \\ 0 & N\gamma & 0 & 0 \\ 0 & 0 & \gamma(1-N) & 0 \\ \sqrt{1-\gamma}-N\gamma+1 & 0 & 0 & \sqrt{1-\gamma}-\gamma(1-N)+1 \end{pmatrix},
	\end{equation}
	\end{widetext}
	so that taking the trace yields
	\begin{equation}
		\Tr[K_{AB}\Gamma_{AB}^{\gamma,N}]=1+\sqrt{1-\gamma}.
	\end{equation}
	We thus conclude that
	\begin{align}
		C_{\hat{\zeta}}(\mathcal{A}_{\gamma,N})&\equiv \log_2\hat{\zeta}(\mathcal{A}_{\gamma,N})\\
		&\geq \log_2(1+\sqrt{1-\gamma}).
	\end{align}
	By strong duality, it holds that $C_{\zeta}(\mathcal{A}_{\gamma,N})=C_{\hat{\zeta}}(\mathcal{A}_{\gamma,N})$ for all $\gamma,N\in[0,1]$. Therefore, we have that $C_{\zeta}(\mathcal{A}_{\gamma,N})\geq \log_2(1+\sqrt{1-\gamma})$, and combining this with \eqref{eq-C_zeta_pf} means that $C_{\zeta}(\mathcal{A}_{\gamma,N})=\log_2(1+\sqrt{1-\gamma})$, as required.

\section{Covariance parameter for the GADC}\label{app-GADC_cov_parameter}

	Using the definition of the diamond norm in \eqref{eq-diamond_norm}, we can write the quantity $\varepsilon_{\text{cov}}(\mathcal{A}_{\gamma,N})$ as
	\begin{equation}
		\varepsilon_{\text{cov}}(\mathcal{A}_{\gamma,N})=\frac{1}{2}\max_{\psi_{RA}}\Norm{(\mathcal{A}_{\gamma,N}-\mathcal{A}_{\gamma,\frac{1}{2}})(\psi_{RA})}_1.
	\end{equation}
	
	We first show that the maximum is achieved by taking $\ket{\psi}_{RA}$ to be the maximally entangled state, i.e., taking $\ket{\psi}_{RA}=\ket{\Phi^+}_{RA}=\frac{1}{\sqrt{2}}(\ket{0,0}_{RA}+\ket{1,1}_{RA})$. We do this by making use of \cite[Lemma II.3]{LKDW18}. Let $\ket{\psi}_{RA}$ be an arbitrary pure state, and let $\rho_A\coloneqq\Tr_R[\psi_{RA}]$. We take the group $G=\mathbb{Z}_2\times\mathbb{Z}_2$ and the Pauli operators $\{\mathbbm{1},\sigma_x,\sigma_y,\sigma_z\}$ and note that
	\begin{equation}
		\overline{\rho}_A\coloneqq\frac{1}{4}(\rho_A+\sigma_x\rho_A\sigma_x+\sigma_y\rho_A\sigma_y+\sigma_z\rho_A\sigma_z)=\frac{\mathbbm{1}_A}{2}.
	\end{equation}
	Due to this fact, one purification of $\overline{\rho}$ is the maximally entangled state $\ket{\Phi^+}_{RA}$. Therefore, by applying \cite[Lemma II.3]{LKDW18} (with the generalized divergence therein taken to be the trace distance), we obtain
	\begin{align}
		&\Norm{\mathcal{A}_{\gamma,N}(\Phi_{RA}^+)-\mathcal{A}_{\gamma,\frac{1}{2}}(\Phi_{RA}^+)}_{1}\nonumber\\
		&\quad \geq \left\lVert\frac{1}{4}\sum_{g\in G}\ket{g}\bra{g}_P\otimes\mathcal{A}_{\gamma,N}^g(\psi_{RA})\right.\nonumber\\
		&\quad\qquad\left.-\frac{1}{4}\sum_{g\in G}\ket{g}\bra{g}_P\otimes\mathcal{A}_{\gamma,\frac{1}{2}}^g(\psi_{RA})\right\rVert_1,
	\end{align}
	where $\mathcal{A}_{\gamma,N}^g\coloneqq \mathcal{S}_g\circ\mathcal{A}_{\gamma,N}\circ\mathcal{S}_g$, with $\mathcal{S}_g(\cdot)=S_g(\cdot)S_g$ and $S_g\in\{\mathbbm{1},\sigma_x,\sigma_y,\sigma_z\}$. Then, recalling that
	\begin{align}
		\sigma_x\mathcal{A}_{\gamma,\frac{1}{2}}(\sigma_x(\cdot)\sigma_x)\sigma_x&=\mathcal{A}_{\gamma,\frac{1}{2}}(\cdot),\\
		\sigma_z\mathcal{A}_{\gamma,\frac{1}{2}}(\sigma_z(\cdot)\sigma_z)\sigma_z&=\mathcal{A}_{\gamma,\frac{1}{2}}(\cdot),\\
		\Rightarrow \sigma_y\mathcal{A}_{\gamma,\frac{1}{2}}(\sigma_y(\cdot)\sigma_y)\sigma_y&=\mathcal{A}_{\gamma,\frac{1}{2}}(\cdot),
	\end{align}
	we get that $\mathcal{A}_{\gamma,\frac{1}{2}}^g=\mathcal{A}_{\gamma,\frac{1}{2}}$ for all $g\in G$. Therefore,
	\begin{align}
		&\Norm{\mathcal{A}_{\gamma,N}(\Phi_{RA}^+)-\mathcal{A}_{\gamma,\frac{1}{2}}(\Phi_{RA}^+)}_{1}\\
		&\quad\geq \Norm{\frac{1}{4}\sum_{g\in G}\ket{g}\bra{g}_P\otimes (\mathcal{A}_{\gamma,N}^g-\mathcal{A}_{\gamma,\frac{1}{2}})(\psi_{RA})}_1\label{eq-GADC_cov_param_pf1}\\
		&\quad =\frac{1}{4}\sum_{g\in G}\Norm{(\mathcal{A}_{\gamma,N}^g-\mathcal{A}_{\gamma,\frac{1}{2}})(\psi_{RA})}_1,
	\end{align}
	where to obtain the last line we used the fact that all of the operators in the sum in \eqref{eq-GADC_cov_param_pf1} are supported on orthogonal spaces. Then, using \eqref{eq-GADC_Z_covariant} and \eqref{eq-GADC_N_symmetry}, which together imply that $\sigma_y\mathcal{A}_{\gamma,N}(\sigma_y(\cdot)\sigma_y)\sigma_y=\mathcal{A}_{\gamma,1-N}(\cdot)$, we get
	\begin{align}
		&\Norm{\mathcal{A}_{\gamma,N}(\Phi_{RA}^+)-\mathcal{A}_{\gamma,\frac{1}{2}}(\Phi_{RA}^+)}_{1}\\
		&\quad\geq \frac{1}{2}\Norm{(\mathcal{A}_{\gamma,1-N}-\mathcal{A}_{\gamma,\frac{1}{2}})(\psi_{RA})}_1\\
		&\qquad + \frac{1}{2}\Norm{(\mathcal{A}_{\gamma,N}-\mathcal{A}_{\gamma,\frac{1}{2}})(\psi_{RA})}_1.
	\end{align}
	Next, we use the fact that $\mathcal{A}_{\gamma,N}=(1-N)\mathcal{A}_{\gamma,0}+N\mathcal{A}_{\gamma,1}$ to get that
	\begin{align}
		& \Norm{(\mathcal{A}_{\gamma,N}-\mathcal{A}_{\gamma,\frac{1}{2}})(\psi_{RA})}_1\\
		&\quad = \Norm{\left(\left(\frac{1}{2}-N\right)\mathcal{A}_{\gamma,0}-\left(N-\frac{1}{2}\right)\mathcal{A}_{\gamma,1}\right)(\psi_{RA})}_1\\
		&\quad = \left|N-\frac{1}{2}\right|\Norm{(\mathcal{A}_{\gamma,0}-\mathcal{A}_{\gamma,1})(\psi_{RA})}_1,\label{eq-GADC_cov_param_pf3}
	\end{align}
	and
	\begin{align}
		& \Norm{(\mathcal{A}_{\gamma,1-N}-\mathcal{A}_{\gamma,\frac{1}{2}})(\psi_{RA})}_1\\
		&\quad = \Norm{\left(\left(N-\frac{1}{2}\right)\mathcal{A}_{\gamma,0}-\left(\frac{1}{2}-N\right)\mathcal{A}_{\gamma,1}\right)(\psi_{RA})}_1\\
		&\quad = \left|N-\frac{1}{2}\right|\Norm{(\mathcal{A}_{\gamma,0}-\mathcal{A}_{\gamma,1})(\psi_{RA})}_1\\
		&\quad =\Norm{(\mathcal{A}_{\gamma,N}-\mathcal{A}_{\gamma,\frac{1}{2}})(\psi_{RA})}_1
	\end{align}
	Therefore,
	\begin{align}
		&\Norm{\mathcal{A}_{\gamma,N}(\Phi_{RA}^+)-\mathcal{A}_{\gamma,\frac{1}{2}}(\Phi_{RA}^+)}_{1}\\
		&\quad \geq \Norm{(\mathcal{A}_{\gamma,N}-\mathcal{A}_{\gamma,\frac{1}{2}})(\psi_{RA})}_1
	\end{align}
	for all pure states $\psi_{RA}$, which implies that
	\begin{equation}
		\max_{\psi_{RA}}\Norm{(\mathcal{A}_{\gamma,N}-\mathcal{A}_{\gamma,\frac{1}{2}})(\psi_{RA})}_1\leq\Norm{\mathcal{A}_{\gamma,N}(\Phi_{RA}^+)-\mathcal{A}_{\gamma,\frac{1}{2}}(\Phi_{RA}^+)}_1.
	\end{equation}
	Combined with the inequality
	\begin{equation}
		\max_{\psi_{RA}}\Norm{(\mathcal{A}_{\gamma,N}-\mathcal{A}_{\gamma,\frac{1}{2}})(\psi_{RA})}_1\geq \Norm{(\mathcal{A}_{\gamma,N}-\mathcal{A}_{\gamma,\frac{1}{2}})(\Phi_{RA}^+)}_1,
	\end{equation}
	which holds simply by restricting the maximization to the state $\Phi_{RA}^+$, we obtain
	\begin{equation}\label{eq-GADC_cov_param_pf2}
		\varepsilon_{\text{cov}}(\mathcal{A}_{\gamma,N})=\frac{1}{2}\Norm{(\mathcal{A}_{\gamma,N}-\mathcal{A}_{\gamma,\frac{1}{2}})(\Phi_{RA}^+)}_1
	\end{equation}
	for all $\gamma,N\in[0,1]$.
	
	Finally, to calculate the right-hand side of \eqref{eq-GADC_cov_param_pf2}, we observe using \eqref{eq-GADC_cov_param_pf3} that
	\begin{align}
		&\Norm{(\mathcal{A}_{\gamma,N}-\mathcal{A}_{\gamma,\frac{1}{2}})(\Phi_{RA}^+)}_1\nonumber\\
		&\quad = \left|N-\frac{1}{2}\right|\Norm{(\mathcal{A}_{\gamma,0}-\mathcal{A}_{\gamma,1})(\Phi_{RA}^+)}_1\\
		&\quad = \left|N-\frac{1}{2}\right|\Norm{(\mathcal{A}_{\gamma,1}+\mathcal{A}_{\gamma,0}-2\mathcal{A}_{\gamma,0})(\Phi_{RA}^+)}_1\\
		&\quad = |2N-1|\Norm{\left(\frac{1}{2}\mathcal{A}_{\gamma,1}+\frac{1}{2}\mathcal{A}_{\gamma,0}-\mathcal{A}_{\gamma,0}\right)(\Phi_{RA}^+)}_1\\
		&\quad = |2N-1|\Norm{(\mathcal{A}_{\gamma,\frac{1}{2}}-\mathcal{A}_{\gamma,0})(\Phi_{RA}^+)}_1\\
		&\quad = 2|2N-1|\varepsilon_{\text{cov}}(\mathcal{A}_{\gamma,0}).
	\end{align}
	Now, it has been shown in \cite[Appendix C]{LKDW18} that $\varepsilon_{\text{cov}}(\mathcal{A}_{\gamma,0})=\frac{\gamma}{2}$. Therefore,
	\begin{equation}
		\varepsilon_{\text{cov}}(\mathcal{A}_{\gamma,N})=\frac{1}{2}\gamma|2N-1|=\gamma\left|N-\frac{1}{2}\right|,
	\end{equation}
	as required.

\section{Proof of Eq. (\ref{eq-GADC_Esq_UB_pf2})}\label{app-GADC_Esq_UB_pf}

	By restricting the optimization on the right-hand side of \eqref{eq-GADC_Esq_UB_pf2} to pure states $\ket{\theta^p}_{AA'}=\sqrt{1-p}\ket{0,0}_{AA'}+\sqrt{p}\ket{1,1}_{AA'}$, we obtain
	\begin{align}
		\frac{1}{2}\max_{\phi_{AA'}}I(A;B|E_1E_2)_{\tau}&\geq\frac{1}{2}\max_{\theta^p_{AA'}}I(A;B|E_1E_2)_{\tau^p}\\
		&=\frac{1}{2}\max_{p\in[0,1]}I(A;B|E_1E_2)_{\tau^p}.\label{eq-Esq_bd_pf}
	\end{align}
	The remainder of the proof is dedicated to proving the reverse inequality.

	Let $\phi_{AA'}$ be an arbitrary pure state, and let $\rho_{A'}\coloneqq\Tr_A[\phi_{AA'}]$. The state $\tau$ on which we evaluate the conditional mutual information on the left-hand side of \eqref{eq-Esq_bd_pf} is given by
	\begin{equation}
		\tau_{ABE_1E_2}=(\id_{AB}\otimes\mathcal{A}_{\frac{1}{2},0}\otimes\mathcal{A}_{\frac{1}{2},0})(\ket{\psi}\bra{\psi}_{ABE_1'E_2'}),
	\end{equation}
	where
	\begin{equation}\label{eq-Esq_bd_pf9}
		\ket{\psi}_{ABE_1'E_2'}=V_{B'\to BE_2'}^{\gamma N,1}V_{A'\to B'E_1'}^{\frac{\gamma(1-N)}{1-\gamma N},0}\ket{\phi}_{AA'}.
	\end{equation}
	Note that the GADC has only two Kraus operators when the second parameter is either zero or one. Consequently, for any $\gamma'\in[0,1]$, we can take the isometric extensions in \eqref{eq-Esq_bd_pf9} to be of the following form:
	\begin{align}
		V^{\gamma',0}&=A_1\otimes\ket{0}+A_2\otimes\ket{1},\\
		V^{\gamma',1}&=A_3\otimes\ket{0}+A_4\otimes\ket{1}.
	\end{align}
	By using an isometric extension of the same form for the channel $\mathcal{A}_{\frac{1}{2},0}$, we can write $\tau_{ABE_1E_2}$ explicitly as
	\begin{equation}\label{eq-Esq_bd_pf1}
		\begin{aligned}
		&\tau_{ABE_1E_2}=\Tr_{F_1F_2}[\ket{\varphi}\bra{\varphi}_{ABE_1E_2F_1F_2}],\\
		&\ket{\varphi}_{ABE_1E_2F_1F_2}\\
		&\quad =\left(V_{E_1'\to E_1F_1}^{\frac{1}{2},0}\otimes V_{E_2'\to E_2F_2}^{\frac{1}{2},0}\right)V_{B'\to BE_2'}^{\gamma N,1}V_{A'\to B'E_1'}^{\frac{\gamma(1-N)}{1-\gamma N},0}\ket{\phi}_{AA'},
		\end{aligned}
	\end{equation}
	
	Now, the Pauli-$z$ covariance of the GADC is equivalent to the relations $A_1\sigma_z=\sigma_zA_1$, $A_2\sigma_z=-\sigma_zA_2$, $A_3\sigma_z=\sigma_zA_3$, and $A_4\sigma_z=-\sigma_zA_4$. Therefore, writing $V^{\gamma',0}$ as $V^{\gamma',0}=A_1\otimes\sigma_z\ket{0}-A_2\otimes\sigma_z\ket{1}$, for any state $\ket{\psi}$, we obtain
	\begin{align}
		V^{\gamma',0}\sigma_z\ket{\psi}&=A_1\sigma_z\ket{\psi}\otimes\sigma_z\ket{0}-A_2\sigma_z\ket{\psi}\otimes\sigma_z\ket{1}\\
		&=\sigma_zA_1\ket{\psi}\otimes\sigma_z\ket{0}+\sigma_zA_2\ket{\psi}\otimes\sigma_z\ket{1}\\
		&=(\sigma_z\otimes\sigma_z)(A_1\ket{\psi}\otimes\ket{0}+A_2\ket{\psi}\otimes\ket{1})\\
		&=(\sigma_z\otimes\sigma_z)V^{\gamma',0}.\label{eq-Esq_bd_pf7}
	\end{align}
	Similarly, we have
	\begin{equation}\label{eq-Esq_bd_pf8}
		V^{\gamma',1}\sigma_z\ket{\psi}=(\sigma_z\otimes\sigma_z)V^{\gamma',1}\ket{\psi}
	\end{equation}
	for all states $\ket{\psi}$.
	
	Next, we observe that by using the definition of the conditional mutual information in \eqref{eq-QCMI}, along with the definition of the conditional entropy, we can write $I(A;B|E_1E_2)_{\tau}$ as
	\begin{align}
		I(A;B|E_1E_2)_{\tau}&=H(B|E_1E_2)_{\tau}-H(B|E_1E_2A)_{\tau}\\
		&=H(B|E_1E_2)_{\varphi}+H(B|F_1F_2)_{\varphi},\label{eq-Esq_bd_pf2}
	\end{align}
	where to obtain the last line we used the fact that the state $\ket{\varphi}_{ABE_1E_2F_1F_2}$ in \eqref{eq-Esq_bd_pf1} is pure; in particular,
	\begin{align}
		H(B|E_1E_2A)_{\tau}&=H(ABE_1E_2)_{\tau}-H(E_1E_2A)_{\tau}\\
		&=H(F_1F_2)_{\varphi}-H(BF_1F_2)_{\varphi}\\
		&=-H(B|F_1F_2)_{\varphi}.
	\end{align}
	Now, since the right-hand side of \eqref{eq-Esq_bd_pf2} does not contain the $A$ system, the quantity is a function solely of the state $\rho_{A'}$. For convenience, let us define a function $F$ by
	\begin{equation}
		F(\rho_{A'})=I(A;B|E_1E_2)_{\tau}=H(B|E_1E_2)_{\varphi}+H(B|F_1F_2)_{\varphi},
	\end{equation}
	where
	\begin{widetext}
	\begin{equation}
		\begin{aligned}
		\varphi_{BE_1E_2F_1F_2}&\equiv\varphi_{BE_1E_2F_1F_2}(\rho_{A'})\\
		&=\left(V^{\frac{1}{2},0}_{E_1'\to E_1F_1}\otimes V^{\frac{1}{2},0}_{E_2'\to E_2F_2}\right)V_{B'\to BE_2'}^{\gamma N,1}V_{A'\to B'E_1'}^{\frac{\gamma(1-N)}{1-\gamma N}}\rho_{A'}\left(V_{A'\to B'E_1'}^{\frac{\gamma(1-N)}{1-\gamma N}}\right)^\dagger \left(V_{B'\to BE_2'}^{\gamma N,1}\right)^\dagger\left(V_{E_1'\to E_1F_1}^{\frac{1}{2},0}\otimes V_{E_2'\to E_2F_2}^{\frac{1}{2},0}\right)^\dagger
		\end{aligned}
	\end{equation}
	\end{widetext}
	Using the relations in \eqref{eq-Esq_bd_pf7} and \eqref{eq-Esq_bd_pf8}, we get
	\begin{equation}
		\varphi_{BE_1E_2F_1F_2}(\sigma_z\rho_{A'}\sigma_z)=\sigma_z^{\otimes 5}\varphi_{BE_1E_2F_1F_2}(\rho_{A'})\sigma_z^{\otimes 5},
	\end{equation}
	which implies that $F(\sigma_z\rho_{A'}\sigma_z)=F(\rho_{A'})$. Furthermore, since the conditional entropy is concave, so is the function $F$. We thus obtain
	\begin{align}
		F\left(\frac{1}{2}\rho_{A'}+\frac{1}{2}\sigma_z\rho_{A'}\sigma_z\right)&\geq \frac{1}{2}F(\rho_{A'})+\frac{1}{2}F(\sigma_z\rho_{A'}\sigma_z)\\
		&=F(\rho_{A'}).
	\end{align}
	Now, observe that the state $\frac{1}{2}\rho_{A'}+\frac{1}{2}\sigma_z\rho_{A'}\sigma_z$ is diagonal in the standard basis, meaning that it has a purification of the form $\ket{\theta^p}_{AA'}=\sqrt{1-p}\ket{0,0}_{AA'}+\sqrt{p}\ket{1,1}_{AA'}$ for some $p\in[0,1]$, say $p^*$. Therefore, by restricting the optimization $\frac{1}{2}\max_{p\in[0,1]}I(A;B|E_1E_2)_{\tau^p}=\frac{1}{2}\max_{\theta_{AA'}^p}I(A;B|E_1E_2)_{\tau^p}$ to $p^*$, we get
	\begin{align}
		\frac{1}{2}\max_{p\in[0,1]}I(A;B|E_1E_2)_{\tau^p}&\geq F\left(\frac{1}{2}\rho_{A'}+\frac{1}{2}\sigma_z\rho_{A'}\sigma_z\right)\\
		&\geq \frac{1}{2}F(\rho_{A'})\\
		&=\frac{1}{2}I(A;B|E_1E_2)_{\tau}.
	\end{align}
	Since the state $\rho_{A'}$ was arbitrary, we get that
	\begin{equation}
		\frac{1}{2}\max_{p\in[0,1]}I(A;B|E_1E_2)_{\tau^p}\geq\frac{1}{2}\max_{\phi_{AA'}}I(A;B|E_1E_2)_{\tau}.
	\end{equation}
	Combining with the inequality in \eqref{eq-Esq_bd_pf}, we get
	\begin{equation}
		\frac{1}{2}\max_{\phi_{AA'}}I(A;B|E_1E_2)_{\tau}=\frac{1}{2}\max_{p\in[0,1]}I(A;B|E_1E_2)_{\tau^p},
	\end{equation}
	as required.

\section{Proof of Proposition \ref{prop-GADC_Emax}}\label{app-GADC_Emax}

	We start by showing that
	\begin{multline}
		E_{\max}(\mathcal{A}_{\gamma,N})\\=\log_2\left(1-\frac{\gamma}{2}+\frac{1}{2}\sqrt{(\gamma(2N-1))^2+4(1-\gamma)}\right)
		\label{eq:E-max-GADC-1}
	\end{multline}
	for all $\gamma,N$ such that the GADC $\mathcal{A}_{\gamma,N}$ is not entanglement breaking. If the channel $\mathcal{A}_{\gamma,N}$ is entanglement breaking, then the Choi matrix $\Gamma_{AB}^{\gamma,N}$ is separable and PPT, so that we can pick the variable $Y_{AB}$ in the SDP \eqref{eq-E_max_SDP_primal} to be $\Gamma_{AB}^{\gamma,N}$, for which we have $\norm{\Tr_B[Y_{AB}]}_{\infty}=1$. This means that $E_{\max}(\mathcal{A}_{\gamma,N})=0$ in this case. In what follows, we thus assume that $\mathcal{A}_{\gamma,N}$ is not entanglement breaking.
	
	We first establish an upper bound on $\Sigma(\mathcal{A}_{\gamma,N})$ by employing the SDP in \eqref{eq-E_max_SDP_primal}. To determine an ansatz for the variable $Y_{AB}$ therein, we first consider the positive partial transpose of the Choi matrix $\Gamma_{AB}^{\gamma,N}$ from \eqref{eq-GADC_Choi_state}:
	\begin{equation}
		\left(\Gamma_{AB}^{\gamma,N}\right)^{\t_B}=\begin{pmatrix} 1-\gamma N & 0 & 0 & 0 \\ 0 & \gamma N & \sqrt{1-\gamma} & 0 \\ 0 & \sqrt{1-\gamma} & \gamma(1-N) & 0 \\ 0 & 0 & 0 & 1-\gamma(1-N) \end{pmatrix}.
	\end{equation}
	To determine the positive semi-definiteness of this matrix, it suffices to focus on the inner $2\times 2$ matrix, given that $1-\gamma N\geq 0$ and $1-\gamma(1-N)\geq 0$ for all $\gamma,N\in[0,1]$. The eigenvalues of the inner $2\times 2$ matrix are given by
	\begin{equation}
		\lambda_{\pm}\equiv \frac{1}{2}\left(\gamma\pm\sqrt{(\gamma(2N-1))^2+4(1-\gamma)}\right).
	\end{equation}
	We have that $\lambda_+\geq 0$ for all $\gamma,N\in[0,1]$. The condition $\lambda_-\leq 0$ is equivalent to the channel not being entanglement breaking. If we add $-\lambda_-\mathbbm{1}$ to the inner $2\times 2$ matrix, then it becomes positive semi-definite. This leads to the following ansatz for the matrix $Y_{AB}$:
	\begin{align}
		Y_{AB}&=\Gamma_{AB}^{\gamma,N}-\begin{pmatrix} 0 & 0 & 0 & 0 \\ 0 & \lambda_- & 0 & 0 \\ 0 & 0 & \lambda_- & 0 \\ 0 & 0 & 0 & 0\end{pmatrix}\\
		&=\begin{pmatrix} 1-\gamma N & 0 & 0 & \sqrt{1-\gamma} \\ 0 & \gamma N-\lambda_- & 0 & 0 \\ 0 & 0 & \gamma(1-N)-\lambda_- & 0 \\ \sqrt{1-\gamma} & 0 & 0 & 1-\gamma(1-N) \end{pmatrix}.
	\end{align}
	By construction, we have that
	\begin{align}
		Y_{AB}-\Gamma_{AB}^{\gamma,N}&\geq 0,\\
		Y_{AB}^{\t_B}& \geq 0,
	\end{align}
	so that $Y_{AB}$ satisfies the constraints of the SDP in \eqref{eq-E_max_SDP_primal}. Now, computing $\Tr_B[Y_{AB}]$ gives
	\begin{equation}
		\Tr_B[Y_{AB}]=\begin{pmatrix} 1-\lambda_- & 0 \\ 0 & 1-\lambda_- \end{pmatrix},
	\end{equation}
	which implies that $\norm{\Tr_{B}[Y_{AB}]}_{\infty}=1-\lambda_-$. Therefore,
	\begin{equation}
		\Sigma(\mathcal{A}_{\gamma,N})\leq \frac{1}{2}\left(2-\gamma+\sqrt{(\gamma(2N-1))^2+4(1-\gamma)}\right).
	\end{equation}
	
	We now establish a lower bound on $\Sigma(\mathcal{A}_{\gamma,N})$ by considering the SDP dual to the one in \eqref{eq-E_max_SDP_primal}, namely,
	\begin{equation}\label{eq-E_max_SDP_dual}
		\hat{\Sigma}(\mathcal{N})\equiv\left\{\begin{array}{l l} \text{max}. & \Tr[\Gamma_{AB}^{\mathcal{N}}P_{AB}] \\ 
		\text{subject to} & P_{AB},Q_{AB}\geq 0,\\
		& P_{AB}+Q_{AB}^{\t_B}\leq \rho_A\otimes\mathbbm{1}_B,\\
		& \rho_A\geq 0,\\
		& \Tr[\rho_A]\leq 1.\end{array}\right.
	\end{equation}
	By strong duality, it follows that these optimization problems have equal solutions, i.e., $\hat{\Sigma}(\mathcal{N})=\Sigma(\mathcal{N})$ for all quantum channels $\mathcal{N}$.
	
	Now, let
	\begin{align}
		a&\equiv \sqrt{(\gamma(2N-1))^2+4(1-\gamma)},\\
		b&\equiv \frac{a-(2N-1)\gamma}{2a}.
	\end{align}
	Note that $b\in[0,1]$ for all $\gamma,N\in[0,1]$. Then, let
	\begin{align}
		\rho_A&=\begin{pmatrix} b & 0 \\ 0 & 1-b \end{pmatrix},\\
		P_{AB}&=\begin{pmatrix} b & 0 & 0 & \frac{1}{a}\sqrt{1-\gamma} \\ 0 & 0 & 0 & 0 \\ 0 & 0 & 0 & 0 \\ \frac{1}{a}\sqrt{1-\gamma} & 0 & 0 & 1-b \end{pmatrix},\\
		Q_{AB}&=\begin{pmatrix} 0 & 0 & 0 & 0 \\ 0 & b & -\frac{1}{a}\sqrt{1-\gamma} & 0 \\ 0 & -\frac{1}{a}\sqrt{1-\gamma} & 1-b & 0 \\ 0 & 0 & 0 & 0 \end{pmatrix}.
	\end{align}
	We have that $\rho_A\geq 0$ and $\Tr[\rho_A]=1$ for all $\gamma,N\in[0,1]$. Also, for all $\gamma,N\in[0,1]$, the eigenvalues of the corners submatrix of $P_{AB}$ are equal to zero and one, implying that $P_{AB}\geq 0$. Similarly, for all $\gamma,N\in[0,1]$, the eigenvalues of the inner submatrix of $Q_{AB}$ are equal to zero and one, implying that $Q_{AB}\geq 0$. Furthermore, we have that
	\begin{align}
		Q_{AB}^{\t_B}&=\begin{pmatrix} 0 & 0 & 0 & -\frac{1}{a}\sqrt{1-\gamma} \\ 0 & b & 0 & 0 \\ 0 & 0 & 1-b & 0 \\ -\frac{1}{a}\sqrt{1-\gamma} & 0 & 0 & 0 \end{pmatrix},\\
		\rho_A\otimes\mathbbm{1}_B&=\begin{pmatrix} b & 0 & 0 & 0 \\ 0 & b & 0 & 0 \\ 0 & 0 & 1-b & 0 \\ 0 & 0 & 0 & 1-b \end{pmatrix},
	\end{align}
	and so we have that $P_{AB}+Q_{AB}^{\t_B}\leq\rho_A\otimes\mathbbm{1}_B$ (in fact, this inequality is saturated). Thus, all the constraints in \eqref{eq-E_max_SDP_dual} are satisfied. Then, since
	\begin{multline}
		\Tr[\Gamma_{AB}^{\gamma,N}P_{AB}]\\=\frac{1}{2}\left(2-\gamma+\sqrt{(\gamma(2N-1))^2+4(1-\gamma)}\right),
	\end{multline}
	we have that
	\begin{equation}
		\hat{\Sigma}(\mathcal{A}_{\gamma,N})\geq \frac{1}{2}\left(2-\gamma+\sqrt{(\gamma(2N-1))^2+4(1-\gamma)}\right).
	\end{equation}
	This means that
	\begin{equation}
		\Sigma(\mathcal{A}_{\gamma,N})=1-\frac{\gamma}{2}+\frac{1}{2}\sqrt{(\gamma(2N-1))^2+4(1-\gamma)},
	\end{equation}
	thus establishing \eqref{eq:E-max-GADC-1}.
	
	We now show that
	\begin{multline}
		R_{\max}(\mathcal{A}_{\gamma,N})\\
		=\log_2\left(1-\frac{\gamma}{2}+\frac{1}{2}\sqrt{(\gamma(2N-1))^2+4(1-\gamma)}\right).
	\end{multline}
	Due to the inequality in \eqref{eq-Rmax_Emax_ineq}, namely, $R_{\max}(\mathcal{A}_{\gamma,N})\leq E_{\max}(\mathcal{A}_{\gamma,N})$, it suffices to show that
	\begin{equation}
		R_{\max}(\mathcal{A}_{\gamma,N})\geq \log_2\left(1-\frac{\gamma}{2}+\frac{1}{2}\sqrt{(\gamma(2N-1))^2+4(1-\gamma)}\right)
	\end{equation} 
	when $\mathcal{A}_{\gamma,N}$ is not entanglement breaking.
	
	When the channel $\mathcal{A}_{\gamma,N}$ is entanglement breaking, then the Choi matrix $\Gamma_{AB}^{\gamma,N}$ is separable and PPT. This means that we can pick $V_{AB}=(\Gamma_{AB}^{\gamma,N})^{\t_B}$ and $Y_{AB}=0$ in \eqref{eq-R_max}, for which $\norm{\Tr_B[V_{AB}+Y_{AB}]}_{\infty}=\norm{\Tr_B[V_{AB}]}_{\infty}=1$, implying that $R_{\max}(\mathcal{A}_{\gamma,N})=0$ in this case. In what follows, we thus assume that $\mathcal{A}_{\gamma,N}$ is not entanglement breaking.
	
	First, the SDP dual to the one in \eqref{eq-R_max} is
	\begin{equation}\label{eq:R-max-dual-SDP}
		\hat{\Delta}(\mathcal{N})=\left\{\begin{array}{l l} \text{max}. & \Tr[\Gamma_{AB}^{\gamma,N}R_{AB}]\\
		\text{subject to} & -\rho_A\otimes\mathbbm{1}_B\leq R_{AB}^{\t_B}\leq \rho_A\otimes\mathbbm{1}_B,\\
		& \rho_A\geq 0,
		 \Tr[\rho_A]\leq 1. \end{array}\right.
	\end{equation}
	By strong duality, it holds that $\hat{\Delta}(\mathcal{N})=\Delta(\mathcal{N})$.
	
	Let $a\in[0,1]$, which we will specify in more detail later as a function of $\gamma$ and $N$. We pick
	\begin{align}
		\rho_A&=\begin{pmatrix} a & 0 \\ 0 & 1-a \end{pmatrix},\\
		R_{AB}&=\begin{pmatrix} a & 0 & 0 & 2a(1-a) \\ 0 & a(1-2a) & 0 & 0 \\ 0 & 0 & -(1-a)(1-2a) & 0 \\ 2a(1-a) & 0 & 0 & 1-a \end{pmatrix}.
	\end{align}
	Note that $\rho_A\geq 0$ and  $\Tr[\rho_A]=1$. Also, consider that
	\begin{align}
		R_{AB}^{\t_B}&=\begin{pmatrix} a & 0 & 0 & 0 \\ 0 & a(1-2a) & 2a(1-a) & 0 \\ 0 & 2a(1-a) & -(1-a)(1-2a) & 0 \\ 0 & 0 & 0 & 1-a \end{pmatrix},\\
		\rho_A\otimes\mathbbm{1}_B&=\begin{pmatrix} a & 0 & 0 & 0 \\ 0 & a & 0 & 0 \\ 0 & 0 & 1-a & 0 \\ 0 & 0 & 0 & 1-a \end{pmatrix},
	\end{align}
	implying that
	\begin{equation}
		R_{AB}^{\t_B}+\rho_A\otimes\mathbbm{1}_B=2\begin{pmatrix} a & 0 & 0 & 0 \\ 0 & a(1-a) & a(1-a) & 0 \\ 0 & a(1-a) & a(1-a) & 0 \\ 0 & 0 & 0 & 1-a \end{pmatrix},
	\end{equation}
	which is positive semi-definite since $a\in[0,1]$. Also, we have that
	\begin{equation}
		\rho_A\otimes\mathbbm{1}_B-R_{AB}^{\t_B}=2\begin{pmatrix} 0 & 0 & 0 & 0 \\ 0 & a^2 & -a(1-a) & 0 \\ 0 & -a(1-a) & (1-a)^2 & 0 \\ 0 & 0 & 0 & 0 \end{pmatrix},
	\end{equation}
	which has eigenvalues equal to zero and $2(1-2a(1-a))$, the latter being nonnegative for all $a\in[0,1]$.  Thus, our choice of $\rho_A$ and $R_{AB}$ satisfies the constraints in \eqref{eq:R-max-dual-SDP}. Now, computing $\Tr[\Gamma_{AB}^{\gamma,N}R_{AB}]$, we find that
	\begin{multline}
		\Tr[\Gamma_{AB}^{\gamma,N}R_{AB}]= g(a,\gamma,N)\\
		\equiv 1-2(1-N)\gamma -2a^2\left(2\sqrt{1-\gamma}+\gamma\right)\\
		+4a(\sqrt{1-\gamma}+\gamma(1-N)).
	\end{multline}
	We now choose $a$ such that the equation
	\begin{equation}
		1-\frac{\gamma}{2}+\frac{1}{2}\sqrt{(\gamma(2N-1))^2+4(1-\gamma)}=g(a,\gamma,N)
	\end{equation}
	is satisfied. It has solutions
	\begin{equation}\label{eq-R_max_pf1}
		a=\frac{c_1\pm \sqrt{c_1^2+c_2((4N-3)\gamma- c_3)}}{c_2},
	\end{equation}
	where
	\begin{align}
		c_1&\equiv 4\left(\sqrt{1-\gamma}+\gamma(1-N)\right),\\
		c_2&\equiv 4\left(2\sqrt{1-\gamma}+\gamma\right),\\
		c_3&\equiv \sqrt{(\gamma(2N-1))^2+4(1-\gamma)}
	\end{align}
	Note that the solutions for $a$ in \eqref{eq-R_max_pf1} satisfy $a\in[0,1]$ for all $\gamma,N$ such that the GADC is not entanglement breaking. Thus, for this choice of $a$, we conclude that
	\begin{equation}
		\hat{\Delta}(\mathcal{N})\geq 1-\frac{\gamma}{2}+\frac{1}{2}\sqrt{(\gamma(2N-1))^2+4(1-\gamma)}.
	\end{equation}
	We thus have that
	\begin{multline}
		R_{\max}(\mathcal{A}_{\gamma,N})=E_{\max}(\mathcal{A}_{\gamma,N})\\
		=1-\frac{\gamma}{2}+\frac{1}{2}\sqrt{(\gamma(2N-1))^2+4(1-\gamma)},
	\end{multline}
	as required.

\bibliography{Ref}{}

%merlin.mbs apsrev4-1.bst 2010-07-25 4.21a (PWD, AO, DPC) hacked
%Control: key (0)
%Control: author (0) dotless jnrlst
%Control: editor formatted (1) identically to author
%Control: production of article title (0) allowed
%Control: page (1) range
%Control: year (0) verbatim
%Control: production of eprint (0) enabled
\begin{thebibliography}{120}%
\makeatletter
\providecommand \@ifxundefined [1]{%
 \@ifx{#1\undefined}
}%
\providecommand \@ifnum [1]{%
 \ifnum #1\expandafter \@firstoftwo
 \else \expandafter \@secondoftwo
 \fi
}%
\providecommand \@ifx [1]{%
 \ifx #1\expandafter \@firstoftwo
 \else \expandafter \@secondoftwo
 \fi
}%
\providecommand \natexlab [1]{#1}%
\providecommand \enquote  [1]{``#1''}%
\providecommand \bibnamefont  [1]{#1}%
\providecommand \bibfnamefont [1]{#1}%
\providecommand \citenamefont [1]{#1}%
\providecommand \href@noop [0]{\@secondoftwo}%
\providecommand \href [0]{\begingroup \@sanitize@url \@href}%
\providecommand \@href[1]{\@@startlink{#1}\@@href}%
\providecommand \@@href[1]{\endgroup#1\@@endlink}%
\providecommand \@sanitize@url [0]{\catcode `\\12\catcode `\$12\catcode
  `\&12\catcode `\#12\catcode `\^12\catcode `\_12\catcode `\%12\relax}%
\providecommand \@@startlink[1]{}%
\providecommand \@@endlink[0]{}%
\providecommand \url  [0]{\begingroup\@sanitize@url \@url }%
\providecommand \@url [1]{\endgroup\@href {#1}{\urlprefix }}%
\providecommand \urlprefix  [0]{URL }%
\providecommand \Eprint [0]{\href }%
\providecommand \doibase [0]{http://dx.doi.org/}%
\providecommand \selectlanguage [0]{\@gobble}%
\providecommand \bibinfo  [0]{\@secondoftwo}%
\providecommand \bibfield  [0]{\@secondoftwo}%
\providecommand \translation [1]{[#1]}%
\providecommand \BibitemOpen [0]{}%
\providecommand \bibitemStop [0]{}%
\providecommand \bibitemNoStop [0]{.\EOS\space}%
\providecommand \EOS [0]{\spacefactor3000\relax}%
\providecommand \BibitemShut  [1]{\csname bibitem#1\endcsname}%
\let\auto@bib@innerbib\@empty
%</preamble>
\bibitem [{\citenamefont {Holevo}(2013)}]{H13book}%
  \BibitemOpen
  \bibfield  {author} {\bibinfo {author} {\bibfnamefont {Alexander~S.}\
  \bibnamefont {Holevo}},\ }\href@noop {} {\emph {\bibinfo {title} {Quantum
  Systems, Channels, Information: A Mathematical Introduction}}},\
  Vol.~\bibinfo {volume} {16}\ (\bibinfo  {publisher} {Walter de Gruyter},\
  \bibinfo {year} {2013})\BibitemShut {NoStop}%
\bibitem [{\citenamefont {Hayashi}(2006)}]{MH06}%
  \BibitemOpen
  \bibfield  {author} {\bibinfo {author} {\bibfnamefont {Masahito}\
  \bibnamefont {Hayashi}},\ }\href@noop {} {\emph {\bibinfo {title} {Quantum
  information: An introduction}}}\ (\bibinfo  {publisher} {Springer},\ \bibinfo
  {year} {2006})\BibitemShut {NoStop}%
\bibitem [{\citenamefont {Wilde}(2017)}]{W17}%
  \BibitemOpen
  \bibfield  {author} {\bibinfo {author} {\bibfnamefont {Mark~M.}\ \bibnamefont
  {Wilde}},\ }\href@noop {} {\emph {\bibinfo {title} {Quantum Information
  Theory}}},\ \bibinfo {edition} {2nd}\ ed.\ (\bibinfo  {publisher} {Cambridge
  University Press},\ \bibinfo {year} {2017})\ \bibinfo {note}
  {arXiv:1106.1445}\BibitemShut {NoStop}%
\bibitem [{\citenamefont {Watrous}(2018)}]{Wat18}%
  \BibitemOpen
  \bibfield  {author} {\bibinfo {author} {\bibfnamefont {John}\ \bibnamefont
  {Watrous}},\ }\href@noop {} {\emph {\bibinfo {title} {The Theory of Quantum
  Information}}}\ (\bibinfo  {publisher} {Cambridge University Press},\
  \bibinfo {year} {2018})\BibitemShut {NoStop}%
\bibitem [{\citenamefont {Shannon}(1948)}]{Shannon1948}%
  \BibitemOpen
  \bibfield  {author} {\bibinfo {author} {\bibfnamefont {Claude~E.}\
  \bibnamefont {Shannon}},\ }\bibfield  {title} {\enquote {\bibinfo {title} {A
  mathematical theory of communication},}\ }\href
  {https://ieeexplore.ieee.org/document/6773024?arnumber=6773024} {\bibfield
  {journal} {\bibinfo  {journal} {The Bell System Technical Journal}\ }\textbf
  {\bibinfo {volume} {27}},\ \bibinfo {pages} {379--423} (\bibinfo {year}
  {1948})}\BibitemShut {NoStop}%
\bibitem [{\citenamefont {Cubitt}\ \emph {et~al.}(2015)\citenamefont {Cubitt},
  \citenamefont {Elkouss}, \citenamefont {Matthews}, \citenamefont {Ozols},
  \citenamefont {Perez-Garcia},\ and\ \citenamefont {Strelchuk}}]{CEMOGS15}%
  \BibitemOpen
  \bibfield  {author} {\bibinfo {author} {\bibfnamefont {Toby}\ \bibnamefont
  {Cubitt}}, \bibinfo {author} {\bibfnamefont {David}\ \bibnamefont {Elkouss}},
  \bibinfo {author} {\bibfnamefont {William}\ \bibnamefont {Matthews}},
  \bibinfo {author} {\bibfnamefont {Maris}\ \bibnamefont {Ozols}}, \bibinfo
  {author} {\bibfnamefont {David}\ \bibnamefont {Perez-Garcia}}, \ and\
  \bibinfo {author} {\bibfnamefont {Sergii}\ \bibnamefont {Strelchuk}},\
  }\bibfield  {title} {\enquote {\bibinfo {title} {Unbounded number of channel
  uses may be required to detect quantum capacity},}\ }\href
  {https://doi.org/10.1038/ncomms7739} {\bibfield  {journal} {\bibinfo
  {journal} {Nature Communications}\ }\textbf {\bibinfo {volume} {6}},\
  \bibinfo {pages} {6739} (\bibinfo {year} {2015})},\ \bibinfo {note}
  {arXiv:1408.5115}\BibitemShut {NoStop}%
\bibitem [{\citenamefont {Elkouss}\ and\ \citenamefont
  {Strelchuk}(2015)}]{ES15}%
  \BibitemOpen
  \bibfield  {author} {\bibinfo {author} {\bibfnamefont {David}\ \bibnamefont
  {Elkouss}}\ and\ \bibinfo {author} {\bibfnamefont {Sergii}\ \bibnamefont
  {Strelchuk}},\ }\bibfield  {title} {\enquote {\bibinfo {title}
  {Superadditivity of private information for any number of uses of the
  channel},}\ }\href {\doibase 10.1103/PhysRevLett.115.040501} {\bibfield
  {journal} {\bibinfo  {journal} {Physical Review Letters}\ }\textbf {\bibinfo
  {volume} {115}},\ \bibinfo {pages} {040501} (\bibinfo {year} {2015})},\
  \bibinfo {note} {arXiv:1502.05326}\BibitemShut {NoStop}%
\bibitem [{\citenamefont {Devetak}\ and\ \citenamefont {Shor}(2005)}]{DS04}%
  \BibitemOpen
  \bibfield  {author} {\bibinfo {author} {\bibfnamefont {Igor}\ \bibnamefont
  {Devetak}}\ and\ \bibinfo {author} {\bibfnamefont {Peter~W.}\ \bibnamefont
  {Shor}},\ }\bibfield  {title} {\enquote {\bibinfo {title} {The capacity of a
  quantum channel for simultaneous transmission of classical and quantum
  information},}\ }\href
  {https://link.springer.com/article/10.1007/s00220-005-1317-6} {\bibfield
  {journal} {\bibinfo  {journal} {Communications in Mathematical Physics}\
  }\textbf {\bibinfo {volume} {256}},\ \bibinfo {pages} {287--303} (\bibinfo
  {year} {2005})},\ \bibinfo {note} {arXiv:quant-ph/0311131}\BibitemShut
  {NoStop}%
\bibitem [{\citenamefont {Smith}(2008)}]{GS08}%
  \BibitemOpen
  \bibfield  {author} {\bibinfo {author} {\bibfnamefont {Graeme}\ \bibnamefont
  {Smith}},\ }\bibfield  {title} {\enquote {\bibinfo {title} {Private classical
  capacity with a symmetric side channel and its application to quantum
  cryptography},}\ }\href {\doibase 10.1103/PhysRevA.78.022306} {\bibfield
  {journal} {\bibinfo  {journal} {Physical Review A}\ }\textbf {\bibinfo
  {volume} {78}},\ \bibinfo {pages} {022306} (\bibinfo {year} {2008})},\
  \bibinfo {note} {arXiv:0705.3838}\BibitemShut {NoStop}%
\bibitem [{\citenamefont {Fukuda}\ and\ \citenamefont {Wolf}(2007)}]{FW07}%
  \BibitemOpen
  \bibfield  {author} {\bibinfo {author} {\bibfnamefont {Motohisa}\
  \bibnamefont {Fukuda}}\ and\ \bibinfo {author} {\bibfnamefont {Michael~M.}\
  \bibnamefont {Wolf}},\ }\bibfield  {title} {\enquote {\bibinfo {title}
  {Simplifying additivity problems using direct sum constructions},}\ }\href
  {\doibase 10.1063/1.2746128} {\bibfield  {journal} {\bibinfo  {journal}
  {Journal of Mathematical Physics}\ }\textbf {\bibinfo {volume} {48}},\
  \bibinfo {pages} {072101} (\bibinfo {year} {2007})},\ \bibinfo {note}
  {arXiv:0704.1092}\BibitemShut {NoStop}%
\bibitem [{\citenamefont {Gao}\ \emph {et~al.}(2018)\citenamefont {Gao},
  \citenamefont {Junge},\ and\ \citenamefont {LaRacuente}}]{GJL18}%
  \BibitemOpen
  \bibfield  {author} {\bibinfo {author} {\bibfnamefont {Li}~\bibnamefont
  {Gao}}, \bibinfo {author} {\bibfnamefont {Marius}\ \bibnamefont {Junge}}, \
  and\ \bibinfo {author} {\bibfnamefont {Nicholas}\ \bibnamefont
  {LaRacuente}},\ }\bibfield  {title} {\enquote {\bibinfo {title} {Capacity
  estimates via comparison with {TRO} channels},}\ }\href
  {https://link.springer.com/article/10.1007/s00220-018-3249-y} {\bibfield
  {journal} {\bibinfo  {journal} {Communications in Mathematical Physics}\
  }\textbf {\bibinfo {volume} {364}},\ \bibinfo {pages} {83--121} (\bibinfo
  {year} {2018})},\ \bibinfo {note} {arXiv:1609.08594}\BibitemShut {NoStop}%
\bibitem [{\citenamefont {Leditzky}\ \emph
  {et~al.}(2018{\natexlab{a}})\citenamefont {Leditzky}, \citenamefont {Datta},\
  and\ \citenamefont {Smith}}]{LDS18}%
  \BibitemOpen
  \bibfield  {author} {\bibinfo {author} {\bibfnamefont {Felix}\ \bibnamefont
  {Leditzky}}, \bibinfo {author} {\bibfnamefont {Nilanjana}\ \bibnamefont
  {Datta}}, \ and\ \bibinfo {author} {\bibfnamefont {Graeme}\ \bibnamefont
  {Smith}},\ }\bibfield  {title} {\enquote {\bibinfo {title} {Useful states and
  entanglement distillation},}\ }\href
  {https://ieeexplore.ieee.org/document/8119865} {\bibfield  {journal}
  {\bibinfo  {journal} {IEEE Transactions on Information Theory}\ }\textbf
  {\bibinfo {volume} {64}},\ \bibinfo {pages} {4689--4708} (\bibinfo {year}
  {2018}{\natexlab{a}})},\ \bibinfo {note} {arXiv:1701.03081}\BibitemShut
  {NoStop}%
\bibitem [{\citenamefont {Leditzky}\ \emph
  {et~al.}(2018{\natexlab{b}})\citenamefont {Leditzky}, \citenamefont {Leung},\
  and\ \citenamefont {Smith}}]{LLS18}%
  \BibitemOpen
  \bibfield  {author} {\bibinfo {author} {\bibfnamefont {Felix}\ \bibnamefont
  {Leditzky}}, \bibinfo {author} {\bibfnamefont {Debbie}\ \bibnamefont
  {Leung}}, \ and\ \bibinfo {author} {\bibfnamefont {Graeme}\ \bibnamefont
  {Smith}},\ }\bibfield  {title} {\enquote {\bibinfo {title} {Quantum and
  private capacities of low-noise channels},}\ }\href {\doibase
  10.1103/PhysRevLett.120.160503} {\bibfield  {journal} {\bibinfo  {journal}
  {Physical Review Letters}\ }\textbf {\bibinfo {volume} {120}},\ \bibinfo
  {pages} {160503} (\bibinfo {year} {2018}{\natexlab{b}})},\ \bibinfo {note}
  {arXiv:1705.04335}\BibitemShut {NoStop}%
\bibitem [{\citenamefont {Leditzky}\ \emph
  {et~al.}(2018{\natexlab{c}})\citenamefont {Leditzky}, \citenamefont {Leung},\
  and\ \citenamefont {Smith}}]{LLSb18}%
  \BibitemOpen
  \bibfield  {author} {\bibinfo {author} {\bibfnamefont {Felix}\ \bibnamefont
  {Leditzky}}, \bibinfo {author} {\bibfnamefont {Debbie}\ \bibnamefont
  {Leung}}, \ and\ \bibinfo {author} {\bibfnamefont {Graeme}\ \bibnamefont
  {Smith}},\ }\bibfield  {title} {\enquote {\bibinfo {title} {Dephrasure
  channel and superadditivity of coherent information},}\ }\href {\doibase
  10.1103/PhysRevLett.121.160501} {\bibfield  {journal} {\bibinfo  {journal}
  {Physical Review Letters}\ }\textbf {\bibinfo {volume} {121}},\ \bibinfo
  {pages} {160501} (\bibinfo {year} {2018}{\natexlab{c}})},\ \bibinfo {note}
  {arXiv:1806.08327}\BibitemShut {NoStop}%
\bibitem [{\citenamefont {Bausch}\ and\ \citenamefont {Leditzky}(2020)}]{BL18}%
  \BibitemOpen
  \bibfield  {author} {\bibinfo {author} {\bibfnamefont {Johannes}\
  \bibnamefont {Bausch}}\ and\ \bibinfo {author} {\bibfnamefont {Felix}\
  \bibnamefont {Leditzky}},\ }\bibfield  {title} {\enquote {\bibinfo {title}
  {Quantum codes from neural networks},}\ }\href {\doibase
  10.1088/1367-2630/ab6cdd} {\bibfield  {journal} {\bibinfo  {journal} {New
  Journal of Physics}\ }\textbf {\bibinfo {volume} {22}},\ \bibinfo {pages}
  {023005} (\bibinfo {year} {2020})},\ \bibinfo {note}
  {arXiv:1806.08781}\BibitemShut {NoStop}%
\bibitem [{\citenamefont {King}(2002)}]{King02}%
  \BibitemOpen
  \bibfield  {author} {\bibinfo {author} {\bibfnamefont {Christopher}\
  \bibnamefont {King}},\ }\bibfield  {title} {\enquote {\bibinfo {title}
  {Additivity for unital qubit channels},}\ }\href {\doibase 10.1063/1.1500791}
  {\bibfield  {journal} {\bibinfo  {journal} {Journal of Mathematical Physics}\
  }\textbf {\bibinfo {volume} {43}},\ \bibinfo {pages} {4641--4653} (\bibinfo
  {year} {2002})},\ \bibinfo {note} {arXiv:quant-ph/0103156}\BibitemShut
  {NoStop}%
\bibitem [{\citenamefont {King}(2003)}]{K03}%
  \BibitemOpen
  \bibfield  {author} {\bibinfo {author} {\bibfnamefont {Christopher}\
  \bibnamefont {King}},\ }\bibfield  {title} {\enquote {\bibinfo {title} {The
  capacity of the quantum depolarizing channel},}\ }\href {\doibase
  10.1109/TIT.2002.806153} {\bibfield  {journal} {\bibinfo  {journal} {IEEE
  Transactions on Information Theory}\ }\textbf {\bibinfo {volume} {49}},\
  \bibinfo {pages} {221--229} (\bibinfo {year} {2003})},\ \bibinfo {note}
  {arXiv:quant-ph/0204172}\BibitemShut {NoStop}%
\bibitem [{\citenamefont {Giovannetti}\ and\ \citenamefont
  {Fazio}(2005)}]{GF05}%
  \BibitemOpen
  \bibfield  {author} {\bibinfo {author} {\bibfnamefont {Vittorio}\
  \bibnamefont {Giovannetti}}\ and\ \bibinfo {author} {\bibfnamefont {Rosario}\
  \bibnamefont {Fazio}},\ }\bibfield  {title} {\enquote {\bibinfo {title}
  {Information-capacity description of spin-chain correlations},}\ }\href
  {\doibase 10.1103/PhysRevA.71.032314} {\bibfield  {journal} {\bibinfo
  {journal} {Physical Review A}\ }\textbf {\bibinfo {volume} {71}},\ \bibinfo
  {pages} {032314} (\bibinfo {year} {2005})},\ \bibinfo {note}
  {arXiv:quant-ph/0405110}\BibitemShut {NoStop}%
\bibitem [{\citenamefont {Nielsen}\ and\ \citenamefont {Chuang}(2010)}]{NC10}%
  \BibitemOpen
  \bibfield  {author} {\bibinfo {author} {\bibfnamefont {Michael~A.}\
  \bibnamefont {Nielsen}}\ and\ \bibinfo {author} {\bibfnamefont {Isaac~L.}\
  \bibnamefont {Chuang}},\ }\href@noop {} {\emph {\bibinfo {title} {Quantum
  Computation and Quantum Information: 10th Anniversary Edition}}}\ (\bibinfo
  {publisher} {Cambridge University Press},\ \bibinfo {address} {Cambridge,
  UK},\ \bibinfo {year} {2010})\BibitemShut {NoStop}%
\bibitem [{\citenamefont {Myatt}\ \emph {et~al.}(2000)\citenamefont {Myatt},
  \citenamefont {King}, \citenamefont {Turchette}, \citenamefont {Sackett},
  \citenamefont {Kielpinski}, \citenamefont {Itano}, \citenamefont {Monroe},\
  and\ \citenamefont {Wineland}}]{MKTSKIMW00}%
  \BibitemOpen
  \bibfield  {author} {\bibinfo {author} {\bibfnamefont {C.~J.}\ \bibnamefont
  {Myatt}}, \bibinfo {author} {\bibfnamefont {B.~E.}\ \bibnamefont {King}},
  \bibinfo {author} {\bibfnamefont {Q.~A.}\ \bibnamefont {Turchette}}, \bibinfo
  {author} {\bibfnamefont {C.~A.}\ \bibnamefont {Sackett}}, \bibinfo {author}
  {\bibfnamefont {D.}~\bibnamefont {Kielpinski}}, \bibinfo {author}
  {\bibfnamefont {W.~M.}\ \bibnamefont {Itano}}, \bibinfo {author}
  {\bibfnamefont {C.}~\bibnamefont {Monroe}}, \ and\ \bibinfo {author}
  {\bibfnamefont {D.~J.}\ \bibnamefont {Wineland}},\ }\bibfield  {title}
  {\enquote {\bibinfo {title} {Decoherence of quantum superpositions through
  coupling to engineered reservoirs},}\ }\href
  {https://doi.org/10.1038/35002001} {\bibfield  {journal} {\bibinfo  {journal}
  {Nature}\ }\textbf {\bibinfo {volume} {403}},\ \bibinfo {pages} {269--273}
  (\bibinfo {year} {2000})}\BibitemShut {NoStop}%
\bibitem [{\citenamefont {Turchette}\ \emph {et~al.}(2000)\citenamefont
  {Turchette}, \citenamefont {Myatt}, \citenamefont {King}, \citenamefont
  {Sackett}, \citenamefont {Kielpinski}, \citenamefont {Itano}, \citenamefont
  {Monroe},\ and\ \citenamefont {Wineland}}]{PhysRevA.62.053807}%
  \BibitemOpen
  \bibfield  {author} {\bibinfo {author} {\bibfnamefont {Q.~A.}\ \bibnamefont
  {Turchette}}, \bibinfo {author} {\bibfnamefont {C.~J.}\ \bibnamefont
  {Myatt}}, \bibinfo {author} {\bibfnamefont {B.~E.}\ \bibnamefont {King}},
  \bibinfo {author} {\bibfnamefont {C.~A.}\ \bibnamefont {Sackett}}, \bibinfo
  {author} {\bibfnamefont {D.}~\bibnamefont {Kielpinski}}, \bibinfo {author}
  {\bibfnamefont {W.~M.}\ \bibnamefont {Itano}}, \bibinfo {author}
  {\bibfnamefont {C.}~\bibnamefont {Monroe}}, \ and\ \bibinfo {author}
  {\bibfnamefont {D.~J.}\ \bibnamefont {Wineland}},\ }\bibfield  {title}
  {\enquote {\bibinfo {title} {Decoherence and decay of motional quantum states
  of a trapped atom coupled to engineered reservoirs},}\ }\href {\doibase
  10.1103/PhysRevA.62.053807} {\bibfield  {journal} {\bibinfo  {journal}
  {Physical Review A}\ }\textbf {\bibinfo {volume} {62}},\ \bibinfo {pages}
  {053807} (\bibinfo {year} {2000})}\BibitemShut {NoStop}%
\bibitem [{\citenamefont {Chirolli}\ and\ \citenamefont
  {Burkard}(2008)}]{CB08}%
  \BibitemOpen
  \bibfield  {author} {\bibinfo {author} {\bibfnamefont {Luca}\ \bibnamefont
  {Chirolli}}\ and\ \bibinfo {author} {\bibfnamefont {Guido}\ \bibnamefont
  {Burkard}},\ }\bibfield  {title} {\enquote {\bibinfo {title} {Decoherence in
  solid-state qubits},}\ }\href {\doibase 10.1080/00018730802218067} {\bibfield
   {journal} {\bibinfo  {journal} {Advances in Physics}\ }\textbf {\bibinfo
  {volume} {57}},\ \bibinfo {pages} {225--285} (\bibinfo {year} {2008})},\
  \bibinfo {note} {arXiv:0809.4716}\BibitemShut {NoStop}%
\bibitem [{\citenamefont {Zou}\ \emph {et~al.}(2017)\citenamefont {Zou},
  \citenamefont {Li}, \citenamefont {Wang}, \citenamefont {Cao}, \citenamefont
  {Ren}, \citenamefont {Yin}, \citenamefont {Peng}, \citenamefont {Wang},\ and\
  \citenamefont {Pan}}]{Pan17}%
  \BibitemOpen
  \bibfield  {author} {\bibinfo {author} {\bibfnamefont {Wen-Jie}\ \bibnamefont
  {Zou}}, \bibinfo {author} {\bibfnamefont {Yu-Huai}\ \bibnamefont {Li}},
  \bibinfo {author} {\bibfnamefont {Shu-Chao}\ \bibnamefont {Wang}}, \bibinfo
  {author} {\bibfnamefont {Yuan}\ \bibnamefont {Cao}}, \bibinfo {author}
  {\bibfnamefont {Ji-Gang}\ \bibnamefont {Ren}}, \bibinfo {author}
  {\bibfnamefont {Juan}\ \bibnamefont {Yin}}, \bibinfo {author} {\bibfnamefont
  {Cheng-Zhi}\ \bibnamefont {Peng}}, \bibinfo {author} {\bibfnamefont
  {Xiang-Bin}\ \bibnamefont {Wang}}, \ and\ \bibinfo {author} {\bibfnamefont
  {Jian-Wei}\ \bibnamefont {Pan}},\ }\bibfield  {title} {\enquote {\bibinfo
  {title} {Protecting entanglement from finite-temperature thermal noise via
  weak measurement and quantum measurement reversal},}\ }\href {\doibase
  10.1103/PhysRevA.95.042342} {\bibfield  {journal} {\bibinfo  {journal}
  {Physical Review A}\ }\textbf {\bibinfo {volume} {95}},\ \bibinfo {pages}
  {042342} (\bibinfo {year} {2017})}\BibitemShut {NoStop}%
\bibitem [{\citenamefont {Bose}(2003)}]{Bose03}%
  \BibitemOpen
  \bibfield  {author} {\bibinfo {author} {\bibfnamefont {Sougato}\ \bibnamefont
  {Bose}},\ }\bibfield  {title} {\enquote {\bibinfo {title} {Quantum
  communication through an unmodulated spin chain},}\ }\href {\doibase
  10.1103/PhysRevLett.91.207901} {\bibfield  {journal} {\bibinfo  {journal}
  {Physical Review Letters}\ }\textbf {\bibinfo {volume} {91}},\ \bibinfo
  {pages} {207901} (\bibinfo {year} {2003})},\ \bibinfo {note}
  {arXiv:quant-ph/0212041}\BibitemShut {NoStop}%
\bibitem [{\citenamefont {Yuen}\ and\ \citenamefont {Shapiro}(1978)}]{YS78}%
  \BibitemOpen
  \bibfield  {author} {\bibinfo {author} {\bibfnamefont {Horace}\ \bibnamefont
  {Yuen}}\ and\ \bibinfo {author} {\bibfnamefont {Jeffrey~H.}\ \bibnamefont
  {Shapiro}},\ }\bibfield  {title} {\enquote {\bibinfo {title} {Optical
  communication with two-photon coherent states--{Part I}: Quantum-state
  propagation and quantum-noise},}\ }\href {\doibase 10.1109/TIT.1978.1055958}
  {\bibfield  {journal} {\bibinfo  {journal} {IEEE Transactions on Information
  Theory}\ }\textbf {\bibinfo {volume} {24}},\ \bibinfo {pages} {657--668}
  (\bibinfo {year} {1978})}\BibitemShut {NoStop}%
\bibitem [{\citenamefont {{Shapiro}}(2009)}]{S09}%
  \BibitemOpen
  \bibfield  {author} {\bibinfo {author} {\bibfnamefont {Jeffrey~H.}\
  \bibnamefont {{Shapiro}}},\ }\bibfield  {title} {\enquote {\bibinfo {title}
  {The quantum theory of optical communications},}\ }\href {\doibase
  10.1109/JSTQE.2009.2024959} {\bibfield  {journal} {\bibinfo  {journal} {IEEE
  Journal of Selected Topics in Quantum Electronics}\ }\textbf {\bibinfo
  {volume} {15}},\ \bibinfo {pages} {1547--1569} (\bibinfo {year}
  {2009})}\BibitemShut {NoStop}%
\bibitem [{\citenamefont {Rozp{\k{e}}dek}\ \emph {et~al.}(2018)\citenamefont
  {Rozp{\k{e}}dek}, \citenamefont {Goodenough}, \citenamefont {Ribeiro},
  \citenamefont {Kalb}, \citenamefont {Vivoli}, \citenamefont {Reiserer},
  \citenamefont {Hanson}, \citenamefont {Wehner},\ and\ \citenamefont
  {Elkouss}}]{RGRKVRHWE17}%
  \BibitemOpen
  \bibfield  {author} {\bibinfo {author} {\bibfnamefont {Filip}\ \bibnamefont
  {Rozp{\k{e}}dek}}, \bibinfo {author} {\bibfnamefont {Kenneth}\ \bibnamefont
  {Goodenough}}, \bibinfo {author} {\bibfnamefont {Jeremy}\ \bibnamefont
  {Ribeiro}}, \bibinfo {author} {\bibfnamefont {Norbert}\ \bibnamefont {Kalb}},
  \bibinfo {author} {\bibfnamefont {Valentina~Caprara}\ \bibnamefont {Vivoli}},
  \bibinfo {author} {\bibfnamefont {Andreas}\ \bibnamefont {Reiserer}},
  \bibinfo {author} {\bibfnamefont {Ronald}\ \bibnamefont {Hanson}}, \bibinfo
  {author} {\bibfnamefont {Stephanie}\ \bibnamefont {Wehner}}, \ and\ \bibinfo
  {author} {\bibfnamefont {David}\ \bibnamefont {Elkouss}},\ }\bibfield
  {title} {\enquote {\bibinfo {title} {Parameter regimes for a single
  sequential quantum repeater},}\ }\href
  {https://doi.org/10.1088%2F2058-9565%2Faab31b} {\bibfield  {journal}
  {\bibinfo  {journal} {Quantum Science and Technology}\ }\textbf {\bibinfo
  {volume} {3}},\ \bibinfo {pages} {034002} (\bibinfo {year} {2018})},\
  \bibinfo {note} {arXiv:1705.00043}\BibitemShut {NoStop}%
\bibitem [{\citenamefont {Namiki}\ and\ \citenamefont {Hirano}(2004)}]{NH04}%
  \BibitemOpen
  \bibfield  {author} {\bibinfo {author} {\bibfnamefont {Ryo}\ \bibnamefont
  {Namiki}}\ and\ \bibinfo {author} {\bibfnamefont {Takuya}\ \bibnamefont
  {Hirano}},\ }\bibfield  {title} {\enquote {\bibinfo {title} {Practical
  limitation for continuous-variable quantum cryptography using coherent
  states},}\ }\href {\doibase 10.1103/PhysRevLett.92.117901} {\bibfield
  {journal} {\bibinfo  {journal} {Physical Review Letters}\ }\textbf {\bibinfo
  {volume} {92}},\ \bibinfo {pages} {117901} (\bibinfo {year} {2004})},\
  \bibinfo {note} {arXiv:quant-ph/0403115}\BibitemShut {NoStop}%
\bibitem [{\citenamefont {Lodewyck}\ \emph {et~al.}(2005)\citenamefont
  {Lodewyck}, \citenamefont {Debuisschert}, \citenamefont {Tualle-Brouri},\
  and\ \citenamefont {Grangier}}]{LDTG05}%
  \BibitemOpen
  \bibfield  {author} {\bibinfo {author} {\bibfnamefont {{J\'er\^ome}}\
  \bibnamefont {Lodewyck}}, \bibinfo {author} {\bibfnamefont {Thierry}\
  \bibnamefont {Debuisschert}}, \bibinfo {author} {\bibfnamefont {Rosa}\
  \bibnamefont {Tualle-Brouri}}, \ and\ \bibinfo {author} {\bibfnamefont
  {Philippe}\ \bibnamefont {Grangier}},\ }\bibfield  {title} {\enquote
  {\bibinfo {title} {Controlling excess noise in fiber-optics
  continuous-variable quantum key distribution},}\ }\href {\doibase
  10.1103/PhysRevA.72.050303} {\bibfield  {journal} {\bibinfo  {journal}
  {Physical Review A}\ }\textbf {\bibinfo {volume} {72}},\ \bibinfo {pages}
  {050303} (\bibinfo {year} {2005})},\ \bibinfo {note}
  {arXiv:quant-ph/0511104}\BibitemShut {NoStop}%
\bibitem [{\citenamefont {Holevo}\ and\ \citenamefont {Werner}(2001)}]{HW01}%
  \BibitemOpen
  \bibfield  {author} {\bibinfo {author} {\bibfnamefont {Alexander~S.}\
  \bibnamefont {Holevo}}\ and\ \bibinfo {author} {\bibfnamefont {Reinhard~F.}\
  \bibnamefont {Werner}},\ }\bibfield  {title} {\enquote {\bibinfo {title}
  {Evaluating capacities of bosonic {Gaussian} channels},}\ }\href {\doibase
  10.1103/PhysRevA.63.032312} {\bibfield  {journal} {\bibinfo  {journal}
  {Physical Review A}\ }\textbf {\bibinfo {volume} {63}},\ \bibinfo {pages}
  {032312} (\bibinfo {year} {2001})},\ \bibinfo {note}
  {arXiv:quant-ph/9912067}\BibitemShut {NoStop}%
\bibitem [{\citenamefont {Sharma}\ \emph {et~al.}(2018)\citenamefont {Sharma},
  \citenamefont {Adhikari}, \citenamefont {Takeoka},\ and\ \citenamefont
  {Wilde}}]{SWAT}%
  \BibitemOpen
  \bibfield  {author} {\bibinfo {author} {\bibfnamefont {Kunal}\ \bibnamefont
  {Sharma}}, \bibinfo {author} {\bibfnamefont {Sushovit}\ \bibnamefont
  {Adhikari}}, \bibinfo {author} {\bibfnamefont {Masahiro}\ \bibnamefont
  {Takeoka}}, \ and\ \bibinfo {author} {\bibfnamefont {Mark~M.}\ \bibnamefont
  {Wilde}},\ }\bibfield  {title} {\enquote {\bibinfo {title} {Bounding the
  energy-constrained quantum and private capacities of bosonic thermal
  channels},}\ }\href
  {https://iopscience.iop.org/article/10.1088/1367-2630/aac11a} {\bibfield
  {journal} {\bibinfo  {journal} {New Journal of Physics}\ }\textbf {\bibinfo
  {volume} {20}},\ \bibinfo {pages} {063025} (\bibinfo {year} {2018})},\
  \bibinfo {note} {arXiv:1708.07257}\BibitemShut {NoStop}%
\bibitem [{\citenamefont {Noh}\ \emph {et~al.}(2019)\citenamefont {Noh},
  \citenamefont {Albert},\ and\ \citenamefont {Jiang}}]{NAJ18}%
  \BibitemOpen
  \bibfield  {author} {\bibinfo {author} {\bibfnamefont {Kyungjoo}\
  \bibnamefont {Noh}}, \bibinfo {author} {\bibfnamefont {Victor~V.}\
  \bibnamefont {Albert}}, \ and\ \bibinfo {author} {\bibfnamefont {Liang}\
  \bibnamefont {Jiang}},\ }\bibfield  {title} {\enquote {\bibinfo {title}
  {Improved quantum capacity bounds of {Gaussian} loss channels and achievable
  rates with {Gottesman-Kitaev-Preskill} codes},}\ }\href
  {https://arxiv.org/abs/1801.07271} {\bibfield  {journal} {\bibinfo  {journal}
  {IEEE Transactions on Information Theory}\ }\textbf {\bibinfo {volume}
  {65}},\ \bibinfo {pages} {2563--2582} (\bibinfo {year} {2019})},\ \bibinfo
  {note} {arXiv:1801.07271}\BibitemShut {NoStop}%
\bibitem [{\citenamefont {Pirandola}\ \emph {et~al.}(2017)\citenamefont
  {Pirandola}, \citenamefont {Laurenza}, \citenamefont {Ottaviani},\ and\
  \citenamefont {Banchi}}]{PLOB17}%
  \BibitemOpen
  \bibfield  {author} {\bibinfo {author} {\bibfnamefont {Stefano}\ \bibnamefont
  {Pirandola}}, \bibinfo {author} {\bibfnamefont {Riccardo}\ \bibnamefont
  {Laurenza}}, \bibinfo {author} {\bibfnamefont {Carlo}\ \bibnamefont
  {Ottaviani}}, \ and\ \bibinfo {author} {\bibfnamefont {Leonardo}\
  \bibnamefont {Banchi}},\ }\bibfield  {title} {\enquote {\bibinfo {title}
  {Fundamental limits of repeaterless quantum communications},}\ }\href
  {https://doi.org/10.1038/ncomms15043} {\bibfield  {journal} {\bibinfo
  {journal} {Nature Communications}\ }\textbf {\bibinfo {volume} {8}},\
  \bibinfo {pages} {15043} (\bibinfo {year} {2017})}\BibitemShut {NoStop}%
\bibitem [{\citenamefont {Rosati}\ \emph {et~al.}(2018)\citenamefont {Rosati},
  \citenamefont {Mari},\ and\ \citenamefont {Giovannetti}}]{RMG18}%
  \BibitemOpen
  \bibfield  {author} {\bibinfo {author} {\bibfnamefont {Matteo}\ \bibnamefont
  {Rosati}}, \bibinfo {author} {\bibfnamefont {Andrea}\ \bibnamefont {Mari}}, \
  and\ \bibinfo {author} {\bibfnamefont {Vittorio}\ \bibnamefont
  {Giovannetti}},\ }\bibfield  {title} {\enquote {\bibinfo {title} {Narrow
  bounds for the quantum capacity of thermal attenuators},}\ }\href
  {https://doi.org/10.1038/s41467-018-06848-0} {\bibfield  {journal} {\bibinfo
  {journal} {Nature Communications}\ }\textbf {\bibinfo {volume} {9}},\
  \bibinfo {pages} {4339} (\bibinfo {year} {2018})},\ \bibinfo {note}
  {arXiv:1801.04731}\BibitemShut {NoStop}%
\bibitem [{\citenamefont {Wilde}\ \emph {et~al.}(2017)\citenamefont {Wilde},
  \citenamefont {Tomamichel},\ and\ \citenamefont {Berta}}]{WTB17}%
  \BibitemOpen
  \bibfield  {author} {\bibinfo {author} {\bibfnamefont {Mark~M.}\ \bibnamefont
  {Wilde}}, \bibinfo {author} {\bibfnamefont {Marco}\ \bibnamefont
  {Tomamichel}}, \ and\ \bibinfo {author} {\bibfnamefont {Mario}\ \bibnamefont
  {Berta}},\ }\bibfield  {title} {\enquote {\bibinfo {title} {Converse bounds
  for private communication over quantum channels},}\ }\href
  {https://ieeexplore.ieee.org/document/7807212} {\bibfield  {journal}
  {\bibinfo  {journal} {IEEE Transactions on Information Theory}\ }\textbf
  {\bibinfo {volume} {63}},\ \bibinfo {pages} {1792--1817} (\bibinfo {year}
  {2017})},\ \bibinfo {note} {arXiv:1602.08898}\BibitemShut {NoStop}%
\bibitem [{\citenamefont {Filippov}(2018)}]{F18}%
  \BibitemOpen
  \bibfield  {author} {\bibinfo {author} {\bibfnamefont {Sergey~N.}\
  \bibnamefont {Filippov}},\ }\bibfield  {title} {\enquote {\bibinfo {title}
  {Lower and upper bounds on nonunital qubit channel capacities},}\ }\href
  {\doibase https://doi.org/10.1016/S0034-4877(18)30083-1} {\bibfield
  {journal} {\bibinfo  {journal} {Reports on Mathematical Physics}\ }\textbf
  {\bibinfo {volume} {82}},\ \bibinfo {pages} {149--159} (\bibinfo {year}
  {2018})},\ \bibinfo {note} {arXiv:1802.00646}\BibitemShut {NoStop}%
\bibitem [{\citenamefont {Filippov}\ \emph {et~al.}(2018)\citenamefont
  {Filippov}, \citenamefont {Frizen},\ and\ \citenamefont {Kolobova}}]{FFK18}%
  \BibitemOpen
  \bibfield  {author} {\bibinfo {author} {\bibfnamefont {Sergey~N.}\
  \bibnamefont {Filippov}}, \bibinfo {author} {\bibfnamefont {Vladimir~V.}\
  \bibnamefont {Frizen}}, \ and\ \bibinfo {author} {\bibfnamefont {Daria~V.}\
  \bibnamefont {Kolobova}},\ }\bibfield  {title} {\enquote {\bibinfo {title}
  {Ultimate entanglement robustness of two-qubit states against general local
  noises},}\ }\href {\doibase 10.1103/PhysRevA.97.012322} {\bibfield  {journal}
  {\bibinfo  {journal} {Physical Review A}\ }\textbf {\bibinfo {volume} {97}},\
  \bibinfo {pages} {012322} (\bibinfo {year} {2018})},\ \bibinfo {note}
  {arXiv:1708.08208}\BibitemShut {NoStop}%
\bibitem [{\citenamefont {Li-Zhen}\ and\ \citenamefont
  {Mao-Fa}(2007{\natexlab{a}})}]{LM07}%
  \BibitemOpen
  \bibfield  {author} {\bibinfo {author} {\bibfnamefont {Hou}\ \bibnamefont
  {Li-Zhen}}\ and\ \bibinfo {author} {\bibfnamefont {Fang}\ \bibnamefont
  {Mao-Fa}},\ }\bibfield  {title} {\enquote {\bibinfo {title}
  {Entanglement-assisted classical capacity of a generalized amplitude damping
  channel},}\ }\href {http://stacks.iop.org/0256-307X/24/i=9/a=006} {\bibfield
  {journal} {\bibinfo  {journal} {Chinese Physics Letters}\ }\textbf {\bibinfo
  {volume} {24}},\ \bibinfo {pages} {2482} (\bibinfo {year}
  {2007}{\natexlab{a}})}\BibitemShut {NoStop}%
\bibitem [{\citenamefont {Bennett}\ \emph {et~al.}(1999)\citenamefont
  {Bennett}, \citenamefont {Shor}, \citenamefont {Smolin},\ and\ \citenamefont
  {Thapliyal}}]{BSST99}%
  \BibitemOpen
  \bibfield  {author} {\bibinfo {author} {\bibfnamefont {Charles~H.}\
  \bibnamefont {Bennett}}, \bibinfo {author} {\bibfnamefont {Peter~W.}\
  \bibnamefont {Shor}}, \bibinfo {author} {\bibfnamefont {John~A.}\
  \bibnamefont {Smolin}}, \ and\ \bibinfo {author} {\bibfnamefont {Ashish~V.}\
  \bibnamefont {Thapliyal}},\ }\bibfield  {title} {\enquote {\bibinfo {title}
  {Entanglement-assisted classical capacity of noisy quantum channels},}\
  }\href {\doibase 10.1103/PhysRevLett.83.3081} {\bibfield  {journal} {\bibinfo
   {journal} {Physical Review Letters}\ }\textbf {\bibinfo {volume} {83}},\
  \bibinfo {pages} {3081--3084} (\bibinfo {year} {1999})},\ \bibinfo {note}
  {arXiv:quant-ph/9904023}\BibitemShut {NoStop}%
\bibitem [{\citenamefont {Bennett}\ \emph {et~al.}(2002)\citenamefont
  {Bennett}, \citenamefont {Shor}, \citenamefont {Smolin},\ and\ \citenamefont
  {Thapliyal}}]{ieee2002bennett}%
  \BibitemOpen
  \bibfield  {author} {\bibinfo {author} {\bibfnamefont {Charles~H.}\
  \bibnamefont {Bennett}}, \bibinfo {author} {\bibfnamefont {Peter~W.}\
  \bibnamefont {Shor}}, \bibinfo {author} {\bibfnamefont {John~A.}\
  \bibnamefont {Smolin}}, \ and\ \bibinfo {author} {\bibfnamefont {Ashish~V.}\
  \bibnamefont {Thapliyal}},\ }\bibfield  {title} {\enquote {\bibinfo {title}
  {Entanglement-assisted capacity of a quantum channel and the reverse
  {Shannon} theorem},}\ }\href {\doibase 10.1109/TIT.2002.802612} {\bibfield
  {journal} {\bibinfo  {journal} {IEEE Transactions on Information Theory}\
  }\textbf {\bibinfo {volume} {48}},\ \bibinfo {pages} {2637--2655} (\bibinfo
  {year} {2002})},\ \bibinfo {note} {arXiv:quant-ph/0106052}\BibitemShut
  {NoStop}%
\bibitem [{\citenamefont {Holevo}(2002)}]{Hol01a}%
  \BibitemOpen
  \bibfield  {author} {\bibinfo {author} {\bibfnamefont {Alexander~S.}\
  \bibnamefont {Holevo}},\ }\bibfield  {title} {\enquote {\bibinfo {title} {On
  entanglement assisted classical capacity},}\ }\href
  {https://doi.org/10.1063/1.1495877} {\bibfield  {journal} {\bibinfo
  {journal} {Journal of Mathematical Physics}\ }\textbf {\bibinfo {volume}
  {43}},\ \bibinfo {pages} {4326--4333} (\bibinfo {year} {2002})},\ \bibinfo
  {note} {arXiv:quant-ph/0106075}\BibitemShut {NoStop}%
\bibitem [{\citenamefont {Takeoka}\ \emph
  {et~al.}(2014{\natexlab{a}})\citenamefont {Takeoka}, \citenamefont {Guha},\
  and\ \citenamefont {Wilde}}]{TGW14a}%
  \BibitemOpen
  \bibfield  {author} {\bibinfo {author} {\bibfnamefont {Masahiro}\
  \bibnamefont {Takeoka}}, \bibinfo {author} {\bibfnamefont {Saikat}\
  \bibnamefont {Guha}}, \ and\ \bibinfo {author} {\bibfnamefont {Mark~M.}\
  \bibnamefont {Wilde}},\ }\bibfield  {title} {\enquote {\bibinfo {title} {The
  squashed entanglement of a quantum channel},}\ }\href {\doibase
  10.1109/TIT.2014.2330313} {\bibfield  {journal} {\bibinfo  {journal} {IEEE
  Transactions on Information Theory}\ }\textbf {\bibinfo {volume} {60}},\
  \bibinfo {pages} {4987--4998} (\bibinfo {year} {2014}{\natexlab{a}})},\
  \bibinfo {note} {arXiv:1310.0129}\BibitemShut {NoStop}%
\bibitem [{\citenamefont {Takeoka}\ \emph
  {et~al.}(2014{\natexlab{b}})\citenamefont {Takeoka}, \citenamefont {Guha},\
  and\ \citenamefont {Wilde}}]{TGW14b}%
  \BibitemOpen
  \bibfield  {author} {\bibinfo {author} {\bibfnamefont {Masahiro}\
  \bibnamefont {Takeoka}}, \bibinfo {author} {\bibfnamefont {Saikat}\
  \bibnamefont {Guha}}, \ and\ \bibinfo {author} {\bibfnamefont {Mark~M.}\
  \bibnamefont {Wilde}},\ }\bibfield  {title} {\enquote {\bibinfo {title}
  {Fundamental rate-loss tradeoff for optical quantum key distribution},}\
  }\href {http://dx.doi.org/10.1038/ncomms6235} {\bibfield  {journal} {\bibinfo
   {journal} {Nature Communications}\ }\textbf {\bibinfo {volume} {5}},\
  \bibinfo {pages} {5235} (\bibinfo {year} {2014}{\natexlab{b}})},\ \bibinfo
  {note} {arXiv:1504.06390}\BibitemShut {NoStop}%
\bibitem [{\citenamefont {Goodenough}\ \emph {et~al.}(2016)\citenamefont
  {Goodenough}, \citenamefont {Elkouss},\ and\ \citenamefont {Wehner}}]{GEW16}%
  \BibitemOpen
  \bibfield  {author} {\bibinfo {author} {\bibfnamefont {Kenneth}\ \bibnamefont
  {Goodenough}}, \bibinfo {author} {\bibfnamefont {David}\ \bibnamefont
  {Elkouss}}, \ and\ \bibinfo {author} {\bibfnamefont {Stephanie}\ \bibnamefont
  {Wehner}},\ }\bibfield  {title} {\enquote {\bibinfo {title} {Assessing the
  performance of quantum repeaters for all phase-insensitive {G}aussian bosonic
  channels},}\ }\href {https://doi.org/10.1088%2F1367-2630%2F18%2F6%2F063005}
  {\bibfield  {journal} {\bibinfo  {journal} {New Journal of Physics}\ }\textbf
  {\bibinfo {volume} {18}},\ \bibinfo {pages} {063005} (\bibinfo {year}
  {2016})},\ \bibinfo {note} {arXiv:1511.08710}\BibitemShut {NoStop}%
\bibitem [{\citenamefont {Horodecki}\ \emph {et~al.}(2003)\citenamefont
  {Horodecki}, \citenamefont {Shor},\ and\ \citenamefont {Ruskai}}]{HSR03}%
  \BibitemOpen
  \bibfield  {author} {\bibinfo {author} {\bibfnamefont {Michael}\ \bibnamefont
  {Horodecki}}, \bibinfo {author} {\bibfnamefont {Peter~W.}\ \bibnamefont
  {Shor}}, \ and\ \bibinfo {author} {\bibfnamefont {Mary~Beth}\ \bibnamefont
  {Ruskai}},\ }\bibfield  {title} {\enquote {\bibinfo {title} {Entanglement
  breaking channels},}\ }\href
  {https://www.worldscientific.com/doi/abs/10.1142/S0129055X03001709}
  {\bibfield  {journal} {\bibinfo  {journal} {Reviews in Mathematical Physics}\
  }\textbf {\bibinfo {volume} {15}},\ \bibinfo {pages} {629--641} (\bibinfo
  {year} {2003})},\ \bibinfo {note} {arXiv:quant-ph/0302031}\BibitemShut
  {NoStop}%
\bibitem [{\citenamefont {Caruso}\ and\ \citenamefont
  {Giovannetti}(2006)}]{CG06}%
  \BibitemOpen
  \bibfield  {author} {\bibinfo {author} {\bibfnamefont {Filippo}\ \bibnamefont
  {Caruso}}\ and\ \bibinfo {author} {\bibfnamefont {Vittorio}\ \bibnamefont
  {Giovannetti}},\ }\bibfield  {title} {\enquote {\bibinfo {title}
  {Degradability of bosonic {Gaussian} channels},}\ }\href
  {https://journals.aps.org/pra/abstract/10.1103/PhysRevA.74.062307} {\bibfield
   {journal} {\bibinfo  {journal} {Physical Review A}\ }\textbf {\bibinfo
  {volume} {74}},\ \bibinfo {pages} {062307} (\bibinfo {year} {2006})},\
  \bibinfo {note} {arXiv:quant-ph/0603257}\BibitemShut {NoStop}%
\bibitem [{\citenamefont {{Wang}}\ \emph {et~al.}(2018)\citenamefont {{Wang}},
  \citenamefont {{Xie}},\ and\ \citenamefont {{Duan}}}]{WXD18}%
  \BibitemOpen
  \bibfield  {author} {\bibinfo {author} {\bibfnamefont {Xin}\ \bibnamefont
  {{Wang}}}, \bibinfo {author} {\bibfnamefont {Wei}\ \bibnamefont {{Xie}}}, \
  and\ \bibinfo {author} {\bibfnamefont {Runyao}\ \bibnamefont {{Duan}}},\
  }\bibfield  {title} {\enquote {\bibinfo {title} {Semidefinite programming
  strong converse bounds for classical capacity},}\ }\href {\doibase
  10.1109/TIT.2017.2741101} {\bibfield  {journal} {\bibinfo  {journal} {IEEE
  Transactions on Information Theory}\ }\textbf {\bibinfo {volume} {64}},\
  \bibinfo {pages} {640--653} (\bibinfo {year} {2018})},\ \bibinfo {note}
  {arXiv:1610.06381}\BibitemShut {NoStop}%
\bibitem [{\citenamefont {Leditzky}\ \emph
  {et~al.}(2018{\natexlab{d}})\citenamefont {Leditzky}, \citenamefont {Kaur},
  \citenamefont {Datta},\ and\ \citenamefont {Wilde}}]{LKDW18}%
  \BibitemOpen
  \bibfield  {author} {\bibinfo {author} {\bibfnamefont {Felix}\ \bibnamefont
  {Leditzky}}, \bibinfo {author} {\bibfnamefont {Eneet}\ \bibnamefont {Kaur}},
  \bibinfo {author} {\bibfnamefont {Nilanjana}\ \bibnamefont {Datta}}, \ and\
  \bibinfo {author} {\bibfnamefont {Mark~M.}\ \bibnamefont {Wilde}},\
  }\bibfield  {title} {\enquote {\bibinfo {title} {Approaches for approximate
  additivity of the {Holevo} information of quantum channels},}\ }\href
  {\doibase 10.1103/PhysRevA.97.012332} {\bibfield  {journal} {\bibinfo
  {journal} {Physical Review A}\ }\textbf {\bibinfo {volume} {97}},\ \bibinfo
  {pages} {012332} (\bibinfo {year} {2018}{\natexlab{d}})},\ \bibinfo {note}
  {arXiv:1709.01111}\BibitemShut {NoStop}%
\bibitem [{\citenamefont {Wolf}\ and\ \citenamefont
  {P\'erez-Garc\'{\i}a}(2007)}]{WP07}%
  \BibitemOpen
  \bibfield  {author} {\bibinfo {author} {\bibfnamefont {Michael~M.}\
  \bibnamefont {Wolf}}\ and\ \bibinfo {author} {\bibfnamefont {David}\
  \bibnamefont {P\'erez-Garc\'{\i}a}},\ }\bibfield  {title} {\enquote {\bibinfo
  {title} {Quantum capacities of channels with small environment},}\ }\href
  {\doibase 10.1103/PhysRevA.75.012303} {\bibfield  {journal} {\bibinfo
  {journal} {Physical Review A}\ }\textbf {\bibinfo {volume} {75}},\ \bibinfo
  {pages} {012303} (\bibinfo {year} {2007})},\ \bibinfo {note}
  {arXiv:quant-ph/0607070}\BibitemShut {NoStop}%
\bibitem [{\citenamefont {Smith}\ and\ \citenamefont {Smolin}(2008)}]{SS08}%
  \BibitemOpen
  \bibfield  {author} {\bibinfo {author} {\bibfnamefont {Graeme}\ \bibnamefont
  {Smith}}\ and\ \bibinfo {author} {\bibfnamefont {John~A.}\ \bibnamefont
  {Smolin}},\ }\bibfield  {title} {\enquote {\bibinfo {title} {Additive
  extensions of a quantum channel},}\ }in\ \href {\doibase
  10.1109/ITW.2008.4578688} {\emph {\bibinfo {booktitle} {2008 IEEE Information
  Theory Workshop}}}\ (\bibinfo {year} {2008})\ pp.\ \bibinfo {pages}
  {368--372}\BibitemShut {NoStop}%
\bibitem [{\citenamefont {Sutter}\ \emph {et~al.}(2017)\citenamefont {Sutter},
  \citenamefont {Scholz}, \citenamefont {Winter},\ and\ \citenamefont
  {Renner}}]{SSWR17}%
  \BibitemOpen
  \bibfield  {author} {\bibinfo {author} {\bibfnamefont {David}\ \bibnamefont
  {Sutter}}, \bibinfo {author} {\bibfnamefont {Volkher~B.}\ \bibnamefont
  {Scholz}}, \bibinfo {author} {\bibfnamefont {Andreas}\ \bibnamefont
  {Winter}}, \ and\ \bibinfo {author} {\bibfnamefont {Renato}\ \bibnamefont
  {Renner}},\ }\bibfield  {title} {\enquote {\bibinfo {title} {Approximate
  degradable quantum channels},}\ }\href {\doibase 10.1109/TIT.2017.2754268}
  {\bibfield  {journal} {\bibinfo  {journal} {IEEE Transactions on Information
  Theory}\ }\textbf {\bibinfo {volume} {63}},\ \bibinfo {pages} {7832--7844}
  (\bibinfo {year} {2017})},\ \bibinfo {note} {arXiv:1412.0980}\BibitemShut
  {NoStop}%
\bibitem [{\citenamefont {Tomamichel}\ \emph {et~al.}(2017)\citenamefont
  {Tomamichel}, \citenamefont {Wilde},\ and\ \citenamefont {Winter}}]{TWW17}%
  \BibitemOpen
  \bibfield  {author} {\bibinfo {author} {\bibfnamefont {Marco}\ \bibnamefont
  {Tomamichel}}, \bibinfo {author} {\bibfnamefont {Mark~M.}\ \bibnamefont
  {Wilde}}, \ and\ \bibinfo {author} {\bibfnamefont {Andreas}\ \bibnamefont
  {Winter}},\ }\bibfield  {title} {\enquote {\bibinfo {title} {Strong converse
  rates for quantum communication},}\ }\href {\doibase
  10.1109/TIT.2016.2615847} {\bibfield  {journal} {\bibinfo  {journal} {IEEE
  Transactions on Information Theory}\ }\textbf {\bibinfo {volume} {63}},\
  \bibinfo {pages} {715--727} (\bibinfo {year} {2017})},\ \bibinfo {note}
  {arXiv:1406.2946}\BibitemShut {NoStop}%
\bibitem [{\citenamefont {Davis}\ \emph {et~al.}(2018)\citenamefont {Davis},
  \citenamefont {Shirokov},\ and\ \citenamefont {Wilde}}]{DSW18}%
  \BibitemOpen
  \bibfield  {author} {\bibinfo {author} {\bibfnamefont {Noah}\ \bibnamefont
  {Davis}}, \bibinfo {author} {\bibfnamefont {Maksim~E.}\ \bibnamefont
  {Shirokov}}, \ and\ \bibinfo {author} {\bibfnamefont {Mark~M.}\ \bibnamefont
  {Wilde}},\ }\bibfield  {title} {\enquote {\bibinfo {title}
  {Energy-constrained two-way assisted private and quantum capacities of
  quantum channels},}\ }\href {\doibase 10.1103/PhysRevA.97.062310} {\bibfield
  {journal} {\bibinfo  {journal} {Physical Review A}\ }\textbf {\bibinfo
  {volume} {97}},\ \bibinfo {pages} {062310} (\bibinfo {year} {2018})},\
  \bibinfo {note} {arXiv:1801.08102}\BibitemShut {NoStop}%
\bibitem [{\citenamefont {Berta}\ and\ \citenamefont {Wilde}(2018)}]{BW18}%
  \BibitemOpen
  \bibfield  {author} {\bibinfo {author} {\bibfnamefont {Mario}\ \bibnamefont
  {Berta}}\ and\ \bibinfo {author} {\bibfnamefont {Mark~M.}\ \bibnamefont
  {Wilde}},\ }\bibfield  {title} {\enquote {\bibinfo {title} {Amortization does
  not enhance the max-{R}ains information of a quantum channel},}\ }\href
  {http://stacks.iop.org/1367-2630/20/i=5/a=053044} {\bibfield  {journal}
  {\bibinfo  {journal} {New Journal of Physics}\ }\textbf {\bibinfo {volume}
  {20}},\ \bibinfo {pages} {053044} (\bibinfo {year} {2018})},\ \bibinfo {note}
  {arXiv:1709.04907}\BibitemShut {NoStop}%
\bibitem [{\citenamefont {Kaur}\ and\ \citenamefont {Wilde}(2017)}]{KW17}%
  \BibitemOpen
  \bibfield  {author} {\bibinfo {author} {\bibfnamefont {Eneet}\ \bibnamefont
  {Kaur}}\ and\ \bibinfo {author} {\bibfnamefont {Mark~M.}\ \bibnamefont
  {Wilde}},\ }\bibfield  {title} {\enquote {\bibinfo {title} {Amortized
  entanglement of a quantum channel and approximately teleportation-simulable
  channels},}\ }\href {\doibase 10.1088/1751-8121/aa9da7} {\bibfield  {journal}
  {\bibinfo  {journal} {Journal of Physics A: Mathematical and Theoretical}\
  }\textbf {\bibinfo {volume} {51}},\ \bibinfo {pages} {035303} (\bibinfo
  {year} {2017})},\ \bibinfo {note} {arXiv:1707.07721}\BibitemShut {NoStop}%
\bibitem [{\citenamefont {Schumacher}\ and\ \citenamefont
  {Nielsen}(1996)}]{SN96}%
  \BibitemOpen
  \bibfield  {author} {\bibinfo {author} {\bibfnamefont {Benjamin}\
  \bibnamefont {Schumacher}}\ and\ \bibinfo {author} {\bibfnamefont
  {Michael~A.}\ \bibnamefont {Nielsen}},\ }\bibfield  {title} {\enquote
  {\bibinfo {title} {Quantum data processing and error correction},}\ }\href
  {https://journals.aps.org/pra/abstract/10.1103/PhysRevA.54.2629} {\bibfield
  {journal} {\bibinfo  {journal} {Physical Review A}\ }\textbf {\bibinfo
  {volume} {54}},\ \bibinfo {pages} {2629--2635} (\bibinfo {year} {1996})},\
  \bibinfo {note} {arXiv:quant-ph/9604022}\BibitemShut {NoStop}%
\bibitem [{\citenamefont {Wilde}(2016)}]{MMW16}%
  \BibitemOpen
  \bibfield  {author} {\bibinfo {author} {\bibfnamefont {Mark~M.}\ \bibnamefont
  {Wilde}},\ }\bibfield  {title} {\enquote {\bibinfo {title} {Squashed
  entanglement and approximate private states},}\ }\href
  {https://link.springer.com/article/10.1007%2Fs11128-016-1432-7} {\bibfield
  {journal} {\bibinfo  {journal} {Quantum Information Processing}\ }\textbf
  {\bibinfo {volume} {15}},\ \bibinfo {pages} {4563--4580} (\bibinfo {year}
  {2016})},\ \bibinfo {note} {arXiv:1606.08028}\BibitemShut {NoStop}%
\bibitem [{\citenamefont {Wang}\ and\ \citenamefont {Duan}(2016)}]{WD16}%
  \BibitemOpen
  \bibfield  {author} {\bibinfo {author} {\bibfnamefont {Xin}\ \bibnamefont
  {Wang}}\ and\ \bibinfo {author} {\bibfnamefont {Runyao}\ \bibnamefont
  {Duan}},\ }\bibfield  {title} {\enquote {\bibinfo {title} {Improved
  semidefinite programming upper bound on distillable entanglement},}\ }\href
  {\doibase 10.1103/PhysRevA.94.050301} {\bibfield  {journal} {\bibinfo
  {journal} {Physical Review A}\ }\textbf {\bibinfo {volume} {94}},\ \bibinfo
  {pages} {050301} (\bibinfo {year} {2016})},\ \bibinfo {note}
  {arXiv:1601.07940}\BibitemShut {NoStop}%
\bibitem [{\citenamefont {Wang}\ \emph {et~al.}(2019)\citenamefont {Wang},
  \citenamefont {Fang},\ and\ \citenamefont {Duan}}]{WFD18}%
  \BibitemOpen
  \bibfield  {author} {\bibinfo {author} {\bibfnamefont {Xin}\ \bibnamefont
  {Wang}}, \bibinfo {author} {\bibfnamefont {Kun}\ \bibnamefont {Fang}}, \ and\
  \bibinfo {author} {\bibfnamefont {Runyao}\ \bibnamefont {Duan}},\ }\bibfield
  {title} {\enquote {\bibinfo {title} {Semidefinite programming converse bounds
  for quantum communication},}\ }\href
  {https://ieeexplore.ieee.org/document/8482492} {\bibfield  {journal}
  {\bibinfo  {journal} {IEEE Transactions on Information Theory}\ }\textbf
  {\bibinfo {volume} {65}},\ \bibinfo {pages} {2583--2592} (\bibinfo {year}
  {2019})},\ \bibinfo {note} {arXiv:1709.00200}\BibitemShut {NoStop}%
\bibitem [{\citenamefont {Christandl}\ and\ \citenamefont
  {M\"{u}ller-Hermes}(2017)}]{CMH17}%
  \BibitemOpen
  \bibfield  {author} {\bibinfo {author} {\bibfnamefont {Matthias}\
  \bibnamefont {Christandl}}\ and\ \bibinfo {author} {\bibfnamefont
  {Alexander}\ \bibnamefont {M\"{u}ller-Hermes}},\ }\bibfield  {title}
  {\enquote {\bibinfo {title} {Relative entropy bounds on quantum, private and
  repeater capacities},}\ }\href {\doibase 10.1007/s00220-017-2885-y}
  {\bibfield  {journal} {\bibinfo  {journal} {Communications in Mathematical
  Physics}\ }\textbf {\bibinfo {volume} {353}},\ \bibinfo {pages} {821--852}
  (\bibinfo {year} {2017})},\ \bibinfo {note} {arXiv:1604.03448}\BibitemShut
  {NoStop}%
\bibitem [{Note1()}]{Note1}%
  \BibitemOpen
  \bibinfo {note} {Some of the Matlab code makes use of the Quantinf package
  \cite {cubitt_matlab} as well as the package QETLAB \cite
  {qetlab}.}\BibitemShut {Stop}%
\bibitem [{\citenamefont {Kitaev}(1997)}]{Kit97}%
  \BibitemOpen
  \bibfield  {author} {\bibinfo {author} {\bibfnamefont {Alexei}\ \bibnamefont
  {Kitaev}},\ }\bibfield  {title} {\enquote {\bibinfo {title} {Quantum
  computations: algorithms and error correction},}\ }\href {\doibase
  10.1070/RM1997v052n06ABEH002155} {\bibfield  {journal} {\bibinfo  {journal}
  {Russian Mathematical Surveys}\ }\textbf {\bibinfo {volume} {52}},\ \bibinfo
  {pages} {1191--1249} (\bibinfo {year} {1997})}\BibitemShut {NoStop}%
\bibitem [{\citenamefont {Holevo}(1973)}]{Hol73}%
  \BibitemOpen
  \bibfield  {author} {\bibinfo {author} {\bibfnamefont {Alexander~S.}\
  \bibnamefont {Holevo}},\ }\bibfield  {title} {\enquote {\bibinfo {title}
  {Bounds for the quantity of information transmitted by a quantum
  communication channel},}\ }\href
  {http://www.mathnet.ru/php/archive.phtml?wshow=paper&jrnid=ppi&paperid=903&option_lang=eng}
  {\bibfield  {journal} {\bibinfo  {journal} {Problems of Information
  Transmission}\ }\textbf {\bibinfo {volume} {9}},\ \bibinfo {pages} {177--183}
  (\bibinfo {year} {1973})}\BibitemShut {NoStop}%
\bibitem [{\citenamefont {Schumacher}\ and\ \citenamefont
  {Westmoreland}(1997)}]{SW97}%
  \BibitemOpen
  \bibfield  {author} {\bibinfo {author} {\bibfnamefont {Benjamin}\
  \bibnamefont {Schumacher}}\ and\ \bibinfo {author} {\bibfnamefont
  {Michael~D.}\ \bibnamefont {Westmoreland}},\ }\bibfield  {title} {\enquote
  {\bibinfo {title} {Sending classical information via noisy quantum
  channels},}\ }\href {\doibase 10.1103/PhysRevA.56.131} {\bibfield  {journal}
  {\bibinfo  {journal} {Physical Review A}\ }\textbf {\bibinfo {volume} {56}},\
  \bibinfo {pages} {131--138} (\bibinfo {year} {1997})}\BibitemShut {NoStop}%
\bibitem [{\citenamefont {Holevo}(1998)}]{Hol98}%
  \BibitemOpen
  \bibfield  {author} {\bibinfo {author} {\bibfnamefont {Alexander~S.}\
  \bibnamefont {Holevo}},\ }\bibfield  {title} {\enquote {\bibinfo {title} {The
  capacity of the quantum channel with general signal states},}\ }\href
  {\doibase 10.1109/18.651037} {\bibfield  {journal} {\bibinfo  {journal} {IEEE
  Transactions on Information Theory}\ }\textbf {\bibinfo {volume} {44}},\
  \bibinfo {pages} {269--273} (\bibinfo {year} {1998})},\ \bibinfo {note}
  {arXiv:quant-ph/9611023}\BibitemShut {NoStop}%
\bibitem [{\citenamefont {Schumacher}(1996)}]{BS96}%
  \BibitemOpen
  \bibfield  {author} {\bibinfo {author} {\bibfnamefont {Benjamin}\
  \bibnamefont {Schumacher}},\ }\bibfield  {title} {\enquote {\bibinfo {title}
  {Sending quantum entanglement through noisy channels},}\ }\href
  {https://journals.aps.org/pra/abstract/10.1103/PhysRevA.54.2614} {\bibfield
  {journal} {\bibinfo  {journal} {Physical Review A}\ }\textbf {\bibinfo
  {volume} {54}},\ \bibinfo {pages} {2614--2628} (\bibinfo {year} {1996})},\
  \bibinfo {note} {arXiv:quant-ph/9604023}\BibitemShut {NoStop}%
\bibitem [{\citenamefont {Barnum}\ \emph {et~al.}(1998)\citenamefont {Barnum},
  \citenamefont {Nielsen},\ and\ \citenamefont {Schumacher}}]{BNS98}%
  \BibitemOpen
  \bibfield  {author} {\bibinfo {author} {\bibfnamefont {Howard}\ \bibnamefont
  {Barnum}}, \bibinfo {author} {\bibfnamefont {M.~A.}\ \bibnamefont {Nielsen}},
  \ and\ \bibinfo {author} {\bibfnamefont {Benjamin}\ \bibnamefont
  {Schumacher}},\ }\bibfield  {title} {\enquote {\bibinfo {title} {Information
  transmission through a noisy quantum channel},}\ }\href
  {https://journals.aps.org/pra/abstract/10.1103/PhysRevA.57.4153} {\bibfield
  {journal} {\bibinfo  {journal} {Physical Review A}\ }\textbf {\bibinfo
  {volume} {57}},\ \bibinfo {pages} {4153--4175} (\bibinfo {year} {1998})},\
  \bibinfo {note} {arXiv:quant-ph/9702049}\BibitemShut {NoStop}%
\bibitem [{\citenamefont {Barnum}\ \emph {et~al.}(2000)\citenamefont {Barnum},
  \citenamefont {Knill},\ and\ \citenamefont {Nielsen}}]{BKN20}%
  \BibitemOpen
  \bibfield  {author} {\bibinfo {author} {\bibfnamefont {Howard}\ \bibnamefont
  {Barnum}}, \bibinfo {author} {\bibfnamefont {Emanuel}\ \bibnamefont {Knill}},
  \ and\ \bibinfo {author} {\bibfnamefont {Michael~A.}\ \bibnamefont
  {Nielsen}},\ }\bibfield  {title} {\enquote {\bibinfo {title} {On quantum
  fidelities and channel capacities},}\ }\href
  {https://ieeexplore.ieee.org/document/850671} {\bibfield  {journal} {\bibinfo
   {journal} {IEEE Transactions on Information Theory}\ }\textbf {\bibinfo
  {volume} {46}},\ \bibinfo {pages} {1317--1329} (\bibinfo {year} {2000})},\
  \bibinfo {note} {arXiv:quant-ph/9809010}\BibitemShut {NoStop}%
\bibitem [{\citenamefont {Lloyd}(1997)}]{Llo97}%
  \BibitemOpen
  \bibfield  {author} {\bibinfo {author} {\bibfnamefont {Seth}\ \bibnamefont
  {Lloyd}},\ }\bibfield  {title} {\enquote {\bibinfo {title} {Capacity of the
  noisy quantum channel},}\ }\href {\doibase 10.1103/PhysRevA.55.1613}
  {\bibfield  {journal} {\bibinfo  {journal} {Physical Review A}\ }\textbf
  {\bibinfo {volume} {55}},\ \bibinfo {pages} {1613--1622} (\bibinfo {year}
  {1997})},\ \bibinfo {note} {arXiv:quant-ph/9604015}\BibitemShut {NoStop}%
\bibitem [{\citenamefont {Shor}(2002{\natexlab{a}})}]{capacity2002shor}%
  \BibitemOpen
  \bibfield  {author} {\bibinfo {author} {\bibfnamefont {Peter~W.}\
  \bibnamefont {Shor}},\ }\bibfield  {title} {\enquote {\bibinfo {title} {The
  quantum channel capacity and coherent information},}\ }in\ \href@noop {}
  {\emph {\bibinfo {booktitle} {Lecture Notes, MSRI Workshop on Quantum
  Computation}}}\ (\bibinfo {year} {2002})\BibitemShut {NoStop}%
\bibitem [{\citenamefont {Devetak}(2005)}]{Dev05}%
  \BibitemOpen
  \bibfield  {author} {\bibinfo {author} {\bibfnamefont {Igor}\ \bibnamefont
  {Devetak}},\ }\bibfield  {title} {\enquote {\bibinfo {title} {The private
  classical capacity and quantum capacity of a quantum channel},}\ }\href
  {\doibase 10.1109/TIT.2004.839515} {\bibfield  {journal} {\bibinfo  {journal}
  {IEEE Transactions on Information Theory}\ }\textbf {\bibinfo {volume}
  {51}},\ \bibinfo {pages} {44--55} (\bibinfo {year} {2005})},\ \bibinfo {note}
  {arXiv:quant-ph/0304127}\BibitemShut {NoStop}%
\bibitem [{\citenamefont {Cai}\ \emph {et~al.}(2004)\citenamefont {Cai},
  \citenamefont {Winter},\ and\ \citenamefont {Yeung}}]{CWY04}%
  \BibitemOpen
  \bibfield  {author} {\bibinfo {author} {\bibfnamefont {Ning}\ \bibnamefont
  {Cai}}, \bibinfo {author} {\bibfnamefont {Andreas}\ \bibnamefont {Winter}}, \
  and\ \bibinfo {author} {\bibfnamefont {Raymond~W.}\ \bibnamefont {Yeung}},\
  }\bibfield  {title} {\enquote {\bibinfo {title} {Quantum privacy and quantum
  wiretap channels},}\ }\href {\doibase 10.1007/s11122-005-0002-x} {\bibfield
  {journal} {\bibinfo  {journal} {Problems of Information Transmission}\
  }\textbf {\bibinfo {volume} {40}},\ \bibinfo {pages} {318--336} (\bibinfo
  {year} {2004})}\BibitemShut {NoStop}%
\bibitem [{\citenamefont {Bennett}\ \emph
  {et~al.}(1996{\natexlab{a}})\citenamefont {Bennett}, \citenamefont
  {Brassard}, \citenamefont {Popescu}, \citenamefont {Schumacher},
  \citenamefont {Smolin},\ and\ \citenamefont {Wootters}}]{BGPSSW96}%
  \BibitemOpen
  \bibfield  {author} {\bibinfo {author} {\bibfnamefont {Charles~H.}\
  \bibnamefont {Bennett}}, \bibinfo {author} {\bibfnamefont {Gilles}\
  \bibnamefont {Brassard}}, \bibinfo {author} {\bibfnamefont {Sandu}\
  \bibnamefont {Popescu}}, \bibinfo {author} {\bibfnamefont {Benjamin}\
  \bibnamefont {Schumacher}}, \bibinfo {author} {\bibfnamefont {John~A.}\
  \bibnamefont {Smolin}}, \ and\ \bibinfo {author} {\bibfnamefont {William~K.}\
  \bibnamefont {Wootters}},\ }\bibfield  {title} {\enquote {\bibinfo {title}
  {Purification of noisy entanglement and faithful teleportation via noisy
  channels},}\ }\href {\doibase 10.1103/PhysRevLett.76.722} {\bibfield
  {journal} {\bibinfo  {journal} {Physical Review Letters}\ }\textbf {\bibinfo
  {volume} {76}},\ \bibinfo {pages} {722--725} (\bibinfo {year}
  {1996}{\natexlab{a}})}\BibitemShut {NoStop}%
\bibitem [{\citenamefont {Bennett}\ \emph
  {et~al.}(1996{\natexlab{b}})\citenamefont {Bennett}, \citenamefont
  {DiVincenzo}, \citenamefont {Smolin},\ and\ \citenamefont {Wootters}}]{BDSW}%
  \BibitemOpen
  \bibfield  {author} {\bibinfo {author} {\bibfnamefont {Charles~H.}\
  \bibnamefont {Bennett}}, \bibinfo {author} {\bibfnamefont {David~P.}\
  \bibnamefont {DiVincenzo}}, \bibinfo {author} {\bibfnamefont {John~A.}\
  \bibnamefont {Smolin}}, \ and\ \bibinfo {author} {\bibfnamefont {William~K.}\
  \bibnamefont {Wootters}},\ }\bibfield  {title} {\enquote {\bibinfo {title}
  {Mixed-state entanglement and quantum error correction},}\ }\href {\doibase
  10.1103/PhysRevA.54.3824} {\bibfield  {journal} {\bibinfo  {journal}
  {Physical Review A}\ }\textbf {\bibinfo {volume} {54}},\ \bibinfo {pages}
  {3824--3851} (\bibinfo {year} {1996}{\natexlab{b}})}\BibitemShut {NoStop}%
\bibitem [{\citenamefont {Horodecki}\ \emph {et~al.}(2000)\citenamefont
  {Horodecki}, \citenamefont {Horodecki},\ and\ \citenamefont
  {Horodecki}}]{HHH00}%
  \BibitemOpen
  \bibfield  {author} {\bibinfo {author} {\bibfnamefont {Micha\l{}}\
  \bibnamefont {Horodecki}}, \bibinfo {author} {\bibfnamefont {Pawe\l{}}\
  \bibnamefont {Horodecki}}, \ and\ \bibinfo {author} {\bibfnamefont {Ryszard}\
  \bibnamefont {Horodecki}},\ }\bibfield  {title} {\enquote {\bibinfo {title}
  {Unified approach to quantum capacities: Towards quantum noisy coding
  theorem},}\ }\href {\doibase 10.1103/PhysRevLett.85.433} {\bibfield
  {journal} {\bibinfo  {journal} {Physical Review Letters}\ }\textbf {\bibinfo
  {volume} {85}},\ \bibinfo {pages} {433--436} (\bibinfo {year} {2000})},\
  \bibinfo {note} {arXiv:quant-ph/0003040}\BibitemShut {NoStop}%
\bibitem [{\citenamefont {Devetak}\ and\ \citenamefont {Winter}(2005)}]{DW05}%
  \BibitemOpen
  \bibfield  {author} {\bibinfo {author} {\bibfnamefont {Igor}\ \bibnamefont
  {Devetak}}\ and\ \bibinfo {author} {\bibfnamefont {Andreas}\ \bibnamefont
  {Winter}},\ }\bibfield  {title} {\enquote {\bibinfo {title} {Distillation of
  secret key and entanglement from quantum states},}\ }\href {\doibase
  10.1098/rspa.2004.1372} {\bibfield  {journal} {\bibinfo  {journal}
  {Proceedings of the Royal Society A: Mathematical, Physical and Engineering
  Sciences}\ }\textbf {\bibinfo {volume} {461}},\ \bibinfo {pages} {207--235}
  (\bibinfo {year} {2005})},\ \bibinfo {note}
  {arXiv:quant-ph/0306078}\BibitemShut {NoStop}%
\bibitem [{\citenamefont {Devetak}\ \emph {et~al.}(2006)\citenamefont
  {Devetak}, \citenamefont {Junge}, \citenamefont {King},\ and\ \citenamefont
  {Ruskai}}]{DJKR06}%
  \BibitemOpen
  \bibfield  {author} {\bibinfo {author} {\bibfnamefont {Igor}\ \bibnamefont
  {Devetak}}, \bibinfo {author} {\bibfnamefont {Marius}\ \bibnamefont {Junge}},
  \bibinfo {author} {\bibfnamefont {Christoper}\ \bibnamefont {King}}, \ and\
  \bibinfo {author} {\bibfnamefont {Mary~Beth}\ \bibnamefont {Ruskai}},\
  }\bibfield  {title} {\enquote {\bibinfo {title} {Multiplicativity of
  completely bounded $p$-norms implies a new additivity result},}\ }\href
  {\doibase 10.1007/s00220-006-0034-0} {\bibfield  {journal} {\bibinfo
  {journal} {Communications in Mathematical Physics}\ }\textbf {\bibinfo
  {volume} {266}},\ \bibinfo {pages} {37--63} (\bibinfo {year} {2006})},\
  \bibinfo {note} {arXiv:quant-ph/0506196}\BibitemShut {NoStop}%
\bibitem [{\citenamefont {Watrous}(2013)}]{W13}%
  \BibitemOpen
  \bibfield  {author} {\bibinfo {author} {\bibfnamefont {John}\ \bibnamefont
  {Watrous}},\ }\bibfield  {title} {\enquote {\bibinfo {title} {Simpler
  semidefinite programs for completely bounded norms},}\ }\href
  {http://cjtcs.cs.uchicago.edu/articles/2013/8/contents.html} {\bibfield
  {journal} {\bibinfo  {journal} {Chicago Journal of Theoretical Computer
  Science}\ }\textbf {\bibinfo {volume} {2013}} (\bibinfo {year} {2013})},\
  \bibinfo {note} {arXiv:1207.5726}\BibitemShut {NoStop}%
\bibitem [{\citenamefont {Rains}(2001)}]{R01}%
  \BibitemOpen
  \bibfield  {author} {\bibinfo {author} {\bibfnamefont {Eric~M.}\ \bibnamefont
  {Rains}},\ }\bibfield  {title} {\enquote {\bibinfo {title} {A semidefinite
  program for distillable entanglement},}\ }\href {\doibase 10.1109/18.959270}
  {\bibfield  {journal} {\bibinfo  {journal} {IEEE Transactions on Information
  Theory}\ }\textbf {\bibinfo {volume} {47}},\ \bibinfo {pages} {2921--2933}
  (\bibinfo {year} {2001})},\ \bibinfo {note}
  {arXiv:quant-ph/0008047}\BibitemShut {NoStop}%
\bibitem [{\citenamefont {Audenaert}\ \emph {et~al.}(2002)\citenamefont
  {Audenaert}, \citenamefont {De~Moor}, \citenamefont {Vollbrecht},\ and\
  \citenamefont {Werner}}]{AMVW02}%
  \BibitemOpen
  \bibfield  {author} {\bibinfo {author} {\bibfnamefont {Koenraad}\
  \bibnamefont {Audenaert}}, \bibinfo {author} {\bibfnamefont {B.}~\bibnamefont
  {De~Moor}}, \bibinfo {author} {\bibfnamefont {Karl G.~H.}\ \bibnamefont
  {Vollbrecht}}, \ and\ \bibinfo {author} {\bibfnamefont {Reinhard~F.}\
  \bibnamefont {Werner}},\ }\bibfield  {title} {\enquote {\bibinfo {title}
  {Asymptotic relative entropy of entanglement for orthogonally invariant
  states},}\ }\href {\doibase 10.1103/PhysRevA.66.032310} {\bibfield  {journal}
  {\bibinfo  {journal} {Physical Review A}\ }\textbf {\bibinfo {volume} {66}},\
  \bibinfo {pages} {032310} (\bibinfo {year} {2002})},\ \bibinfo {note}
  {arXiv:quant-ph/0204143}\BibitemShut {NoStop}%
\bibitem [{\citenamefont {Vedral}\ and\ \citenamefont {Plenio}(1998)}]{VP98}%
  \BibitemOpen
  \bibfield  {author} {\bibinfo {author} {\bibfnamefont {Vlatko}\ \bibnamefont
  {Vedral}}\ and\ \bibinfo {author} {\bibfnamefont {Martin~B.}\ \bibnamefont
  {Plenio}},\ }\bibfield  {title} {\enquote {\bibinfo {title} {Entanglement
  measures and purification procedures},}\ }\href {\doibase
  10.1103/PhysRevA.57.1619} {\bibfield  {journal} {\bibinfo  {journal}
  {Physical Review A}\ }\textbf {\bibinfo {volume} {57}},\ \bibinfo {pages}
  {1619--1633} (\bibinfo {year} {1998})},\ \bibinfo {note}
  {arXiv:quant-ph/9707035}\BibitemShut {NoStop}%
\bibitem [{\citenamefont {Umegaki}(1962)}]{Ume62}%
  \BibitemOpen
  \bibfield  {author} {\bibinfo {author} {\bibfnamefont {Hisaharu}\
  \bibnamefont {Umegaki}},\ }\bibfield  {title} {\enquote {\bibinfo {title}
  {Conditional expectations in an operator algebra, {IV} (entropy and
  information)},}\ }\href {\doibase 10.2996/kmj/1138844604} {\bibfield
  {journal} {\bibinfo  {journal} {Kodai Mathematical Seminar Reports}\ }\textbf
  {\bibinfo {volume} {14}},\ \bibinfo {pages} {59--85} (\bibinfo {year}
  {1962})}\BibitemShut {NoStop}%
\bibitem [{\citenamefont {Werner}(1989)}]{W89}%
  \BibitemOpen
  \bibfield  {author} {\bibinfo {author} {\bibfnamefont {Reinhard~F.}\
  \bibnamefont {Werner}},\ }\bibfield  {title} {\enquote {\bibinfo {title}
  {Quantum states with {Einstein-Podolsky-Rosen} correlations admitting a
  hidden-variable model},}\ }\href {\doibase 10.1103/PhysRevA.40.4277}
  {\bibfield  {journal} {\bibinfo  {journal} {Physical Review A}\ }\textbf
  {\bibinfo {volume} {40}},\ \bibinfo {pages} {4277--4281} (\bibinfo {year}
  {1989})}\BibitemShut {NoStop}%
\bibitem [{\citenamefont {Fawzi}\ and\ \citenamefont {Fawzi}(2018)}]{FF18}%
  \BibitemOpen
  \bibfield  {author} {\bibinfo {author} {\bibfnamefont {Hamza}\ \bibnamefont
  {Fawzi}}\ and\ \bibinfo {author} {\bibfnamefont {Omar}\ \bibnamefont
  {Fawzi}},\ }\bibfield  {title} {\enquote {\bibinfo {title} {Efficient
  optimization of the quantum relative entropy},}\ }\href
  {http://stacks.iop.org/1751-8121/51/i=15/a=154003} {\bibfield  {journal}
  {\bibinfo  {journal} {Journal of Physics A: Mathematical and Theoretical}\
  }\textbf {\bibinfo {volume} {51}},\ \bibinfo {pages} {154003} (\bibinfo
  {year} {2018})},\ \bibinfo {note} {arXiv:1705.06671}\BibitemShut {NoStop}%
\bibitem [{\citenamefont {Fang}\ \emph {et~al.}(2019)\citenamefont {Fang},
  \citenamefont {Wang}, \citenamefont {Tomamichel},\ and\ \citenamefont
  {Duan}}]{FWTD17}%
  \BibitemOpen
  \bibfield  {author} {\bibinfo {author} {\bibfnamefont {Kun}\ \bibnamefont
  {Fang}}, \bibinfo {author} {\bibfnamefont {Xin}\ \bibnamefont {Wang}},
  \bibinfo {author} {\bibfnamefont {Marco}\ \bibnamefont {Tomamichel}}, \ and\
  \bibinfo {author} {\bibfnamefont {Runyao}\ \bibnamefont {Duan}},\ }\bibfield
  {title} {\enquote {\bibinfo {title} {Non-asymptotic entanglement
  distillation},}\ }\href {https://ieeexplore.ieee.org/document/8707001}
  {\bibfield  {journal} {\bibinfo  {journal} {IEEE Transactions on Information
  Theory}\ }\textbf {\bibinfo {volume} {65}},\ \bibinfo {pages} {6454--6465}
  (\bibinfo {year} {2019})},\ \bibinfo {note} {arXiv:1706.06221}\BibitemShut
  {NoStop}%
\bibitem [{\citenamefont {Wilde}(2018)}]{Wilde2018}%
  \BibitemOpen
  \bibfield  {author} {\bibinfo {author} {\bibfnamefont {Mark~M.}\ \bibnamefont
  {Wilde}},\ }\bibfield  {title} {\enquote {\bibinfo {title} {Entanglement cost
  and quantum channel simulation},}\ }\href {\doibase
  10.1103/PhysRevA.98.042338} {\bibfield  {journal} {\bibinfo  {journal}
  {Physical Review A}\ }\textbf {\bibinfo {volume} {98}},\ \bibinfo {pages}
  {042338} (\bibinfo {year} {2018})},\ \bibinfo {note}
  {arXiv:1807.11939}\BibitemShut {NoStop}%
\bibitem [{\citenamefont {Gurvits}(2003)}]{G03}%
  \BibitemOpen
  \bibfield  {author} {\bibinfo {author} {\bibfnamefont {L.}~\bibnamefont
  {Gurvits}},\ }\bibfield  {title} {\enquote {\bibinfo {title} {Classical
  deterministic complexity of {Edmonds'} problem and quantum entanglement},}\
  }\href@noop {} {\bibfield  {journal} {\bibinfo  {journal} {Proceedings of the
  35th Annual ACM Symposium on Theory of Computing}\ ,\ \bibinfo {pages}
  {10--19}} (\bibinfo {year} {2003})},\ \bibinfo {note}
  {arXiv:quant-ph/0303055}\BibitemShut {NoStop}%
\bibitem [{\citenamefont {Liu}(2007)}]{L07}%
  \BibitemOpen
  \bibfield  {author} {\bibinfo {author} {\bibfnamefont {Yi-Kai}\ \bibnamefont
  {Liu}},\ }\emph {\bibinfo {title} {The Complexity of the Consistency and
  $N$-representability Problems for Quantum States}},\ \href
  {https://arxiv.org/abs/0712.3041} {Ph.D. thesis},\ \bibinfo  {school}
  {University of California, San Diego} (\bibinfo {year} {2007}),\ \bibinfo
  {note} {arXiv:0712.3041}\BibitemShut {NoStop}%
\bibitem [{\citenamefont {Gharibian}(2010)}]{G10}%
  \BibitemOpen
  \bibfield  {author} {\bibinfo {author} {\bibfnamefont {Sevag}\ \bibnamefont
  {Gharibian}},\ }\bibfield  {title} {\enquote {\bibinfo {title} {Strong
  {NP}-hardness of the quantum separability problem},}\ }\href@noop {}
  {\bibfield  {journal} {\bibinfo  {journal} {Quantum Information and
  Computation}\ }\textbf {\bibinfo {volume} {10}},\ \bibinfo {pages} {343--360}
  (\bibinfo {year} {2010})},\ \bibinfo {note} {arXiv:0810.4507}\BibitemShut
  {NoStop}%
\bibitem [{\citenamefont {Miranowicz}\ and\ \citenamefont
  {Ishizaka}(2008)}]{AS08}%
  \BibitemOpen
  \bibfield  {author} {\bibinfo {author} {\bibfnamefont {Adam}\ \bibnamefont
  {Miranowicz}}\ and\ \bibinfo {author} {\bibfnamefont {Satoshi}\ \bibnamefont
  {Ishizaka}},\ }\bibfield  {title} {\enquote {\bibinfo {title} {Closed formula
  for the relative entropy of entanglement},}\ }\href {\doibase
  10.1103/PhysRevA.78.032310} {\bibfield  {journal} {\bibinfo  {journal}
  {Physical Review A}\ }\textbf {\bibinfo {volume} {78}},\ \bibinfo {pages}
  {032310} (\bibinfo {year} {2008})},\ \bibinfo {note}
  {arXiv:0805.3134}\BibitemShut {NoStop}%
\bibitem [{\citenamefont {Peres}(1996)}]{Peres96}%
  \BibitemOpen
  \bibfield  {author} {\bibinfo {author} {\bibfnamefont {Asher}\ \bibnamefont
  {Peres}},\ }\bibfield  {title} {\enquote {\bibinfo {title} {Separability
  criterion for density matrices},}\ }\href {\doibase
  10.1103/PhysRevLett.77.1413} {\bibfield  {journal} {\bibinfo  {journal}
  {Physical Review Letters}\ }\textbf {\bibinfo {volume} {77}},\ \bibinfo
  {pages} {1413--1415} (\bibinfo {year} {1996})},\ \bibinfo {note}
  {arXiv:quant-ph/9604005}\BibitemShut {NoStop}%
\bibitem [{\citenamefont {Horodecki}\ \emph {et~al.}(1996)\citenamefont
  {Horodecki}, \citenamefont {Horodecki},\ and\ \citenamefont
  {Horodecki}}]{HHH96}%
  \BibitemOpen
  \bibfield  {author} {\bibinfo {author} {\bibfnamefont {Micha\l{}}\
  \bibnamefont {Horodecki}}, \bibinfo {author} {\bibfnamefont {Pawe\l{}}\
  \bibnamefont {Horodecki}}, \ and\ \bibinfo {author} {\bibfnamefont {Ryszard}\
  \bibnamefont {Horodecki}},\ }\bibfield  {title} {\enquote {\bibinfo {title}
  {Separability of mixed states: necessary and sufficient conditions},}\ }\href
  {\doibase https://doi.org/10.1016/S0375-9601(96)00706-2} {\bibfield
  {journal} {\bibinfo  {journal} {Physics Letters A}\ }\textbf {\bibinfo
  {volume} {223}},\ \bibinfo {pages} {1--8} (\bibinfo {year} {1996})},\
  \bibinfo {note} {arXiv:quant-ph/9605038}\BibitemShut {NoStop}%
\bibitem [{\citenamefont {Christandl}\ and\ \citenamefont
  {Winter}(2004)}]{CW04}%
  \BibitemOpen
  \bibfield  {author} {\bibinfo {author} {\bibfnamefont {Matthias}\
  \bibnamefont {Christandl}}\ and\ \bibinfo {author} {\bibfnamefont {Andreas}\
  \bibnamefont {Winter}},\ }\bibfield  {title} {\enquote {\bibinfo {title}
  {{S}quashed entanglement: An additive entanglement measure},}\ }\href
  {https://doi.org/10.1063/1.1643788} {\bibfield  {journal} {\bibinfo
  {journal} {Journal of Mathematical Physics}\ }\textbf {\bibinfo {volume}
  {45}},\ \bibinfo {pages} {829--840} (\bibinfo {year} {2004})},\ \bibinfo
  {note} {arXiv:quant-ph/0308088}\BibitemShut {NoStop}%
\bibitem [{\citenamefont {Tucci}(1999)}]{RT99}%
  \BibitemOpen
  \bibfield  {author} {\bibinfo {author} {\bibfnamefont {Robert~R.}\
  \bibnamefont {Tucci}},\ }\bibfield  {title} {\enquote {\bibinfo {title}
  {Quantum entanglement and conditional information transmission},}\ }\href
  {https://arxiv.org/abs/quant-ph/9909041} {\bibfield  {journal} {\bibinfo
  {journal} {arXiv:quant-ph/9909041}\ } (\bibinfo {year} {1999})}\BibitemShut
  {NoStop}%
\bibitem [{\citenamefont {Tucci}(2002)}]{RT02}%
  \BibitemOpen
  \bibfield  {author} {\bibinfo {author} {\bibfnamefont {Robert~R.}\
  \bibnamefont {Tucci}},\ }\bibfield  {title} {\enquote {\bibinfo {title}
  {Entanglement of distillation and conditional mutual information},}\ }\href
  {https://arxiv.org/abs/quant-ph/0202144} {\bibfield  {journal} {\bibinfo
  {journal} {arXiv:quant-ph/0202144}\ } (\bibinfo {year} {2002})}\BibitemShut
  {NoStop}%
\bibitem [{\citenamefont {{Datta}}(2009)}]{Datta08}%
  \BibitemOpen
  \bibfield  {author} {\bibinfo {author} {\bibfnamefont {Nilanjana}\
  \bibnamefont {{Datta}}},\ }\bibfield  {title} {\enquote {\bibinfo {title}
  {Min- and max-relative entropies and a new entanglement monotone},}\ }\href
  {https://ieeexplore.ieee.org/document/4957651} {\bibfield  {journal}
  {\bibinfo  {journal} {IEEE Transactions on Information Theory}\ }\textbf
  {\bibinfo {volume} {55}},\ \bibinfo {pages} {2816--2826} (\bibinfo {year}
  {2009})},\ \bibinfo {note} {arXiv:0803.2770}\BibitemShut {NoStop}%
\bibitem [{\citenamefont {Lami}\ and\ \citenamefont
  {Giovannetti}(2015)}]{LG15}%
  \BibitemOpen
  \bibfield  {author} {\bibinfo {author} {\bibfnamefont {Ludovico}\
  \bibnamefont {Lami}}\ and\ \bibinfo {author} {\bibfnamefont {Vittorio}\
  \bibnamefont {Giovannetti}},\ }\bibfield  {title} {\enquote {\bibinfo {title}
  {Entanglement-breaking indices},}\ }\href {\doibase 10.1063/1.4931482}
  {\bibfield  {journal} {\bibinfo  {journal} {Journal of Mathematical Physics}\
  }\textbf {\bibinfo {volume} {56}},\ \bibinfo {pages} {092201} (\bibinfo
  {year} {2015})},\ \bibinfo {note} {arXiv:1411.2517}\BibitemShut {NoStop}%
\bibitem [{\citenamefont {Serafini}(2017)}]{AS17}%
  \BibitemOpen
  \bibfield  {author} {\bibinfo {author} {\bibfnamefont {Alessio}\ \bibnamefont
  {Serafini}},\ }\href@noop {} {\emph {\bibinfo {title} {Quantum Continuous
  Variables: A Primer of Theoretical Methods}}}\ (\bibinfo  {publisher} {CRC
  Press},\ \bibinfo {year} {2017})\BibitemShut {NoStop}%
\bibitem [{\citenamefont {Kok}\ \emph {et~al.}(2007)\citenamefont {Kok},
  \citenamefont {Munro}, \citenamefont {Nemoto}, \citenamefont {Ralph},
  \citenamefont {Dowling},\ and\ \citenamefont {Milburn}}]{KMN+07}%
  \BibitemOpen
  \bibfield  {author} {\bibinfo {author} {\bibfnamefont {Pieter}\ \bibnamefont
  {Kok}}, \bibinfo {author} {\bibfnamefont {W.~J.}\ \bibnamefont {Munro}},
  \bibinfo {author} {\bibfnamefont {Kae}\ \bibnamefont {Nemoto}}, \bibinfo
  {author} {\bibfnamefont {T.~C.}\ \bibnamefont {Ralph}}, \bibinfo {author}
  {\bibfnamefont {Jonathan~P.}\ \bibnamefont {Dowling}}, \ and\ \bibinfo
  {author} {\bibfnamefont {G.~J.}\ \bibnamefont {Milburn}},\ }\bibfield
  {title} {\enquote {\bibinfo {title} {Linear optical quantum computing with
  photonic qubits},}\ }\href {\doibase 10.1103/RevModPhys.79.135} {\bibfield
  {journal} {\bibinfo  {journal} {Reviews of Modern Physics}\ }\textbf
  {\bibinfo {volume} {79}},\ \bibinfo {pages} {135--174} (\bibinfo {year}
  {2007})},\ \bibinfo {note} {arXiv:quant-ph/0512071}\BibitemShut {NoStop}%
\bibitem [{\citenamefont {Filippov}\ \emph {et~al.}(2012)\citenamefont
  {Filippov}, \citenamefont {Ryb\'ar},\ and\ \citenamefont {Ziman}}]{FRTZ12}%
  \BibitemOpen
  \bibfield  {author} {\bibinfo {author} {\bibfnamefont {Sergey~N.}\
  \bibnamefont {Filippov}}, \bibinfo {author} {\bibfnamefont
  {Tom\'a\ifmmode~\check{s}\else\v{s}\fi{}}\ \bibnamefont {Ryb\'ar}}, \ and\
  \bibinfo {author} {\bibfnamefont {M\'ario}\ \bibnamefont {Ziman}},\
  }\bibfield  {title} {\enquote {\bibinfo {title} {Local two-qubit
  entanglement-annihilating channels},}\ }\href {\doibase
  10.1103/PhysRevA.85.012303} {\bibfield  {journal} {\bibinfo  {journal}
  {Physical Review A}\ }\textbf {\bibinfo {volume} {85}},\ \bibinfo {pages}
  {012303} (\bibinfo {year} {2012})},\ \bibinfo {note}
  {arXiv:1110.3757}\BibitemShut {NoStop}%
\bibitem [{\citenamefont {Augusiak}\ \emph {et~al.}(2008)\citenamefont
  {Augusiak}, \citenamefont {Demianowicz},\ and\ \citenamefont
  {Horodecki}}]{ADH08}%
  \BibitemOpen
  \bibfield  {author} {\bibinfo {author} {\bibfnamefont {Remigiusz}\
  \bibnamefont {Augusiak}}, \bibinfo {author} {\bibfnamefont {Maciej}\
  \bibnamefont {Demianowicz}}, \ and\ \bibinfo {author} {\bibfnamefont
  {Pawe\l{}}\ \bibnamefont {Horodecki}},\ }\bibfield  {title} {\enquote
  {\bibinfo {title} {Universal observable detecting all two-qubit entanglement
  and determinant-based separability tests},}\ }\href {\doibase
  10.1103/PhysRevA.77.030301} {\bibfield  {journal} {\bibinfo  {journal}
  {Physical Review A}\ }\textbf {\bibinfo {volume} {77}},\ \bibinfo {pages}
  {030301} (\bibinfo {year} {2008})},\ \bibinfo {note}
  {arXiv:quant-ph/0604109}\BibitemShut {NoStop}%
\bibitem [{\citenamefont {Holevo}(2008)}]{Hol08}%
  \BibitemOpen
  \bibfield  {author} {\bibinfo {author} {\bibfnamefont {Alexander~S.}\
  \bibnamefont {Holevo}},\ }\bibfield  {title} {\enquote {\bibinfo {title}
  {Entanglement-breaking channels in infinite dimensions},}\ }\href {\doibase
  10.1134/S0032946008030010} {\bibfield  {journal} {\bibinfo  {journal}
  {Problems of Information Transmission}\ }\textbf {\bibinfo {volume} {44}},\
  \bibinfo {pages} {171--184} (\bibinfo {year} {2008})},\ \bibinfo {note}
  {arXiv:0802.0235}\BibitemShut {NoStop}%
\bibitem [{\citenamefont {Cubitt}\ \emph {et~al.}(2008)\citenamefont {Cubitt},
  \citenamefont {Ruskai},\ and\ \citenamefont {Smith}}]{CRS08}%
  \BibitemOpen
  \bibfield  {author} {\bibinfo {author} {\bibfnamefont {Toby~S.}\ \bibnamefont
  {Cubitt}}, \bibinfo {author} {\bibfnamefont {Mary~Beth}\ \bibnamefont
  {Ruskai}}, \ and\ \bibinfo {author} {\bibfnamefont {Graeme}\ \bibnamefont
  {Smith}},\ }\bibfield  {title} {\enquote {\bibinfo {title} {The structure of
  degradable quantum channels},}\ }\href {\doibase 10.1063/1.2953685}
  {\bibfield  {journal} {\bibinfo  {journal} {Journal of Mathematical Physics}\
  }\textbf {\bibinfo {volume} {49}},\ \bibinfo {pages} {102104} (\bibinfo
  {year} {2008})},\ \bibinfo {note} {arXiv:0802.1360}\BibitemShut {NoStop}%
\bibitem [{\citenamefont {Myhr}(2010)}]{MyhrThesis}%
  \BibitemOpen
  \bibfield  {author} {\bibinfo {author} {\bibfnamefont {Geir~Ove}\
  \bibnamefont {Myhr}},\ }\emph {\bibinfo {title} {Symmetric Extension of
  Bipartite Quantum States and its Use in Quantum Key Distribution with Two-Way
  Postprocessing}},\ \href {https://arxiv.org/abs/1103.0766} {Ph.D. thesis},\
  \bibinfo  {school} {Friedrich-Alexander-Universit\"{a}t
  Erlangen-N\"{u}remberg} (\bibinfo {year} {2010})\BibitemShut {NoStop}%
\bibitem [{\citenamefont {Myhr}\ and\ \citenamefont
  {L\"utkenhaus}(2009)}]{ML09}%
  \BibitemOpen
  \bibfield  {author} {\bibinfo {author} {\bibfnamefont {Geir~Ove}\
  \bibnamefont {Myhr}}\ and\ \bibinfo {author} {\bibfnamefont {Norbert}\
  \bibnamefont {L\"utkenhaus}},\ }\bibfield  {title} {\enquote {\bibinfo
  {title} {Spectrum conditions for symmetric extendible states},}\ }\href
  {\doibase 10.1103/PhysRevA.79.062307} {\bibfield  {journal} {\bibinfo
  {journal} {Physical Review A}\ }\textbf {\bibinfo {volume} {79}},\ \bibinfo
  {pages} {062307} (\bibinfo {year} {2009})},\ \bibinfo {note}
  {arXiv:0812.3667}\BibitemShut {NoStop}%
\bibitem [{\citenamefont {Chen}\ \emph {et~al.}(2014)\citenamefont {Chen},
  \citenamefont {Ji}, \citenamefont {Kribs}, \citenamefont {L\"utkenhaus},\
  and\ \citenamefont {Zeng}}]{CJK+14}%
  \BibitemOpen
  \bibfield  {author} {\bibinfo {author} {\bibfnamefont {Jianxin}\ \bibnamefont
  {Chen}}, \bibinfo {author} {\bibfnamefont {Zhengfeng}\ \bibnamefont {Ji}},
  \bibinfo {author} {\bibfnamefont {David}\ \bibnamefont {Kribs}}, \bibinfo
  {author} {\bibfnamefont {Norbert}\ \bibnamefont {L\"utkenhaus}}, \ and\
  \bibinfo {author} {\bibfnamefont {Bei}\ \bibnamefont {Zeng}},\ }\bibfield
  {title} {\enquote {\bibinfo {title} {Symmetric extension of two-qubit
  states},}\ }\href {\doibase 10.1103/PhysRevA.90.032318} {\bibfield  {journal}
  {\bibinfo  {journal} {Physical Review A}\ }\textbf {\bibinfo {volume} {90}},\
  \bibinfo {pages} {032318} (\bibinfo {year} {2014})},\ \bibinfo {note}
  {arXiv:1310.3530}\BibitemShut {NoStop}%
\bibitem [{\citenamefont {Caruso}\ \emph {et~al.}(2006)\citenamefont {Caruso},
  \citenamefont {Giovannetti},\ and\ \citenamefont {Holevo}}]{dp2006}%
  \BibitemOpen
  \bibfield  {author} {\bibinfo {author} {\bibfnamefont {Filippo}\ \bibnamefont
  {Caruso}}, \bibinfo {author} {\bibfnamefont {Vittorio}\ \bibnamefont
  {Giovannetti}}, \ and\ \bibinfo {author} {\bibfnamefont {Alexander~S.}\
  \bibnamefont {Holevo}},\ }\bibfield  {title} {\enquote {\bibinfo {title}
  {One-mode bosonic {Gaussian} channels: a full weak-degradability
  classification},}\ }\href
  {https://iopscience.iop.org/article/10.1088/1367-2630/8/12/310/meta}
  {\bibfield  {journal} {\bibinfo  {journal} {New Journal of Physics}\ }\textbf
  {\bibinfo {volume} {8}},\ \bibinfo {pages} {310} (\bibinfo {year} {2006})},\
  \bibinfo {note} {arXiv:quant-ph/0609013}\BibitemShut {NoStop}%
\bibitem [{\citenamefont {{King}}\ and\ \citenamefont {{Ruskai}}(2001)}]{KR01}%
  \BibitemOpen
  \bibfield  {author} {\bibinfo {author} {\bibfnamefont {Christopher}\
  \bibnamefont {{King}}}\ and\ \bibinfo {author} {\bibfnamefont {Mary~Beth}\
  \bibnamefont {{Ruskai}}},\ }\bibfield  {title} {\enquote {\bibinfo {title}
  {Minimal entropy of states emerging from noisy quantum channels},}\ }\href
  {https://ieeexplore.ieee.org/document/904522} {\bibfield  {journal} {\bibinfo
   {journal} {IEEE Transactions on Information Theory}\ }\textbf {\bibinfo
  {volume} {47}},\ \bibinfo {pages} {192--209} (\bibinfo {year} {2001})},\
  \bibinfo {note} {arXiv:quant-ph/9911079}\BibitemShut {NoStop}%
\bibitem [{\citenamefont {Cortese}(2002)}]{Cort02}%
  \BibitemOpen
  \bibfield  {author} {\bibinfo {author} {\bibfnamefont {John}\ \bibnamefont
  {Cortese}},\ }\bibfield  {title} {\enquote {\bibinfo {title} {Relative
  entropy and single qubit {Holevo-Schumacher-Westmoreland} channel
  capacity},}\ }\href {https://arxiv.org/abs/quant-ph/0207128} {\bibfield
  {journal} {\bibinfo  {journal} {arXiv:quant-ph/0207128}\ } (\bibinfo {year}
  {2002})}\BibitemShut {NoStop}%
\bibitem [{\citenamefont {Cortese}(2004)}]{Cort04}%
  \BibitemOpen
  \bibfield  {author} {\bibinfo {author} {\bibfnamefont {John}\ \bibnamefont
  {Cortese}},\ }\bibfield  {title} {\enquote {\bibinfo {title}
  {{Holevo-Schumacher-Westmoreland} channel capacity for a class of qudit
  unital channels},}\ }\href {\doibase 10.1103/PhysRevA.69.022302} {\bibfield
  {journal} {\bibinfo  {journal} {Physical Review A}\ }\textbf {\bibinfo
  {volume} {69}},\ \bibinfo {pages} {022302} (\bibinfo {year} {2004})},\
  \bibinfo {note} {arXiv:quant-ph/0211093}\BibitemShut {NoStop}%
\bibitem [{\citenamefont {Shor}(2002{\natexlab{b}})}]{Shor02}%
  \BibitemOpen
  \bibfield  {author} {\bibinfo {author} {\bibfnamefont {Peter~W.}\
  \bibnamefont {Shor}},\ }\bibfield  {title} {\enquote {\bibinfo {title}
  {Additivity of the classical capacity of entanglement-breaking quantum
  channels},}\ }\href {https://doi.org/10.1063/1.1498000} {\bibfield  {journal}
  {\bibinfo  {journal} {Journal of Mathematical Physics}\ }\textbf {\bibinfo
  {volume} {43}},\ \bibinfo {pages} {4334--4340} (\bibinfo {year}
  {2002}{\natexlab{b}})},\ \bibinfo {note} {arXiv:quant-ph/0201149}\BibitemShut
  {NoStop}%
\bibitem [{\citenamefont {Berry}(2005)}]{Ber05}%
  \BibitemOpen
  \bibfield  {author} {\bibinfo {author} {\bibfnamefont {Dominic~W.}\
  \bibnamefont {Berry}},\ }\bibfield  {title} {\enquote {\bibinfo {title}
  {Qubit channels that achieve capacity with two states},}\ }\href {\doibase
  10.1103/PhysRevA.71.032334} {\bibfield  {journal} {\bibinfo  {journal}
  {Physical Review A}\ }\textbf {\bibinfo {volume} {71}},\ \bibinfo {pages}
  {032334} (\bibinfo {year} {2005})},\ \bibinfo {note}
  {arXiv:quant-ph/0412113}\BibitemShut {NoStop}%
\bibitem [{\citenamefont {Li-Zhen}\ and\ \citenamefont
  {Mao-Fa}(2007{\natexlab{b}})}]{LM07b}%
  \BibitemOpen
  \bibfield  {author} {\bibinfo {author} {\bibfnamefont {Hou}\ \bibnamefont
  {Li-Zhen}}\ and\ \bibinfo {author} {\bibfnamefont {Fang}\ \bibnamefont
  {Mao-Fa}},\ }\bibfield  {title} {\enquote {\bibinfo {title} {The {H}olevo
  capacity of a generalized amplitude-damping channel},}\ }\href
  {http://stacks.iop.org/1009-1963/16/i=7/a=006} {\bibfield  {journal}
  {\bibinfo  {journal} {Chinese Physics}\ }\textbf {\bibinfo {volume} {16}},\
  \bibinfo {pages} {1843} (\bibinfo {year} {2007}{\natexlab{b}})}\BibitemShut
  {NoStop}%
\bibitem [{\citenamefont {Garcia-Patron}\ \emph {et~al.}(2012)\citenamefont
  {Garcia-Patron}, \citenamefont {Navarrete-Benlloch}, \citenamefont {Lloyd},
  \citenamefont {Shapiro},\ and\ \citenamefont {Cerf}}]{dp2012}%
  \BibitemOpen
  \bibfield  {author} {\bibinfo {author} {\bibfnamefont {Raul}\ \bibnamefont
  {Garcia-Patron}}, \bibinfo {author} {\bibfnamefont {Carlos}\ \bibnamefont
  {Navarrete-Benlloch}}, \bibinfo {author} {\bibfnamefont {Seth}\ \bibnamefont
  {Lloyd}}, \bibinfo {author} {\bibfnamefont {Jeffrey~H.}\ \bibnamefont
  {Shapiro}}, \ and\ \bibinfo {author} {\bibfnamefont {Nicolas~J.}\
  \bibnamefont {Cerf}},\ }\bibfield  {title} {\enquote {\bibinfo {title}
  {Majorization theory approach to the {Gaussian} channel minimum entropy
  conjecture},}\ }\href
  {https://journals.aps.org/prl/abstract/10.1103/PhysRevLett.108.110505}
  {\bibfield  {journal} {\bibinfo  {journal} {Physical Review Letters}\
  }\textbf {\bibinfo {volume} {108}},\ \bibinfo {pages} {110505} (\bibinfo
  {year} {2012})},\ \bibinfo {note} {arXiv:1111.1986}\BibitemShut {NoStop}%
\bibitem [{\citenamefont {Rigovacca}\ \emph {et~al.}(2018)\citenamefont
  {Rigovacca}, \citenamefont {Kato}, \citenamefont {B\"{a}uml}, \citenamefont
  {Kim}, \citenamefont {Munro},\ and\ \citenamefont {Azuma}}]{RKB+18}%
  \BibitemOpen
  \bibfield  {author} {\bibinfo {author} {\bibfnamefont {Luca}\ \bibnamefont
  {Rigovacca}}, \bibinfo {author} {\bibfnamefont {Go}~\bibnamefont {Kato}},
  \bibinfo {author} {\bibfnamefont {Stefan}\ \bibnamefont {B\"{a}uml}},
  \bibinfo {author} {\bibfnamefont {M.~S.}\ \bibnamefont {Kim}}, \bibinfo
  {author} {\bibfnamefont {W.~J.}\ \bibnamefont {Munro}}, \ and\ \bibinfo
  {author} {\bibfnamefont {Koji}\ \bibnamefont {Azuma}},\ }\bibfield  {title}
  {\enquote {\bibinfo {title} {Versatile relative entropy bounds for quantum
  networks},}\ }\href {\doibase 10.1088/1367-2630/aa9fcf} {\bibfield  {journal}
  {\bibinfo  {journal} {New Journal of Physics}\ }\textbf {\bibinfo {volume}
  {20}},\ \bibinfo {pages} {013033} (\bibinfo {year} {2018})},\ \bibinfo {note}
  {arXiv:1811.09572}\BibitemShut {NoStop}%
\bibitem [{\citenamefont {Garc\'{\i}a-Patr\'on}\ \emph
  {et~al.}(2009)\citenamefont {Garc\'{\i}a-Patr\'on}, \citenamefont
  {Pirandola}, \citenamefont {Lloyd},\ and\ \citenamefont {Shapiro}}]{GPLS09}%
  \BibitemOpen
  \bibfield  {author} {\bibinfo {author} {\bibfnamefont {Ra\'ul}\ \bibnamefont
  {Garc\'{\i}a-Patr\'on}}, \bibinfo {author} {\bibfnamefont {Stefano}\
  \bibnamefont {Pirandola}}, \bibinfo {author} {\bibfnamefont {Seth}\
  \bibnamefont {Lloyd}}, \ and\ \bibinfo {author} {\bibfnamefont {Jeffrey~H.}\
  \bibnamefont {Shapiro}},\ }\bibfield  {title} {\enquote {\bibinfo {title}
  {Reverse coherent information},}\ }\href {\doibase
  10.1103/PhysRevLett.102.210501} {\bibfield  {journal} {\bibinfo  {journal}
  {Physical Review Letters}\ }\textbf {\bibinfo {volume} {102}},\ \bibinfo
  {pages} {210501} (\bibinfo {year} {2009})},\ \bibinfo {note}
  {arXiv:0808.0210}\BibitemShut {NoStop}%
\bibitem [{\citenamefont {{Yard}}\ \emph {et~al.}(2008)\citenamefont {{Yard}},
  \citenamefont {{Hayden}},\ and\ \citenamefont {{Devetak}}}]{YHD08}%
  \BibitemOpen
  \bibfield  {author} {\bibinfo {author} {\bibfnamefont {J.}~\bibnamefont
  {{Yard}}}, \bibinfo {author} {\bibfnamefont {P.}~\bibnamefont {{Hayden}}}, \
  and\ \bibinfo {author} {\bibfnamefont {I.}~\bibnamefont {{Devetak}}},\
  }\bibfield  {title} {\enquote {\bibinfo {title} {Capacity theorems for
  quantum multiple-access channels: classical-quantum and quantum-quantum
  capacity regions},}\ }\href {\doibase 10.1109/TIT.2008.924665} {\bibfield
  {journal} {\bibinfo  {journal} {IEEE Transactions on Information Theory}\
  }\textbf {\bibinfo {volume} {54}},\ \bibinfo {pages} {3091--3113} (\bibinfo
  {year} {2008})},\ \bibinfo {note} {arXiv:quant-ph/0501045}\BibitemShut
  {NoStop}%
\bibitem [{\citenamefont {Vedral}\ \emph {et~al.}(1997)\citenamefont {Vedral},
  \citenamefont {Plenio}, \citenamefont {Rippin},\ and\ \citenamefont
  {Knight}}]{VPRK97}%
  \BibitemOpen
  \bibfield  {author} {\bibinfo {author} {\bibfnamefont {Vlatko}\ \bibnamefont
  {Vedral}}, \bibinfo {author} {\bibfnamefont {Martin~B.}\ \bibnamefont
  {Plenio}}, \bibinfo {author} {\bibfnamefont {M.~A.}\ \bibnamefont {Rippin}},
  \ and\ \bibinfo {author} {\bibfnamefont {Peter~L.}\ \bibnamefont {Knight}},\
  }\bibfield  {title} {\enquote {\bibinfo {title} {Quantifying entanglement},}\
  }\href {\doibase 10.1103/PhysRevLett.78.2275} {\bibfield  {journal} {\bibinfo
   {journal} {Physical Review Letters}\ }\textbf {\bibinfo {volume} {78}},\
  \bibinfo {pages} {2275--2279} (\bibinfo {year} {1997})},\ \bibinfo {note}
  {arXiv:quant-ph/9702027}\BibitemShut {NoStop}%
\bibitem [{\citenamefont {Cubitt}(2009)}]{cubitt_matlab}%
  \BibitemOpen
  \bibfield  {author} {\bibinfo {author} {\bibfnamefont {Toby}\ \bibnamefont
  {Cubitt}},\ }\href@noop {} {\enquote {\bibinfo {title} {{Quantum Information
  Package}},}\ }\bibinfo {howpublished}
  {\url{http://www.dr-qubit.org/matlab.html}} (\bibinfo {year}
  {2009})\BibitemShut {NoStop}%
\bibitem [{\citenamefont {Johnston}(2016)}]{qetlab}%
  \BibitemOpen
  \bibfield  {author} {\bibinfo {author} {\bibfnamefont {Nathaniel}\
  \bibnamefont {Johnston}},\ }\href@noop {} {\enquote {\bibinfo {title}
  {{QETLAB}: A {MATLAB} toolbox for quantum entanglement, version 0.9},}\
  }\bibinfo {howpublished} {\url{http://qetlab.com}} (\bibinfo {year}
  {2016})\BibitemShut {NoStop}%
\end{thebibliography}%

\end{document}